\newif\ifllncs\llncsfalse
\newif\ifanon\anonfalse
\definecolor{DarkBlue}{RGB}{0,0,150}
\definecolor{NotSoDarkBlue}{RGB}{15,15,210}
\definecolor{DarkRed}{RGB}{150,0,0}
\definecolor{DarkGreen}{RGB}{0,100,0}
\newcommand{\CC}{\mathbb{C}}
\newcommand{\NN}{\mathbb{N}}
\newcommand{\EE}{\mathbb{E}}
\newcommand{\poly}{\mathsf{poly}}
\newcommand{\fullcirc}{\boldsymbol{\cdot}}
\def\reconstruct{\mathsf{Reconstruct}}
\newcommand{\norm}[1]{\left\| {#1} \right\|}
\renewcommand{\vec}[1]{\mathbf{#1}} 
\newcommand{\kett}[1]{\ket{#1}\hspace{-0.8mm}\rangle}
\newtheorem{theorem}{Theorem}
\newtheorem{lemma}[theorem]{Lemma}
\newtheorem{corollary}[theorem]{Corollary}
\newtheorem{definition}[theorem]{Definition}
\newtheorem{remark}{Remark}
\newtheorem*{example}{Example}
\newtheorem*{theorem*}{Theorem}
\numberwithin{theorem}{section}
\numberwithin{conjecture}{section}
\numberwithin{problem}{section}
 \newtheorem{construction}{Construction}
\newmdtheoremenv[backgroundcolor=gray!10,
                 linewidth=0pt,
                 innerleftmargin=16pt,
                 innerrightmargin=16pt,
                 innertopmargin=6pt,
                 innerbottommargin=6pt,
            splitbottomskip=4pt]{protocol}[prot]{Game}
\newcommand{\N}{\mathbb{N}}
\newcommand{\bit}{\{0,1\}}
\newcommand\algo{\mathcal}
\newcommand{\Uf}{\mathsf{U}_f}
\newcommand{\reg}[1]{{\color{gray}\mathsf{#1}}}
\newcommand{\linear}{\mathrm{L}}
\newcommand{\state}{\mathrm{S}}
\newcommand{\negl}{\mathsf{negl}}
\newcommand{\ketbra}[2]{\left|#1\right\rangle\!\!\left\langle #2\right|}
\newcommand{\Tr}[1]{\mathrm{Tr}\left[#1 \right]}  
\newcommand{\ot}{\otimes}
\newcommand\id{\mathbb{I}}
\newcommand{\KeyGen}{\mathsf{KeyGen}}
\newcommand{\Enc}{\mathsf{Enc}}
\newcommand{\Dec}{\mathsf{Dec}}
\newcommand{\secp}{\lambda}
\newcommand{\Type}{\mathsf{Type}}
\newcommand{\BinType}{\mathsf{BinType}}
\newcommand{\sUE}{\mathsf{UE}}
\newcommand{\sfU}{\mathsf{U}}
\newcommand{\blambda}{\boldsymbol{\lambda}}
\newcommand{\bmu}{\boldsymbol{\mu}}
\newif\ifnotes
\newcommand{\authnote}[3]{\textcolor{#3}{[{\footnotesize {\bf #1:} { {#2}}}]}}
\newcommand{\vinod}[1]{\ifnotes \authnote{V}{#1}{blue} \fi}
\newcommand{\seyoon}[1]{\ifnotes \authnote{S}{#1}{ForestGreen} \fi}
\newif\ifedits
\newcommand{\itcsedit}[1]{\ifedits\textcolor{blue}{#1}\else{#1}\fi}
\title{Cloning Games, Black Holes and Cryptography \vspace{0.3cm}}
    \author{Anonymous ITCS 2026 Submission}
    \author{Alexander Poremba\footnote{\url{poremba@mit.edu}}\\MIT 
    \and Seyoon Ragavan\footnote{\url{sragavan@mit.edu}}\\MIT
    \and Vinod Vaikuntanathan\footnote{\url{vinodv@mit.edu}}\\MIT}
\date{ }
\def\vecy{\mathbf{y}}
\def\vecx{\mathbf{x}}
\def\matA{\mathbf{A}}
\def\vecz{\mathbf{z}}
\def\vecc{\mathbf{c}}
\def\vecv{\mathbf{v}}
\def\vecw{\mathbf{w}}
\def\veca{\mathbf{a}}
\def\vecb{\mathbf{b}}
\def\vecd{\mathbf{d}}
\begin{document}
\maketitle

\begin{abstract}

In this work, we introduce a new toolkit for analyzing \emph{cloning games}, a notion that captures stronger and more quantitative versions of the celebrated quantum no-cloning theorem.
This framework allows us to analyze a new cloning game based on \emph{binary phase states}.
Our results provide evidence that these games may be able to overcome important limitations of previous candidates based on BB84 states and subspace coset states:
in a model where the adversaries are restricted to making a single oracle query, we show that the binary phase variant is $t$-copy secure when $t=o(n/\log n)$.
Moreover, for constant $t$, we obtain the \emph{first} optimal bounds of $O(2^{-n})$, asymptotically matching the value attained by a trivial adversarial strategy.
We also show a worst-case to average-case reduction which allows us to show the same quantitative results for the new and natural notion of \emph{Haar cloning games}.

Our analytic toolkit, which we believe will find further applications, is based on binary subtypes and uses novel bounds on the operator norms of block-wise tensor products of matrices. To illustrate the effectiveness of these new techniques, we present two applications: first, in black-hole physics, where our asymptotically optimal bound offers quantitative insights into information scrambling in idealized models of black holes; and second, in unclonable cryptography, where we (a) construct succinct unclonable encryption schemes from the existence of pseudorandom unitaries, and (b) propose and provide evidence for the security of multi-copy unclonable encryption schemes.

\end{abstract}

\newpage

\thispagestyle{empty}

\newpage

\pagenumbering{roman}
\tableofcontents

\newpage
\pagenumbering{arabic}

\section{Introduction}

\emph{Quantum no-cloning}~\cite{1982Natur.299..802W} is one of the most fundamental properties of quantum information. Roughly speaking, it states that no quantum procedure can create an exact copy of an arbitrary unknown quantum state. The principle of no-cloning has profound implications in quantum information processing~\cite{BB84,Bennett1993quantumteleportation,Roncaglia2019conservationofinformation,RevModPhys.89.015002} and has even inspired entirely 
new 
cryptographic primitives, starting with Wiesner's remarkable quantum money scheme~\cite{DBLP:journals/sigact/Wiesner83} and many subsequent primitives which are collectively known as
\emph{unclonable cryptography}~\cite{sattath2022uncloneablecryptography}; these include
unclonable quantum encryption~\cite{broadbent_et_al:LIPIcs.TQC.2020.4, 10.1007/978-3-031-38554-4_3, kundu2023deviceindependentuncloneableencryption, aky24}, encryption with unclonable decryption keys~\cite{cryptoeprint:2020/877,APV23}, quantum copy-protection~\cite{10.1007/978-3-031-15979-4_8, coladangelo2022quantum,10.1007/978-3-030-84242-0_20}, unclonable commitments and proofs~\cite{cryptoeprint:2023/1538}, and many more.

These cryptographic applications require one to prove unclonability guarantees that are much stronger than those implied by the no-cloning theorem, or even stronger variants stating that an unknown quantum state cannot be approximately copied to high fidelity~\cite{Bu_ek_1996, 10.1007/978-3-642-35656-8_4}.
Take the example of \emph{unclonable encryption}, where a classical message is encrypted into a quantum state, which an adversarial cloner $\Phi$ then operates on arbitrarily and forwards to two isolated adversaries $\mathcal{B}$ and $\mathcal{C}$.
Later, the decryption key is revealed and $\mathcal{B}$ and $\mathcal{C}$ attempt to recover the original message; they win if they both succeed.
The adversaries will certainly succeed if $\Phi$ can clone a ciphertext state, but $\Phi$ could also conceivably succeed by generating two completely different states that merely reveal enough information to later decrypt.
The no-cloning theorem and even its approximate variants can only rule out the first type of attack.

\vinod{excellent! I like this text very much, but cloning games are not directly relevant to quantum money, right? maybe give one more example above to connect it more directly to cloning games?} 
\seyoon{added, and I also commented out the quantum money example. what do you think?}


\paragraph{Cloning Games.}
{Following a long-standing tradition of studying quantum mechanical phenomena through the lens of interactive games~\cite{PhysRevLett.65.3373,Greenberger1989,reichardt2012classicalleashquantumsystem,Tomamichel_2013,10.1145/3485628}, the field of unclonable cryptography today relies on abstract \emph{cloning games}~\cite{10.1007/978-3-031-38554-4_3} as a means of capturing the desired strong unclonability guarantees.
}
These types of games first emerged in the context of unclonable encryption schemes~\cite{broadbent_et_al:LIPIcs.TQC.2020.4} but the general framework\footnote{We note that our notion of a basic cloning game is slightly more specific than the general framework studied in~\cite{10.1007/978-3-031-38554-4_3}; we discuss these differences in more detail in Remark~\ref{remark:aklcomparison}.} also applies to many other fundamental unclonable primitives, such as copy-protection, single-decryptor encryption, quantum money, and more~\cite{10.1007/978-3-031-38554-4_3}.

A basic $1 \mapsto 2$ cloning game $\mathsf{G}_{1 \mapsto 2}$ with respect to the question set $\Theta$, answer set $\mathcal{X}$, and ensemble of unitaries $\left\{U_\theta\right\}_{\theta \in \Theta}$ of dimension $|\mathcal{X}|$ is the following interactive game played by a trusted challenger, say Alice, as well as an adversary consisting of a cloner $\Phi$ and two additional players, say Bob and Charlie.
\begin{enumerate}
          \item (\textbf{Setup phase}) Alice samples random $x \sim \mathcal{X}$ and $\theta \sim \Theta$, and sends $U_\theta\ket{x}_{\reg{A}}$ to the cloner $\Phi$.

          The cloner $\Phi$ splits the state into two registers $\reg{B}$ and $\reg{C}$, which he then forwards to Bob and Charlie, respectively.
          Afterwards, the players may no longer communicate for the rest of the game.
        
          \item (\textbf{Question phase}) Bob and Charlie both receive the string $\theta$.
        \item (\textbf{Answer phase}) Bob and Charlie independently output a guess for the element $x$.
        
        \item (\textbf{Outcome phase}) Bob and Charlie win if they both guess $x$ correctly.
    \end{enumerate}


\begin{figure}[t]
\begin{center}
{\small
\begin{tikzpicture}
\draw (2,-0.5) rectangle (3.3,0.5) node [pos=.5]{$\algo A$}; 
  \draw (5,-0.5) rectangle (6.3,0.5) node [pos=.5]{${\Phi}$}; 
  \draw[->] (3.3,0) -- node[above]{{\small $U_\theta \ket{x}$}} (5,0);
  \draw[->,dashed] (5.65,-0.5) -- node[left]{} (5,-1.5);
  \draw[->,dashed] (5.65,-0.5) -- node[right]{} (6.3,-1.5);
  \draw (4.5,-1.5) rectangle (5.65,-2.5) node [pos=.5]{$\algo B$};
  \draw[->] (4,-2) node[left]{$\theta$} -- (4.5,-2);
  \draw[->] (5.05,-2.5) -- (5.05,-3) node[below]{$x_{\algo B}$};
  \draw[->] (7.65,-2) node[right]{$\theta$} -- (7.15,-2);
  \draw[->] (6.55,-2.5) -- (6.55,-3) node[below]{$x_{\algo C}$};
  \draw[->,dashed] (5.65,-0.5) -- (6.3,-1.5);
  \draw (6,-1.5) rectangle (7.15,-2.5) node [pos=.5]{${\algo C}$};

 \end{tikzpicture}
\caption{A basic $1\mapsto 2$ cloning game.} 
\label{fig:nocloning-intro}
}
\end{center}
\end{figure}
We illustrate the cloning game $\mathsf{G}_{1 \mapsto 2}$ in \Cref{fig:nocloning-intro}.
Formally, a strategy $\mathsf{S}$ for the game $\mathsf{G}_{1 \mapsto 2}$ consists of a cloning map $\Phi$ and positive operator-valued measurements $\algo B=\{\vec{B}_x^{\theta}\}_{\theta \in \Theta, x \in\algo X}$ and $\algo C = \{\vec{C}_x^{\theta}\}_{\theta \in \Theta, x \in \algo X}$.
The \emph{value} of a particular strategy $\mathsf{S}$ for the cloning game $\mathsf{G}_{1 \mapsto 2}$ is defined as the average winning probability
$$
\omega_{\mathsf{S}}(\mathsf{G}_{1 \mapsto 2}) = 
\underset{\theta \sim \Theta}{\mathbb{E}}  \,\underset{x \sim \algo X}{\mathbb{E}}  \, \Tr{\left( \vec{B}_{x}^{\theta}\ot \vec{C}_{x}^{\theta}\right) \Phi_{\reg{A \rightarrow BC}}(U_\theta \proj{x}_{\reg{A}}U_\theta^\dag)}.
$$
Here, $\omega(\mathsf{G}_{1 \mapsto 2})$ denotes the optimal winning probability over all strategies specified by $\Phi$, $\algo B$ and $\algo C$. Note that there exists a trivial strategy that succeeds with probability $1/|\mathcal{X}|$: the cloner $\Phi$ simply forwards $U_\theta \ket{x}$ to Bob, who can easily recover $x$ once $\theta$ becomes available, whereas Charlie simply guesses at random.


\begin{remark}[Cloning games capture much stronger forms of no-cloning]\label{remark:strongnocloning}
At first glance, it appears that establishing an upper bound on the winning probability of Bob and Charlie just boils down to the no-cloning theorem~\cite{1982Natur.299..802W} or perhaps its approximate variant~\cite{Bu_ek_1996}. Indeed, if the cloner $\Phi$ can copy the state $U_\theta \ket{x}$, then $\Phi$ can certainly also send the two copies to Bob and Charlie and ensure that they win the game. However, there could be other strategies which do not involve direct cloning but may nevertheless provide the players with enough information to win the game.\footnote{We provide an example of such a game and strategy at the beginning of~\Cref{sec:prevtechniques}.} In fact, the only property of $U_\theta \ket{x}$ that $\Phi$ needs to clone are the measurement statistics with respect to the unknown basis specified by $\theta$ (or more weakly, just $x$ itself). As it turns out, this stronger notion of unclonability is required for most applications in unclonable cryptography~\cite{sattath2022uncloneablecryptography}, and is significantly more challenging to prove.
\end{remark}

To this day, the majority of unclonable cryptography is rooted in either $n$-qubit BB84 states where $U_\theta= \mathsf{H^{\theta}}$~\cite{Tomamichel_2013,broadbent_et_al:LIPIcs.TQC.2020.4} or subspace coset states over $\mathbb{F}_2^n$, where $U_\theta$ encodes a shift of a random $n/2$-dimensional subspace $A \subset \mathbb{F}_2^n$~\cite{10.1007/978-3-030-84242-0_20,Culf_2022,schleppy2025winning}. In both cases, the optimal winning probability for the corresponding cloning game decays exponentially in the number of qubits~\cite{broadbent_et_al:LIPIcs.TQC.2020.4,Culf_2022,schleppy2025winning}.


\subsection{Our Contributions}\label{sec:ourcontributions}

Despite extensive study and multiple successful applications in unclonable cryptography, several important gaps in our understanding of cloning games remain. Our contributions to this effect are several fold:
\begin{enumerate}
    \item 
    We show that existing techniques for analyzing cloning games are severely limited; in particular, they prevent us from making progress on many fundamental open questions in the field.
    We formally expose these limitations with counterexamples and concrete, quantitative proofs.
    \item We study new cloning games and develop a suite of techniques for analyzing them; these techniques allow us to circumvent some of the limitations of previous approaches (in some cases, at the expense of restricting Bob's and Charlie's access to $\theta$ down to only a single query to $U_\theta$ or $U_\theta^\dag$).
    \item Finally, we present two applications of our results which have previously been out of reach; one in the area of \emph{black hole physics} and one in the field of \emph{unclonable cryptography}. Both of these applications provably require us to overcome several technical barriers which are inherent in prior work.
\end{enumerate}
We now discuss each of these contributions in more detail.



\paragraph{Exposing Limitations on Cloning Games.} 
Our first contribution is to expose several important gaps in our understanding of cloning games; more importantly, we also show that existing techniques for analyzing cloning games appear fundamentally insufficient at addressing them. We list some of these gaps below:

\begin{enumerate}
\item\label{item:optimalgames} \textbf{Optimal games:} Prior work on cloning games over $\algo X = \bit^n$ has shown the upper bounds of $\cos^2\left(\frac{\pi}{8}\right)^n$ and $2^{-n/4}$ in the case of BB84 states~\cite{Tomamichel_2013} and subspace coset states~\cite{Culf_2022, schleppy2025winning}, respectively. In contrast, a trivial strategy always succeeds with probability $2^{-n}$, and this holds for any cloning game. Are there \emph{especially hard} cloning games which admit no non-trivial strategies and have asymptotically optimal bounds of the form $O(2^{-n})$?
\itcsedit{Closing this gap is not merely an intellectual and aesthetic curiosity; it has important consequences for an application of cloning games to black hole physics which we introduce and study in our work.}
\itcsedit{We outline this application in \Cref{sec:applicationbh}.}

Not only are all known cloning games far from optimal, we prove that existing techniques can at best only produce upper bounds of the form $2^{-n/2}$. We discuss this \itcsedit{limitation} in detail in \Cref{sec:prevtechniques}.

\item\label{item:microcryptapp} \textbf{Unclonable encryption in MicroCrypt:} A number of recent works~\cite{DBLP:conf/tqc/Kretschmer21,10.1007/978-3-031-15802-5_8, DBLP:conf/tcc/AnanthGQY22,brakerski_et_al:LIPIcs.ITCS.2023.24} showed how to build quantum cryptography from \emph{pseudorandom states and unitaries}, which exist in ``MicroCrypt'' and are potentially even weaker than one-way functions~\cite{DBLP:conf/tqc/Kretschmer21}. To this day, however, the worlds of unclonable cryptography and MicroCrypt have been somewhat disconnected\footnote{This is with the notable exception of private-key quantum money, which is implied by pseudorandom states~\cite{cryptoeprint:2018/544}.}, as was recently observed in~\cite{DBLP:journals/corr/abs-2404-12647,ananth2024revocableencryptionprogramsmore}. Do pseudorandom unitaries, which have so far eluded major cryptographic application, give rise to interesting unclonable cryptography? 

The analysis of \emph{Haar cloning games}, where $U_\theta$ is a \emph{Haar} unitary (or, a unitary sampled from a unitary design), seems far beyond the scope of existing techniques, as we explain in Sections~\ref{sec:prevtechniques} and~\ref{sec:techoverviewhaar}.

\item\label{item:multicopygames} \textbf{Multi-copy games:} 
Can we extend $1 \mapsto 2$ cloning games to $t \mapsto t+1$ cloning games, where the cloner $\Phi$ receives $t$ many copies $(U_\theta \ket{x})^{\otimes t}$ and where $t+1$ players $\algo P_1, \ldots, \algo P_{t+1}$ simultaneously seek to recover $x$?
Although multi-copy variants of no-cloning have been studied before~\cite{werneroptimalcloning}, these are far from sufficient to understand multi-copy cloning games, as we previously noted in Remark~\ref{remark:strongnocloning}.
The natural notion of multi-copy games was raised as an open problem in~\cite{DBLP:journals/corr/abs-2404-12647,ananth2024revocableencryptionprogramsmore}, where the latter initiated the study of multi-copy security in the context of revocable cryptography.

Not only is prior work limited to $1 \mapsto 2$ cloning games, all existing unclonable cryptography is based on \emph{highly learnable} classes of states and becomes \textbf{completely insecure} if $t$ is allowed to grow polynomially in the number of qubits; this is in stark contrast with most quantum states which require exponentially many copies to be learned, as the literature on quantum tomography suggests~\cite{BCG13}.
We discuss the limitations of current approaches in \Cref{sec:prevtechniques} and provide strong evidence that existing techniques seem fundamentally insufficient for analyzing multi-copy games more generally.

We remark that prior works~\cite{ DBLP:conf/tcc/LiuLQZ22,10.1007/978-3-031-78020-2_8} studied a variant of multi-copy security in the context of quantum copy-protection and unclonable decryption, where each ``copy'' of the program or key is an i.i.d. \emph{mixed} state, and thus effectively an independent sample rather than an identical copy of a \emph{pure state}. 
While this setting allows one to generically reduce notions of multi-copy security to $1 \mapsto 2$ security (using a ``quantum pigeonhole argument''), it fails to capture the natural---and significantly more \emph{quantum}---notion of identical pure state copies; in particular, the same techniques do not carry over in this setting.
The pure state notion of unclonability is clearly more desirable in practice; not only does it allow one to send copies of the \emph{same} exact state to multiple recipients without compromising security, many identical copies also allow the recipients to approximately \emph{reflect} around the state~\cite{schoute2024quantumprogrammablereflections}, thereby offering an additional ``public verification feature'' which is unavailable for mixed states, and which has found many use-cases in unclonable cryptography~\cite{behera2024publicquantumcoins,ananth2024revocableencryptionprogramsmore}.

\item\label{item:beyondcryptoapp} \textbf{Applications beyond cryptography:}
While cloning games appear quite fundamental, their use case has so far been limited to cryptography.
Can cloning games offer new insights in other 
scenarios where no-cloning and monogamy of entanglement play an important role, such as in black hole physics? Recent works studied idealized models of black holes which rely on Haar random or pseudorandom unitary dynamics~\cite{Hayden_Preskill_2007,Kim_2023,engelhardt2024cryptographiccensorship}, which raises the question: can cloning games with Haar random unitaries help us understand how information gets scrambled inside of a black hole?

 An application to black hole physics once again seems to require new insights into \emph{Haar cloning games} which, as mentioned before, are currently out of reach. We refer to Sections~\ref{sec:prevtechniques} and~\ref{sec:techoverviewhaar}.

\end{enumerate}
\noindent
Given these inherent limitations on cloning games, it seems that fundamentally new techniques are needed in order to advance the field. This is where our next contribution comes in.

\paragraph{A New Suite of Techniques for Analyzing Cloning Games.} Our approach towards overcoming both limitations~\ref{item:optimalgames} and~\ref{item:multicopygames} is to focus on entirely new cloning games altogether.  Inspired by the recent literature on pseudorandom quantum states~\cite{cryptoeprint:2018/544, brakerski2019pseudo,cryptoeprint:2023/282}, we study a cloning game based on \emph{binary phase states}. The pseudorandomness of these states makes them excellent candidates for multi-copy unclonability~\cite{werneroptimalcloning}, in the sense of a traditional no-cloning theorem. In order to extend this to a stronger cloning game bound as discussed in Remark~\ref{remark:strongnocloning}, we take the existing formalism of binary types~\cite{DBLP:conf/tcc/AnanthGQY22} and extend it to a new notion of binary \emph{subtypes}, proving new standalone spectral bounds along the way. For technical reasons, our results only apply to a restricted model: rather than receiving the string $\theta$ in the clear, each player receives oracle access and is allowed to make a \itcsedit{single} query to either $U_\theta$ or $U_\theta^\dag$. While this constitutes a weaker model, it already implies something much stronger than a conventional $t \mapsto t+1$ no-cloning bound.\footnote{As we explain in Remarks~\ref{remark:strongnocloning} and~\ref{remark:veryrestricted}, approximate $t \mapsto t+1$ no-cloning emerges as a special case of our one-query cloning game, whereby each player makes a single query to $U_\theta^\dag$ and immediately measures in the computational basis (with no post-processing whatsoever). In this case, the value of the cloning game is precisely equal to the maximum \emph{average} cloning fidelity for $t \mapsto t+1$.}

Ultimately, we prove the following theorem (see \Cref{sec:techoverviewbinaryphase} for more details):

\begin{theorem}[Informal, see Theorem~\ref{thm:tcopycloning} for a formal statement]\label{thm:informaltcopycloning}
Let $n,t \in \N$. Then, the one-query $t \mapsto t+1$ binary phase cloning game $\mathsf{G}_{t \, \mapsto \, t+1}$ over $\mathcal{X} = \bit^n$, where each of the players is allowed to make one oracle query, has a value of $\omega(\mathsf{G}_{t \,\mapsto\, t+1}) \leq \exp(O(t \log t)) \cdot 2^{-n}$.
\end{theorem}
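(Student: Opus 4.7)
The plan is to upper bound the game value by expressing it as an operator norm of an averaged operator, and then to exploit the binary subtype framework to control that norm.

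First, I would fix an arbitrary one-query strategy for the cloner $\Phi$ and the $t+1$ players. Because each player is restricted to a single query to $U_\theta$ or $U_\theta^\dag$, standard purification arguments show that each player's POVM element $M_i^{\theta,x}$ for guess $x$ takes the form of a fixed measurement operator sandwiched by a single oracle query and a $\theta$-independent unitary on the player's register plus an ancilla. Substituting this into the winning-probability expression and bounding the trace by an operator norm (via Cauchy--Schwarz or direct inequalities) yields $\omega(\mathsf{G}_{t \mapsto t+1}) \le \|M\|_{\mathrm{op}}$, where
\[
M \;=\; \EE_{\theta}\;\EE_{x}\; \Psi^{(t+1)}_{\theta,x} \otimes \bigl(U_\theta \ket{x}\bra{x} U_\theta^\dag\bigr)^{\otimes t},
\]
and $\Psi^{(t+1)}_{\theta,x}$ is a $(t+1)$-fold tensor product whose $i$-th factor is linear in a single $U_\theta$ (or $U_\theta^\dag$).

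Second, I would exploit the binary phase structure to evaluate $\EE_\theta$. Writing $U_\theta = D_\theta H^{\otimes n}$ with $D_\theta = \sum_y (-1)^{\theta(y)}\ket{y}\bra{y}$, each appearance of $U_\theta^{(\dag)}$ in $M$ expands into a sum of computational-basis terms carrying phases $(-1)^{\theta(y)}$. Averaging over a uniformly random $\theta:\{0,1\}^n \to \{0,1\}$, any monomial in these phases vanishes unless every evaluation point appears an even number of times -- this is the binary type condition. The averaged operator $M$ therefore decomposes into blocks indexed by binary types of the combined $O(t)$-tuple of oracle evaluation points, and further into \emph{binary subtypes} that refine this classification by tracking the origin of each evaluation point (player ancilla vs.\ cloner input register).

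Third, within each subtype, the corresponding block of $M$ factorizes (up to an index permutation) as a block-wise tensor product of smaller matrices. Applying the novel spectral bound on operator norms of block-wise tensor products, each block's operator norm is at most $C_\tau \cdot 2^{-n}$ for a subtype-dependent combinatorial factor $C_\tau$. Summing $C_\tau$ over all binary subtypes $\tau$ on $O(t)$ evaluation points yields a total of $\exp(O(t\log t))$, reflecting the number of set-partitions and multiplicities, which gives the claimed bound $\omega(\mathsf{G}_{t \mapsto t+1}) \le \exp(O(t\log t)) \cdot 2^{-n}$.

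The main obstacle is the per-block operator-norm bound: as emphasized in the paper's discussion of limitations (\Cref{sec:prevtechniques}), naive tensor-product inequalities yield only $2^{-n/2}$ per block, leaving a multiplicative gap of $2^{n/2}$ from the optimal $2^{-n}$. The core technical novelty must be a refined block-wise spectral inequality that exploits the sparsity and even-multiplicity constraints imposed by the binary subtype structure, preserving the base $2^{-n}$ factor per block while paying only the polynomial-in-$t$ overhead from the subtype count.
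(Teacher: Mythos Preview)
Your high-level framework (expand via binary types, refine to subtypes, apply block-wise tensor spectral bounds) matches the paper, but there is a genuine gap for $t>1$ at the step ``bound the trace by an operator norm, $\omega(\mathsf{G}_{t\mapsto t+1})\le\|M\|_{\mathrm{op}}$.'' The only natural way to have $M$ act simultaneously on the players' output registers and on the $t$ input copies is via the Choi state $\rho$ of the cloning channel (\Cref{lemma:CImultiplayer}), which yields $\omega=2^{n(t-1)}\Tr{\Pi\rho}$ for an appropriate averaged $\Pi$. Discarding $\rho$ and bounding by $\|\Pi\|_\infty$ is then provably vacuous: \Cref{sec:opnormcounterexample} exhibits one-query strategies (even ones that ignore $\theta$ entirely) with $\|\Pi\|_\infty\ge 2^{n(\lfloor t/2\rfloor-1)}\ge 1$ whenever $t\ge 2$ and the encoding unitaries have real entries, as binary phase unitaries do.

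The failure shows up exactly in your per-block claim that each subtype has norm $\le C_\tau\cdot 2^{-n}$. In fact \Cref{lemma:finalopnormbound} gives $\|\Pi_{\bmu}\Xi\Pi_{\bmu}\|_\infty\le 2^{-nt+nb}$ where $b$ is the number of \emph{free variable symbols} of $\bmu$, and $b$ ranges up to $\lfloor t/2\rfloor$; after the $2^{n(t-1)}$ Choi factor this contributes $2^{nb-n}$, which exceeds $1$ once $b\ge 1$. The paper's repair is \emph{not} a sharper spectral inequality but rather to retain $\rho$ in the bound (\Cref{lemma:monogamyboundwithstate}) and exploit that its marginal on Alice's $t$ registers is maximally mixed (\Cref{lemma:choipartialtrace}): this forces the weighted sum $\sum_{\bmu} 2^{nb_{\bmu}}\Tr{\Pi_{\bmu}\rho}\le\exp(O(t\log t))$ (via \Cref{lemma:freevartodistinctval} and \Cref{lemma:gammatrace}), so the dangerous $2^{nb}$ factors cancel against the state weights. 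Your last paragraph diagnoses the $2^{-n/2}$ overlap barrier, but that is a single-copy phenomenon; for $t>1$ the obstruction is the free-variable blow-up, and it cannot be overcome by any operator-norm-only argument.
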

For constant $t$, this is asymptotically optimal and thus overcomes limitation~\ref{item:optimalgames}. For $t=o(n/\log n)$, this is still negligible in $n$ and thus makes significant progress towards overcoming limitation~\ref{item:multicopygames}. However, we believe that this construction is plausibly secure when $t$ is \emph{any} polynomial in $n$ (unlike previous constructions based on BB84~\cite{Tomamichel_2013, broadbent_et_al:LIPIcs.TQC.2020.4} and coset states~\cite{10.1007/978-3-030-84242-0_20, Culf_2022, schleppy2025winning}), and view our results as providing evidence towards this conjecture. Our justification for the plausible security of this construction is the fact that binary phase states are pseudorandom~\cite{cryptoeprint:2018/544, brakerski2019pseudo} and hence multi-copy unclonable~\cite{werneroptimalcloning}. We discuss our binary phase state construction more in~\Cref{sec:techoverviewbinaryphase}.

Secondly, we study 
the new and natural notion of a \emph{Haar cloning game}. Here, the unitary $U_\theta$ is sampled according to the \emph{Haar measure} and the players receive oracle access to $U_\theta$ and $ U_\theta^\dag$.
We show that the Haar cloning game is the \emph{hardest} cloning game by exhibiting a \emph{worst-case to average-case reduction}; this allows us to use an upper bound on the value of \emph{any} cloning game, including our binary phase state game, in order to bound the value of the Haar cloning game.  As a consequence, we additionally obtain the following:

\begin{corollary}[Informal]\label{cor:worsttoavinformal}
Let $n,t \in \N$.
As a consequence of our worst-case to average-case reduction (\Cref{thm:worst-case-to-average-case}), we can show the following bounds on the Haar cloning game:
    \begin{itemize}
        \item In the single-copy setting, the Haar game $\mathsf{G}_{1 \,\mapsto\, 2}$ has a value of $\omega(\mathsf{G}_{1\, \mapsto \,2})\leq \left(\cos^2(\pi/8)\right)^n \approx 2^{-0.228n}$.\\ (Here, the players are free to make \textbf{arbitrarily many adaptive queries} to $U_\theta$ or $U_\theta^\dag$.)

        \item In the multi-copy setting, the Haar game $\mathsf{G}_{t \,\mapsto\, t+1}$ has a value of $\omega(\mathsf{G}_{t\, \mapsto \,t+1})\leq \exp(O(t \log t)) \cdot 2^{-n}$. (Here, the players are restricted to making \textbf{only a single query} to $U_\theta$ or $U_\theta^\dag$.)
    \end{itemize}
\end{corollary}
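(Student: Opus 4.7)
The plan is to invoke \Cref{thm:worst-case-to-average-case}, which asserts that the value of the Haar cloning game is upper bounded by the value of the cloning game associated with \emph{any} fixed ensemble $\{V_\theta\}_{\theta \in \Theta}$ of $n$-qubit unitaries, so long as Bob's and Charlie's query budgets are preserved by the reduction. Granting this, both bullets of the corollary reduce to plugging in an appropriate ensemble. For the first bullet, take the BB84 ensemble $V_\theta = H^{\theta}$ with $\theta \in \{0,1\}^n$; since knowing $\theta$ is equivalent to having unbounded oracle access to $H^\theta$, this matches the arbitrary-adaptive-query Haar model, and the bound of Tomamichel et al.~\cite{Tomamichel_2013} on the BB84 cloning game yields $\omega(\mathsf{G}_{1 \mapsto 2}) \leq (\cos^2(\pi/8))^n$. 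For the second bullet, apply the reduction with the binary phase ensemble of \Cref{thm:informaltcopycloning}; since the binary phase bound holds in the one-query model and our reduction preserves query counts, we obtain $\omega(\mathsf{G}_{t \mapsto t+1}) \leq \exp(O(t \log t)) \cdot 2^{-n}$ in the same model.

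To establish the reduction itself, I would argue via direct simulation. Given any strategy $\mathcal{A} = (\Phi, B, C)$ for the Haar cloning game, the cloner for the $\{V_\theta\}$-game samples a Haar unitary $W$ as shared randomness with Bob and Charlie, applies $W^{\otimes t}$ to the received state $(V_\theta \ket{x})^{\otimes t}$, and then runs the Haar cloner $\Phi$ on the resulting state $((W V_\theta) \ket{x})^{\otimes t}$. In the question phase, Bob and Charlie each receive $\theta$ and already know $W$, so they can implement any oracle query to $U := W V_\theta$ (or to its inverse $V_\theta^\dag W^\dag$) using a single query to $V_\theta$ (resp.\ $V_\theta^\dag$) followed by a free application of $W$ (resp.\ $W^\dag$) on their private workspace. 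By the left-invariance of the Haar measure, $W V_\theta$ is Haar-distributed for every fixed $\theta$, so the simulated strategy wins with exactly the same probability as $\mathcal{A}$ does in the Haar game; taking the supremum over $\mathcal{A}$ gives $\omega(\text{Haar}) \leq \omega(\mathsf{G}_V)$.

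The main technical point to be careful about is that the Haar measure is continuous, so ``shared Haar randomness'' is not directly implementable by a finite protocol. This is handled by approximating $W$ with a sample from a unitary $k$-design of sufficiently high order $k$ (depending on the Hilbert space dimension and on the number of queries made by $\mathcal{A}$) and taking a limit; since the winning probability is a polynomial of bounded degree in the entries of $W$ and $W^\dag$, the approximation error vanishes in the limit. A secondary subtlety is the query bookkeeping in the second bullet: one must verify that the extra applications of $W$ and $W^\dag$ do not count as queries to the unknown ensemble, which is immediate because $W$ is shared randomness known to all parties and is applied purely on the players' private registers.
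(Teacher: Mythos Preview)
Your approach is essentially the paper's: sample a random unitary as shared randomness, pre-apply it copy-wise to the received ciphertexts, pass its description to the players, and have them simulate each oracle call to the composite unitary using one call to the worst-case oracle plus a free application of the shared unitary. Two minor points are worth flagging. First, the invariance you need is \emph{right}-invariance of the Haar measure (for fixed $V_\theta$, the map $W\mapsto WV_\theta$ preserves Haar), not left-invariance. Second, rather than sampling a genuine Haar $W$ and then taking a limit, the paper samples the shared randomness directly from an exact unitary $(t+q)$-design and spends \Cref{sec:mixedunitarydesigns} showing that exact designs remain exact under mixed queries (to $U$ and $U^\dag$) and under adaptive querying; this is what makes the per-query simulation go through without any approximation error. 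Your limit argument is fine in spirit, but note that ``bounded degree'' holds only per fixed strategy (with its fixed query count $q$), so the design order must be chosen after fixing the strategy and before taking the supremum---which is exactly how the paper's \Cref{thm:worst-case-to-average-case} is stated.
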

\noindent
We will see next that Haar cloning games are central to the applications previously listed in items~\ref{item:microcryptapp} and~\ref{item:beyondcryptoapp}. We will discuss Haar cloning games and our worst-case to average-case reduction more in~\Cref{sec:techoverviewhaar}.


\paragraph{Opening Up New Applications.}
To demonstrate the full potential of our new insights into cloning games, we give two applications of our techniques which help resolve fundamental open questions in the field. We will see that these applications crucially require us to overcome the aforementioned limitations~\ref{item:optimalgames} and~\ref{item:multicopygames} in the existing constructions and analyses of cloning games.
\begin{itemize}
    \item \textbf{Black Hole Cloning Games.} In \Cref{sec:black-hole-games}, we analyze a new three-player game which is designed to capture cloning and entanglement monogamy in the context of evaporating black holes (see~\Cref{fig:black-hole-scrambling}). Our results offer new quantitative insights into the \emph{black hole information paradox}~\cite{PhysRevD.14.2460,Preskill:1992tc,Hayden_Preskill_2007} and suggest that, in an idealized model of a black hole which features Haar random (or pseudorandom) scrambling dynamics, the information from infalling qubits can only be recovered from either the interior or the exterior of the black hole, but never from both places at once---even in the presence of powerful observers which can make a single query to the scrambling unitary or its inverse.

    At a technical level, this requires us to essentially show a bound of $O(2^{-n})$ for the $1 \mapsto 2$ Haar cloning game; \textbf{even an exponentially small bound of $O(2^{-cn})$ for $c < 1$ will not suffice}. We thus crucially need to overcome the aforementioned limitation~\ref{item:optimalgames}, and we also need to make use of the aforementioned worst-case to average-case reduction. We will discuss this application in more detail in~\Cref{sec:applicationbh}, taking care to provide context on relevant prior work in black hole physics.

    \item \textbf{Unclonable Encryption: ``MicroCrypt'' and the Multi-Copy Setting.} In \Cref{sec:unclonableenc}, we give an affirmative answer to an open question which was recently posed in \cite{DBLP:journals/corr/abs-2404-12647}; namely: do interesting unclonable cryptographic primitives---other than private-key quantum money which is implied by pseudorandom states~\cite{cryptoeprint:2018/544}---exist, even in a world in which $\mathsf{P} = \mathsf{NP}$? We construct succinct unclonable encryption schemes from the existence of pseudorandom unitaries; thereby continuing to close the gap between the worlds of MicroCrypt and unclonable cryptography. A crucial ingredient for this result is the aforementioned worst-case to average-case reduction.

    Secondly, we propose a candidate multi-copy unclonable encryption scheme based on the aforementioned binary phase cloning game. We view~\Cref{thm:informaltcopycloning} as evidence and a first step towards proving its security in the stronger setting where $t$ can be an a priori unbounded polynomial in $n$ and the players are free to make polynomially many queries to the encryption and decryption functionality (or even more strongly, are given the secret key $\theta$ in the clear). Considering the pseudorandomness of the binary phase states that we use as ciphertexts, we believe this stronger security guarantee to be plausible. Obtaining multi-copy security clearly requires us to overcome limitation~\ref{item:multicopygames}.

    We will discuss these applications to unclonable cryptography in some more detail in~\Cref{sec:introue}.
\end{itemize}
We now turn to a detailed overview of our techniques.

\subsection{Technical Overview}\label{sec:techoverview}

This technical overview is organized as follows:
\begin{itemize}
    \item In~\Cref{sec:prevtechniques}, we discuss previous constructions~\cite{broadbent_et_al:LIPIcs.TQC.2020.4, 10.1007/978-3-030-84242-0_20} and explain the limitations of these constructions and the underlying techniques~\cite{Tomamichel_2013, Culf_2022, schleppy2025winning} in both the multi-copy and single-copy settings.

    \item In~\Cref{sec:techoverviewhaar}, we discuss a natural construction using Haar random unitaries, the obstacles to directly analyzing this construction, and our worst-case to average-case reduction as a means of indirectly studying this construction.

    \item In~\Cref{sec:techoverviewbinaryphase}, we present our construction based on binary phase states (previously put forth by~\cite{cryptoeprint:2018/544, brakerski2019pseudo} as a quantum pseudorandom state), and our new tools for analyzing this. Our tools significantly extend the previous notion of \emph{binary types} introduced by~\cite{DBLP:conf/tcc/AnanthGQY22} for analyzing binary phase states.
\end{itemize}
We remark at the outset that, when considering cloning games, we will consider a few different models for how the players (Bob and Charlie) access $\theta$:
\begin{itemize}
    \item \emph{Strong cloning games:} Bob and Charlie are given $\theta$ in the clear; 
    \item \emph{Oracular cloning games:} Bob and Charlie are given oracle access to $U_\theta$ and $U_\theta^\dag$, and are free to make an a priori unbounded polynomial number of queries.
    \item \emph{Bounded-query-oracular cloning games:} Bob and Charlie are given some a priori bound $q$ on the number of queries they can adaptively make.

    We note that, even when $q = 1$, this model is still quite expressive; as noted in~\Cref{remark:veryrestricted}, it captures conventional and approximate $t \mapsto t+1$ no-cloning bounds as a special case.
\end{itemize}
We will also sometimes consider Bob and Charlie who are required to run in quantum polynomial time.

\subsubsection{Previous Techniques and Their Limitations}\label{sec:prevtechniques}

Let us now dive a little deeper into the shortcomings of our current understanding of cloning games. Here, it is instructive to start with limitation \ref{item:multicopygames}; namely, that of multi-copy games.

\paragraph{Multi-Copy Insecurity of BB84 and Coset States.} One can extend cloning games as defined in~\Cref{fig:nocloning-intro} to \emph{multi-copy} cloning games: in this variant, the cloner $\Phi$ receives $t$ copies i.e. the state $(U_\theta \ket{x})^{\otimes t}$. In the place of Bob and Charlie, there are now $t+1$ players $\algo P_1, \ldots, \algo P_{t+1}$ who wish to all simultaneously guess $x$. It is easy to see that this game becomes provably easier for the adversaries as $t$ increases.

In the case of BB84 states, where $U_\theta \ket{x} = \mathsf{H}^\theta \ket{x}$, we claim that the adversaries can win with probability 1 for any $t \geq 2$. The attack is simple: $\Phi$ will measure one copy in the standard basis and one copy in the Hadamard basis, and forward these results to all players. Upon receiving $\theta$, the players can mix-and-match these results to recover $x$. The intuitive issue here is that the measurement basis is entirely disentangled across qubits; in fact,~\cite{DBLP:conf/tcc/AnanthK21} describes a generic attack on cloning games with this disentangled structure.

The case of coset states~\cite{10.1007/978-3-030-84242-0_20, Culf_2022, schleppy2025winning} is similar, albeit only after $t$ becomes larger than $n$. Here, we will interpret the basis $\theta$ as a subspace $A \subseteq \mathbb{F}_2^n$ of dimension $n/2$, and the input $x \in \bit^n$ as a pair of cosets $s+A, s'+A^\perp$. Then the cloner will receive copies of the state
$$\ket{A_{s, s'}} = \frac{1}{2^{n/4}}\sum_{a \in A} (-1)^{\langle a, s' \rangle} \ket{a+s}.$$
The cloner can measure half the copies in the standard basis and half the copies in the Hadamard basis, and forward these results to all players. Upon receiving $A$, they can all identify the cosets $s+A$ and $s' + A^\perp$ with high probability.

As stated, the above attacks only apply in the strong cloning setting. However, the situation is more grim for a very simple reason: \emph{these families of states are both learnable given $\mathsf{poly}(n)$ copies}, whereas most states require exponentially many copies to become learnable, as the state of the art in quantum tomography~\cite{BCG13} suggests. Hence, the cloner $\Phi$ can simply learn a classical description of the state and subsequently forward both the basis $\theta$ and the message $x$ to the $t+1$ players; the players do not even require the challenger to send them $\theta$. In other words, these attacks even hold in the bounded-query-oracular model with $q = 0$ or $q=1$. 
Nevertheless, our technical starting point, which we discuss next, is the toolkit used by previous works to analyze cloning games in the single copy ($t = 1$) case. We will pin down where exactly it fails to work in the multi-copy case and remedy these problems. To provide the necessary background, we begin by introducing the notion of a monogamy of entanglement game.

\paragraph{Monogamy of Entanglement Games.} We now take a brief detour and discuss the concept of \emph{monogamy of entanglement games}; we will see shortly that they are closely related to cloning games. Monogamy of entanglement games were introduced by Tomamichel, Fehr, Kaniewski and Wehner~\cite{Tomamichel_2013} in order to characterize entanglement monogamy~\cite{5388928} using the language of non-local games. Informally, quantum correlations are ``monogamous'', and thus cannot be shared freely among multiple parties.

A monogamy of entanglement (MOE) game $\mathsf{G}$ with respect to the question set $\Theta$, answer set $\algo X$ and measurement set $\{\vec{A}_x^{\theta}\}_{\theta \in \Theta, x \in \algo X}$ is an interactive game played by three players: a trusted referee called Alice, as well as two colluding and adversarial parties called Bob and Charlie.
\begin{enumerate}
  \item (\textbf{Setup phase}) Bob and Charlie prepare a tripartite quantum state $\rho \in \algo D(\algo H_{\reg{A}} \otimes \algo H_{\reg{B}} \otimes \algo H_{\reg{C}})$. They send register $\reg A$ to Alice, and hold onto registers $\reg B$ and $\reg C$, respectively. Afterwards, they are no longer allowed to communicate for the remainder of the game.

  \item (\textbf{Question phase}) Alice samples a random question $\theta \sim \Theta$, and then applies the corresponding measurement $\{\vec{A}_x^{\theta}\}_{x \in \algo X}$ to her register $\reg A$. Afterwards, Alice announces the question $\theta$ to both Bob and Charlie, and keeps the measurement outcome in $\algo X$ to herself.

\item (\textbf{Answer phase}) Bob and Charlie independently output a guess for Alice's outcome by applying the measurements $\{\vec{B}_x^{\theta}\}_{x \in\algo X}$ and $\{\vec{C}_x^{\theta}\}_{x \in \algo X}$ to their registers $\reg B$ and $\reg C$, respectively.

\item (\textbf{Outcome phase}) Bob and Charlie win if they both guess Alice's outcome correctly.
\end{enumerate}
Here, we associate a particular \emph{strategy} $\mathsf{S}$ employed by Bob and Charlie with the tuple consisting of the initial shared state $\rho$ and the positive operator-valued measurements $\{\vec{B}_x^{\theta}\}_{\theta \in \Theta, x \in\algo X}$ and $\{\vec{C}_x^{\theta}\}_{\theta \in \Theta, x \in \algo X}$.
The \emph{value} of a particular strategy $\mathsf{S}$ for the monogamy game $\mathsf{G}$ is defined as the average winning probability
$$
\omega_{\mathsf{S}}(\mathsf{G}) := 
\underset{\theta \sim \Theta}{\mathbb{E}} \sum_{x \in \algo X} \mathrm{Tr}\left[ (\vec{A}_x^{\theta} \otimes \vec{B}_x^{\theta} \otimes \vec{C}_x^{\theta}) \rho_{\reg{ABC}}  \right].
$$
We let $\omega(\mathsf{G})$ denote the maximal value of the game, i.e., the optimal winning probability over all strategies.
An upper bound on the value of a monogamy game therefore limits the extent to which Bob and Charlie can simultaneously maintain a quantum correlation with Alice who holds a register outside of their view. \emph{We emphasize that, in general monogamy of entanglement games, the shared state $\rho$ is completely arbitrary and adversarially chosen by Bob and Charlie; as we will see, this is the main way in which cloning games deviate from the monogamy of entanglement setting.}

\paragraph{Appendix~\ref{sec:moetocloning}: Connecting Cloning and Monogamy Games.} The standard tool for analyzing cloning games is to recast them as a special type of monogamy game. Together with the techniques laid out by~\cite{Tomamichel_2013} for analyzing monogamy games, this has found numerous applications in unclonable cryptography, including unclonable encryption~\cite{broadbent_et_al:LIPIcs.TQC.2020.4}, quantum copy-protection~\cite{Aaronson_2009,coladangelo2022quantum,10.1007/978-3-030-84242-0_20,10.1007/978-3-031-15979-4_8}, unclonable decryption keys~\cite{cryptoeprint:2020/877}, and unclonable proofs~\cite{cryptoeprint:2023/1538}.

We now explain this connection between cloning and MOE games. In a cloning game, Alice sends the cloner $\Phi$ the state $U_\theta \ket{x}$. Instead, we could imagine that Alice and $\Phi$ share several EPR pairs, and later on in the game (even after the cloning phase) Alice can apply a measurement $\left\{\matA_x^\theta := \bar{U}_\theta \proj{x} \bar{U}_\theta^\dag\right\}_{x \in \bit^n}$ on her side to induce the state $U_\theta \ket{x}$ on the cloner's side, where $\bar{U}$ denotes the complex conjugate of $U$. This yields a monogamy of entanglement game with the following two restrictions:
\begin{itemize}
    \item As already mentioned, Alice's measurements $\matA_x^\theta$ must take the form $\bar{U}_\theta \proj{x} \bar{U}_\theta^\dag$.
    \item The tripartite state $\rho$ shared by Alice, Bob, and Charlie is the \emph{Choi state} of the cloning channel $\Phi$. Concretely, we must have the special form \begin{equation}\label{eq:choidefinitiontechoverview}
        \rho_{\reg{ABC}} = (\id_{\reg A} \otimes \Phi_{\reg{A' \rightarrow BC}}) (\proj{\mathsf{EPR}}_{\reg{AA'}}).
    \end{equation}
    In words, \emph{$\rho$ can be adversarially chosen subject to the constraint that its marginal state on Alice's system is maximally mixed.}
\end{itemize}
\noindent
This equivalence, which we formally show in Lemma~\ref{lem:equiv}, was first observed in the context of BB84 states by Broadbent and Lord~\cite{broadbent_et_al:LIPIcs.TQC.2020.4}. On a high level, the statement is a consequence of the \emph{ricochet property} of EPR pairs, which we formally state in~\Cref{sec:quantumprelims}. The technical benefit of doing this is that it enables us to get a handle on the cloning channel $\Phi$ by absorbing it into the state shared by the players in the equivalent monogamy game. Now we can focus on Bob and Charlie's measurements which, as we will see, can be handled using spectral bounds as first observed by~\cite{Tomamichel_2013}.

Although the equivalence observed by~\cite{broadbent_et_al:LIPIcs.TQC.2020.4} is in the single-copy setting, it turns out that this idea readily generalizes to the multi-copy setting, as we show in Lemma~\ref{lemma:CImultiplayer}. The differences are as follows:
\begin{itemize}
    \item In the cloning game, Alice and the cloner $\Phi$ now share $nt$ EPR pairs, and Alice will measure each of the $t$ copies in the basis specified by $\left\{\bar{U}_\theta \proj{x} \bar{U}_\theta^\dag\right\}_{x \in \bit^n}$.

    \item In the equivalent monogamy-like\footnote{We say ``monogamy-like'' because monogamy-of-entanglement games traditionally involve just three parties~\cite{Tomamichel_2013}.} game, we only say that the adversaries win if Alice's $t$ measurements and the $t+1$ players' outputs are all equal to the same string $x$. In other words, we need to essentially post-select on Alice's measured string $x \in \bit^n$ being the same for each of the $t$ copies, which means the value of this monogamy-like game is immediately upper bounded by $2^{-n(t-1)}$.
\end{itemize}
Because of this post-selection, what we end up with is an equality of the following form:
\begin{equation}\label{eq:cloningtomonogamy}
    \omega(\mathsf{G}_{\mathrm{cloning}}) = 2^{n(t-1)} \cdot \omega(\mathsf{G}_{\mathrm{monogamy-like}}).
\end{equation}
Note that when $t = 1$, the $2^{n(t-1)}$ term is 1 and we recover the equivalence used by~\cite{broadbent_et_al:LIPIcs.TQC.2020.4}. Our goal is hence to upper bound the value of $\mathsf{G}_{\mathrm{monogamy-like}}$ by $2^{-n(t-1)} \cdot \negl(n)$, ideally even $O(2^{-nt})$.
\vinod{why monogamy-like?}
\seyoon{added a footnote addressing this; I don't think people considered this generalized monogamy game involving $> 3$ parties before.}


\paragraph{Section~\ref{sec:monogamyexisting}:~\cite{Tomamichel_2013} and its Limitations.} The work by~\cite{Tomamichel_2013} analyzes MOE games and later focuses on the BB84 case (i.e. $U_\theta = \mathsf{H}^\theta$) and uses two beautiful ideas, which for simplicity we state in the single-copy setting:

\begin{enumerate}
    \item\label{item:tfkwstate} The value of a particular monogamy game can be bounded \emph{independently} of the state $\rho_{\reg {ABC}}$ shared by the 3 players, noting that $\rho_{\reg {ABC}}$ is PSD and has trace 1. Concretely, one can show that
    \begin{align*}
         \underset{\theta \sim \Theta}{\EE} \sum_{x \in \mathcal{X}} \Tr{(\matA_x^\theta \otimes \mathbf{B}_x^\theta \otimes \mathbf{C}_x^\theta)\rho_{\reg{ABC}}} 
        & \, \leq \, \norm{\underset{\theta \sim \Theta}{\EE} \sum_{x \in \mathcal{X}} \matA_x^\theta \otimes \mathbf{B}_x^\theta \otimes \mathbf{C}_x^\theta}_\infty.
    \end{align*}

    This reduces the task of bounding the value of a monogamy game to bounding an operator norm. In general monogamy games as formulated by~\cite{Tomamichel_2013}, the shared state $\rho_{\reg ABC}$ is adversarially chosen so this step is tight. However, we will see soon that this step is too lossy when restricting attention to the special monogamy-like games that are equivalent to cloning games.

    \item\label{item:tfkwoverlap} This operator norm can in turn be bounded just in terms of \emph{pairwise overlaps} between the $\matA_x^\theta$'s, which the designer of the game is free to choose. As we restate in~\Cref{thm:tfkwmain}, the authors of~\cite{Tomamichel_2013} show that $$\norm{\underset{\theta \sim \Theta}{\EE} \sum_{x \in \mathcal{X}} \matA_x^\theta \otimes \mathbf{B}_x^\theta \otimes \mathbf{C}_x^\theta}_\infty \leq \frac{1}{|\Theta|} + \frac{|\Theta|-1}{|\Theta|} \cdot \underset{\substack{\theta, \theta' \in \Theta\\\theta \neq \theta'}}{\max}\text{ } \underset{x, x' \in \mathcal{X}}{\max} \norm{\matA_x^{\theta} \matA_{x'}^{\theta'}}_\infty.$$
    We refer the reader to~\Cref{thm:tfkwmain} for a formal statement.

\end{enumerate}
\noindent
In the BB84 monogamy game where $\Theta = \mathcal{X} = \left\{0, 1\right\}$ and $\matA_x^\theta = \mathsf{H}^\theta \proj{x} \mathsf{H}^\theta$, it is straightforward to see that $\norm{\matA_x^{\theta} \matA_{x'}^{\theta'}}_\infty = \frac{1}{\sqrt{2}}$, and hence $\omega(\mathsf{G}_{\text{\tiny{BB84}}}) \leq \frac{1}{2} + \frac{1}{2\sqrt{2}}$. The work by~\cite{Tomamichel_2013} also extends this to ``parallel-repeated'' BB84 games with $|\Theta| = |\mathcal{X}| = \bit^n$ (see~\Cref{def:parallel-rep} for a formal definition), and show that $$\omega(\mathsf{G}_{\text{\tiny{BB84}}}^{\otimes n}) \leq \cos^2\left(\frac{\pi}{8}\right)^n \approx 2^{-0.228n}.$$
Hence the BB84 monogamy game has value $\leq 2^{-0.228n}$, and in fact this is tight;~\cite{Tomamichel_2013} exhibits a simple strategy achieving this bound. Similar techniques were used by~\cite{Culf_2022} and improved upon by~\cite{schleppy2025winning} to analyze subspace coset states, ultimately proving an upper bound of $O(2^{-n/4})$; it is not known whether or not this is tight.\footnote{Previous work~\cite{schleppy2024optimal} proved an upper bound of $O(2^{-n/2})$ in a setting where the cloner $\Phi$ is restricted to splitting the state as is into two equal-sized halves, sending one to Bob and the other to Charlie. We cite $O(2^{-n/4})$ as the state of the art, as we are interested in games where the cloner $\Phi$ is unrestricted.} However, the techniques laid out by~\cite{Tomamichel_2013} \emph{provably} do not suffice for our applications:
\begin{itemize}
    \item In the multi-copy case, recall from Equation~\eqref{eq:cloningtomonogamy} that we need to prove a bound on $\mathsf{G}_{\mathrm{monogamy-like}}$ of $\ll 2^{-n(t-1)}$. Item~\ref{item:tfkwstate} of the~\cite{Tomamichel_2013} methodology proposes to ignore the structure of the state shared by Alice and $\algo P_1, \ldots, \algo P_{t+1}$, in order to reduce our task to bounding an operator norm.
    
    This is likely too lossy in our setting, as evidenced by the following simple counterexample (assume $t > 1$ is even for simplicity) that holds against any cloning game where the unitaries $U_\theta$ have real entries. 
    Alice will hold $tn/2$ EPR pairs (which are unentangled from the states held by $\algo P_1, \ldots, \algo P_{t+1}$). The $t+1$ players will each deterministically output $0^n$ as their guess (note that once again this strategy does not depend at all on $\theta$). The winning probability is now just the probability that Alice measures $0$ on each of her $tn/2$ EPR pairs (in whatever basis she samples), which is $2^{-nt/2} \geq 2^{-n(t-1)}$. We formalize this counterexample in~\Cref{sec:opnormcounterexample}.

    We note that this does not rule out the possibility of the~\cite{Tomamichel_2013} technique being adaptable to the multi-copy setting for a construction that uses unitaries with complex entries. However, the pre-existing constructions based on BB84 states or coset states --- as well as our main construction based on binary phase states (which we sketch in~\Cref{sec:techoverviewbinaryphase}) --- only use unitaries with real entries. Thus we still view our result as a bound against the adaptability of existing construction and techniques to the multi-copy setting.

    \item Even in the single-copy case, there is another inherent limitation that arises from using Item~\ref{item:tfkwoverlap} of the ~\cite{Tomamichel_2013} methodology: the maximal pairwise overlap $\max_{\theta \neq \theta'} \norm{\matA^\theta_x \matA^{\theta'}_{x'}}_\infty$ is provably at least $2^{-n/2}$ for any monogamy game, as we show in~\Cref{sec:tfkwendoftheline}. For completeness, we also show in~\Cref{sec:salting} that our binary phase state construction (which we will discuss more in~\Cref{sec:techoverviewbinaryphase}) essentially attains this maximum pairwise overlap.
    
    Howeover, we would ideally like to prove a tight bound (up to constant factors) of $O(2^{-n})$. This is not just a matter of aesthetic taste; this is actually crucial for our application to black hole physics, as we explain in~\Cref{sec:applicationbh}.
\end{itemize}
We next turn our attention to our construction and our new techniques for analyzing it, which make progress towards overcoming both of these barriers.

\subsubsection{Section~\ref{sec:worsttoav}: Haar Cloning Games and Worst-Case to Average-Case Reductions}\label{sec:techoverviewhaar}

Given our previous observations on the multi-copy insecurity of BB84 and coset states, it is clear that we need to look for entirely new constructions. Ideally, such a candidate ensemble of states would also remain \emph{unlearnable} in the presence of an arbitrary polynomial amount of identical copies. A natural idea is to consider Haar random states, which Werner~\cite{werneroptimalcloning} showed to be multi-copy unclonable. This suggests the following very natural approach: Alice will take $\left\{U_\theta: \theta \in \Theta\right\}$\footnote{The Haar random ensemble is infinite so this is not well-defined; this technicality can be circumvented by using a higher order unitary design or a pseudorandom unitary~\cite{DBLP:journals/corr/abs-2404-12647, hm24} in its place.} to be a Haar random ensemble and send the cloner $(U_\theta\ket{x})^{\otimes t}$. We call this the $t \mapsto t+1$ \emph{Haar cloning game}. However, existing techniques for analyzing the Haar measure, which we outline below, appear severely limited for our purposes:
\begin{itemize}
    \item Prior works~\cite{DBLP:journals/corr/abs-2404-12647,Chen_2024,allerstorfer2024monogamyhighlysymmetricstates} often rely on representation-theoretic techniques. In our setting, we would roughly need to prove spectral bounds on the \emph{mixed Haar twirl} of a certain operator $\Xi$; informally, in the $1\mapsto 2$ case, these amount to expressions of the form:
    $$\norm{\underset{U \sim \mathrm{U}(d)}{\EE} \left[\left(U \otimes U \otimes \bar{U}\right) \Xi \left(U \otimes U \otimes \bar{U}\right)^\dag\right]}_\infty.$$
    General expressions of this form have been studied by~\cite{PhysRevA.63.042111,grinko2023linearprogrammingunitaryequivariantconstraints, grinko2023gelfandtsetlinbasispartiallytransposed} using the machinery of \emph{mixed Schur-Weyl duality}, but their techniques appear to be very unwieldy in our more complicated setting with multiple non-communicating parties.
    \item The recent breakthrough result by Ma and Huang~\cite{hm24} uses a technically involved purification argument~\cite{hm24} that once again does not seem to adapt easily to the multi-party setting.
    \item Finally, the recent beautiful work by Bhattacharya and Culf~\cite{bhattacharyya2025uncloneableencryptiondecoupling} analyzes the Haar measure in the single-copy case using a modular application of the one-shot decoupling theorem~\cite{Dupuis2014}, but in the process only establishes a cloning bound of $\tilde{O}(1/n)$, whereas we would like a bound that is $O(2^{-n})$ or at the very least exponentially small in $n$.
\end{itemize}
Instead, we take a two-step approach which is based on the following insight: cloning games instantiated with a Haar (pseudo)random unitary are, in some sense, \emph{strictly harder to win} than any other cloning game. We prove this via a \emph{worst-case to average-case reduction} which, at a high level, follows from Haar invariance and some additional new insights into \emph{mixed} unitary designs, which we explain in more detail below. 

Our observation immediately suggests the following approach for analyzing a Haar cloning game:
\begin{enumerate}
    \item\label{item:wsttoavg} Argue that for \emph{any} distribution $\mathfrak{D}$ supported on $\mathrm{U}(2^n)$, we have: \begin{equation}\label{eq:worsttoavoverview}
        \sup_{\text{strategies }\mathsf{S}} \omega_{\mathsf{S}}(\mathsf{G}; U \sim \mathrm{U}(2^n)) \leq \sup_{\text{strategies }\mathsf{S}} \omega_{\mathsf{S}}(\mathsf{G}; U \sim \mathfrak{D}).
    \end{equation}

    \item\label{item:findawstcase} Find a convenient distribution $\mathcal{D}$ such that we can more easily show that $$\sup_{\text{strategies }\mathsf{S}} \omega_{\mathsf{S}}(\mathsf{G}; U \sim \mathfrak{D}) \leq O(2^{-n}),$$perhaps by passing first to an equivalent monogamy game as stated earlier.
\end{enumerate}
\noindent
We prove the aforementioned \emph{worst-case-to-average-case reduction} which is captured in Item~\ref{item:wsttoavg} in~\Cref{sec:worsttoav}.
To instantiate this argument, we need to be able to sample $V$ that appears Haar random, together with a classical description of it. This can be done using either a \emph{mixed unitary design} (in the bounded-query-oracular setting) or a pseudorandom unitary~\cite{DBLP:journals/corr/abs-2404-12647, hm24} (in the oracular setting with computationally bounded players). We formally define mixed unitary designs in~\Cref{sec:mixedunitarydesigns}, and---as a bonus---we also prove that the standard notion of an exact unitary $t$-design will also work as a mixed unitary design without modification. To the best of our knowledge, this was not previously observed in the literature, and we hope that this contribution might be of independent interest. We leave the task of adapting this reduction to the strong cloning game setting---that is, where a description of the unitary is revealed to the players in the clear rather than embedded in an oracle---as a direction for future work.
\vinod{remind the reader what are strong cloning games?}\seyoon{good point, thanks! done}



It now remains to address Item~\ref{item:findawstcase} i.e. find some other cloning game that we can more easily show an upper bound of $O(2^{-n})$ for. We address this next.

\subsubsection{Sections~\ref{sec:typesandsubtypes} and~\ref{sec:binaryphaseconstruction}: Construction and Analysis from Binary Phase States}\label{sec:techoverviewbinaryphase}

\paragraph{Our Construction Inspired by Quantum Pseudorandom States.} We begin from a simple starting point: in~\Cref{sec:techoverviewhaar}, we suggested having Alice send a Haar random state $U_\theta \ket{x}$ to the cloner. Instead, what if Alice were to send a \emph{pseudorandom} state~\cite{cryptoeprint:2018/544, brakerski2019pseudo}? These are also multi-copy unclonable by a trivial hybrid argument combined with Werner's result~\cite{werneroptimalcloning} in the Haar case. The advantage of working with binary phase states is that we have much simpler constructions that, as we will see, are easier to analyze. 

It was shown by~\cite{cryptoeprint:2018/544, brakerski2019pseudo} that if $\mathfrak{F}$ is a family of post-quantum pseudorandom functions~\cite{10.1145/3450745} $f: \left\{0, 1\right\}^n \rightarrow \left\{0, 1\right\}$, then the \emph{binary phase state} $$\ket{\psi^f} := 2^{-n/2} \sum_{y \in \bit^n} (-1)^{f(y)} \ket{y}$$ is pseudorandom. To use this in a cloning game, we follow the approach in~\cite{cryptoeprint:2023/282} in order to encode $x \in \bit^n$ into this state, which we do by taking:
\begin{align*}
    \ket{\psi_x^f} &:=  2^{-n/2}  \sum_{y \in \bit^n} (-1)^{f(y) + \langle x,y\rangle} \ket{y}
    =  \mathsf{U}_f \mathsf{H}^{\otimes n} \ket{x}, \quad \text{ where} \quad
    \mathsf{U}_f := \sum_{x \in \bit^n} (-1)^{f(x)} \proj{x}
\end{align*}
is a phase oracle for $f$. In other words, we are proposing to define a cloning game with $\Theta = \mathfrak{F}$ and $U_\theta = \mathsf{U}_f \mathsf{H}^{\otimes n}$. Thus in the equivalent monogamy game, Alice's projectors will be defined by 
$$\matA_x^f := \mathsf{U}_f \mathsf{H}^{\otimes n} \proj{x} \mathsf{H}^{\otimes n} \mathsf{U}_f.$$
The question is now how one should go about analyzing this game. As mentioned in~\Cref{sec:prevtechniques}, the usual~\cite{Tomamichel_2013} methodology for analyzing cloning games is firstly limited to the single-copy setting, and secondly even in this setting can only prove a bound of $2^{-n/2}$. For completeness, we show in~\Cref{sec:salting} that plugging our binary phase construction into~\cite{Tomamichel_2013} ``saturates'' this technique and yields a single-copy cloning bound of $\widetilde{O}(2^{-n/2})$, which is stronger than the previous results on BB84~\cite{Tomamichel_2013, broadbent_et_al:LIPIcs.TQC.2020.4} and coset~\cite{10.1007/978-3-030-84242-0_20, Culf_2022, schleppy2025winning} states.

\paragraph{Compressed Oracles and Binary Types.} The techniques discussed up to this point draw on the machinery of~\cite{Tomamichel_2013} and thus suffice to establish cloning bounds even in the setting where a classical description of the measurement basis $\theta$ (in the binary phase case, the function $f$) is sent to all players $\algo P_1, \ldots, \algo P_{t+1}$ in the clear. To our knowledge, the only other technique that works in this strong regime is the decoupling technique by~\cite{bhattacharyya2025uncloneableencryptiondecoupling}. However, as we explained in Sections~\ref{sec:prevtechniques} and~\ref{sec:techoverviewhaar}, both of these techniques run into limitations with respect to the multi-copy setting and/or attaining an optimal bound of $O(2^{-n})$.

We hence propose to migrate to the oracular setting, where each player can make oracle queries to $\mathsf{U}_f$, but is not given a description of $f$ in the clear. Note that this still suffices to recover $x$ from $\mathsf{U}_f\mathsf{H}^{\otimes n} \ket{x}$. In fact, we only need one query: a single query to $\mathsf{U}_f$ would leave us with $\mathsf{H}^{\otimes n} \ket{x}$, and now measuring in the Hadamard basis yields $x$. We explain this in the case of general cloning games in~\Cref{remark:veryrestricted}.

We now want to reason about algorithms that make oracle queries to $\mathsf{U}_f$ for a random (or pseudorandom) function $f$. The natural candidate technique for such a task is Zhandry's compressed oracle technique~\cite{DBLP:conf/crypto/Zhandry19}. The crucial idea is to purify the cloning game by adding a register that stores the function $f$. One can then argue that queries to $\mathsf{U}_f$ for a random $f: \bit^n \rightarrow \bit$ can be simulated as follows:
\begin{itemize}
    \item We will add a purifying ``database'' register to the system that we initialize to $\ket{\emptyset}$. In general, it will store some subset $S \subseteq \bit^n$.
    \item If the algorithm wishes to query the string $y \in \bit^n$, simply update $\ket{S} \gets \ket{S \oplus \left\{y\right\}}$ (note that if $y$ is in $S$ before the query, this will remove $y$ from $S$.)
\end{itemize}
Let us see how this technique would play out in our setting. Alice and $\algo P_1, \ldots, \algo P_{t+1}$ all share some global state, and act as follows:
\begin{itemize}
    \item Alice queries each of her $t$ $n$-qubit states, then applies a Hadamard to each copy. Her local transformation can be written as $\bigotimes_{i = 1}^t \left(\sum_{y_i \in \bit^n} \mathsf{H}^{\otimes n} \proj{y_i}\right)$, and she will XOR $\left\{y_1\right\} \oplus \ldots \oplus \left\{y_t\right\}$ into the database register.

    \item For each $i \in [t+1]$, let the adversary $\algo P_i$ make $q$ adaptive queries represented by unitaries $V_{i, 1}, \ldots, V_{i, q}$. Their local transformation can be written as a sum of terms of the form $$V_{i, q} \proj{z_{i, q}} V_{i, q-1} \proj{z_{i, q-1}} \ldots V_{i, 1} \proj{z_{i, 1}},$$ and they will XOR $\left\{z_{i, 1}\right\} \oplus \ldots \oplus \left\{z_{i, q}\right\}$ into the database register.
\end{itemize}
In summary, the database register will contain the set:
\begin{equation}\label{eq:typexor}
    \bigoplus_{i = 1}^t \left\{y_i\right\} \oplus \bigoplus_{i = 1}^{t+1} \bigoplus_{j = 1}^q \left\{z_{i, j}\right\}.
\end{equation}
At this point, it is unclear how to proceed, for the following simple conceptual reason: the utility of Zhandry's compressed oracle technique~\cite{DBLP:conf/crypto/Zhandry19} lies in the fact that it connects an algorithm's success probability with the contents of the database register in some way. For example, when reproving the~\cite{bbbv} lower bound showing the optimality of Grover search, Zhandry shows that the success probability of the algorithm is essentially upper bounded by the probability of a solution $x$ to the search problem appearing in the database register. However, there is no analogous notion in our setting, because we are considering a problem with \emph{inherently quantum inputs}; a successful adversary likely needs to query $\mathsf{U}_f$ on every input in superposition.

Instead, we will deviate from Zhandry's compressed oracle formalism by simply tracing out (or equivalently, measuring) the database register. This effectively conditions our superposition on the collection of strings that are listed an odd number of times in Equation~\eqref{eq:typexor}. This is exactly the notion of \emph{binary types} introduced by~\cite{DBLP:conf/tcc/AnanthGQY22}. In order to get a better handle on the binary type's combinatorial structure, we will restrict each of the $t+1$ players to only make \emph{one oracle query} to $\mathsf{U}_f$; as explained in~\Cref{remark:veryrestricted}, this is still sufficiently expressive to admit a trivial strategy attaining value $2^{-n}$.

In this case, a binary type $\blambda$ is specified by a subset $T_{\blambda} \subseteq [2^n]$ with $|T_{\blambda}| \leq 2t+1$. For $\vecx \in [2^n]^{2t+1}$, we say that $\BinType(\vecx) = \blambda$, or equivalently that $\vecx$ \emph{matches} $\blambda$, if every string in $T_{\blambda}$ appears an odd number of times in $\vecx$, while every string outside $T_{\blambda}$ appears an even number of times in $\vecx$. Thus, if Alice and the players jointly hold a standard basis state $\ket{\vecx}$, a simultaneous query to $\mathsf{U}_f$ by all parties will write $\BinType(\vecx)$ into the database register. Finally, we let $\Pi_{\blambda}$ denote the projector onto standard basis vectors $\vecx$ that match $\blambda$. We provide more precise definitions and properties of binary types in~\Cref{sec:phaseunitarydefinition}. We now model each player's projector as follows:
\begin{align*}
    \mathbf{P}_{i, x}^f &= \mathsf{U}_f V_i^\dagger \proj{x} V_i \mathsf{U}_f,
\end{align*}
for unitaries $Q_i$.\footnote{In reality, we later also allow these players additional ancillary workspace qubits; we define this generalization in Definition~\ref{def:veryrestrictedocg}. Moreover, we assume without loss of generality that the players do not perform any preprocessing before making their query to $\mathsf{U}_f$, by absorbing this preprocessing into the cloning channel $\Phi$ that constructs their initial states.}
This simplification together with the aforementioned binary type formalism allows us to succinctly characterize the value of the cloning game: if we define
\begin{align*}
    \Xi &:= \sum_{x \in \bit^n} \left[\left(\mathsf{H}^{\otimes n} \proj{x} \mathsf{H}^{\otimes n}\right)^{\otimes t} \otimes \bigotimes_{i = 1}^{t+1} \left(V_i^\dag \proj{x} V_i\right)\right],\text{ then} \\
    \omega(\mathsf{G}) &= \sum_{\blambda} \Tr{\Pi_{\blambda} \Xi \Pi_{\blambda} \rho},
\end{align*}
where $\rho$ is the shared state from the monogamy-like game that we introduced in~\Cref{sec:prevtechniques}. We face two challenges in bounding expressions of this form. We state them below and then describe how we address these challenges:
\begin{enumerate}
    \item\label{item:opnormnotenough} The~\cite{Tomamichel_2013} paradigm of discarding the tripartite state $\rho$ and simply bounding this expression by $\max_{\blambda} \norm{\Pi_{\blambda} \Xi \Pi_{\blambda}}_\infty$\footnote{This bound holds by noting that the projectors $\Pi_{\blambda} \Xi \Pi_{\blambda}$ are mutually orthogonal.} is provably too lossy, as we explained in~\Cref{sec:prevtechniques}.

    \item\label{item:typesareentangling} Even if it were somehow sufficient to bound $\norm{\Pi_{\blambda} \Xi \Pi_{\blambda}}_\infty$ for each $\blambda$, to the best of our knowledge, it appears difficult to directly establish such a bound. Informally, the reason is that the combinatorial structure arising from a type $\blambda$ entangles registers together; if we consider the $t = 1$ case and the type defined by $T_{\blambda} = \left\{x^*\right\}$ for some string $x^*$, then strings of the form $(x^*, y, y), (y, x^*, y),$ or $(x^*, y, y)$ would all match $\blambda$. It would be much cleaner if we could just analyze strings from one of these categories at a time.
\end{enumerate}

\paragraph{Idea 1 (Section~\ref{sec:freevariablesymbolcombi}): Staring at the Shared State.} To address Item~\ref{item:opnormnotenough}, we take a closer look at the structure of the shared state $\rho$. It is the result of applying some (adversarially chosen) channel to the right half of $tn$ $\mathsf{EPR}$ pairs. This can be seen from Equation~\eqref{eq:choidefinitiontechoverview} (appropriately generalized to the multi-copy setting). In other words, if we apply a partial trace to remove the $\reg{P_{1 \rightarrow t+1}}$ registers of the $t+1$ players, the residual state on Alice's register $\reg{A}$ will always be proportional to $\id_{2^n \times 2^n}$. 

This structure may seem mild, but it turns out to be enough to complete our analysis; we present this in Sections~\ref{sec:freevariablesymbolcombi} and~\ref{sec:noancillas}. This is perhaps not surprising; in the counterexample we presented in~\Cref{sec:prevtechniques} showing that just bounding the operator norm would be insufficient, Alice's local state was very far from maximally mixed. In fact, it was a pure state consisting of $t/2$ EPR qudit pairs.

At a high level, our analysis to use this structure of $\rho$ proceeds by showing that for any type $\blambda$ such that $\Pi_{\blambda} \Xi \Pi_{\blambda}$ has high operator norm, the shared state $\rho$ must place \emph{low} weight on the image of $\Pi_{\lambda}$. These effects roughly cancel each other out.

\paragraph{Idea 2 (Sections~\ref{sec:subtypes} and~\ref{sec:opnormbound}): From Types to Subtypes.} We now turn to the issue stated in Item~\ref{item:typesareentangling}. Continuing with the $t = 1$ example, reasoning about $\blambda$ directly requires simultaneously handling three categories of strings: $(x^*, y, y), (y, x^*, y),$ or $(x^*, y, y)$. Instead, we simplify matters by focusing on just one of these categories at a time --- we call such a category a \emph{subtype}, a novel notion we define formally in~\Cref{sec:subtypes}. We denote subtypes by $\bmu$ and their corresponding subtype projectors by $\Pi_{\bmu}$. In~\Cref{sec:subtypetotype}, we show that instead of bounding $\norm{\Pi_{\blambda} \Xi \Pi_{\blambda}}_\infty$ for a type $\blambda$, it suffices to bound $\norm{\Pi_{\bmu} \Xi \Pi_{\bmu}}_\infty$ for a \emph{subtype} $\bmu$. This added structure allows us to prove better spectral bounds, which we present in~\Cref{sec:opnormbound}.

It turns out that this technique allows us to prove the desired bound of $O(2^{-n})$ for $1 \mapsto 2$ cloning games, albeit with the restriction that Bob and Charlie can only make one query each to $\mathsf{U}_f$. At a very high level, the ``product structure'' of subtypes enables us to leverage a simple but novel spectral bound on the column-wise tensor product of several matrices, which we present in Lemma~\ref{lemma:colwisetensor}.

\paragraph{Technical Tool (Section~\ref{sec:linalg}): Spectral Bounds on Blockwise Tensor Products.} In order to prove spectral bounds on the norm of $\Pi_{\bmu} \Xi \Pi_{\bmu}$ for any subtype $\bmu$, and accommodate the possibility of the $t+1$ players using ancilla qubits, we require a novel bound on the norm of a blockwise tensor product of $d \times d$ block matrices. As a simple example, the $d = 2$ case is the following: we need to show that 
$$\norm{\begin{bmatrix} c_{1, 1} \vec A_{1, 1} \otimes \vec B_{1, 1} & c_{1, 2} \vec A_{1, 2} \otimes \vec B_{1, 2} \\ c_{2, 1} \vec A_{2, 1} \otimes \vec B_{2, 1} & c_{2, 2} \vec A_{2, 2} \otimes \vec B_{2, 2}\end{bmatrix}}_\infty \leq 1,$$provided that $\begin{bmatrix} \vec A_{1, 1} & \vec A_{1, 2} \\ \vec A_{2, 1} & \vec A_{2, 2}\end{bmatrix}, \begin{bmatrix} \vec B_{1, 1} & \vec B_{1, 2} \\ \vec B_{2, 1} & \vec B_{2, 2}\end{bmatrix}$ are unitary and $|c_{i, j}| \leq 1$ for all $i, j$. Proving this turns out to be rather technically challenging; we present this result and its proof in Theorem~\ref{thm:opnormmain}. We also discuss at the end of~\Cref{sec:blockwisetensor} why existing techniques fail to prove the general result we need. Given that this theorem is a purely linear algebraic statement unrelated to monogamy games, we are hopeful that it might be useful elsewhere in quantum information and even in other areas.


Putting these ideas together, we manage to prove a multi-copy cloning bound of $O_t(2^{-n})$, overcoming the limitations of previous techniques explained in~\ref{sec:prevtechniques} with a complete overhaul of the technical framework laid out by~\cite{Tomamichel_2013}, albeit at the expense of restricting the $t+1$ players to make a single oracle query to $\mathsf{U}_f$.

\subsection{Application I: Black Hole Cloning Games}\label{sec:applicationbh}

As one application of our techniques on cloning games, we study the notion of a \emph{black hole cloning game}---a three-player interactive game which is designed to capture no-cloning and entanglement monogamy which arises naturally in the context
of evaporating black holes. The main result we discuss in this section is an asymptotically tight upper bound on the success probability of a variant of the game.
In particular, we observe that the analysis of black hole cloning games is inextricably linked to the existence of standard cloning games which have asymptotically optimal bounds of the form $O(2^{-n})$---well beyond the pre-existing upper bound of $2^{-0.25n}$ from the analysis by~\cite{schleppy2025winning} of the coset state game~\cite{10.1007/978-3-030-84242-0_20, Culf_2022}. Our new contributions on \emph{optimal} games allow us to fill this gap, and to complete the analysis.

In this section, we first provide some relevant context on black holes, and then give an overview of how we revisit the problem using the language of cloning games. We present more detailed context and results in~\Cref{sec:black-hole-games}. 

\paragraph{Hayden-Preskill thought experiment.} 

Hawking~\cite{PhysRevD.14.2460} made the remarkable prediction that black holes are not completely black---they slowly emit what is now known as \emph{Hawking radiation}. But if black holes evaporate, what happens to information that falls inside of a black hole? Does it get destroyed, or is it effectively conserved and eventually radiated out in some scrambled form? This question has puzzled physicists for many decades. The endeavour of trying to reconcile the predictions of quantum mechanics and general relativity has led to the famous \emph{black hole information paradox}~\cite{PhysRevD.14.2460,Preskill:1992tc}.

Hayden and Preskill~\cite{Hayden_Preskill_2007} proposed a thought experiment that illustrates the black-hole information loss problem: Suppose that Alice throws $k$ qubits into a black hole, which are maximally entangled with a second register in her possession. For simplicity, we assume that the black hole initially consists of $n-k$ qubits. After a long period of time, another distant observer, say Bob, uses the intercepted Hawking radiation (say, in the form of photons) which he has collected in the meantime, feeds it into his quantum computer and applies an appropriate computation in an attempt to recover Alice's quantum state. Hayden and Preskill asked: how long would Bob have to wait before he finally starts to observe correlations between the outgoing Hawking radiation and the entangled infalling matter near the boundary? To answer this question, they made the following crucial assumption: black holes are extremely strong and efficient \emph{information scramblers}---their internal dynamics can be modeled as a more or less \emph{Haar random} unitary time-evolution.\footnote{Note that genuine \emph{Haar} dynamics have exponential circuit complexity with high probability~\cite{Knill00}. Hayden and Preskill opted for a weaker notion than Haar randomness which nevertheless suffices for their purposes; namely, that of a \emph{unitary $2$-design.}} This view has since been widely adopted as an idealized mathematical model of black hole evolution~\cite{Almheiri_2013,Harlow_2013,Kim_2023,engelhardt2024cryptographiccensorship}. Concretely, it assumes that, from the perspective of an outside observer, Alice's infalling information is scrambled by a random unitary and effectively spread across the entire horizon of the black hole, whereby the total amount of information---accounting for both the internal degrees of freedom, as well as Alice's infalling information---is encoded in qubits which lie at the surface of the black hole. Here, the \emph{surface} of the black hole refers to the \emph{stretched horizon}---a tiny region of space which is located "just outside" of the black hole horizon~\cite{PhysRevD.48.3743,Harlow_2013} and is typically considered to be part of the black hole.

 Hayden and Preskill~\cite{Hayden_Preskill_2007} showed that after slightly more than half of the black hole has evaporated (sometimes called the \emph{Page time}), Bob can in principle completely recover the information from Alice's infalling qubits by intercepting the outgoing Hawking radiation. This led them to conclude that black holes act as \emph{information mirrors}: while Alice's information remains concealed up
until the half-way point, it then starts to emerge fairly quickly in the form of scrambled Hawking radiation. 

\paragraph{Do Black Holes Clone Information?} While the Hayden-Preskill thought experiment~\cite{Hayden_Preskill_2007} suggests that the information from Alice's infalling qubits is ultimately preserved and encoded in the form of scrambled radiation, it does raise the question: how much of Alice's information is still retained by the black hole? Could it be that a distant observer, say Bob, can recover Alice's information from the outgoing radiation, and yet a "second copy" of Alice's information somehow also survives in the black hole interior?

Suppose that we partition the relevant qubits which result from the evolution of the black hole into two registers $\mathsf{H}$ and $\mathsf{R}$, where $\mathsf{H}$ corresponds to the qubits near the horizon which are still retained by the black hole, and where $\mathsf{R}$ corresponds to emitted Hawking radiation.
Hayden and Preskill~\cite{Hayden_Preskill_2007} consider two regimes: at the beginning of the experiment, $\mathsf{H}$ is significantly larger than $\mathsf{R}$ and virtually all of Alice's information is contained in $\mathsf{H}$; however, by the end of the experiment, when the black hole has evaporated long past the Page time (and most of the qubits have been transformed into radiation), $\mathsf{R}$ is now significantly larger than $\mathsf{H}$ and must therefore be highly correlated with Alice's infalling qubits. Is it possible that Alice's information is retained by the black hole (and thus present in $\mathsf{H}$) but
simultaneously also encoded in $\mathsf{R}$---even long after the Page time?
Could it be that black holes \emph{clone} information?\footnote{Hayden and Preskill~\cite{Hayden_Preskill_2007} explored this question using another and much more radical thought experiment: suppose that Bob quickly decodes Alice's information from the intercepted Hawking radiation, and then immediately jumps inside of the black hole and crosses the event horizon in an attempt to find a "second copy" of Alice's information. A series of follow-up works~\cite{Almheiri_2013,Harlow_2013,aaronson2016complexityquantumstatestransformations}
have since exposed and studied paradoxes which emerge out of this experiment, and which have led to the belief that black-hole radiation decoding must necessarily be \emph{computationally intractable}~\cite{Harlow_2013,10.1007/978-3-031-38554-4_2,bostanci2023unitarycomplexityuhlmanntransformation}.}
Note that the standard no-cloning theorem~\cite{1982Natur.299..802W} and its approximate variants~\cite{Bu_ek_1996} do not suffice to immediately answer this question; for example, if Alice's infalling information is \emph{classical} rather than \emph{quantum}, the black hole may not even need to fully clone a quantum state to retain her information; it merely needs to maintain a classical correlation with Alice's infalling bits. Therefore, to answer this question, a new approach is necessary.


\begin{figure}[t]
    \centering
    \includegraphics[width=0.99\linewidth]{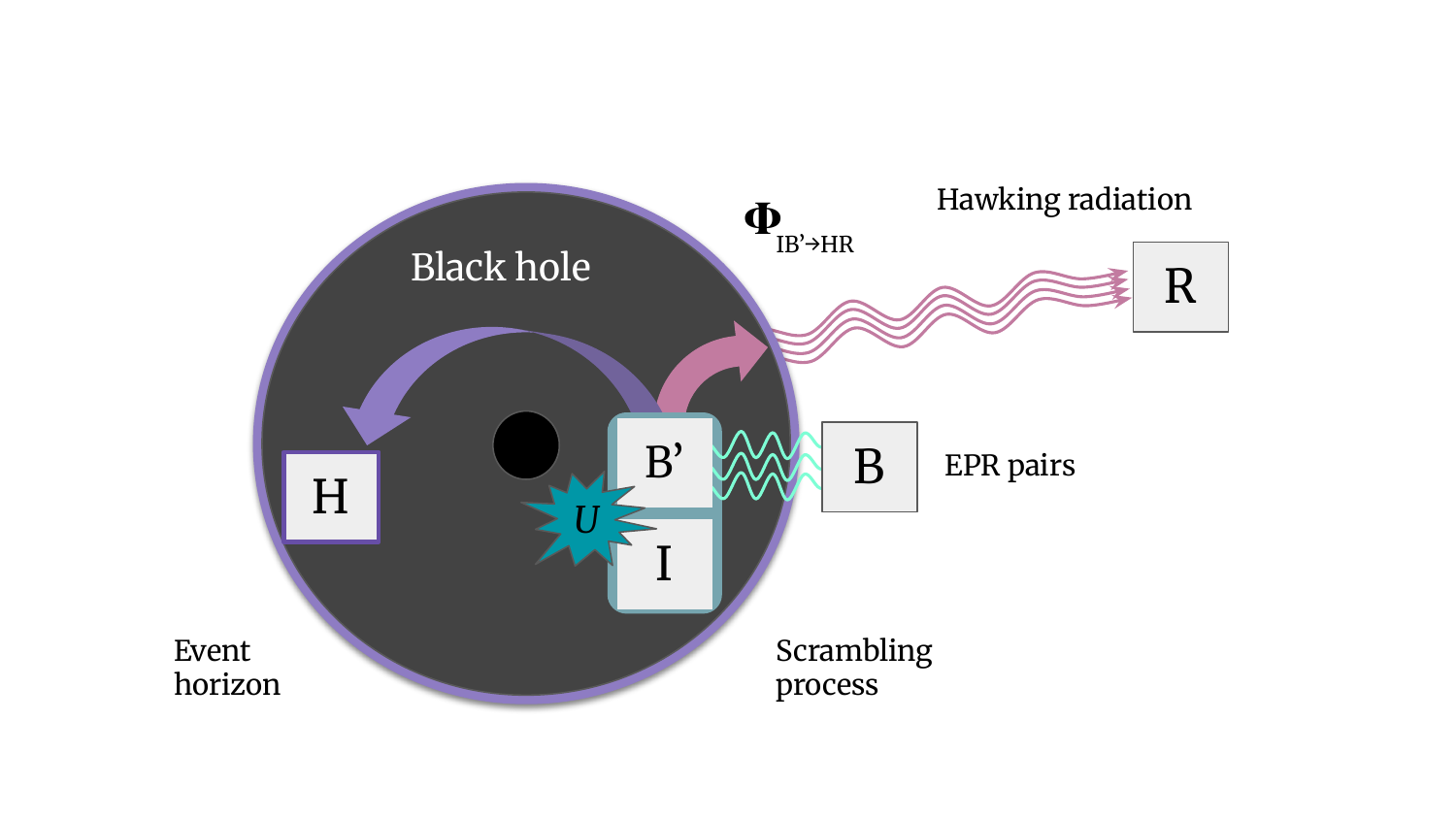}
    \caption{\textbf{Black Hole Cloning Game}. 
    Entangled particles emerge near the boundary and form a $k$-qubit EPR pair $\ket{\mathsf{EPR}}_{\reg{B'B}}$, of which register $\reg{B'}$ falls inside of the black hole, and register $\reg{B}$ is given to Alice. The   
   interior of the black hole, modeled as $\ket{0^{n-k}}_{\reg{I}}$, together with the $k$ infalling qubits in register $\reg{B'}$, undergo a \emph{scrambling process}. Here, the internal dynamics of the black hole are described by a random $n$-qubit unitary time-evolution operator $U\sim \nu$ which gets applied to registers $\reg{B'I}$. A quantum channel $\Phi_{\reg{IB' \rightarrow HR}}$ processes the internal qubits into two registers: a register $\reg{H}$ corresponding to the qubits at the event horizon which are retained by the black hole, and a register $\mathsf{R}$ corresponding to the emitted Hawking radiation. Charlie (who is anchored at the horizon) receives $\mathsf{H}$, whereas Bob (who
is a distant observer) receives $\mathsf{R}$. The two observers are allowed to have some knowledge of the internal dynamics $U$, and thus receive oracles for $U$ and $U^\dag$. Finally, Alice measures $\reg{B}$, and Bob and Charlie win if they simultaneously guess her outcome correctly.
    }
    \label{fig:black-hole-scrambling}
\end{figure}

\paragraph{This Work: Revisiting the Black Hole Information Paradox.} We seek to extend our existing understanding of the black hole information paradox in two ways. The first is that we would like to provide a new and \emph{quantitative} characterization of cloning and entanglement monogamy which arises in the context of evaporating black holes. Many seminal works~\cite{PhysRevLett.23.880, Tomamichel_2013} have quantified and enhanced our understanding of physical principles (e.g., the nature of entanglement) through the formulation and analysis of certain interactive games; our first goal is to do the same for the black hole information paradox:

\begin{quote}
    \begin{center}
      {\em \textbf{Question One}: Can we give a quantitative trade-off for how much of Alice's infalling information can be retained by the black hole, and how much can be present in its radiation?}
      \end{center}
    \end{quote} 
\noindent
Note that such a trade-off would significantly extend the analysis by Hayden and Preskill~\cite{Hayden_Preskill_2007} who merely studied two extreme cases, i.e., when when most of the qubits are either retained by the black hole or when most of the qubits have been converted into Hawking radiation.
As it turns out, however, such an information-theoretic analysis seems to lie way beyond the scope of existing techniques. (We will discuss the limitations of these existing techniques later in this section, as well as in Sections~\ref{sec:prevtechniques} and~\ref{sec:techoverviewhaar}.)

Secondly, we aim to revisit prior attempts for how to model the decoder's knowledge of the internal dynamics of the black hole. In the seminal Hayden-Preskill thought experiment~\cite{Hayden_Preskill_2007}, the authors assume that the decoder (say, Bob) holds a quantum memory that is maximally entangled with the qubits in the interior of the black hole. In other words, Bob is an extremely powerful observer that has complete control over the black hole and its resulting radiation. As noted by Hayden and Preskill, we would ideally like a more realistic model that captures Bob's knowledge of the black hole dynamics without giving him direct control over the black hole, which raises the question:
\begin{quote}
    \begin{center}
      {\em \textbf{Question Two:} Are there alternative---and perhaps more reasonable---models that capture the fact that the decoder has knowledge of the internal dynamics of the black hole?}
      \end{center}
    \end{quote} 
    
\noindent We believe that an affirmative answer to these two questions could offer new and valuable insights into the black-hole information paradox.

\paragraph{Our Approach: Black Hole Cloning Games.} To address \emph{Question One}, we cast the famous black-hole information paradox into the form of a cloning game (see \Cref{fig:black-hole-scrambling}). At the beginning of the game, Alice throws her entangled qubits into the black hole. Later, at the end of the game, two spatially separated ``adversaries'' called Bob and Charlie will attempt to recover Alice's information---either as a \emph{distant observer} with access to the emitted Hawking radiation (say, Bob), or as an \emph{anchored observer} (say, Charlie) who remains at the event horizon and has access to remaining qubits which are retained by the black hole. Concretely, we imagine that Alice measures her half of the entangled state at the end of the experiment, and Bob and Charlie are asked to simultaneously predict her measurement outcome. In our setting, Alice's infalling information should be thought of as being \emph{classical} rather than \emph{quantum}; indeed, our black hole cloning game in \Cref{fig:black-hole-game} is equivalent to a game in which Alice throws random classical bits into the black hole\footnote{This is an important distinction compared to the Hayden-Preskill experiment, where Alice's information can actually be thought of as being \emph{quantum} and where Bob tries to decode it in a coherent fashion by exploiting entanglement as a resource. Nevertheless, our setting captures the inherent correlation trade-off between Alice's system and the registers $\mathsf{H}$ and $\mathsf{R}$ in a similar spirit.

}. We work with the purified picture for purely aesthetic purposes; it helps us keep the notation consistent with Hayden and Preskill, and also makes the analysis via monogamy of entanglement games much more direct.

In line with prior works~\cite{Hayden_Preskill_2007,Harlow_2013,Kim_2023,engelhardt2024cryptographiccensorship}, we model the black hole's internal evolution in between as a ``scrambling process'' which is the result of some random unitary time-evolution $U$, followed by an arbitrary quantum channel $\Phi$ that processes the internal qubits into two systems: one corresponding to the qubits near the event horizon of the black hole (denoted by $\mathsf{H}$), and another corresponding to the emitted Hawking radiation (denoted by $\mathsf{R}$). In the work of Hayden and Preskill~\cite{Hayden_Preskill_2007}, $\Phi$ is essentially just a simple unitary channel that randomly partitions the scrambled qubits into two subsystems $\mathsf{H}$ and $\mathsf{R}$ of different sizes (in particular, where $\mathsf{R}$ is typically much larger than $\mathsf{H}$), whereas in our work we let $\Phi$ be an arbitrary (and possibly unitary)
completely-positive and trace-preserving map. This significantly generalizes the setting considered by Hayden and Preskill~\cite{Hayden_Preskill_2007} and, in particular, includes \emph{all possible partitionings} into two systems $\mathsf{H}$ and $\mathsf{R}$.
As in a conventional cloning game, it is also crucial that Bob and Charlie do not communicate while the decoding phase is taking place, which also consistent with our modeling assumption that Charlie is anchored at the horizon of the black hole, whereas Bob remains a distant observer.

To address \emph{Question Two}, we grant Bob and Charlie \emph{oracle access} to the internal scrambling dynamics $U$, as well as its inverse $U^\dag$. Additionally, we assume that Bob and Charlie have a complete description of the physical process $\Phi$ that results in the outgoing radiation. While Bob (and similarly, also Charlie) no longer has the ability to exercise direct control over the black hole dynamics (as in the Hayden-Preskill model), he does have the power (via the oracle for $U^\dag$) to instantaneously ``unscramble'' the black hole's time evolution at will. Here, the oracle access to the unitaries $U, U^\dag$ is meant to reflect the possibility that Bob and Charlie are powerful observers that have obtained some knowledge on the physical equations and parameters governing the black hole's evolution (see Figure~\ref{fig:bobcharlieBH} for a quantum circuit representation).

\paragraph{Analyzing Black Hole Cloning Games.}

Given the similarity between our black hole cloning game and the games studied in~\cite{Tomamichel_2013, broadbent_et_al:LIPIcs.TQC.2020.4}, this raises the question of whether one can indeed interpret one as an instance of the other.
Our main technical insight is that the analysis of black hole cloning games is inextricably linked to the existence of standard $1 \mapsto 2$ cloning games which have asymptotically optimal bounds of the form $O(2^{-n})$---well beyond the pre-existing upper bound of $2^{-0.25n}$ from the analysis by~\cite{schleppy2025winning} of the coset state game~\cite{10.1007/978-3-030-84242-0_20, Culf_2022}. 
In~\Cref{thm:black-hole-one-query}, we prove the following result without any restrictions on the choice of quantum channel $\Phi$; however, for technical reasons, we let $\nu$ be a unitary $3$-design and we assume that Bob and Charlie employ single-query strategies only. We visualize what Bob and Charlie's strategies might look like in~\Cref{fig:bobcharlieBH}, and we mention further potential improvements in \Cref{sec:openquestions}.

\begin{theorem}[Informal, see~\Cref{thm:black-hole-one-query} for formal statement]\label{thm:informalblackhole} Let $n,k \in \N$ be integers with $n \geq k$ and let $\nu = \{U_\theta\}_{\theta \in \Theta}$ be an $n$-qubit unitary $3$-design. Then, for any quantum channel $\Phi$ (of appropriate dimensions), the maximal single-query value $\omega(\mathsf{G}_{\text{\tiny{BH}}})$ of the black hole cloning game $\mathsf{G}_{\text{\tiny{BH}}}$ (as illustrated in \Cref{fig:black-hole-scrambling}) 
 with respect to $\nu$ and $\Phi$ is at most $O(2^{-k})$.
\end{theorem}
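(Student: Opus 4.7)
The plan is to reduce the black hole cloning game on $k$-qubit messages to a standard $1 \mapsto 2$ cloning game on $n$-qubit messages, losing a factor of $2^{n-k}$ along the way, and then to apply the asymptotically tight cloning bound of Theorem~\ref{thm:informaltcopycloning} combined with the worst-case to average-case reduction of Theorem~\ref{thm:worst-case-to-average-case}. Concretely, I would show that
\begin{equation*}
2^{-(n-k)} \cdot \omega(\mathsf{G}_{\text{\tiny{BH}}}) \;\leq\; \omega(\mathsf{G}_{1 \mapsto 2}) \;\leq\; O(2^{-n}),
\end{equation*}
where $\mathsf{G}_{1 \mapsto 2}$ denotes the standard single-copy $n$-qubit cloning game instantiated with the same $3$-design $\nu$ and the one-query model. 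Rearranging yields $\omega(\mathsf{G}_{\text{\tiny{BH}}}) \leq 2^{n-k} \cdot O(2^{-n}) = O(2^{-k})$, which is the desired bound.

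The reduction works as follows. Given any single-query strategy $(\Phi, \{\vec{B}_x^{U_\theta}\}, \{\vec{C}_x^{U_\theta}\})$ for $\mathsf{G}_{\text{\tiny{BH}}}$, I build a strategy for $\mathsf{G}_{1 \mapsto 2}$ on $n$-qubit messages $\vec z = (x, y) \in \bit^k \times \bit^{n-k}$. In the $n$-qubit game, Alice sends $U_\theta \ket{\vec z}$ to the cloner; I let the cloner simply apply the given channel $\Phi_{\reg{IB' \rightarrow HR}}$ and forward registers $\reg{H}, \reg{R}$ to Bob and Charlie. Both players then execute their original black hole strategy (using their single query to $U_\theta$ or $U_\theta^\dag$) to produce a guess $x \in \bit^k$, and append a deterministic $0^{n-k}$ as their guess for $y$. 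Whenever Alice samples $y = 0^{n-k}$, which happens with probability $2^{-(n-k)}$, the cloner receives exactly $U_\theta(\ket{x} \otimes \ket{0^{n-k}})$, matching the black hole game's input. Conditioned on this event, Bob and Charlie both guess $x$ correctly with average probability $\omega(\mathsf{G}_{\text{\tiny{BH}}})$, and their deterministic $0^{n-k}$ matches Alice's $y$ with probability $1$, giving the stated lower bound on $\omega(\mathsf{G}_{1 \mapsto 2})$.

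To bound $\omega(\mathsf{G}_{1 \mapsto 2})$, I invoke the worst-case to average-case reduction of Theorem~\ref{thm:worst-case-to-average-case} to replace the $3$-design distribution with the binary phase distribution, and then apply Theorem~\ref{thm:informaltcopycloning} with $t = 1$ to obtain the $O(2^{-n})$ upper bound. The main obstacle I anticipate is verifying that all pieces cleanly apply under the relatively mild assumption that $\nu$ is only a $3$-design rather than a full Haar ensemble or a high-order design; this is handled by the paper's treatment of mixed unitary designs together with the fact that exact $t$-designs suffice as mixed unitary designs, combined with the observation that a one-query $1 \mapsto 2$ cloning game only involves $O(1)$ occurrences of $U_\theta$ and $U_\theta^\dag$. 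A secondary technical point is that the cloner's channel in $\mathsf{G}_{1 \mapsto 2}$ is adversarially chosen, so we are free to instantiate it as the (fixed) scrambling channel $\Phi$ of the black hole game; Bob and Charlie's one-query structure in $\mathsf{G}_{1 \mapsto 2}$ is identical to their structure in $\mathsf{G}_{\text{\tiny{BH}}}$, so the reduction is query-preserving.
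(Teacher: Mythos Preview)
Your proposal is correct and follows essentially the same route as the paper: reduce the black hole game to the standard $1 \mapsto 2$ oracular cloning game on $n$-qubit messages via post-selection on the last $n-k$ bits being $0^{n-k}$ (losing a factor $2^{n-k}$), then invoke the worst-case to average-case reduction (which requires exactly a $3$-design since $t+q = 1+2 = 3$) together with the $O(2^{-n})$ binary phase bound from Theorem~\ref{thm:informaltcopycloning}. The paper formalizes the post-selection step by first passing through an explicit monogamy-of-entanglement game via Choi-state/ricochet manipulations, but this is just a more formal packaging of your direct conditioning argument; the only cosmetic discrepancy is register ordering---in the paper the interior $\ket{0^{n-k}}$ precedes the infalling qubits, so the conditioned state is $U_\theta\ket{0^{n-k},x}$ rather than $U_\theta\ket{x,0^{n-k}}$, which is immaterial up to a swap absorbed into $\Phi$.
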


The bulk of our work in~\Cref{sec:black-hole-games} is to show that the maximal value $\omega(\mathsf{G}_{\text{\tiny{BH}}})$ (see Definition~\ref{def:bh-value} for a formal definition) can always be related to the maximal value of a related (but standard) cloning game $\mathsf{G}_{\text{\tiny{clone}}}$.
Specifically, we show that the game $\mathsf{G}_{\text{\tiny{BH}}}$ emerges as a special case of $\mathsf{G}_{\text{\tiny{clone}}}$ in which we post-select on the event that Alice's sampled message $y$ takes the form $y =x||0^{n-k}$, for some $x \in \bit^k$. Because this event occurs with probability $2^{-n+k}$, this allows us to deduce that $$ \underset{\text{strategies }\mathsf{S}} {\mathrm{sup}}\,\,\omega_{\mathsf{S}}(\mathsf{G}_{\text{\tiny{clone}}}) \, \geq \, 2^{-n+k} \, \cdot \, \omega(\mathsf{G}_{\text{\tiny{BH}}}).$$ Therefore, in order to obtain an asymptotically optimal bound of the form $\omega(\mathsf{G}_{\text{\tiny{BH}}}) = O(2^{-k})$, it suffices to show that the related cloning game $\mathsf{G}_{\text{\tiny{clone}}}$ has a maximal value of $\mathrm{sup}_{\text{strategies }\mathsf{S}}\,\,\omega_{\mathsf{S}}(\mathsf{G}_{\text{\tiny{clone}}}) = O(2^{-n})$. \textbf{Crucially, we require an $O(2^{-n})$ bound; a bound of the form $O(2^{-cn})$ for any $c < 1$ is insufficient.} This would yield $\omega(\mathsf{G}_{\text{\tiny{BH}}}) \leq 2^{-k} \cdot 2^{n(1-c)}$, which is a completely trivial bound if we assume $n \gg k$ (which is likely since presumably the black hole is a much larger system than the set of qubits Alice throws inside). As briefly mentioned in the introduction and elaborated in~\Cref{sec:prevtechniques}, our work is the first to show that a cloning game (even if in a restricted query model) admits an asymptotically optimal bound of the form $O(2^{-n})$. We explain this final point some more in Remark~\ref{remark:bhblcomparison}.


\paragraph{Implications for Black-Hole Physics.} 

\Cref{thm:informalblackhole}
yields the first quantitative trade-off for how much of Alice’s
information in the form of $k$ infalling qubits can be retained by the black hole, and how much can be present in its emitted Hawking radiation.
In fact, our bound of $ O(2^{-k})$ is also optimal (up to constant factors), since there always exists a particular Hawking radiation channel $\Phi$ together with a trivial strategy that attains it: we can consider a variant of the black hole cloning game where $\Phi$ is the channel that converts the entirety of all the qubits inside of the black hole into radiation (i.e., acting as the identity), which would allow Bob to perfectly recover the information from Alice's system by simply applying the inverse of the scrambling unitary. Now Charlie can guess randomly and succeed with probability $2^{-k}$.

We believe that our bound has several interesting implications. First, it suggests that the moment Bob has produced a register which is nearly maximally correlated with Alice's infalling qubits, then any additional qubits that lie in the interior of the black hole (i.e., in Charlie's system), must be almost completely uncorrelated from them. Second, such a strong decoupling result is achieved for \emph{any} choice of Hawking radiation channel $\Phi$---it arises precisely because of the strong scrambling properties of the unitary $3$-design itself. By contrast, the same would not be true for a \emph{classical} model of black-hole scrambling~\cite{Hayden_Preskill_2007}, say in the form of a random reversible circuit or a random permutation\footnote{Interestingly, one could interpret the one-round variant of the \emph{sponge construction} which underlies the international hash standard SHA-3~\cite{KeccakSub3,10.1007/978-3-031-68391-6_7,carolan2024quantumindifferentiabilityprecomputation} as a classical model of black hole scrambling, where the scrambling unitary is given by a random permutation and the Hawking radiation channel is an erasure channel that selects a subset of the final output bits.}. Despite the fact that a random permutation is already exponentially complex (i.e., requires exponential-sized circuits with overwhelming probability), it is simply \emph{not sufficiently scrambling} to allow for a similar decoupling to hold.

In summary, our results suggest that, in an idealized model of a black hole which features Haar random (or pseudorandom) scrambling dynamics, the information from infalling entangled qubits can only be recovered from either the interior (specifically, at the event horizon) or the exterior of the black hole (i.e., in the form of distant Hawking radiation), though never from both places at the same time.

\subsection{Application II: Unclonable Cryptography}\label{sec:introue}


In this section, we describe our applications to quantum cryptography; specifically, for how to construct succinct unclonable encryption schemes from the existence of \emph{pseudorandom unitaries}. This allows us to narrow the gap between the world of quantum pseudorandomness and unclonable cryptography, building on previous work by~\cite{cryptoeprint:2018/544} that showed that private-key quantum money exists assuming pseudorandom states.

\paragraph{Unclonable Encryption.} Cloning games have played a foundational role in the field of \emph{unclonable cryptography}---a branch of quantum cryptography that capitalizes on quantum no-cloning~\cite{1982Natur.299..802W} to achieve guarantees of ``unclonable security'' which are completely impossible classically. These include unclonable encryption~\cite{broadbent_et_al:LIPIcs.TQC.2020.4, 10.1007/978-3-031-38554-4_3, kundu2023deviceindependentuncloneableencryption, aky24}, encryption with unclonable decryption keys~\cite{cryptoeprint:2020/877}, unclonable commitments and proofs~\cite{cryptoeprint:2023/1538}, quantum copy-protection~\cite{10.1007/978-3-031-15979-4_8, coladangelo2022quantum}, and unclonable quantum advice~\cite{broadbent2023uncloneablequantumadvice}.
Most of these constructions rely at minimum on the existence of post-quantum one-way functions, placing unclonable cryptography in ``Post-Quantum MiniCrypt''~\cite{impfiveworlds}.\footnote{The work by~\cite{broadbent_et_al:LIPIcs.TQC.2020.4} does imply an information-theoretic construction of unclonable encryption based on BB84 states; however, this lacks succinctness as the size of the encryption and decryption keys scales with the message length $n$ rather than just the security parameter $\secp$.} 

Of particular interest to us is unclonable encryption, which is very closely related to cloning games. Indeed, any cloning game with value $\negl(n)$ immediately implies an unclonable encryption scheme with message space $\mathcal{X} = \bit^n$ that satisfies unclonable search security: to encrypt $x$ under secret key $\theta$, output $U_\theta \ket{x}$. This scheme has the obvious shortcoming that it is deterministic, and hence does not satisfy the ideal notions of indistinguishable or unclonable-indistinguishable security.

However, Broadbent and Lord~\cite{broadbent_et_al:LIPIcs.TQC.2020.4} proposed the following transformation that plausibly transforms an unclonable search secure scheme $\mathsf{SearchEnc}$ to an unclonable indistinguishable secure scheme: to encrypt $x \in \bit^n$ under secret keys $k, \theta$, sample a random PRF seed $r \in \bit^\secp$ and output the classical string $x \oplus \mathsf{PRF}(k, r)$ together with the quantum state $\ket{\mathsf{SearchEnc}(\theta, r)}$. Broadbent and Lord~\cite{broadbent_et_al:LIPIcs.TQC.2020.4} also provided some mild evidence that this may be secure if the PRF is instantiated with a random oracle. We emphasize that \emph{proving} the unclonable-indistinguishable security of this transformation (or a similar one) is a notoriously difficult open problem~\cite{kundu2023deviceindependentuncloneableencryption, 10.1007/978-3-031-38554-4_3, aky24}. Our point is just that studying unclonable encryption in the weaker search-secure setting is still an interesting and relevant cryptographic problem.

\paragraph{Multi-Copy Unclonable Cryptography.} Previous works on unclonable cryptography have exclusively focused exclusively on the case of $1 \mapsto 2$ cloning games, e.g. in the case of unclonable encryption~\cite{broadbent_et_al:LIPIcs.TQC.2020.4}, the adversarial cloner is given only one copy of a ciphertext state and aims to provide two receivers, say Bob and Charlie, with sufficient information to later recover the plaintext message.

These cryptographic primitives could naturally be extended to $t \mapsto t+1$ security: in the case of unclonable encryption, the cloner receives $t$ identical copies of a ciphertext state and aims to provide $t+1$ receivers with enough information to later recover the plaintext message. This raises the following question:
\begin{quote}
    \begin{center}
      {\em Can we construct $t \mapsto t+1$ unclonable cryptography from well-founded assumptions?}
      \end{center}
    \end{quote}
As explained in~\Cref{sec:prevtechniques}, existing constructions and techniques have little to say about this question. Answering this would resolve a question which was recently left open in~\cite{ananth2024revocableencryptionprogramsmore}, who asked whether the desirable property of multi-copy security is within reach in unclonable cryptography more generally.

\paragraph{Quantum Cryptography in ``MicroCrypt''.} Meanwhile, another line of work~\cite{cryptoeprint:2018/544, brakerski2019pseudo, DBLP:journals/corr/abs-2404-12647, hm24,bostanci2024efficientquantumpseudorandomnesshamiltonian} has introduced and constructed notions of \emph{pseudorandom quantum states and unitaries}. These are implied by the existence of post-quantum one-way functions; however, the reverse implication is not known. In fact, recent work~\cite{DBLP:conf/tqc/Kretschmer21, DBLP:conf/coco/AaronsonIK22, DBLP:conf/stoc/KretschmerQST23} has provided evidence that such an implication is unlikely to exist.  This has led to the development of new and \emph{inherently quantum} assumptions~\cite{bostanci2024efficientquantumpseudorandomnesshamiltonian,poremba2024learningstabilizersnoiseproblem}. As a result, the quantum cryptographic landscape includes yet another world, sometimes referred to as ``MicroCrypt'', which is potentially even weaker than that of MiniCrypt.

Moreover, pseudorandom states have proven to be powerful cryptographic tools in quantum cryptography, implying commitments~\cite{DBLP:conf/crypto/MorimaeY22} and oblivious transfer~\cite{DBLP:conf/crypto/BartusekCKM21a, DBLP:conf/eurocrypt/GriloL0V21}, and more. The fact that such powerful primitives live in MicroCrypt raises the following question:
\begin{quote}
    \begin{center}
      {\em What unclonable cryptographic primitives exist in MicroCrypt?}
      \end{center}
    \end{quote} 
\noindent
In fact, the authors of~\cite{DBLP:journals/corr/abs-2404-12647} explicitly asked whether pseudorandom unitaries (which have eluded major cryptographic application so far) imply the existence of unclonable cryptographic primitives. Previous work~\cite{cryptoeprint:2018/544} showed that private-key quantum money exists assuming just the existence of pseudorandom states. However, to our knowledge, this is the only known link between the worlds of unclonable cryptography and MicroCrypt before this work.

\paragraph{Our Results.} In this work, we make progress towards these foundational questions. 
Our main result of this section is the following:

\begin{theorem}[Informal, see Theorems~\ref{thm:UEfromPRU} and~\ref{thm:uemulticopy} for formal statements]\label{thm:informalue}
    We show the following statements:
    \begin{enumerate}
        \item Assuming the existence of pseudorandom unitaries, there exists an 
        unclonable encryption scheme with succinct keys which satisfies oracular $1 \mapsto 2$ search security (i.e., the adversaries are computationally bounded and only given oracle access to encryption and decryption functionality).

        \item If the message space is $\mathcal{X} = \bit^n$ and we fix $t = o(n/\log n)$ then, assuming the existence of post-quantum pseudorandom functions, there exists an 
        unclonable encryption scheme with succinct keys satisfies oracular $t \mapsto t+1$ search security, \textbf{provided} that the adversaries are computationally bounded and can only make \textbf{a single oracle query} to either the encryption or decryption functionality.
    \end{enumerate}
\end{theorem}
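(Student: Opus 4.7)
The plan is to construct both schemes directly from the cloning games analyzed earlier in the paper, and to reduce their unclonable search security to the cloning-game bounds via a standard two-step pseudorandomness hybrid. For Part 1, I would let $\{U_k\}_{k \in \bit^\secp}$ be a pseudorandom unitary family on $n$ qubits, set $\KeyGen(1^\secp)$ to sample $k \in \bit^\secp$, define $\Enc(k,x) := U_k \ket{x}$, and let $\Dec(k, \cdot)$ apply $U_k^\dag$ and measure in the computational basis. For Part 2, I would instead use a post-quantum PRF family $\mathsf{PRF}: \bit^\secp \times \bit^n \to \bit$ and define $\Enc(k,x) := \mathsf{U}_{\mathsf{PRF}(k,\cdot)}\, \mathsf{H}^{\otimes n} \ket{x}$, mirroring the binary-phase-state construction from~\Cref{sec:techoverviewbinaryphase}. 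In both cases the secret key is a single $\secp$-bit seed, so each scheme has $\poly(\secp)$-size keys independent of $n$, giving succinctness; correctness is immediate from $U_k^\dag U_k = \id$ (resp.\ $\mathsf{H}^{\otimes n} \mathsf{U}_{\mathsf{PRF}(k,\cdot)} \mathsf{U}_{\mathsf{PRF}(k,\cdot)} \mathsf{H}^{\otimes n} = \id$).

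For Part 1, I would argue security via a hybrid: first replace the PRU $U_k$ with a Haar-random unitary $U \sim \mathrm{U}(2^n)$, which is indistinguishable by PRU security since the cloner, Bob, and Charlie are all efficient and make only polynomially many queries. The resulting game is exactly the oracular $1 \mapsto 2$ Haar cloning game with unbounded queries; by \Cref{cor:worsttoavinformal}, its value is at most $(\cos^2(\pi/8))^n$, which is negligible in $\secp$ whenever $n = \omega(\log \secp)$. For Part 2 the same structure applies: swap the PRF for a truly random function $f: \bit^n \to \bit$ (allowed since the adversaries are efficient and query-limited), then invoke \Cref{thm:informaltcopycloning} to bound the value of the resulting one-query binary phase cloning game by $\exp(O(t \log t)) \cdot 2^{-n}$, which is $\negl(n)$ whenever $t = o(n/\log n)$.

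The main obstacle will be carefully matching the unclonable encryption oracle model with the oracular cloning game model. An $\Enc$ query on a classical input $x'$ can be answered with a single oracle query to $U_k$ (or $\mathsf{U}_{\mathsf{PRF}(k,\cdot)}$), while a $\Dec$ query can be answered with a single query to $U_k^\dag$ followed by a computational-basis measurement, which is strictly weaker than a coherent unitary query. Hence the query-count bookkeeping transfers cleanly: in Part 1, the unlimited-query Haar bound accommodates any polynomial number of oracle calls; in Part 2, the single-query restriction per player is matched one-to-one by a single $\Enc$ or $\Dec$ call. A secondary subtlety I would need to address is whether the PRU/PRF distinguisher in the hybrid argument remains efficient given the presence of the cloning channel $\Phi$; this is handled by requiring $\Phi$ to be efficient as part of the computational-security definition, so that a black-box reduction from a successful cloning adversary to a PRU/PRF distinguisher goes through without losing efficiency.
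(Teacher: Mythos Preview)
Your proposal is correct and follows essentially the same route as the paper. The constructions you describe are exactly the paper's Construction~\ref{const:PRUUE} (instantiated with a PRU for Part~1 and with $U_{\theta,n}=\mathsf{U}_{f_{\theta,n}}\mathsf{H}^{\otimes n}$ for a post-quantum PRF in Part~2), and your security arguments track the paper's proofs of Theorems~\ref{thm:UEfromPRU} and~\ref{thm:uemulticopy} closely.

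The one small presentational difference is in Part~1: you first hop from the PRU to a genuine Haar unitary using the PRU indistinguishability definition, and then invoke \Cref{cor:worsttoavinformal} to bound the Haar game by the BB84 value $(\cos^2(\pi/8))^n$. The paper instead applies the worst-case to average-case reduction (\Cref{thm:worst-case-to-average-case}) in its PRU case directly, jumping in one step from the PRU game to the BB84 game with a $\negl(\secp)$ additive loss. These are the same ingredients in a different order; your version is arguably more modular since it treats the Haar bound as a black-box corollary, while the paper's is marginally more direct. Either way the final bound is $\negl(\secp)+(\cos^2(\pi/8))^n$.

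For Part~2 your hybrid (swap PRF for a truly random $f$, then invoke the binary-phase bound of \Cref{thm:informaltcopycloning}/\Cref{thm:tcopycloning}) is exactly what the paper does. Your bookkeeping about matching the single $\Enc/\Dec$ oracle call per player to a single $U_\theta$/$U_\theta^\dag$ query is handled in the paper by Lemma~\ref{lemma:cloningtoue} together with Remark~\ref{remark:controlqubitregister}, which absorbs the surrounding $\mathsf{H}^{\otimes n}$ into the cloner so that one $U_\theta^{\pm 1}$ query is exactly one $\mathsf{U}_{f_\theta}$ query.
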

\begin{remark}
    Although we are only able to prove security for $t = o(n/\log n)$, we remark that our construction is plausibly secure for $t$ that is an arbitrary polynomial in $n$ (unlike previous constructions based on BB84 states~\cite{broadbent_et_al:LIPIcs.TQC.2020.4} and coset states~\cite{10.1007/978-3-030-84242-0_20}). Our justification for the plausible security of this construction is the fact that binary phase states are pseudorandom~\cite{cryptoeprint:2018/544, brakerski2019pseudo} and hence multi-copy unclonable~\cite{werneroptimalcloning}.
\end{remark}

\begin{remark}
    Broadbent and Lord~\cite{broadbent_et_al:LIPIcs.TQC.2020.4} prove a result that appears similar to our result for the $1 \mapsto 2$ case. The main difference is that they assume the existence of post-quantum PRFs, while we only need PRUs which is likely a weaker assumption~\cite{DBLP:conf/tqc/Kretschmer21}. Additionally, they assume a different oracle model; they instantiate their PRF as a random oracle, while we give the adversaries oracle access to unitaries for encryption and decryption.
\end{remark}
Our proof of the first result uses the BB84 cloning game~\cite{Tomamichel_2013} and its analysis by~\cite{broadbent_et_al:LIPIcs.TQC.2020.4}, together with the worst-case to average-case reduction outlined in~\Cref{sec:techoverviewhaar} and fleshed out in~\Cref{sec:worsttoav}. The latter can be thought of as an additional cryptographic application of pseudorandom unitaries\footnote{To the best of our knowledge, the use of pseudorandom unitaries in the context of efficient worst-case to average-case reductions has not previously appeared.} which was previously not known. While this gives us a security bound of $2^{-0.228n} + \negl(\secp)$ rather than the ideal $O(2^{-n}) + \negl(\secp)$, this is not a crucial difference for this application (unlike in the black hole setting); moreover, the analysis of the subspace coset monogamy game has the advantage that it does not need to restrict the queries made by the players.\footnote{We note for completeness that we could just as easily have used the analysis by~\cite{schleppy2025winning} of the subspace coset monogamy game~\cite{10.1007/978-3-030-84242-0_20, Culf_2022} in the place of the BB84 game and thus obtained a marginally stronger bound of $O(2^{-0.25n}) + \negl(\secp)$. We comment on this more in~\Cref{remark:uefromcoset}.}

Our proof of the second result uses our machinery outlined in~\Cref{sec:techoverviewbinaryphase} and fleshed out in Sections~\ref{sec:typesandsubtypes} and~\ref{sec:binaryphaseconstruction} for analyzing cloning games based on binary phase states. We emphasize that our construction is the first that could even be plausibly secure in the setting where $t$ can be an a priori unbounded polynomial in $\secp, n$ and the $t+1$ players are given the secret key $\theta$ in the clear. While our results are far from this ideal goal, we view our techniques as providing a stepping stone towards an ideal security result for unclonable encryption. We are also optimistic that our results and techniques might be adaptable to other problems in unclonable cryptography.

We visualize the landscape of some unclonable cryptographic primitives relative to the worlds of MicroCrypt, Post-Quantum Minicrypt, and Post-Quantum Obfustopia in~\Cref{fig:cryptoworlds}.

\tikzstyle{assumptionblock} = [rectangle, draw, fill=blue!20, 
    text width=2.2cm, text centered, rounded corners, minimum height=2cm]
    
\tikzstyle{applicationblock} = [rectangle, draw, 
    text width=2.2cm, text centered, rounded corners, minimum height=2cm]
\tikzstyle{line} = [draw, -latex']

\begin{figure}
\begin{center}
\begin{tikzpicture}[node distance=15mm]
    \node [assumptionblock] (io) {Post-Quantum $i\mathcal{O}$ + LWE};

    \node [assumptionblock, below=15mm of io] (owf) {Post-Quantum OWFs};

    \node [assumptionblock, below=15mm of owf] (pru) {Pseudorandom Unitaries};

    \node [assumptionblock, below=of pru] (prs) {Pseudorandom States};

    \node [applicationblock, left=of io] (pkqm) {One-Shot Signatures, Quantum Lightning};

    \node [applicationblock, right=of owf] (indue) {IND-Secure$^*$ Unclonable Encryption};

    \node [applicationblock, right=of pru] (searchue) {Search-Secure Unclonable Encryption};

    \node [applicationblock, right=of indue] (sde) {Single-Decryptor Encryption};

    \node [applicationblock, left=of owf] (iup) {Interactive Unclonable Proofs};
    

    \coordinate (first_left) at ($(pkqm.center)!0.35!(iup.center)$);
    \coordinate (first_right) at ($(io.center)!0.35!(owf.center)$);
    \draw[line width=0.5mm, dotted] ($(first_left)!-0.5!(first_right)$) -- ($(first_left)!3.5!(first_right)$);

    \coordinate (second_left) at ($(owf.center)!0.35!(pru.center)$);
    \coordinate (second_right) at ($(indue.center)!0.35!(searchue.center)$);
    \draw[line width=0.5mm, dotted] ($(second_left)!-1.5!(second_right)$) -- ($(second_left)!2.5!(second_right)$);

    \coordinate (center_ref) at ($(owf.center)!0.5!(indue.center)$);

    \node [above=50mm of center_ref] {\large \textbf{Post-Quantum Obfustopia}};

    \node [above=15mm of center_ref] {\large \textbf{Post-Quantum MiniCrypt}};

    \node [below=14mm of center_ref] {\large \textbf{``MicroCrypt''}};


    \path [line] (io) -- node[midway,above] {\cite{shmuelizhandry}} (pkqm);

    \path [line] (owf) -- node[midway,above] {\cite{cryptoeprint:2023/1538}} (iup);

    \path [line] (owf) -- node[midway,above] {\cite{broadbent_et_al:LIPIcs.TQC.2020.4}} (indue);

    \path [line] (indue) -- node[midway,above] {\cite{cryptoeprint:2020/877}} (sde);

    \path [line] (owf) -- node[midway,left] {\cite{DBLP:journals/corr/abs-2404-12647, hm24}} (pru);

    \path [line] (indue) -- node[midway,right] {Trivial} (searchue);

    \path [line] (pru) -- node[midway, left] {Trivial} (prs);

    \path [line] (pru) -- node [midway, below] {\parbox{0.7cm}{\textbf{This\\work}}} (searchue);

    \path [line] ($(io.south)!0.6!(io.south west)$) -- node [midway, left] {Trivial} ($(owf.north)!0.6!(owf.north west)$);


    \draw[black] ($(current bounding box.south west)!-0.05!(current bounding box.north east)$) rectangle ($(current bounding box.south west)!1.05!(current bounding box.north east)$);

\end{tikzpicture}
\end{center}
    \caption{A visualization of some primitives in unclonable cryptography and the assumptions that are known to imply them (we focus here on primitives that are relatively well-understood and related to monogamy of entanglement games). We segment these assumptions into three worlds, loosely following~\cite{impfiveworlds}: Obfustopia, MiniCrypt, and MicroCrypt. MicroCrypt is a world where we only assume the existence of pseudorandom states and unitaries, which could plausibly hold even if $\mathsf{P} = \mathsf{NP}$~\cite{DBLP:conf/stoc/KretschmerQST23}. Powerful cryptographic primitives such as bit commitments~\cite{DBLP:conf/crypto/MorimaeY22} and oblivious transfer~\cite{DBLP:conf/crypto/BartusekCKM21a, DBLP:conf/eurocrypt/GriloL0V21} have been shown to exist in MicroCrypt; however, instantiations of unclonable cryptography \emph{with succinct keys} in MicroCrypt have proven more limited. Our work takes a step in this direction by showing that pseudorandom unitaries imply search-secure succinct unclonable encryption in an oracle model.\\($^*$We note for clarity that the existing results on indistinguishability-secure unclonable encryption all come with some kind of caveat e.g. existing in an oracle model and requiring that the adversaries are disentangled~\cite{broadbent_et_al:LIPIcs.TQC.2020.4}, or requiring quantum decryption keys~\cite{aky24}.)}\label{fig:cryptoworlds}
\end{figure}

\subsection{Open Questions}\label{sec:openquestions}

\paragraph{Cloning Games in General.} We first list some open questions related to cloning games in general; these would immediately yield applications to either or both of the black hole and unclonable encryption settings. We list some of these questions here:
\begin{enumerate}
    \item Can the security of the underlying $1 \mapsto 2$ oracular cloning game (i.e., as in Construction \ref{const:PRS-enc}) be proven even if the two players (say, Bob and Charlie) can adaptively make \emph{any} polynomial number of queries to the encoding underlying unitary and its inverse?

    This would immediately imply the security of our black hole cloning game against \emph{arbitrary} Bob and Charlie strategies, when instantiated with a pseudorandom unitary (PRU) rather than a unitary design. Due to their highly efficient (and yet strong) scrambling properties, pseudorandom unitaries are believed to be an excellent theoretical model of black hole dynamics~\cite{Kim_2023,engelhardt2024cryptographiccensorship}.

    \item More tantalizingly, can this security be shown if the measurement basis $\theta$ (either a PRU or PRF secret key, depending on whether we are considering the PRU or binary phase construction) is given to Bob and Charlie in the clear, rather than in the form of an oracle? This still plausibly satisfies unclonable security (as demonstrated by~\cite{Tomamichel_2013, broadbent_et_al:LIPIcs.TQC.2020.4, Culf_2022, schleppy2025winning} for BB84 and coset state cloning games), but is highly counterintuitive; the PRU/PRF security guarantees do not say anything about what could happen in a game where the secret key $\theta$ is eventually leaked.

    In our black hole cloning game, this would allow us to prove much stronger quantitative statements, even in the scenario in which Bob and Charlie have \emph{complete} knowledge of the internal scrambling dynamics of the black hole. 

    \item Can we achieve any of the above stronger security guarantees for $t \mapsto t+1$ cloning games? Or as a starting point: can we prove security against players $\algo{P}_1, \ldots, \algo{P}_{t+1}$ that are free to make multiple \emph{non-adaptive} queries to $U_\theta, U_\theta^\dag$?
\end{enumerate}

\paragraph{Applications to Black Hole Physics and Beyond.} Here, we list some questions specific to our black hole application:
\begin{enumerate}

     \item Can we make our modeling assumptions in our definition of black hole cloning games in \Cref{sec:black-hole-games}  more physically realistic? For example, can we model the (initial) internal qubits of the black hole as a more general quantum state (potentially even entangled with the exterior) rather than as the all-zero state $\ket{0^{n-k}}$? What if the scrambling dynamics do not just affect internal qubits, but also external qubits? And lastly, what if the scrambling dynamics is in the form of a Haar random isometry?

    \item Can we use the language of interactive games to offer new quantitative insights into information scrambling in other chaotic quantum systems, besides black holes?    
\end{enumerate}
\paragraph{Applications to Unclonable Cryptography.} Finally, we present some questions specific to our applications to unclonable cryptography:
\begin{enumerate}
    \item What other unclonable cryptography primitives can be instantiated in MicroCrypt?
    \item Can we obtain unclonable encryption with the stronger notion of indistinguishability security that we usually require of encryption schemes? (Our notion of unclonable security takes the form of ``search security'', which as we argue in~\Cref{sec:introue} offers a plausible but not yet proven path towards indistinguishable security.) This is an important but difficult problem that recent works have made some progress on~\cite{broadbent_et_al:LIPIcs.TQC.2020.4, kundu2023deviceindependentuncloneableencryption, 10.1007/978-3-031-38554-4_3, aky24}.

    \item Which unclonable cryptography primitives have natural, constructible, and applicable $t \mapsto t+1$ analogues, besides unclonable encryption?
\end{enumerate}

\paragraph{Organization of the Paper.} The remainder of this paper is organized as follows:
\begin{itemize}
    \item In~\Cref{sec:prelims}, we present some preliminaries including our novel spectral bounds in~\Cref{sec:linalg} on blockwise tensor products of matrices.

    \item In~\Cref{sec:monogamydefs}, we formally define monogamy of entanglement games and cloning games.

    \item In Sections~\ref{sec:typesandsubtypes} and~\ref{sec:binaryphaseconstruction}, we introduce our novel notion of binary subtypes and apply this to prove $O_t(2^{-n})$ cloning bounds for our binary phase state cloning game in the single-query-oracular setting.

    \item In~\Cref{sec:monogamyexisting}, we revisit pre-existing techniques for analyzing cloning and monogamy games, and demonstrate that there are inherent limitations that likely prevent them from being adaptable to our applications in either black hole physics or unclonable cryptography.

    \item In~\Cref{sec:worsttoav}, we prove a worst-case to average-case reduction for cloning games that allows us to adapt our result for binary phase state cloning games to Haar cloning games. This is integral both to our application to black hole physics, and to proving the existence of $1 \mapsto 2$ unclonable encryption assuming only PRUs.

    \item In~\Cref{sec:black-hole-games}, we formally define black hole cloning games, and use our aforementioned technical results on cloning games to prove an upper bound on the value of the black hole cloning game.

    \item Finally, in~\Cref{sec:unclonableenc}, we define the notion of succinct unclonable encryption schemes and show that it exists in an oracle model, assuming the existence of pseudorandom unitaries. We also provide a first result towards establishing multi-copy security of the same scheme.
\end{itemize}

\ifanon
\else
    \paragraph{Acknowledgements.} The authors would like to thank
    Aditya Nema, Aparna Gupte, Aram Harrow, Fermi Ma, Henry Yuen, Jiahui Liu, John Bostanci, Jonas Haferkamp, Jonathan Lu, Joseph Carolan, Lisa Yang, Makrand Sinha, Netta Engelhardt, Peter Shor, Prabhanjan Ananth, Ran Canetti, Saachi Mutreja, Soonwon Choi, Thomas Vidick, Tony Metger, William Kretschmer, and Yael Tauman Kalai for useful discussions.
    AP is supported by the U.S. Department of Energy, Office of Science, National Quantum Information Science Research Centers, Co-design Center for Quantum Advantage (C2QA) under contract number DE-SC0012704.
    SR is supported by an Akamai Presidential Fellowship, the grants of VV, the Defense Advanced Research Projects Agency (DARPA) under Contract No. HR0011-25-C-0300, and Amazon Research Awards.
    VV is supported by DARPA under Agreement No. HR00112020023, NSF CNS-2154149 and a Simons Investigator Award. Any
    opinions, findings and conclusions or recommendations expressed in this material are those of the
    author(s) and do not necessarily reflect the views of the United States Government or DARPA.
\fi

\section{Preliminaries}\label{sec:prelims}

\subsection{Quantum Computation}\label{sec:quantumprelims} For a comprehensive background, we refer to \cite{NielsenChuang11}. We denote a finite-dimensional complex Hilbert space by $\mathcal{H}$, and we use subscripts to distinguish between different systems (or registers). For example, we let $\mathcal{H}_{\reg A}$ be the Hilbert space corresponding to a system ${\reg A}$. 
The tensor product of two Hilbert spaces $\algo H_{\reg A}$ and $\algo H_{\reg B}$ is another Hilbert space denoted by $\algo H_{\reg{AB}} = \algo H_{\reg A} \otimes \algo H_{\reg B}$.
The Euclidean norm of a vector $\ket{\psi} \in \algo H$ over the finite-dimensional complex Hilbert space $\mathcal{H}$ is denoted as $\| \psi \| = \sqrt{\braket{\psi|\psi}}$. 
Let $\linear(\algo H)$
denote the set of linear operators over $\algo H$. A quantum system over the $2$-dimensional Hilbert space $\mathcal{H} = \mathbb{C}^2$ is called a \emph{qubit}. For $n \in \mathbb{N}$, we refer to quantum registers over the Hilbert space $\mathcal{H} = \big(\mathbb{C}^2\big)^{\otimes n}$ as $n$-qubit states. We use the word \emph{quantum state} to refer to both pure states (unit vectors $\ket{\psi} \in \mathcal{H}$) and density matrices $\rho \in \mathcal{D}(\mathcal{H)}$, where we use the notation $\mathcal{D}(\mathcal{H)}$ to refer to the space of positive semidefinite matrices of unit trace acting on $\algo H$. 
The \emph{trace distance} of two density matrices $\rho,\sigma \in \mathcal{D}(\mathcal{H)}$ is given by
$$
\mathsf{TD}(\rho,\sigma) = \frac{1}{2} \| \rho - \sigma \|_1.
$$
A quantum channel $\Phi: \linear(\algo H_{\reg A}) \rightarrow \linear(\algo H_{\reg B})$ is a linear map between linear operators over the Hilbert spaces $\algo H_{\reg A}$ and $\algo H_{\reg B}$. Oftentimes, we use the compact notation $\Phi_{\reg{A \rightarrow B}}$ to denote a quantum channel between $\linear(\algo H_{\reg A})$ and $\linear(\algo H_{\reg B})$. We say that a channel $\Phi$ is \emph{completely positive} if, for a reference system $\reg R$ of arbitrary size, the induced map $\id_{\reg R} \otimes \Phi$ is positive, and we call it \emph{trace-preserving} if $\Tr{\Phi(X)} = \Tr{X}$, for all $X \in \linear(\algo H)$. A quantum channel that is both completely positive and trace-preserving is called a quantum $\mathsf{CPTP}$ channel.
A \emph{unitary} $U: \linear(\mathcal{H}_{\reg A}) \rightarrow \linear(\mathcal{H}_{\reg A})$ is a special case of a quantum channel that satisfies $U^\dagger U = U U^\dagger = \id_{\reg A}$. When $U$ acts on a density matrix $\rho$, it maps $\rho \mapsto U \rho U^\dag$, and we will denote this channel by $U \fullcirc U^\dag$. Whenever $d=2^n$, we refer to the group of unitaries acting on $n$ qubits as $\mathrm{U}(d)$. An isometry is a linear map $V: \linear(\mathcal{H}_{\reg A}) \rightarrow \linear(\mathcal{H}_{\reg B})$ with $\dim(\mathcal{H}_{\reg B}) \geq \dim(\mathcal{H}_{\reg A})$ and $V^\dag V = \id_{\reg A}$.
A \emph{projector} ${\Pi}$ is a Hermitian operator such that ${\Pi}^2 = {\Pi}$, and a \emph{projective measurement} is a collection of projectors $\{{\Pi}_i\}_i$ such that $\sum_i {\Pi}_i = \id$.
A positive-operator valued measure ($\mathsf{POVM}$) is a set of Hermitian positive semidefinite operators $\{M_i\}$ acting on a Hilbert space $\mathcal{H}$ such that $\sum_{i} M_i = \id$.

Given a bipartite state $\rho_{\reg{AB}}$, the \emph{partial trace} $\mathrm{Tr}_{\reg{B}}$ captures the residual state of the system on just the $\reg{A}$ register. $\mathrm{Tr}_{\reg{B}}$ is thus defined as a linear map from $\linear(\algo{H}_{\reg{A}} \otimes \algo{H}_{\reg{B}}) \rightarrow \linear(\algo{H}_{\reg{A}})$ that maps $R \otimes S \mapsto \Tr{S} \cdot R$. Given a multipartite operator $X \in \linear(\algo{H}_{\reg{A}} \otimes \algo{H}_{\reg{B}} \otimes \algo{H}_{\reg{C}})$, the \emph{partial transpose} applies a transpose to only some of these systems. For example, the partial transpose $X \mapsto X^{\top_{\reg{B}}}$ with respect to the second system is defined as a linear map satisfying $X_1 \otimes X_2 \otimes X_3 \mapsto X_1 \otimes X_2^\top \otimes X_3$. We can also define a $\mathsf{SWAP}$ operator that acts on say registers $\reg{A}$ and $\reg{C}$; this is a linear map that will map $X_1 \otimes X_2 \otimes X_3 \mapsto X_3 \otimes X_2 \otimes X_1$.


\paragraph{Operators.} Define the following unitary operators:

\begin{itemize}
    \item Phase oracle: For $f: \bit^n \rightarrow \bit$ we let
    $$
    \mathsf{U}_f = \sum_{x \in \bit^n} (-1)^{f(x)} \proj{x}.
    $$

    \item Multi-bit Pauli operator: For $m \in \bit^n$, let
    $$
\mathsf{Z}^m = \sum_{x \in \bit^n} (-1)^{\langle x,m \rangle} \proj{x}.
$$
    \item Hadamard: The $n$-qubit Hadamard operator is defined by $$\mathsf{H}^{\otimes n} =  2^{-n/2} \sum_{x,y \in \bit^n} (-1)^{\langle x, y \rangle} \ket{x}\hspace{-0.5mm}\bra{y}.$$
\end{itemize}

\paragraph{Choi-Jamiołkowski isomorphism.} 

Let $\algo H_{\reg A}$ be a $d$-dimensional Hilbert space with an orthonormal basis denoted by $\{\ket{1},\dots,\ket{d}\}$. Let $\ket{\Omega} = \sum_{i \in [d]} \ket{i} \otimes \ket{i}$ be the vectorization of the identity $\id_d = \sum_{i \in [d]} \proj{i}$. Then, the  Choi-Jamiołkowski isomorphism $J(\Phi) \in \linear(\algo H_{\reg B} \otimes \algo H_{\reg A'})$ with respect to a linear map of the form $\Phi: \linear(\algo H_{\reg A}) \rightarrow \linear(\algo H_{\reg B})$ is defined as 
$$
J(\Phi) =  (\Phi_{\reg A \rightarrow \reg B} \otimes \id_{\reg{A}'})(\proj{\Omega}) = \sum_{i,j \in [d]} \Phi(\ketbra{i}{j}) \otimes \ketbra{i}{j}.
$$
We use the following well known fact.
\begin{lemma}\label{fact:CI}
Let $\Phi: \linear(\algo H_{\reg A}) \rightarrow \linear(\algo H_{\reg B})$ be a linear map. 
Then, for any $\ket{\psi} \in \state(\algo H_{\reg{A}})$ and $\ket{\phi} \in \state(\algo H_{\reg{B}})$,
$$
\bra{\phi} \Phi(\proj{\psi}) \ket{\phi} = \bra{\phi} \otimes \bra{\bar\psi} J(\Phi) \ket{\phi} \otimes \ket{\bar\psi} \, ,
$$
where the complex conjugation is taken with respect to the computational basis. Equivalently, we have:
$$\Tr{\proj{\phi} \Phi(\proj{\psi})} = \Tr{\left(\proj{\phi} \otimes \proj{\bar{\psi}}\right)J(\Phi)}.$$
\end{lemma}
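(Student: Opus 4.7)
The plan is to prove this identity by direct calculation, expanding both sides in the computational basis and verifying they agree term-by-term. This is a standard manifestation of the ``ricochet'' or ``transpose trick'' property of the maximally entangled vector $\ket{\Omega}$, and only requires linearity of $\Phi$ plus bookkeeping of the complex conjugation.

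First, I would expand $\ket{\psi} = \sum_k \psi_k \ket{k}$ so that $\ket{\bar\psi} = \sum_k \bar{\psi}_k \ket{k}$ and $\bra{\bar\psi} = \sum_k \psi_k \bra{k}$. Then $\proj{\psi} = \sum_{k,\ell} \psi_k \bar{\psi}_\ell \ketbra{k}{\ell}$, and by linearity of $\Phi$,
\[
\bra{\phi}\Phi(\proj{\psi})\ket{\phi} \;=\; \sum_{k,\ell} \psi_k \bar{\psi}_\ell\, \bra{\phi}\Phi(\ketbra{k}{\ell})\ket{\phi}.
\]

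Next, I would unpack the right-hand side using the definition $J(\Phi) = \sum_{i,j} \Phi(\ketbra{i}{j}) \otimes \ketbra{i}{j}$ to obtain
\[
\bra{\phi}\otimes\bra{\bar\psi}\, J(\Phi)\, \ket{\phi}\otimes\ket{\bar\psi}
\;=\; \sum_{i,j} \bra{\phi}\Phi(\ketbra{i}{j})\ket{\phi}\cdot \bra{\bar\psi}\,i\rangle\langle j\,\ket{\bar\psi}.
\]
Since $\braket{\bar\psi | i} = \psi_i$ and $\braket{j | \bar\psi} = \bar{\psi}_j$, the scalar factor simplifies to $\psi_i \bar{\psi}_j$, and the two sides match index-for-index. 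The equivalent trace form stated at the end follows immediately by noting $\bra{\phi}M\ket{\phi} = \Tr{\proj{\phi} M}$ and $\bra{\bar\psi}N\ket{\bar\psi} = \Tr{\proj{\bar\psi} N}$ for $M = \Phi(\proj{\psi})$ and for the second tensor factor, respectively, and then invoking $\Tr{X \otimes Y} = \Tr{X}\Tr{Y}$ together with linearity to combine the two traces into a single trace of $(\proj{\phi}\otimes \proj{\bar\psi}) J(\Phi)$.

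There is no real obstacle here: the statement is a standard reformulation of the Choi--Jamio{\l}kowski isomorphism, and the only subtle point worth double-checking is the placement of complex conjugates, namely that $\ket{\bar\psi}$ appears on the right-hand side (not $\ket{\psi}$) precisely because the identity-vector $\ket{\Omega} = \sum_i \ket{i}\ket{i}$ satisfies $(M \otimes \id)\ket{\Omega} = (\id \otimes M^\top)\ket{\Omega}$, which in coordinate form introduces the conjugation on the reference system. Once this is handled the two expansions collapse to the same double sum.
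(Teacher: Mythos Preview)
Your proof is correct. The paper does not actually prove this lemma; it introduces it as a ``well known fact'' and states it without argument, so there is no paper proof to compare against. Your direct expansion in the computational basis is the standard elementary verification of this Choi--Jamio{\l}kowski identity, and the handling of the conjugation on $\ket{\bar\psi}$ is right.
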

\noindent
By linearity, we immediately obtain the following corollary:
\begin{corollary}\label{cor:CIprojector}
Let $\Phi: \linear(\algo H_{\reg A}) \rightarrow \linear(\algo H_{\reg B})$ be any linear map. Then, 
    for any Hermitian operators $\vec{P} \in \mathrm{L}(\algo{H}_{\reg B})$ and $\vec{Q} \in \mathrm{L}(\algo{H}_{\reg A})$, it holds that $$\Tr{\vec{P} \Phi(\vec{Q})} = \Tr{\left(\vec{P} \otimes \bar{\vec{Q}}\right)J(\Phi)}.$$
\end{corollary}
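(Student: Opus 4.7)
The plan is to derive this corollary as a direct consequence of Lemma \ref{fact:CI} via linearity and spectral decomposition. Since both $\vec{P}$ and $\vec{Q}$ are Hermitian, I would first invoke the spectral theorem to write
\[
\vec{P} \;=\; \sum_{i} p_i \proj{\phi_i}, \qquad \vec{Q} \;=\; \sum_{j} q_j \proj{\psi_j},
\]
where $\{\ket{\phi_i}\} \subset \algo H_{\reg B}$ and $\{\ket{\psi_j}\} \subset \algo H_{\reg A}$ are orthonormal eigenbases, and the eigenvalues $p_i, q_j \in \RR$.

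Next I would expand the left-hand side using linearity of $\Phi$ and of the trace, and then apply Lemma \ref{fact:CI} to each rank-one term:
\[
\Tr{\vec{P}\, \Phi(\vec{Q})} \;=\; \sum_{i,j} p_i\, q_j\, \Tr{\proj{\phi_i}\, \Phi(\proj{\psi_j})} \;=\; \sum_{i,j} p_i\, q_j\, \Tr{\bigl(\proj{\phi_i} \otimes \proj{\bar\psi_j}\bigr) J(\Phi)}.
\]
Finally, I would pull the scalars back inside the tensor product and re-collect the sums. Here the key small observation is that $\overline{\proj{\psi_j}} = \proj{\bar\psi_j}$ and that the $q_j$ are real (since $\vec{Q}$ is Hermitian), so $\sum_j q_j \proj{\bar\psi_j} = \bar{\vec{Q}}$. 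This yields
\[
\sum_{i,j} p_i\, q_j \bigl(\proj{\phi_i} \otimes \proj{\bar\psi_j}\bigr) \;=\; \Bigl(\sum_i p_i \proj{\phi_i}\Bigr) \otimes \Bigl(\sum_j q_j \proj{\bar\psi_j}\Bigr) \;=\; \vec{P} \otimes \bar{\vec{Q}},
\]
and one more application of linearity of the trace delivers the claimed identity.

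There is essentially no obstacle here: the argument is just bilinearity in $(\vec{P}, \vec{Q})$ combined with the rank-one case from Lemma \ref{fact:CI}. The only mild subtlety is being careful that conjugation is performed in the same computational basis used to define $\ket{\Omega}$, so that $\overline{\proj{\psi_j}} = \proj{\bar\psi_j}$ is consistent with the definition of $J(\Phi)$; this is automatic from the conventions fixed above Lemma \ref{fact:CI}. One could also note in a remark that Hermiticity is not strictly needed, since an analogous identity holds for arbitrary operators $\vec{P}, \vec{Q}$ by decomposing them into Hermitian parts — but the stated Hermitian version suffices for the paper's intended applications.
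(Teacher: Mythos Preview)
Your proposal is correct and takes essentially the same approach as the paper, which simply states ``By linearity, we immediately obtain the following corollary'' without further elaboration. Your spectral-decomposition argument is precisely the standard way to make that one-line justification explicit, and the observation that the $q_j$ are real (so that $\sum_j q_j\proj{\bar\psi_j}=\bar{\vec Q}$) is the right detail to highlight.
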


\subsection{Unitary Designs}
\label{sec:unitary-designs}

In this section, we formally define the  \emph{Haar measure}~\cite{Simon1995RepresentationsOF}  and also define a new and more general version of a \emph{mixed} unitary $t$-design which also allows for inverses with respect to the adjoint of the unitary.

\begin{definition}[Haar measure] 
Let $d \in \N$ denote the dimension.
The Haar measure $\mu_H$ is the unique left and right unitarily-invariant measure over the unitary group $\mathrm{U}(d)$; that is, for every (possibly matrix-valued) integrable function $f$ with domain $\mathrm{L}(\mathbb{C}^{d})$ and every unitary $V \in \mathrm{U}(d)$, 
  $$ \int_{\mathrm{U}(d)} f(U)\  \mathrm{d}_{\mu_H}U = \int_{\mathrm{U}(d)} f(U\cdot V)\  \mathrm{d}_{\mu_H}U =  \int_{\mathrm{U}(d)} f(V\cdot U)\  \mathrm{d}_{\mu_H}U. $$
For brevity, we oftentimes denote the expectation of $f$ over the Haar measure by
$$\underset{U \sim \mathrm{U}(d)}{\mathbb{E}} [f(U)] = \int_{\mathrm{U}(d)} f(U)\  \mathrm{d}_{\mu_H}U.$$
\end{definition}

\paragraph{Non-Adaptive Mixed Unitary Designs.}

In this section, we introduce a generalization of the standard notion of a unitary $t$-design which also accounts for inverse queries (to the adjoint of the unitary). We exclusively consider exact designs; in particular, exact unitary $3$-designs via the Clifford group~\cite{webb2016cliffordgroupformsunitary}.

\begin{definition}[Non-Adaptive Mixed Unitary $(p,q)$-Design]
Let $\nu$ be an ensemble of unitary operators over $\mathbb{C}^d$. Then, $\nu$ is a (non-adaptive) unitary $(p,q)$-design if, for every $\vec O \in \linear((\CC^d)^{\otimes (p+q)})$,
$$
\underset{U \sim \nu}{\mathbb{E}}\left[(U^{\otimes p} \otimes (U^\dag)^{\otimes q}) \vec O (U^{\otimes p} \otimes  (U^\dag)^{\otimes q})^\dag \right] = \underset{U \sim \mathrm{U}(d)}{\mathbb{E}}\left[(U^{\otimes p} \otimes (U^\dag)^{\otimes q}) \vec O (U^{\otimes p} \otimes  (U^\dag)^{\otimes q})^\dag \right].
$$
\end{definition}

Note that a unitary $t$-design is a special case of the above definition.

\begin{definition}[Non-Adaptive Unitary $t$-Design]
Let $\nu$ be an ensemble of unitary operators over $\mathbb{C}^d$. Then, $\nu$ is a (non-adaptive) unitary $t$-design if it is a (non-adaptive) unitary $(t,q)$-design for $q=0$.
\end{definition}

\paragraph{Adaptive Mixed Unitary Designs.}

In this section, we generalize the notion of mixed unitary designs to algorithms which may query a unitary (and possibly its inverse) adaptively, rather than in parallel.

\begin{definition}[Adaptive Mixed Unitary $(p,q)$-Design]
Let $\nu$ be an ensemble of unitary operators over $\mathbb{C}^d$. Then, $\nu$ is an adaptive unitary $(p,q)$-design if, for every single-bit output (possibly adaptive) quantum algorithm $\algo A$ making at most $p$ many queries to a unitary and $q$ many queries to its adjoint,
$$
\Pr\left[ 1 \leftarrow \algo A^{U,U^\dag }(1^{\lceil \log d \rceil}) \, : \, U \sim \nu \right] = \Pr\left[ 1 \leftarrow \algo A^{U,U^\dag }(1^{\lceil \log d \rceil}) \, : \, U \sim \mathrm{U}(d) \right].
$$
We say that $\nu$ is an adaptive mixed unitary $t$-design if the property above holds for any adaptive quantum query algorithm $\algo A$ which submits no more than $t$ queries to either $U$ or $U^\dag$.
\end{definition}

\subsection{Pseudorandom Unitaries}\label{sec:prudefs}

Pseudorandom unitaries are ensembles of unitary operators that look indistinguishable from Haar random unitaries for all computationally bounded observers. These ensembles of unitaries have been first proposed in~\cite{cryptoeprint:2018/544}, and have only very recently been constructed assuming the existence of post-quantum one-way functions~\cite{DBLP:journals/corr/abs-2404-12647,hm24}.
We give a formal definition below.

\begin{definition}[Pseudorandom Unitary]\label{def:PRU}
Let $\secp$ be the security parameter, $n := n(\secp) \in \N$ be some polynomial, and $d=2^n$. An infinite sequence $\mathfrak{U} = \{\mathfrak{U}_n\}_{n \in \N}$ of $n$-qubit unitary ensembles $\mathfrak{U}_n = \{U_{\theta, n}\}_{\theta \in \bit^\secp}$ is a pseudorandom unitary if it satisfies the following conditions:
\begin{itemize}
    \item (\textbf{Efficient computation}) For all $\secp, n$, there exists a polynomial-time quantum algorithm $\mathcal{Q}$ such that for all keys $\theta \in \bit^\secp$, and any $\ket{\psi} \in (\CC^2)^{\ot n}$, it holds that
    $$
    \mathcal{Q}(\theta,\ket{\psi}) = U_{\theta, n} \ket{\psi}\,.
    $$


    \item (\textbf{Pseudorandomness}) The unitary $U_{\theta, n}$, for a random key $\theta \sim \bit^\secp$, is computationally indistinguishable from a Haar random unitary $U \sim \mathrm{U}(d)$. In other words,  for any QPT algorithm $\algo A$, it holds that $$
    \vline\, \underset{\theta \sim \bit^\secp}{\Pr}[\algo A^{U_{\theta, n},U_{\theta, n}^\dag}(1^\secp, 1^n)=1] - \underset{U \sim \mathrm{U}(d)}{\Pr}[\algo A^{U_{\theta, n},U_{\theta, n}^\dag}(1^\secp, 1^n) =1]  \,\vline \,\leq \, \negl(\secp)\,.
    $$ 
\end{itemize}
\end{definition}

\begin{remark}
    We note that this definition of pseudorandom unitary is quite strong; the adversary $\algo{A}$ is free to make its queries adaptively, and moreover it is allowed to query both $U$ and $U^\dag$, in analogy to strong pseudorandom permutations. This notion was constructed in recent work by Ma and Huang~\cite{hm24}.
\end{remark}

\subsection{Operator Norm Bounds}\label{sec:linalg}

In this section, we lay out some tools for bounding the operator norm $\norm{\vec A}_\infty$ of operators $A \in \CC^{d \times d}$. For matrices $\vec A, \vec B$ of the same dimensions, we use $\vec A \circ \vec B$ to denote their entrywise product.

\begin{lemma}[Well-known]\label{lemma:opnormdef}
    For any matrix $\vec A \in \CC^{d_1 \times d_2}$, we have $$\norm{\vec A}_\infty = \sqrt{\lambda_{\max}(\vec A^\dag \vec A)} = \sqrt{\lambda_{\max}(\vec A\vec A^\dag)} = \max\left\{\norm{\vec Ax}_2: \norm{x}_2 = 1\right\}.$$
    Moreover, if $\vec A$ has rank $\leq 1$, then we have $\lambda_{\max}(\vec A^\dag \vec  A) = \Tr{\vec A^\dag \vec A}$.
\end{lemma}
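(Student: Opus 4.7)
The plan is to invoke the singular value decomposition (SVD) and the variational characterization of eigenvalues of Hermitian matrices. First I would start from the middle identity $\norm{\vec A}_\infty = \max\{\norm{\vec A x}_2 : \norm{x}_2 = 1\}$, which may be taken as the definition of the operator norm (or derived directly from the inductive definition via the unit sphere). Squaring and rewriting, $\norm{\vec A x}_2^2 = \langle \vec A x, \vec A x\rangle = x^\dag \vec A^\dag \vec A\, x$, so maximizing over unit $x$ amounts to computing the largest Rayleigh quotient of the Hermitian positive semidefinite operator $\vec A^\dag \vec A$. By the spectral theorem / Courant-Fischer, this maximum equals $\lambda_{\max}(\vec A^\dag \vec A)$, which gives $\norm{\vec A}_\infty = \sqrt{\lambda_{\max}(\vec A^\dag \vec A)}$.

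Next I would establish that $\lambda_{\max}(\vec A^\dag \vec A) = \lambda_{\max}(\vec A \vec A^\dag)$. The cleanest way is to write the SVD $\vec A = \vec U \boldsymbol{\Sigma} \vec V^\dag$ where $\vec U, \vec V$ are unitary (of appropriate dimensions) and $\boldsymbol{\Sigma}$ is the rectangular matrix of singular values. Then $\vec A^\dag \vec A = \vec V \boldsymbol{\Sigma}^\dag \boldsymbol{\Sigma} \vec V^\dag$ and $\vec A \vec A^\dag = \vec U \boldsymbol{\Sigma} \boldsymbol{\Sigma}^\dag \vec U^\dag$, and both $\boldsymbol{\Sigma}^\dag \boldsymbol{\Sigma}$ and $\boldsymbol{\Sigma} \boldsymbol{\Sigma}^\dag$ are diagonal with the same non-zero entries (namely the squared singular values of $\vec A$). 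Hence they share the same maximum eigenvalue, which completes the chain of equalities.

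For the moreover part, suppose $\vec A$ has rank $\leq 1$. Then $\vec A^\dag \vec A$ also has rank $\leq 1$ (since rank is at most the rank of either factor), and it is Hermitian positive semidefinite. A PSD matrix of rank at most one has at most a single nonzero eigenvalue, which coincides with its largest eigenvalue. Since the trace equals the sum of all eigenvalues, this single nonzero eigenvalue equals $\Tr{\vec A^\dag \vec A}$, yielding $\lambda_{\max}(\vec A^\dag \vec A) = \Tr{\vec A^\dag \vec A}$. None of these steps presents a real obstacle; the only minor subtlety is being careful about rectangular dimensions in the SVD step, which is easily handled by treating $\boldsymbol{\Sigma}^\dag \boldsymbol{\Sigma}$ and $\boldsymbol{\Sigma} \boldsymbol{\Sigma}^\dag$ as square matrices of sizes $d_2 \times d_2$ and $d_1 \times d_1$ respectively, differing only by zero padding.
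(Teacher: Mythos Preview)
Your proof is correct. The paper does not actually supply a proof for this lemma; it is simply labeled ``Well-known'' and stated without argument, so there is nothing to compare against beyond noting that your SVD/Rayleigh-quotient approach is a standard and entirely valid justification.
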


\begin{lemma}[Well-known]\label{lemma:opnormsubmatrix}
    For any pair of matrices $\vec A \in \CC^{d_1 \times d_2}$ and $\vec A' \in \CC^{d'_1 \times d'_2}$ such that $\vec A'$ is a submatrix of $\vec A$, we have $\norm{\vec A'}_\infty \leq \norm{\vec A}_\infty$.
\end{lemma}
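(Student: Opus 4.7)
The plan is to use the variational characterization of the operator norm given in Lemma~\ref{lemma:opnormdef}, namely $\norm{\vec M}_\infty = \max\{\norm{\vec M x}_2 : \norm{x}_2 = 1\}$. This reduces the claim to a straightforward embedding argument: any unit vector in the column space of the submatrix can be lifted to a unit vector in the column space of $\vec A$ by padding with zeros, and truncating the image from $\CC^{d_1}$ down to the chosen rows only decreases the Euclidean norm.

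More concretely, I would write $\vec A'$ as the restriction of $\vec A$ to rows indexed by some $S_r \subseteq [d_1]$ and columns indexed by some $S_c \subseteq [d_2]$. Given any unit vector $x' \in \CC^{|S_c|}$, define $x \in \CC^{d_2}$ by placing the entries of $x'$ in the coordinates indexed by $S_c$ and setting all other coordinates to zero; clearly $\norm{x}_2 = \norm{x'}_2 = 1$. Then by construction the restriction of $\vec A x$ to the coordinates in $S_r$ is exactly $\vec A' x'$, so
\[
\norm{\vec A' x'}_2 \;\leq\; \norm{\vec A x}_2 \;\leq\; \norm{\vec A}_\infty.
\]
Taking the supremum over all unit $x'$ then gives $\norm{\vec A'}_\infty \leq \norm{\vec A}_\infty$, as desired.

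There is no real obstacle here; the proof is essentially a one-line observation once one writes down the variational formula. An equally clean alternative, which I would mention only if a more algebraic formulation were preferred elsewhere in the paper, is to express $\vec A' = \vec P_r \vec A \vec P_c^\top$, where $\vec P_r$ and $\vec P_c$ are coordinate-selection matrices (each a submatrix of an identity) and hence satisfy $\norm{\vec P_r}_\infty, \norm{\vec P_c}_\infty \leq 1$; submultiplicativity of the operator norm then immediately yields $\norm{\vec A'}_\infty \leq \norm{\vec P_r}_\infty \cdot \norm{\vec A}_\infty \cdot \norm{\vec P_c}_\infty \leq \norm{\vec A}_\infty$.
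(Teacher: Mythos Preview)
Your proof is correct. The paper does not actually prove this lemma; it is stated as ``Well-known'' without proof, so there is nothing to compare against beyond noting that your argument (via the variational characterization in Lemma~\ref{lemma:opnormdef}, or equivalently via coordinate-selection matrices and submultiplicativity) is a standard and perfectly valid justification.
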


\begin{lemma}[Well-known]\label{lemma:opnormtensor}
    For $\vec A \in \CC^{d_1 \times d_2}$ and $\vec B \in \CC^{d_3 \times d_4}$, we have $\norm{\vec A \otimes \vec B}_\infty = \norm{\vec A}_\infty \cdot \norm{\vec B}_\infty$.
\end{lemma}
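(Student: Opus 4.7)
The plan is to prove this classical identity using the variational characterization of the operator norm from Lemma~\ref{lemma:opnormdef}, namely $\|\vec{M}\|_\infty = \max\{\|\vec{M}z\|_2 : \|z\|_2 = 1\}$. I will establish the two inequalities separately.

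For the lower bound $\|\vec A \otimes \vec B\|_\infty \geq \|\vec A\|_\infty \cdot \|\vec B\|_\infty$, I would pick unit vectors $x \in \CC^{d_2}$ and $y \in \CC^{d_4}$ that attain the maxima in the variational characterization for $\vec A$ and $\vec B$ respectively, so that $\|\vec A x\|_2 = \|\vec A\|_\infty$ and $\|\vec B y\|_2 = \|\vec B\|_\infty$. Then $x \otimes y$ is a unit vector in $\CC^{d_2 d_4}$, and using the multiplicativity of the tensor product, $(\vec A \otimes \vec B)(x \otimes y) = (\vec A x) \otimes (\vec B y)$. Since the Euclidean norm is multiplicative on tensor products of vectors, $\|(\vec A x) \otimes (\vec B y)\|_2 = \|\vec A x\|_2 \cdot \|\vec B y\|_2 = \|\vec A\|_\infty \cdot \|\vec B\|_\infty$, which yields the desired lower bound.

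For the upper bound $\|\vec A \otimes \vec B\|_\infty \leq \|\vec A\|_\infty \cdot \|\vec B\|_\infty$, the cleanest route is via the other characterization in Lemma~\ref{lemma:opnormdef}: $\|\vec M\|_\infty = \sqrt{\lambda_{\max}(\vec M^\dag \vec M)}$. Observe that $(\vec A \otimes \vec B)^\dag (\vec A \otimes \vec B) = (\vec A^\dag \vec A) \otimes (\vec B^\dag \vec B)$ by the mixed-product property of tensor products. Both $\vec A^\dag \vec A$ and $\vec B^\dag \vec B$ are positive semidefinite, so their spectra consist of nonnegative reals, and the spectrum of their tensor product is exactly the set of pairwise products of their eigenvalues. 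Therefore $\lambda_{\max}((\vec A^\dag \vec A) \otimes (\vec B^\dag \vec B)) = \lambda_{\max}(\vec A^\dag \vec A) \cdot \lambda_{\max}(\vec B^\dag \vec B)$, and taking square roots gives $\|\vec A \otimes \vec B\|_\infty = \|\vec A\|_\infty \cdot \|\vec B\|_\infty$, which also subsumes the lower bound.

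There is essentially no obstacle here; the only minor subtlety is justifying that the spectrum of $\vec X \otimes \vec Y$ for two PSD matrices is the set of pairwise products of their spectra. This follows from simultaneously diagonalizing via spectral decompositions $\vec X = \sum_i \alpha_i \proj{u_i}$ and $\vec Y = \sum_j \beta_j \proj{v_j}$, which gives $\vec X \otimes \vec Y = \sum_{i,j} \alpha_i \beta_j \proj{u_i \otimes v_j}$, an explicit eigendecomposition. If desired, one could alternatively invoke the singular value decompositions $\vec A = U_A \Sigma_A V_A^\dag$ and $\vec B = U_B \Sigma_B V_B^\dag$ and note that $\vec A \otimes \vec B = (U_A \otimes U_B)(\Sigma_A \otimes \Sigma_B)(V_A \otimes V_B)^\dag$ is an SVD of the tensor product, from which the claim is immediate.
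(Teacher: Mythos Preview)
The paper does not actually prove this lemma; it is labeled ``Well-known'' and stated without proof. Your proposal is correct and entirely standard: both the variational argument for the lower bound and the spectral/SVD argument for equality are valid, and either of the two routes you sketch for the upper bound suffices on its own to give the full equality.
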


\begin{lemma}\label{lemma:hadamardunitarygeneral}
    Let $\vec A_1, \ldots, \vec A_k \in \CC^{d \times d}$ be unitary matrices with $k \geq 2$. Then, the rows and columns of $\vec C = \vec A_1 \circ \ldots \circ \vec A_k$ all have $\ell_1$ norm $\leq 1$.
\end{lemma}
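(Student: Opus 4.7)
The plan is to bound each row sum (and similarly each column sum) by combining two elementary facts: every entry of a unitary matrix has absolute value at most $1$, and every row (column) of a unitary matrix has unit $\ell_2$ norm. Together with the hypothesis $k \geq 2$, these are exactly what is needed.

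Concretely, fix a row index $i$. The $\ell_1$ norm of row $i$ of $\vec C$ is
\[
\sum_{j=1}^{d} \prod_{\ell=1}^{k} |(\vec A_\ell)_{i,j}|.
\]
Since each $\vec A_\ell$ is unitary, each row has unit $\ell_2$ norm, and in particular $|(\vec A_\ell)_{i,j}| \le 1$ for every entry. Because $k \geq 2$, the product $\prod_{\ell=3}^{k} |(\vec A_\ell)_{i,j}|$ is a product of numbers in $[0,1]$, hence at most $1$. So I can drop all factors beyond the first two and get
\[
\sum_{j=1}^{d} \prod_{\ell=1}^{k} |(\vec A_\ell)_{i,j}| \;\le\; \sum_{j=1}^{d} |(\vec A_1)_{i,j}| \cdot |(\vec A_2)_{i,j}|.
\]
Now apply Cauchy--Schwarz to the right-hand side:
\[
\sum_{j=1}^{d} |(\vec A_1)_{i,j}| \cdot |(\vec A_2)_{i,j}| \;\le\; \Bigl(\sum_{j=1}^{d} |(\vec A_1)_{i,j}|^2\Bigr)^{1/2} \Bigl(\sum_{j=1}^{d} |(\vec A_2)_{i,j}|^2\Bigr)^{1/2} \;=\; 1 \cdot 1 \;=\; 1,
\]
where the last equality again uses that rows of unitaries have unit $\ell_2$ norm. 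The identical argument with ``row $i$'' replaced by ``column $j$'' (using column $\ell_2$-norms of unitaries) handles the column case.

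There is no real obstacle here; the only subtlety is recognizing that the hypothesis $k \ge 2$ is used precisely to collapse the problem down to a two-matrix Cauchy--Schwarz inequality, after discarding the remaining Hadamard factors using the entrywise bound $|(\vec A_\ell)_{i,j}| \le 1$. For $k=1$ the statement would fail in general (a single unitary row can have $\ell_1$ norm as large as $\sqrt{d}$), so the assumption $k \ge 2$ is tight.
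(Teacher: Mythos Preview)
Your proof is correct and essentially identical to the paper's: both drop all but two Hadamard factors using the entrywise bound $|(\vec A_\ell)_{i,j}|\le 1$ (valid since $k\ge 2$), then apply Cauchy--Schwarz together with the unit-$\ell_2$-norm property of unitary rows/columns. Your added remark that the hypothesis $k\ge 2$ is sharp is a nice bonus.
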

\begin{proof}
    In the case of rows, we have:
    \begin{align*}
        \sum_{j = 1}^d |C_{i, j}| &= \sum_{j = 1}^d |A_{1; (i, j)}| \cdot \ldots \cdot |A_{k; (i, j)}| \\
        &\leq \sum_{j = 1}^d |A_{1; (i, j)}| \cdot |A_{2; (i, j)}| \text{ (all entries of a unitary are $\leq 1$)} \\
        &\leq \sqrt{\left(\sum_{j = 1}^d |A_{1; (i, j)}|^2\right) \cdot \left(\sum_{j = 1}^d |A_{2; (i, j)}|^2\right)} \text{ (Cauchy-Schwarz)} \\
        &= 1.
    \end{align*}
    The case of columns is analogous.
\end{proof}

\begin{lemma}\label{lemma:rowcolsumopnorm}
    Let $\vec C \in \CC^{d_1 \times d_2}$ be such that the $\ell_1$ norm of each row is $\leq a$ and the $\ell_1$ norm of each column is $\leq b$. Then $\norm{\vec C}_\infty \leq \sqrt{ab}$.
\end{lemma}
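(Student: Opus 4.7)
The plan is to apply Schur's test, which is the standard tool for bounding operator norms in terms of row and column $\ell_1$ sums. Concretely, I would use Lemma~\ref{lemma:opnormdef} and show that for every unit vector $x \in \CC^{d_2}$, we have $\norm{\vec C x}_2^2 \leq ab$.

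First, I would fix an arbitrary $x \in \CC^{d_2}$ with $\norm{x}_2 = 1$ and expand
\[
\norm{\vec C x}_2^2 \;=\; \sum_{i=1}^{d_1} \Bigl|\sum_{j=1}^{d_2} C_{i,j}\, x_j\Bigr|^2.
\]
Next, for each fixed $i$, I would apply Cauchy--Schwarz in the form $\bigl|\sum_j C_{i,j} x_j\bigr|^2 \leq \bigl(\sum_j |C_{i,j}|\bigr) \cdot \bigl(\sum_j |C_{i,j}|\cdot |x_j|^2\bigr)$, using the row-sum hypothesis to bound the first factor by $a$. This yields
\[
\norm{\vec C x}_2^2 \;\leq\; a \sum_{i=1}^{d_1} \sum_{j=1}^{d_2} |C_{i,j}|\cdot |x_j|^2 \;=\; a \sum_{j=1}^{d_2} |x_j|^2 \sum_{i=1}^{d_1} |C_{i,j}|.
\]
Finally, I would invoke the column-sum hypothesis $\sum_i |C_{i,j}| \leq b$ together with $\sum_j |x_j|^2 = 1$ to conclude $\norm{\vec C x}_2^2 \leq ab$, and hence $\norm{\vec C}_\infty \leq \sqrt{ab}$ by Lemma~\ref{lemma:opnormdef}.

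There is no real obstacle here; this is a textbook Schur test argument and the only thing to be careful about is getting the Cauchy--Schwarz split right so that the row bound and column bound are each used exactly once. The whole proof should be just a few lines.
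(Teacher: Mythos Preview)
Your proof is correct. It is the classical Schur test: split $C_{i,j}x_j = \sqrt{|C_{i,j}|}\cdot\bigl(\sqrt{|C_{i,j}|}\,e^{i\arg C_{i,j}}x_j\bigr)$, apply Cauchy--Schwarz in $j$, then sum over $i$ and swap the order of summation. Each hypothesis is used exactly once, as you note.

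The paper takes a slightly different route: it bounds the $\ell_1$ norm of each row of $\vec C^\dagger \vec C$ by $ab$ (via the triangle inequality and the two hypotheses), and then invokes the fact that the spectral radius of a square matrix is at most its maximum absolute row sum, giving $\lambda_{\max}(\vec C^\dagger \vec C)\leq ab$. Both arguments are standard incarnations of Schur's test and are of comparable length; yours works directly with $\norm{\vec Cx}_2^2$ and avoids the auxiliary eigenvalue--row-sum fact, while the paper's version packages the same estimate at the level of $\vec C^\dagger \vec C$. Neither has any real advantage over the other here.
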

\begin{proof}
    For any row $i$ of $\vec C^\dagger \vec  C \in \CC^{d_2 \times d_2}$, we have:
    \begin{align*}
        \sum_{j = 1}^{d_2} |(\vec C^\dagger \vec C)_{i, j}| &= \sum_{j = 1}^{d_2} |\sum_{k = 1}^{d_1} C^\dagger_{i, k}C_{k, j}| \\
        &\leq \sum_{j = 1}^{d_2} \sum_{k = 1}^{d_1} |C_{k, i}| |C_{k, j}| \\
        &\leq a \cdot \sum_{k = 1}^d |C_{k, i}| \\
        &\leq ab.
    \end{align*}
    Since the maximum eigenvalue of a square matrix is at most the maximum $\ell_1$ norm of its rows, we have $\norm{\vec C^\dagger \vec C}_\infty \leq ab \Rightarrow \norm{\vec C}_\infty \leq \sqrt{ab}$.
\end{proof}

We also define and state some simple properties of matrix inner products:

\begin{definition}
    For matrices $\vec A,\vec  B \in \CC^{d_1 \times d_2}$, define the inner product $$\langle\vec  A, \vec B \rangle = \sum_{i \in [d_1], j \in [d_2]} \overline{A_{i, j}} \cdot B_{i, j} = \Tr{\vec A^\dag \vec B}.$$

    We also define the Frobenius norm $\norm{\vec A}_F = \sqrt{\langle \vec A, \vec A \rangle}$.
\end{definition}

\begin{lemma}\label{lemma:matrixinnerproduct}
    For matrices $\vec A,\vec  B \in \CC^{d \times d}$, we have:
    \begin{itemize}
        \item (Cauchy-Schwarz) $|\langle \vec A,\vec  B \rangle| \leq \norm{\vec A}_F \cdot \norm{\vec B}_F$.
        \item (Well-known) If $\vec A$ is Hermitian and $\vec B$ is Hermitian PSD, then we have $|\langle \vec A, \vec B \rangle| \leq \norm{\vec A}_\infty \cdot \Tr{\vec B}$.
    \end{itemize}
\end{lemma}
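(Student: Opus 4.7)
The first part is an instance of Cauchy--Schwarz for an inner product space. The plan is to note that $\langle \cdot, \cdot \rangle$ defined by $\langle \vec A, \vec B \rangle = \Tr{\vec A^\dag \vec B}$ is a genuine (Hermitian, positive-definite) inner product on $\CC^{d_1 \times d_2}$, which can be seen by vectorizing: if $\mathrm{vec}(\vec A) \in \CC^{d_1 d_2}$ denotes the column stacking of $\vec A$, then $\langle \vec A, \vec B \rangle = \mathrm{vec}(\vec A)^\dag \mathrm{vec}(\vec B)$. The standard vector Cauchy--Schwarz inequality then gives $|\langle \vec A, \vec B \rangle| \leq \|\mathrm{vec}(\vec A)\|_2 \cdot \|\mathrm{vec}(\vec B)\|_2 = \norm{\vec A}_F \cdot \norm{\vec B}_F$.

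For the second part, the plan is to diagonalize the PSD matrix $\vec B$. Since $\vec B$ is Hermitian PSD, we can write a spectral decomposition
\[
\vec B \;=\; \sum_i \lambda_i \ket{v_i}\!\bra{v_i}, \qquad \lambda_i \geq 0,
\]
where $\{\ket{v_i}\}$ is an orthonormal eigenbasis. Then
\[
\langle \vec A, \vec B \rangle \;=\; \Tr{\vec A^\dag \vec B} \;=\; \Tr{\vec A \vec B} \;=\; \sum_i \lambda_i \bra{v_i} \vec A \ket{v_i},
\]
using that $\vec A$ is Hermitian to drop the dagger. Applying the triangle inequality and the bound $|\bra{v_i} \vec A \ket{v_i}| \leq \norm{\vec A}_\infty$ (which is just the variational characterization of the operator norm for Hermitian $\vec A$, via Lemma~\ref{lemma:opnormdef}) yields
\[
|\langle \vec A, \vec B \rangle| \;\leq\; \sum_i \lambda_i |\bra{v_i} \vec A \ket{v_i}| \;\leq\; \norm{\vec A}_\infty \sum_i \lambda_i \;=\; \norm{\vec A}_\infty \cdot \Tr{\vec B},
\]
as desired.

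Neither part presents a substantive obstacle; the only mild subtlety is making sure to use Hermiticity of $\vec A$ to identify $\Tr{\vec A^\dag \vec B}$ with $\Tr{\vec A \vec B}$ in part two, so that the variational bound on $\bra{v_i} \vec A \ket{v_i}$ applies cleanly with $\norm{\vec A}_\infty$ rather than with something like the Schatten-$1$ norm. Both facts are standard, but are recorded here because they are invoked repeatedly in the analyses of Sections~\ref{sec:typesandsubtypes} and~\ref{sec:binaryphaseconstruction}.
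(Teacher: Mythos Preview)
Your proof is correct. The paper does not actually prove this lemma; it records both parts as standard facts (labelled ``Cauchy--Schwarz'' and ``Well-known'') without argument, so there is nothing to compare against beyond noting that your vectorization argument for part one and spectral-decomposition argument for part two are the canonical proofs.
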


\subsubsection{Blockwise Tensor Products}\label{sec:blockwisetensor}

This section is devoted to stating and proving~\Cref{thm:opnormmain}, which will serve as our central linear algebraic workhorse. We will make some comments about this theorem and its proof at the end of this section. We will then present some straightforward consequences of this theorem in~\Cref{sec:opnormmainconsequences}, which we will use directly when analyzing $t \mapsto t+1$ cloning games.

\begin{theorem}\label{thm:opnormmain}
    Let $R, C$ be positive integers. Let $r_{1}, r_{2}, \ldots, r_{R}, r'_1, r'_2, \ldots, r'_R, c_1, \ldots, c_C, c'_1, \ldots, c'_C$ be positive integers. For each $i \in [R], k \in [C]$, let $\vec A_{i, k} \in \CC^{r_i \times c_k}$ and $\vec B_{i, k} \in \CC^{r'_i \times c'_k}$ be matrices. Additionally, for each $i \in [R], k \in [C]$, let $\gamma_{i, k} \in \CC$ be a scalar of magnitude at most 1, i.e. $|\gamma_{i, k}| \leq 1$. 
    
    Define the following block matrices:

    \begin{align*}
        \vec A &:= \begin{bmatrix} \vec A_{1, 1} &\ldots & \vec A_{1, C} \\ \vdots & \ddots & \vdots \\ \vec A_{R, 1} &\ldots & \vec A_{R, C}\end{bmatrix} \in \CC^{(r_1 + \ldots + r_R) \times (c_1 + \ldots + c_C)}\\
        \vec B &:= \begin{bmatrix} \vec B_{1, 1} &\ldots & \vec B_{1, C} \\ \vdots & \ddots & \vdots \\ \vec B_{R, 1} &\ldots & \vec B_{R, C}\end{bmatrix} \in \CC^{(r'_1 + \ldots + r'_R) \times (c'_1 + \ldots + c'_C)} \\
        \vec M &:= \begin{bmatrix} \gamma_{1, 1} \vec A_{1, 1} \otimes \vec B_{1, 1} & \ldots & \gamma_{1, C} \vec A_{1, C} \otimes \vec B_{1, C} \\ \vdots & \ddots & \vdots \\ \gamma_{R, 1} \vec A_{R, 1} \otimes\vec  B_{R, 1} & \ldots & \gamma_{R, C} \vec A_{R, C} \otimes \vec B_{R, C}\end{bmatrix} \in \CC^{(r_1r'_1 + \ldots + r_R r'_R) \times (c_1c'_1 + \ldots + c_C c'_C)}.
    \end{align*}
    Suppose both of the following conditions hold:
    \begin{enumerate}
        \item $\norm{\vec A}_\infty \leq 1$.
        \item Each block column of $\vec B$ has operator norm $\leq 1$ i.e. for all $k \in [C]$, we have $$\norm{\begin{bmatrix} \vec B_{1, k} \\ \vdots \\ \vec B_{R, k} \end{bmatrix}}_\infty \leq 1.$$
    \end{enumerate}
    Then, it holds that $\norm{\vec M}_\infty \leq 1$.
\end{theorem}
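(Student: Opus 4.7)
The plan is to bound $\norm{\vec{M} v}$ for an arbitrary unit vector $v = (v_1, \ldots, v_C)$ in the domain of $\vec{M}$ by reshaping each block of $v$ into a matrix and finding a clever three-factor factorization of the corresponding block of $\vec{M}v$. Using the tensor-to-matrix isomorphism, I would identify each $v_k \in \CC^{c_k c'_k}$ with a matrix $V_k \in \CC^{c_k \times c'_k}$ satisfying $\sum_k \norm{V_k}_F^2 = 1$. A direct index-shuffling calculation, using that $(\vec{A}_{i,k}\otimes \vec{B}_{i,k})v_k$ reshapes to $\vec{A}_{i,k} V_k \vec{B}_{i,k}^\top$, shows that the $i$-th block of $\vec{M} v$ reshapes to
\[
W_i \;:=\; \sum_{k=1}^C \gamma_{i,k}\, \vec{A}_{i,k} V_k \vec{B}_{i,k}^\top \;\in\; \CC^{r_i \times r'_i},
\]
and since $\norm{\vec{M} v}^2 = \sum_i \norm{W_i}_F^2$, the task reduces to showing $\sum_{i=1}^R \norm{W_i}_F^2 \leq 1$.

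The crux is to factor each $W_i$ in a way that pairs $\vec{B}_{i,k}$ with $V_k$ rather than with $\vec{A}_{i,k}$. Setting $X_{i,k} := V_k \vec{B}_{i,k}^\top \in \CC^{c_k \times r'_i}$, I would write
\[
W_i \;=\; \bigl[\vec{A}_{i,1}\mid \cdots \mid \vec{A}_{i,C}\bigr] \cdot \begin{bmatrix} \gamma_{i,1}\, \id_{c_1} & & \\ & \ddots & \\ & & \gamma_{i,C}\, \id_{c_C}\end{bmatrix} \cdot \begin{bmatrix} X_{i,1} \\ \vdots \\ X_{i,C} \end{bmatrix}.
\]
The left factor is the $i$-th block row of $\vec{A}$, hence a submatrix of $\vec{A}$, and so by \Cref{lemma:opnormsubmatrix} it has operator norm at most $\norm{\vec{A}}_\infty \leq 1$; the middle block-diagonal factor has operator norm $\max_k |\gamma_{i,k}| \leq 1$. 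Two applications of the standard inequality $\norm{M N}_F \leq \norm{M}_\infty \norm{N}_F$ to peel off these two factors would then give $\norm{W_i}_F^2 \leq \sum_k \norm{X_{i,k}}_F^2 = \sum_k \norm{V_k \vec{B}_{i,k}^\top}_F^2$.

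The last step is to sum over $i$ and reassemble block-column by block-column. I would observe that $\sum_i \norm{V_k \vec{B}_{i,k}^\top}_F^2 = \norm{V_k\, \vec{B}_{\cdot,k}^\top}_F^2$, where $\vec{B}_{\cdot,k}$ denotes the $k$-th block column of $\vec{B}$ (the stacking appears as horizontal concatenation after transposing). Applying $\norm{M N}_F \leq \norm{M}_F \norm{N}_\infty$ together with hypothesis (2), i.e.\ $\norm{\vec{B}_{\cdot,k}}_\infty \leq 1$, bounds this by $\norm{V_k}_F^2$; summing over $k$ gives $\sum_i \norm{W_i}_F^2 \leq \sum_k \norm{V_k}_F^2 = 1$, as required. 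I do not expect any essential obstacle beyond finding the right factorization. All the real content lies in the asymmetric pairing in Step 2: attaching $V_k$ to $\vec{B}_{i,k}$ (a) makes the dependence on $\vec{A}$ collapse to a single block row to which the global bound $\norm{\vec{A}}_\infty \leq 1$ can be applied, and (b) simultaneously allows the dependence on $\vec{B}$ to be regrouped column-by-column so that each block-column bound can be used exactly once. Every other step is routine linear algebra.
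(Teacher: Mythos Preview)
Your proof is correct. It shares the paper's core idea---reshape the input vector into block matrices and repeatedly use $\norm{MN}_F \le \norm{M}_\infty\norm{N}_F$---but organizes the argument differently, and in a way that is arguably cleaner.

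The paper bounds the bilinear form $x^\dag \vec{M} y$: it reshapes \emph{both} $x$ and $y$ into block-diagonal matrices $X,Y$, proves the global identity $x^\dag \vec{M} y = \Tr{X^\dag \vec{A} Y (\vec{B}')^\dag}$ via a somewhat involved reindexing lemma, and then peels off $X$, $\vec{A}$, and the block columns of $\vec{B}$ in turn. You instead bound $\norm{\vec{M}v}$ directly, reshaping only the input and handling the output block-row by block-row. This sidesteps the reindexing lemma entirely: the vec identity $(\vec{A}_{i,k}\otimes\vec{B}_{i,k})v_k \leftrightarrow \vec{A}_{i,k}V_k\vec{B}_{i,k}^\top$ does all the index bookkeeping in one line, and the three-factor decomposition of $W_i$ makes the subsequent estimates transparent.

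One consequence worth noting: because you bound each $W_i$ using only the $i$-th block row of $\vec{A}$, your argument actually shows the stronger statement that it suffices for each block \emph{row} of $\vec{A}$ to have operator norm $\le 1$ (rather than $\vec{A}$ globally). The paper's proof, by contrast, uses $\vec{A}^\dag\vec{A}\preceq\id$ in the step $\Tr{\vec{A}^\dag\vec{A}\,Y(\vec{B}')^\dag\vec{B}'Y^\dag}\le\Tr{Y(\vec{B}')^\dag\vec{B}'Y^\dag}$ and so genuinely needs the global bound as written. Your version thus yields a symmetric statement---block rows of $\vec{A}$ and block columns of $\vec{B}$ each bounded---at no extra cost.
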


\paragraph{High-level proof idea.} The main idea is as follows: it suffices to show that for any unit vectors $x, y$ of the right dimensions that $|x^\dag \vec M y| \leq 1$. As a function of $B$, $x^\dag\vec  M y$ is linear. We can hence express this as the inner product of $\vec B$ with some other matrix. It turns out that this matrix has a simple form; reformulating the problem in these terms will allow us to use the standard bounds stated in Lemma~\ref{lemma:matrixinnerproduct}.

\paragraph{Notation.} We begin by setting up some notation. If we have a sequence of matrices $\left\{\vec A_I: I \in \mathcal{I}\right\}$ indexed by $I$ with rows and columns indexed by $r \in \mathcal{R}$ and $c \in \mathcal{C}$, we use $(\vec A_I)_{r; c}$ to denote the entry in row $r$ and column $c$ of matrix $\vec A_I$.

Now let us define $\widetilde{\vec B}$ as follows, and let $\vec B'$ be its entrywise conjugate:
\begin{align*}
    \widetilde{\vec B} &:= \begin{bmatrix} \widetilde{\vec B_{1, 1}} &\ldots & \widetilde{\vec B_{1, C}} \\ \vdots & \ddots & \vdots \\ \widetilde{\vec B_{R, 1}} &\ldots & \widetilde{\vec B_{R, C}}\end{bmatrix} \\
    &= \begin{bmatrix} \gamma_{1, 1}\vec B_{1, 1} &\ldots & \gamma_{1, C}\vec B_{1, C} \\ \vdots & \ddots & \vdots \\ \gamma_{R, 1}\vec B_{R, 1} &\ldots & \gamma_{R, C} \vec B_{R, C}\end{bmatrix} \in \CC^{(r'_1 + \ldots + r'_R) \times (c'_1 + \ldots + c'_C)}.
\end{align*}
\noindent
As outlined earlier, it suffices to show for any unit vectors $x \in \CC^{r_1r'_1 + \ldots + r_Rr'_R}$ and $y \in \CC^{c_1c'_1 + \ldots + c_Cc'_C}$ that $|x^\dag \vec M y| \leq 1$. We index the entries of $x$ by an index $i \in [R]$ and then values $j \in [r_i]$ and $j' \in [r'_i]$. We similarly index the entries of $y$ by $(k, l, l')$. We can also apply this same indexing to the rows and columns of $\vec M$. Thus for $i \in [R]$ and $k \in [C]$, we can define $\vec M_{i, k} \in \CC^{r_ir'_i \times c_kc'_k}$ by $\vec M_{i, k} = \gamma_{i, k} \vec A_{i, k} \otimes \vec B_{i, k}$ (i.e. this is one block of $\vec M$).

For each $i \in [R]$, let $\vec X_i \in \CC^{r_i \times r'_i}$ be defined by $(\vec X_i)_{j; j'} = x_{(i, j, j')}$. Similarly define $\vec Y_k \in \CC^{c_k \times c'_k}$ for each $k \in [C]$. We also defined the following matrices:
\begin{align*}
    \vec X &:= \begin{bmatrix} \vec X_1 & 0 & \ldots & 0 \\ 0 &\vec  X_2 & \ldots & 0 \\ \vdots & \vdots & \ddots & \vdots \\ 0 & 0 & \ldots & \vec X_R \end{bmatrix} \in \CC^{(r_1 + \ldots + r_R) \times (r'_1 + \ldots + r'_R)}\\
    \vec Y &:= \begin{bmatrix} \vec Y_1 & 0 & \ldots & 0 \\ 0 & \vec Y_2 & \ldots & 0 \\ \vdots & \vdots & \ddots & \vdots \\ 0 & 0 & \ldots & \vec Y_C \end{bmatrix} \in \CC^{(c_1 + \ldots + c_C) \times (c'_1 + \ldots + c'_C)}.
\end{align*}
It is straightforward to see that $\norm{\vec X}_F = \norm{\vec Y}_F = 1$, since the nonzero entries in $\vec X$ are exactly the same as in $x$, and similarly for $Y$ and $y$.

Finally, over $\CC_{c'_1 + \ldots + c'_C}$, for each $k \in [C]$ define the projector $\Pi'_k$ to be onto the natural $c'_k$ coordinates (more precisely, all coordinates $z$ such that $c'_1 + \ldots + c'_{k-1} < z \leq c'_1 + \ldots + c'_k$. Note then that the block-diagonal structure of $Y$ implies that:
\begin{equation}\label{eqn:blockdiagonal}
   \vec  Y^\dag \vec Y = \sum_{k \in [C]} \Pi'_k \vec Y^\dag \vec Y \Pi'_k.
\end{equation}

\paragraph{Rewriting $x^\dag \vec  M \, y$ as a linear function of $\vec B'$.} We now show that $x^\dag \vec M \, y$ can be written as a linear function of $B'$. This is captured in the following lemma:

\begin{lemma}\label{lemma:opnormreindexing}
    We have $$x^\dag \, \vec M \, y = \Tr{\vec X^\dag\vec  A\vec  Y (\vec B')^\dag}.$$
\end{lemma}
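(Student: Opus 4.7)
The plan is to verify the identity by direct expansion of both sides and matching terms index-by-index; the main conceptual content lies in unpacking exactly how $\vec B'$ absorbs the scalars $\gamma_{i,k}$ and contributes a transposition (but no conjugation) to $\vec B_{i,k}$ when we take the adjoint. There is no deep step — the difficulty is purely bookkeeping.

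First I would expand the left-hand side. Using the indexing $x_{(i,j,j')}$ and $y_{(k,l,l')}$ established in the paper, and the fact that the $(i,k)$-block of $\vec M$ satisfies
\[
(\vec M_{i,k})_{(j,j');(l,l')} \;=\; \gamma_{i,k}\,(\vec A_{i,k})_{j;l}\,(\vec B_{i,k})_{j';l'},
\]
the quantity $x^\dagger \vec M y$ rearranges to
\[
x^\dagger \vec M y \;=\; \sum_{i,k} \gamma_{i,k} \sum_{j,l} (\vec A_{i,k})_{j;l} \sum_{j',l'} \overline{x_{(i,j,j')}}\,(\vec B_{i,k})_{j';l'}\,y_{(k,l,l')}.
\]

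Next I would expand the right-hand side using the block-diagonal structure of $\vec X$ and $\vec Y$. Indexing rows/columns of $\vec X^\dagger, \vec A, \vec Y, (\vec B')^\dagger$ compatibly and writing out the trace, the block-diagonality of $\vec X$ and $\vec Y$ forces the row-indices of $\vec X^\dagger$ and $\vec A$ to share the same block $i$, and similarly the column-indices of $\vec Y$ and $(\vec B')^\dagger$ to share the same block $k$. Using that $(\vec X_i^\dagger)_{j';j} = \overline{x_{(i,j,j')}}$ and $(\vec Y_k)_{l;l'} = y_{(k,l,l')}$, and that the $(k,i)$-block of $(\vec B')^\dagger$ equals $\gamma_{i,k} \vec B_{i,k}^\top$ (since entrywise conjugation followed by the adjoint amounts to transposition, and the $\gamma_{i,k}$ factor, being a scalar appearing in $\widetilde{\vec B}$ and re-conjugated in $\vec B'$ before being conjugated again by $\dagger$, survives unconjugated), I obtain
\[
\Tr{\vec X^\dagger \vec A \vec Y (\vec B')^\dagger} \;=\; \sum_{i,k,j,j',l,l'} \gamma_{i,k}\,\overline{x_{(i,j,j')}}\,(\vec A_{i,k})_{j;l}\,(\vec B_{i,k})_{j';l'}\,y_{(k,l,l')},
\]
which matches the expansion of $x^\dagger \vec M y$ above.

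The main (and only) obstacle is keeping the $\gamma_{i,k}$ factors straight under the combined operations of entrywise conjugation (passing from $\widetilde{\vec B}$ to $\vec B'$) and adjoint (passing from $\vec B'$ to $(\vec B')^\dagger$). I would explicitly verify the small identity $(\overline{\gamma_{i,k}\vec B_{i,k}})^\dagger = \gamma_{i,k}\vec B_{i,k}^\top$ as a one-line lemma before launching into the trace expansion, so that the index calculation above goes through cleanly. Everything else is routine relabeling.
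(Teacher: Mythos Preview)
Your proposal is correct and takes essentially the same approach as the paper: a direct index-by-index expansion, using the block-diagonal structure of $\vec X$ and $\vec Y$ and carefully tracking how the $\gamma_{i,k}$ survive the combined entrywise-conjugation-then-adjoint passage from $\widetilde{\vec B}$ to $(\vec B')^\dagger$. The paper works unidirectionally (transforming the LHS until it is recognized as $\langle \vec B', \vec X^\dagger \vec A \vec Y\rangle$) whereas you expand both sides and match, but this is purely a presentational difference.
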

\begin{proof}
We proceed as follows:
\begin{align*}
    x^\dag \vec M \, y &= \sum_{i \in [R], k \in [C]} \sum_{j \in [r_i], j' \in [r'_i], l \in [c_k], l' \in [c'_k]} \overline{x_{(i, j, j')}} \left(\vec M_{i, k}\right)_{(j, j'); (l, l')} y_{(k, l, l')} \\
    &= \sum_{i \in [R], k \in [C]} \sum_{j \in [r_i], j' \in [r'_i], l \in [c_k], l' \in [c'_k]} \overline{x_{(i, j, j')}} \left(\vec A_{i, k} \otimes \widetilde{\vec B_{i, k}}\right)_{(j, j'); (l, l')} y_{(k, l, l')} \\
    &= \sum_{i \in [R], k \in [C]} \sum_{j \in [r_i], j' \in [r'_i], l \in [c_k], l' \in [c'_k]} \overline{x_{(i, j, j')}} \left(\vec A_{i, k}\right)_{j; l} \left(\widetilde{\vec B}_{i, k}\right)_{j'; l'} y_{(k, l, l')} \\
    &= \sum_{i \in [R], k \in [C]} \sum_{j' \in [r'_i], l' \in [c'_k]} \left(\widetilde{\vec B}_{i, k}\right)_{j'; l'} \cdot \left(\sum_{j \in [r_i], l \in [c_k]} \overline{x_{(i, j, j')}} \left(\vec A_{i, k}\right)_{j; l} y_{(k, l, l')}\right) \\
    &= \sum_{i \in [R], k \in [C]} \sum_{j' \in [r'_i], l' \in [c'_k]} \left(\widetilde{\vec B}_{i, k}\right)_{j'; l'} \cdot \left(\sum_{j \in [r_i], l \in [c_k]} \overline{\left(\vec X_i\right)_{j; j'}} \left(\vec A_{i, k}\right)_{j; l} \left(\vec Y_k\right)_{l; l'}\right) \\
    &= \sum_{i \in [R], k \in [C]} \sum_{j' \in [r'_i], l' \in [c'_k]} \left(\widetilde{\vec B}_{i, k}\right)_{j'; l'} \cdot \left(\sum_{j \in [r_i], l \in [c_k]} \left(\vec X^\dag_i\right)_{j'; j} \left(\vec A_{i, k}\right)_{j; l} \left(\vec Y_k\right)_{l; l'}\right) \\
    &= \sum_{i \in [R], k \in [C]} \sum_{j' \in [r'_i], l' \in [c'_k]} \left(\widetilde{\vec B}_{i, k}\right)_{j'; l'} \cdot \left(\vec X_i^\dag \vec A_{i, k} \vec Y_k\right)_{j'; l'} \\
    &= \sum_{i \in [R], k \in [C]} \sum_{j' \in [r'_i], l' \in [c'_k]} \widetilde{\vec B}_{(i, j'); (k, l')} \cdot (\vec X^\dag \vec A \vec  Y)_{(i, j'); (k, l')} \\
    &= \langle \vec B', \vec X^\dag\vec  A \vec Y \rangle \, = \,\Tr{(\vec B')^\dag \vec X^\dag\vec  A \vec Y} \, = \,\Tr{\vec X^\dag \vec A\vec  Y (\vec B')^\dag}.
\end{align*}

\end{proof}

\paragraph{Bounding the operator norm of $B'\Pi'_k$.} Our second ingredient will be the following straightforward observation:

\begin{lemma}\label{lemma:opnormblockcol}
    For any $k \in [C]$, we have $\norm{\vec B' \Pi'_k}_\infty \leq 1$.
\end{lemma}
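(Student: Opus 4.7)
The plan is to directly compute $B'\Pi'_k$ as a block matrix and then appeal to the hypothesis that the $k$-th block column of $B$ has operator norm at most $1$. First I would observe that $\Pi'_k$ is the projector onto the columns belonging to the $k$-th block, so $B'\Pi'_k$ is just the matrix obtained from $B'$ by zeroing out every column outside the $k$-th block. Its operator norm is therefore unchanged if we drop those zero columns, and equals the operator norm of the $k$-th block column of $B'$:
\[
\bigl\| B' \Pi'_k \bigr\|_\infty \;=\; \left\| \begin{bmatrix} \overline{\gamma_{1,k}}\, \overline{B_{1,k}} \\ \vdots \\ \overline{\gamma_{R,k}}\, \overline{B_{R,k}} \end{bmatrix} \right\|_\infty,
\]
where the bar denotes entrywise complex conjugation (recall $B'$ is the entrywise conjugate of $\widetilde B$, whose $(i,k)$-block is $\gamma_{i,k} B_{i,k}$).

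Next I would remove the conjugation, since entrywise conjugation preserves the operator norm ($\|\overline M\|_\infty = \|M\|_\infty$, as $\overline M = \overline{M^{\top\top}}$ and $M \mapsto M^\top$ preserves singular values). This reduces the problem to bounding
\[
\left\| \begin{bmatrix} \gamma_{1,k} B_{1,k} \\ \vdots \\ \gamma_{R,k} B_{R,k} \end{bmatrix} \right\|_\infty.
\]

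Now I would pull out the scaling factors by writing this matrix as $D^{(k)} \cdot \bigl[B_{1,k}^\top \mid \cdots \mid B_{R,k}^\top\bigr]^\top$, where $D^{(k)}$ is the block-diagonal matrix with $i$-th diagonal block equal to $\gamma_{i,k}\, \mathbf{I}_{r'_i}$. The operator norm of such a block-diagonal matrix is the maximum of the operator norms of its blocks, namely $\max_{i \in [R]} |\gamma_{i,k}| \leq 1$ by hypothesis. Submultiplicativity of the operator norm together with Assumption 2 of Theorem~\ref{thm:opnormmain} then gives
\[
\bigl\| B' \Pi'_k \bigr\|_\infty \;\leq\; \bigl\| D^{(k)} \bigr\|_\infty \cdot \left\| \begin{bmatrix} B_{1,k} \\ \vdots \\ B_{R,k} \end{bmatrix} \right\|_\infty \;\leq\; 1 \cdot 1 \;=\; 1.
\]

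I do not anticipate any real obstacle here; the lemma is a bookkeeping step that isolates a single block column of $\widetilde{B}$ from the global block structure and reduces to the columnwise hypothesis already assumed in the theorem statement. The only mild subtlety is being careful that the coordinate scalings $\gamma_{i,k}$ only scale entire row-blocks (indexed by $i$), not individual rows within a block, so that they can be cleanly absorbed into a block-diagonal left multiplier whose norm is $\max_i |\gamma_{i,k}|$ rather than a larger quantity.
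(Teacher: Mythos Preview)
Your proof is correct and follows essentially the same approach as the paper: identify $B'\Pi'_k$ (after dropping zero columns) as the entrywise conjugate of the $k$-th block column of $B$ with each row-block scaled by $\gamma_{i,k}$, then observe that conjugation and left-multiplication by a diagonal matrix with entries of magnitude at most $1$ do not increase the operator norm. The only cosmetic difference is that you package the row-scaling as an explicit block-diagonal left multiplier and invoke submultiplicativity, whereas the paper simply asserts that these operations do not increase the norm.
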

\begin{proof}
    After removing zero columns, $\vec B' \Pi'_k$ is just the following block matrix: $$\begin{bmatrix} \overline{\gamma_{1, k}} \, \overline{\vec B_{1, k}} \\ \vdots \\ \overline{\gamma_{R, k}}\, \overline{\vec B_{R, k}}\end{bmatrix}.$$
    This is the result of taking $\begin{bmatrix} \vec B_{1, k} \\ \vdots \\ \vec B_{R, k}\end{bmatrix}$, multiplying each row by a scalar of magnitude $\leq 1$, then conjugating all entries. The latter two operations do not increase operator norm, and we are assuming that the starting matrix has operator norm $\leq 1$. The conclusion follows.
\end{proof}

\paragraph{Completing the proof.} We are now ready to complete the proof of~\Cref{thm:opnormmain}; this will follow from the standard inequalities stated in Lemma~\ref{lemma:matrixinnerproduct}, in addition to using the block-diagonal structure of $Y$ (as in Equation~\eqref{eqn:blockdiagonal}).

\begin{proof}[Proof of~\Cref{thm:opnormmain}]
    Starting from Lemma~\ref{lemma:opnormreindexing}, we have:
    \begin{align*}
        |x^\dag \vec  M \, y| &= \left| \Tr{\vec X^\dag\vec  A\vec  Y (\vec B')^\dag}\right| \\
        &\leq \norm{\vec X}_F \cdot \norm{\vec A\vec Y(\vec B')^\dag}_F \text{ (Lemma~\ref{lemma:matrixinnerproduct})} \\
        &= \norm{\vec A\vec Y(\vec B')^\dag}_F.
    \end{align*}
    Continuing from here, we have:
    \begin{align*}
        \norm{\vec A \vec Y(\vec B')^\dag}_F^2 &= \Tr{\vec B'\vec  Y^\dag\vec  A^\dag \vec A \vec Y (\vec B')^\dag} \\
        &= \Tr{\vec A^\dag \vec A \vec Y (\vec B')^\dag \vec B' \vec Y^\dag} \\
        &\leq \Tr{\vec Y (\vec B')^\dag \vec B' \vec Y^\dag} \text{ (Lemma~\ref{lemma:matrixinnerproduct}; $\norm{\vec A}_\infty \leq 1$)} \\
        &= \Tr{\vec Y^\dag \vec Y (\vec B')^\dag \vec B'} \\
        &= \sum_{k \in [C]} \Tr{\Pi'_k \vec Y^\dag \vec Y \Pi'_k (\vec B')^\dag \vec B'} \text{ (Equation~\eqref{eqn:blockdiagonal})} \\
        &= \sum_{k \in [C]} \Tr{\left(\Pi'_k \vec Y^\dag\vec  Y \Pi'_k\right) \left(\Pi'_k (\vec B')^\dag \vec  B' \Pi'_k\right)} \\
        &\leq \sum_{k \in [C]} \Tr{\Pi'_k \vec Y^\dag \vec Y \Pi'_k} \cdot \norm{\Pi'_k (\vec B')^\dag \vec B' \Pi'_k}_\infty \text{ (Lemma~\ref{lemma:matrixinnerproduct})} \\
        &\leq \sum_{k \in [C]} \Tr{\Pi'_k \vec Y^\dag \vec Y \Pi'_k} \text{ (Lemma~\ref{lemma:opnormblockcol})} \\
        &= \sum_{k \in [C]} \Tr{\Pi'_k \vec Y^\dag \vec Y} \\
        &= \Tr{\vec Y^\dag \vec Y} \\
        &= \norm{\vec Y}_F^2 \\
        &= 1,
    \end{align*}
    thus completing the proof of the theorem.
\end{proof}

\paragraph{Discussion.} Given the simplicity of the statement of~\Cref{thm:opnormmain}, one might wonder why our proof is so involved. Here, we present some justification that this theorem is actually quite strong, and discuss some obstacles to more intuitive proof strategies. First, we note that our theorem captures some simple special cases:
\begin{itemize}
    \item When $R = C = 1$, this is immediate from Lemma~\ref{lemma:opnormtensor}.
    \item When $\gamma_{i, k} = 1$ for all $i, k$ (or more generally $\gamma_{i, k}$ is constant) and $\norm{\vec B}_\infty \leq 1$, this can be shown by noting that the matrix $\vec M$ would be a submatrix of $\vec A \otimes \vec B$, and then appealing to Lemma~\ref{lemma:opnormtensor}. (We rigorously argue this fact as part of the proof of Lemma~\ref{lemma:manymatrixkhatrirao}.)

    However, this argument completely breaks down if $\gamma_{i, k}$ is allowed to vary between blocks.
    \item When $r_i = r'_i = c_k = c'_k = 1$ for all $i, k$, this boils down to bounding the operator norm of any complex $R \times C$ matrix with the entry in row $i$ and column $k$ having magnitude equal to $\left|A_{i; k} B_{i; k}\right|$ (noting that in this setting $A_{i; k}, B_{i; k}$ are scalars). This is not straightforward but still easier to handle; one can argue by Cauchy-Schwarz that the rows and columns of such a matrix must have $\ell_1$ norm $\leq 1$, and it is well-known that such a matrix must have operator norm $\leq 1$ (see Lemmas~\ref{lemma:hadamardunitarygeneral} and~\ref{lemma:rowcolsumopnorm} for details).

    This argument also breaks down as soon as the block matrices $A_{i, k}, B_{i, k}$ are not just scalars; the $\ell_1$ norms of the rows and columns of $M$ will end up growing polynomially in \\$\max(r_1, \ldots, r_R, r'_1, \ldots, r'_R, c_1, \ldots, c_C, c'_1, \ldots, c'_C)$ in the worst case. (Jumping ahead, in the setting of oracular cloning games, this would yield a bound that degrades exponentially in the number of ancilla qubits that each adversary is allowed to use, which is of course undesirable.)
\end{itemize}

One could imagine ``interpolating'' between these two techniques by considering the operator norm of each block of $\vec M$ individually, and using this to obtain a bound on $\norm{\vec M}_\infty$. Perhaps surprisingly, this is also \emph{provably} insufficient. Suppose $R = C = n$ for some $n$, and $r_i = r'_i = c_k = c'_k = n$ for all $i, k$. Then let us take $\vec A, \vec B$ to be $n^2 \times n^2$ permutation matrices with exactly one 1 in each block. Now we will have $\norm{\vec A_{i, k}}_\infty = \norm{\vec B_{i, k}}_\infty = 1 \Rightarrow \norm{\vec A_{i, k} \otimes \vec B_{i, k}}_\infty = 1$ for all $i, k$. However, there exist $n \times n$ block matrices (i.e. containing $n^2$ blocks in total) with each block having operator norm 1, but where the overall matrix has operator norm growing with $n$; one such example is the $n \times n$ all 1's matrix (appropriately padded with zero rows and columns to obtain the right dimensions). This counterexample implies that just considering the operator norm of each block of $\vec M$ is too lossy.

Thus~\Cref{thm:opnormmain} is quite strong and there are natural barriers to proof strategies that might feel more simple and intuitive. The proof we have presented is the simplest one that we are aware of.

\subsubsection{Consequences}\label{sec:opnormmainconsequences}

We now state some corollaries of~\Cref{thm:opnormmain} that we will later directly apply when bounding the operator norms relevant to cloning games in~\Cref{sec:opnormbound}.

\begin{lemma}\label{lemma:colwisetensor}
    Let $\vec A_1, \ldots, \vec A_k$ be block matrices of $d$ columns. More formally, for each $i \in [k]$ set $$\vec A_i = \begin{bmatrix} \vec A_{i, 1} & \ldots & \vec A_{i, d} \end{bmatrix},$$for some block matrices $\vec A_{i, 1}, \ldots, \vec A_{i, d}$ that have the same number of rows but not necessarily the same number of columns. (Note that we do not require $\vec A_{i, 1}$ and $\vec A_{j, 1}$ to have the same number of rows when $i \neq j$.) Assume the following preconditions:
    \begin{enumerate}
        \item For all $i \in [k]$ and $j \in [d]$, we have $\norm{\vec A_{i, j}}_\infty \leq 1$; and
        \item There exists some $i \in [k]$ such that $\norm{\vec A_i}_\infty \leq 1$.
    \end{enumerate}
    Let $$\vec A = \begin{bmatrix} \bigotimes_{i = 1}^k \vec A_{i, 1} & \bigotimes_{i = 1}^k \vec  A_{i, 2} & \ldots & \bigotimes_{i = 1}^k\vec  A_{i, d} \end{bmatrix}$$be defined as a ``block column-wise tensor product'' of $\vec A_1, \vec A_2, \ldots, \vec A_d$. Then $\norm{\vec A}_\infty \leq 1$.

\end{lemma}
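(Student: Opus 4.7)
The plan is to reduce the statement to a direct invocation of Theorem~\ref{thm:opnormmain}. Let $i^{*} \in [k]$ be the index guaranteed by precondition 2, so that $\norm{\vec A_{i^{*}}}_\infty \leq 1$. I would apply Theorem~\ref{thm:opnormmain} in the degenerate case with a single row block, namely $R = 1$ and $C = d$, taking $\gamma_{1,j} = 1$ for every $j \in [d]$. The role of the ``$\vec A$'' in the theorem would be played by $\vec A_{i^{*}}$ itself, whose $j$-th block is $\vec A_{i^{*}, j}$, while the role of the ``$\vec B$'' would be played by the $1 \times d$ block matrix whose $j$-th block is the tensor product $\bigotimes_{i \neq i^{*}} \vec A_{i, j}$.

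Next I would check the two hypotheses of Theorem~\ref{thm:opnormmain}. The hypothesis $\norm{\vec A}_\infty \leq 1$ there is literally precondition 2 of the lemma. For the block-column hypothesis on $\vec B$, since $R = 1$ each block column consists of a single block $\bigotimes_{i \neq i^{*}} \vec A_{i, j}$; by Lemma~\ref{lemma:opnormtensor} and precondition 1, its operator norm equals $\prod_{i \neq i^{*}} \norm{\vec A_{i, j}}_\infty \leq 1$. Invoking the theorem then yields $\norm{\vec M}_\infty \leq 1$, where
\[
\vec M = \begin{bmatrix} \vec A_{i^{*}, 1} \otimes \bigotimes_{i \neq i^{*}} \vec A_{i, 1} & \cdots & \vec A_{i^{*}, d} \otimes \bigotimes_{i \neq i^{*}} \vec A_{i, d} \end{bmatrix}.
\]

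Finally, the matrix $\vec M$ differs from the target matrix $\vec A$ of the lemma only in the order in which the tensor factors are written: in $\vec M$ the factor indexed by $i^{*}$ is moved to the leading position, whereas in $\vec A$ the factors appear in the natural order $1, 2, \ldots, k$. Reordering tensor factors corresponds to conjugating by permutation matrices on the row and column index sets, so there exist a row permutation matrix $P$ (independent of $j$, since the row dimensions $r_i$ do not depend on $j$) and a block-diagonal column permutation matrix $Q$ (whose $j$-th diagonal block permutes the column indices of the $j$-th block of $\vec M$) such that $\vec A = P \vec M Q$. Since $P$ and $Q$ are unitary, $\norm{\vec A}_\infty = \norm{\vec M}_\infty \leq 1$, as desired.

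The main conceptual step is recognizing that the asymmetric hypotheses of Theorem~\ref{thm:opnormmain} (a full operator-norm bound on one of the two matrices, but only per-block-column bounds on the other) are exactly what is available here: the distinguished index $i^{*}$ supplies the full norm bound, and the remaining indices are absorbed into a tensor product whose block-column norms are controlled factor-by-factor via precondition 1. The only mildly fiddly ingredient is the tensor-factor reordering at the end, which is a routine conjugation by permutation matrices and does not require any nontrivial calculation.
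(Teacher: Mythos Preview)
The proposal is correct and takes essentially the same approach as the paper: both apply Theorem~\ref{thm:opnormmain} with $R=1$, $C=d$, and all $\gamma_{i,j}=1$, using the distinguished matrix $\vec A_{i^*}$ in the role of ``$\vec A$'' and the column-wise tensor product of the remaining $\vec A_i$'s in the role of ``$\vec B$.'' The only cosmetic differences are that the paper first dispatches the $k=1$ case separately and then assumes $i^*=1$ without loss of generality (avoiding the tensor-reordering step), whereas you carry a general $i^*$ and clean up the factor ordering at the end via permutation matrices; both are fine.
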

\begin{proof}
    Firstly, if $k = 1$ then we will have $\vec A = \vec A_1$ so the conclusion will follow from the second precondition. From now on, assume that $k \geq 2$. Also, by symmetry, let us assume without loss of generality that $\norm{\vec A_1}_\infty \leq 1$.

    Now define the matrix $\vec M$ as follows:$$\vec M = \begin{bmatrix} \bigotimes_{i = 2}^k \vec A_{i, 1} & \bigotimes_{i = 2}^k \vec  A_{i, 2} & \ldots & \bigotimes_{i = 2}^k\vec  A_{i, d} \end{bmatrix}.$$
    Notice that, for each $j \in [d]$, the $j$th block column of $\vec M$ has operator norm equal to $\prod_{i = 1}^k \norm{\vec A_{i, j}}_\infty \leq 1$. Since we also have $\norm{\vec A_1}_\infty \leq 1$, we can apply~\Cref{thm:opnormmain} to $\vec A_1$ and $\vec M$ (with $R = 1$, $C = d$, and $\gamma_{i, j} = 1$ for all $i, j$) to immediately obtain that $\norm{A}_\infty \leq 1$, as desired.
\end{proof}

\begin{lemma}\label{lemma:manymatrixkhatrirao}
    Let $R, C$ be positive integers. Let $r_1, \ldots, r_R, c_1, \ldots, c_C$ be positive integers. Fix some integer $d \geq 2$. For each $t \in [d], i \in [R], k \in [C]$, let $\vec A_{t, i, k} \in \CC^{r_i \times c_k}$ be a matrix. Additionally, for each $i \in [R], k \in [C]$, let $\gamma_{i, k} \in \CC$ be a scalar of magnitude at most 1 i.e. $|\gamma_{i, k}| \leq 1$. Then define the following block matrices:
    \begin{align*}
        \vec A_t &:= \begin{bmatrix} \vec A_{t, 1, 1} &\ldots & \vec A_{t, 1, C} \\ \vdots & \ddots & \vdots \\ \vec A_{t, R, 1} &\ldots & \vec A_{t, R, C}\end{bmatrix} \in \CC^{(r_1 + \ldots + r_R) \times (c_1 + \ldots + c_C)}\text{, for each $t \in [d]$} \\
        \vec M &:= \begin{bmatrix} \gamma_{1, 1} \bigotimes_{t = 1}^d \vec A_{t, 1, 1} & \ldots & \gamma_{1, C} \bigotimes_{t = 1}^d \vec A_{t, 1, C} \\ \vdots & \ddots & \vdots \\ \gamma_{R, 1} \bigotimes_{t = 1}^d \vec A_{t, R, 1} & \ldots & \gamma_{R, C} \bigotimes_{t = 1}^d \vec A_{t, R, C}\end{bmatrix} \in \CC^{(r_1^d + \ldots + r_R^d) \times (c_1^d + \ldots + c_C^d)}.
    \end{align*}
    Suppose that for all $t \in [d]$, we have $\norm{\vec A_t}_\infty \leq 1$. Then $\norm{\vec M}_\infty \leq 1$. (Note that the $d = 2$ case is immediate from~\Cref{thm:opnormmain}.)
\end{lemma}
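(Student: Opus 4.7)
The plan is to proceed by induction on $d$, with the base case $d=2$ being precisely Theorem~\ref{thm:opnormmain} (as noted in the lemma statement). Fix $d \geq 3$ and assume the lemma holds for $d-1$; the goal is then to reduce to one application of Theorem~\ref{thm:opnormmain} by peeling off the last tensor factor $\vec A_d$.

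Concretely, I would apply Theorem~\ref{thm:opnormmain} with the role of ``$\vec A$'' played by $\vec A_d$ and the role of ``$\vec B$'' played by the block matrix $\vec B$ whose $(i,k)$ block is $\bigotimes_{t=1}^{d-1} \vec A_{t,i,k}$. With this identification the resulting $\vec M$ in the theorem has blocks $\gamma_{i,k}\, \vec A_{d,i,k} \otimes \bigotimes_{t=1}^{d-1} \vec A_{t,i,k}$, which agrees with the block structure of the $\vec M$ in our lemma up to a reordering of the $d$ tensor factors within every block. Such a reordering conjugates each block by the same permutation of tensor factors on the row side and on the column side (row-block $i$ has dimension $r_i^d$, column-block $k$ has dimension $c_k^d$, so the reordering is a well-defined permutation of rows and of columns of the whole matrix), which is a unitary similarity and hence preserves $\norm{\cdot}_\infty$. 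Thus it suffices to show $\norm{\vec M}_\infty \leq 1$ in this tensor ordering.

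The two hypotheses of Theorem~\ref{thm:opnormmain} are that (1) $\norm{\vec A_d}_\infty \leq 1$ and (2) every block column of $\vec B$ has operator norm $\leq 1$. Hypothesis (1) is given. For (2), fix any column index $k \in [C]$; the $k$-th block column of $\vec B$ is $\bigl[\bigotimes_{t=1}^{d-1} \vec A_{t,1,k}^{\top} \mid \cdots \mid \bigotimes_{t=1}^{d-1} \vec A_{t,R,k}^{\top}\bigr]^{\top}$, which is itself an instance of the matrix in the statement of the lemma with parameter $d-1$, with $C$ replaced by $1$, and with trivial scalars $\gamma'_{i,1} = 1$. To invoke the inductive hypothesis on this single-column instance I need $\norm{\vec A'_t}_\infty \leq 1$ for each $t \in [d-1]$, where $\vec A'_t$ is the $k$-th block column of $\vec A_t$; but this is immediate from Lemma~\ref{lemma:opnormsubmatrix} together with the assumption $\norm{\vec A_t}_\infty \leq 1$. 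Hence by the inductive hypothesis the $k$-th block column of $\vec B$ has operator norm at most $1$, which verifies condition (2). Theorem~\ref{thm:opnormmain} then yields $\norm{\vec M}_\infty \leq 1$, completing the induction.

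I do not expect any significant obstacle: the proof is essentially a clean induction that repackages the lemma for dimension $d$ as a single application of the $d=2$ case plus the lemma for dimension $d-1$ applied in the degenerate setting $C=1$. The only mildly subtle point is the tensor-factor reordering remark above, which is purely bookkeeping.
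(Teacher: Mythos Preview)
Your proof is correct, but it differs from the paper's in two respects.

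First, the paper avoids induction altogether. It defines $\vec B$ with $(i,k)$-block $\bigotimes_{t=2}^{d}\vec A_{t,i,k}$ and observes directly that this $\vec B$ is a \emph{submatrix} of the full tensor product $\vec A_2\otimes\cdots\otimes\vec A_d$ (restricting to rows where all $d-1$ row-block indices agree, and similarly for columns). Lemma~\ref{lemma:opnormsubmatrix} and Lemma~\ref{lemma:opnormtensor} then give $\norm{\vec B}_\infty\le 1$ in one shot, and a further application of Lemma~\ref{lemma:opnormsubmatrix} shows each block column of $\vec B$ has norm $\le 1$. A single application of Theorem~\ref{thm:opnormmain} finishes. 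By contrast, you verify the block-column condition by recognizing each block column as an instance of the lemma itself with parameters $(d-1,C=1)$ and invoking the inductive hypothesis. Both routes work; the paper's is a one-shot argument while yours is a clean recursion.

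Second, the paper peels off $\vec A_1$ rather than $\vec A_d$, so the resulting block $\gamma_{i,k}\,\vec A_{1,i,k}\otimes\bigotimes_{t=2}^{d}\vec A_{t,i,k}$ is already $\gamma_{i,k}\bigotimes_{t=1}^{d}\vec A_{t,i,k}$ with no reordering needed. Your choice to peel off $\vec A_d$ forces the tensor-factor permutation argument; it is correct as you state it, but unnecessary if you peel from the other end.
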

\begin{proof}
    Define the matrix $$\vec B = \begin{bmatrix} \bigotimes_{t = 2}^d \vec A_{t, 1, 1} & \ldots & \bigotimes_{t = 2}^d \vec A_{t, 1, C} \\ \vdots & \ddots & \vdots \\ \bigotimes_{t = 2}^d \vec A_{t, R, 1} & \ldots & \bigotimes_{t = 2}^d \vec A_{t, R, C}\end{bmatrix} \in \CC^{(r_1^{d-1} + \ldots + r_R^{d-1}) \times (c_1^{d-1} + \ldots + c_C^{d-1})}.$$
    We claim that $\vec B$ is a submatrix of $\vec A_2 \otimes \ldots \otimes \vec A_d$. This is intuitive, but nevertheless we justify this rigorously before completing the proof. To this end, let us index each row of each $A_t$ by an index $i \in [R]$ together with an index $\alpha \in [i]$. Similarly, we can index each column by an index $k \in [C]$ together with $\beta \in [k]$. We can hence index rows of $\vec A_2 \otimes \ldots \otimes \vec A_d$ by $(i_2, \ldots, i_t, \alpha_2, \ldots, \alpha_d)$ and similarly the columns by $(k_2, \ldots, k_d, \beta_2, \ldots, \beta_d)$; so that:
    $$\left(\vec A_2 \otimes \ldots \otimes \vec A_d\right)_{(i_2, \ldots, i_d, \alpha_2, \ldots, \alpha_d); (k_2, \ldots, k_d, \beta_2, \ldots, \beta_d)} = \prod_{t = 2}^d \left(\vec A_{t, i_t, k_t}\right)_{\alpha_t; \beta_t}.$$
    On the other hand, we can index the rows of $\vec B$ by one index $i \in [R]$ and indices $\alpha_2, \ldots, \alpha_d \in [i]$, and similarly the columns by $k, \beta_2, \ldots, \beta_d$, so that: $$B_{(i, \alpha_2, \ldots, \alpha_d); (k, \beta_2, \ldots, \beta_d)} = \prod_{t = 2}^d \left(\vec A_{t, i, k}\right)_{\alpha_t; \beta_t}.$$
    It is now clear that $\vec B$ can be obtained by restricting $\vec A_2 \otimes \ldots \otimes \vec A_d$ to rows where $i_2 = \ldots = i_d$ and columns where $k_2 = \ldots = k_d$. This establishes our claim.

    We now complete the proof as follows. By Lemma~\ref{lemma:opnormsubmatrix}, our claim implies that $$\norm{\vec B}_\infty \leq \prod_{t = 2}^d \norm{\vec A_t}_\infty \leq 1.$$ The conclusion now follows by applying~\Cref{thm:opnormmain} to $\vec A_1$ and $\vec B$ for the choices of $R, C$, and scalars $\gamma_{i, k}$. This proves the claim.
\end{proof}

\section{Monogamy of Entanglement and Oracular Cloning Games}\label{sec:monogamydefs}

In this section, we formally define monogamy of entanglement games, as well as the closely related notion of (oracular) cloning games, which is the central object of study in this work.

\subsection{Monogamy of Entanglement Games}\label{sec:MOE}

A monogamy of entanglement game~\cite{Tomamichel_2013} is an interactive game which is played by three players: a trusted referee called Alice, and two colluding and adversarial parties Bob and Charlie. 

\begin{figure}
    \centering
 \begin{protocol}[Monogamy of Entanglement Game] \label{prot:MOE}\ \\
A monogamy of entanglement game $\mathsf{G} = (\algo H_{\reg{A}},\Theta, \algo X, \{\vec{A}_x^{\theta}\}_{\theta \in \Theta, x \in \algo X})$ for a quantum strategy $\mathsf{S} = (\algo H_{\reg{B}},\algo H_{\reg{C}}, \rho_{\reg{ABC}},\{\vec{B}_x^{\theta}\}_{\theta \in \Theta, x \in \algo X},\{\vec{C}_x^{\theta}\}_{\theta \in \Theta, x \in \algo X})$ is the following game between a trusted referee (called Alice) and two collaborating players (called Bob and Charlie):
\begin{enumerate}
  \item (\textbf{Setup phase}) Bob and Charlie prepare a tripartite quantum state $\rho \in \algo D(\algo H_{\reg{A}} \otimes \algo H_{\reg{B}} \otimes \algo H_{\reg{C}})$. They send register $\reg A$ to Alice, and hold onto registers $\reg B$ and $\reg C$, respectively. Afterwards, they are no longer allowed to communicate for the remainder of the game.

  \item (\textbf{Question phase}) Alice first samples a uniformly random question $\theta \sim \Theta$, and then applies the corresponding measurement $\{\vec{A}_x^{\theta}\}_{x \in \algo X}$ to her register $\reg A$. Afterwards, Alice announces the question $\theta$ to both Bob and Charlie.

\item (\textbf{Answer phase}) Bob and Charlie independently output a guess for Alice's outcome by applying the measurements $\{\vec{B}_x^{\theta}\}_{x \in\algo X}$ and $\{\vec{C}_x^{\theta}\}_{x \in \algo X}$ to their registers $\reg B$ and $\reg C$, respectively.

\item (\textbf{Outcome phase}) Bob and Charlie win if they both guess Alice's outcome correctly.
\end{enumerate}
\end{protocol}   
    \caption{A monogamy of entanglement game.}
    \label{fig:MOE-game}
\end{figure}

\begin{definition}[Monogamy of Entanglement Game] A monogamy of entanglement (MOE) game is specified by a tuple $\mathsf{G} = (\algo H_{\reg{A}},\Theta, \algo X, \{\vec{A}_x^{\theta}\}_{\theta \in \Theta, x \in \algo X})$ which consists of the following elements:
\begin{itemize}
    \item A finite dimensional Hilbert space $\algo H_{\reg{A}}$ corresponding to a register $\reg{A}$ that Alice holds;

    \item A finite set $\Theta$ corresponding to the set of possible questions;

    \item A finite set $\algo X$ corresponding to the set of all possible answers;

    \item A set of positive operator-valued measurements $\big\{\vec{A}_x^{\theta}\big\}_{\theta \in \Theta, x \in \algo X}$ to be performed on Alice's system.
\end{itemize}
\end{definition}

\begin{definition}[Quantum Strategy]
A quantum strategy $\mathsf{S} = (\algo H_{\reg{B}},\algo H_{\reg{C}}, \rho_{\reg{ABC}},\{\vec{B}_x^{\theta}\}_{\theta \in \Theta, x \in \algo X},\{\vec{C}_x^{\theta}\}_{\theta \in \Theta, x \in \algo X})$ for a monogamy of entanglement game $\mathsf{G} = (\algo H_{\reg{A}},\Theta,\algo X, \{\vec{A}_x^{\theta}\}_{\theta \in \Theta, x \in \algo X})$ consists of
\begin{itemize}
    \item A finite dimensional Hilbert space $\algo H_{\reg{B}}$ corresponding to a register $\reg{B}$ that Bob holds;

    \item A finite dimensional Hilbert space $\algo H_{\reg{C}}$ corresponding to a register $\reg{C}$ that Charlie holds;

    \item A tripartite quantum state $\rho \in \algo D(\algo H_{\reg{A}} \otimes \algo H_{\reg{B}} \otimes \algo H_{\reg{C}})$;

    \item A set of positive operator-valued measurements $\big\{\vec{B}_x^{\theta}\big\}_{\theta \in \Theta, x \in \algo X}$ to be performed on Bob's system.

    \item A set of positive operator-valued measurements $\big\{\vec{C}_x^{\theta}\big\}_{\theta \in \Theta, x \in \algo X}$ to be performed on Charlie's system.
\end{itemize}
\end{definition}

\begin{definition}[Value of a Monogamy Game]
Let $\mathsf{G} = (\algo H_{\reg{A}},\Theta, \algo X, \{\vec{A}_x^{\theta}\}_{\theta \in \Theta, x \in \Sigma})$ be monogamy game. Then, the winning probability of a quantum strategy $\mathsf{S} = (\algo H_{\reg{B}},\algo H_{\reg{C}}, \rho_{\reg{ABC}},\{\vec{B}_x^{\theta}\}_{\theta \in \Theta, x \in \algo X},\{\vec{C}_x^{\theta}\}_{\theta \in \Theta, x \in \algo X})$ for the particular monogamy game $\mathsf{G}$ is defined by the quantity
$$
\omega_{\mathsf{S}}(\mathsf{G}) := 
\underset{\theta \sim \Theta}{\mathbb{E}} \sum_{x \in \algo X} \mathrm{Tr}\left[ (\vec{A}_x^{\theta} \otimes \vec{B}_x^{\theta} \otimes \vec{C}_x^{\theta}) \rho_{\reg{ABC}}  \right].
$$
Moreover, we define the value of the monogamy game $\mathsf{G}$ as the optimal winning probability
$$
\omega(\mathsf{G}) := \underset{\mathsf{S} = (\algo H_{\reg{B}},\algo H_{\reg{C}}, \rho_{\reg{ABC}},\{\vec{B}_x^{\theta}\},\{\vec{C}_x^{\theta}\})}{\sup} \,\,\omega_{\mathsf{S}}(\mathsf{G}).
$$
\end{definition}

\begin{remark}
    As noted in~\cite{Tomamichel_2013}, a standard purification argument and Neumark's dilation theorem show that we can assume without loss of generality that all POVMs are projective. We will assume this going forward.
\end{remark}

\begin{definition}[Parallel-Repeated Monogamy Game]\label{def:parallel-rep}
    Let $r \in \NN$ be a parameter. For any monogamy game $\mathsf{G} = (\mathcal{H}_{\reg A}, \Theta, \mathcal{X}, \left\{\vec{A}_x^\theta\right\}_{\theta \in \Theta, x \in \mathcal{X}})$, we define the \emph{$r$-fold parallel-repeated monogamy game} $\mathsf{G}^{\times r}$ as follows:
    \begin{itemize}
        \item The Hilbert space for Alice's register will be $\mathcal{H}_{\reg A}^{\otimes r}$.
        \item The set of questions will now be $\Theta^r$.
        \item The set of answers will be $\mathcal{X}^r$.
        \item For any $(x_1, \ldots, x_r) \in \mathcal{X}^r$ and $(\theta_1, \ldots, \theta_r) \in \Theta^r$, we define Alice's measurement to be $$\vec{A}_{(x_1, \ldots, x_r)}^{(\theta_1, \ldots, \theta_r)} = \bigotimes_{i = 1}^r \vec{A}_{x_i}^{\theta_i}.$$
    \end{itemize}
    Informally, Alice will carry out $r$ parallel measurements and Bob and Charlie succeed if they successfully guess the outcomes of all $r$ measurements.
\end{definition}

\begin{example}[BB84 Monogamy Game]
    As a simple example, the following monogamy game is known as the ``BB84 monogamy game'':
    \begin{itemize}
        \item The Hilbert space $\mathcal{H}_{\reg A}$ is $\CC^2$.
        \item The sets of questions $\Theta$ and answers $\mathcal{X}$ are both $\bit$.
        \item For any $x, \theta \in \bit$, we have $$\vec{A}_x^\theta = \mathsf{H}^\theta \proj{x}\mathsf{H}^\theta.$$
    \end{itemize}
\end{example}

\begin{remark}\label{remark:moetrivialstrategy}
    We note that any MOE game admits a trivial strategy with success probabiliy $1/|\mathcal{X}|$. Bob and Charlie could set up the state $\rho_{\reg{ABC}}$ so that Bob and Alice are maximally entangled. This would enable Bob to always guess $x$ correctly, and now Charlie can guess randomly. (He cannot do better as in this case he must be completely decoupled from Alice and Bob.)
\end{remark}

\subsection{Oracular Cloning Games}\label{sec:oracularcloning}

The monogamy of entanglement games which we encounter in physics and cryptography often deal with some restrictions on the types of strategies that can be employed.
Motivated by this, in this section, we introduce the notion of a $t \mapsto t+1$ \emph{cloning game}. In the case when $t=1$, this notion turns out to be a special case of a monogamy of entanglement game in \Cref{sec:MOE}, with the following additional restrictions: 
\begin{itemize}
        \item The tripartite state $\rho \in \algo D(\algo H_{\reg{A}} \otimes \algo H_{\reg{B}} \otimes \algo H_{\reg{C}})$ which is shared between Alice, Bob and Charlie is the result of applying a cloning channel $\Phi_{\reg{A' \rightarrow BC}}$ to one half of an EPR pair, i.e.,
    \begin{equation*}
    \rho_{\reg{ABC}} = (\id_{\reg A} \otimes \Phi_{\reg{A' \rightarrow BC}})(\proj{\mathsf{EPR}}_{\reg{AA'}}).
    \end{equation*}
    In other words, $\rho_{\reg{ABC}}$ is the normalized Choi state of some channel $\Phi_{\reg{A' \rightarrow BC}}$.
    
    \item Alice's measurement $\big\{\vec{A}_x^{\theta}\big\}_{\theta \in \Theta, x \in \algo X}$ on register $\reg A$ is a projective measurement of the form
    $$
    \vec{A}_x^{\theta} = \bar{U}_\theta \proj{x} 
    \bar{U}_\theta^\dag \,,
    $$
    for some family of unitary operators $\{U_\theta\}_{\theta \in\Theta}$ acting on $\algo H_{\reg A}$.

    \item (If we are in the oracular setting) Bob and Charlie's measurements can only depend on oracle queries to $U_\theta$ and $U_\theta^\dag$, rather than directly on $\theta$.
    \end{itemize}
This equivalence was first observed and used by Broadbent and Lord~\cite{broadbent_et_al:LIPIcs.TQC.2020.4} for analyzing BB84 cloning games; for completeness, we give a proof of the general case (which proceeds along almost identical lines) in Lemma~\ref{lem:equiv}. For $t \geq 2$, the notion of a $t \mapsto t+1$ cloning game includes $t+1$ colluding parties and thus starts to become incomparable to a monogamy of entanglement game in \Cref{sec:MOE}. However, a similar argument still allows us to establish an equivalence between $t \mapsto t+1$ cloning games and another game that resembles a monogamy game, and we will use this equivalence in our analysis. This equivalence is demonstrated in Lemma~\ref{lemma:CImultiplayer}.

Let us now give a formal definition of a $t \mapsto t+1$ cloning game.

\begin{definition}[(Oracular) Cloning Game]\label{def:OCG} Let $t \in \N$ be an integer. A $t \mapsto t+1$ (oracular) cloning game $(\mathsf{(O)CG})$ is a tuple $\mathsf{G}_{t \mapsto t+1} = (t,\algo H_{\reg{A^t}},\Theta, \algo X, \{U_{\theta}\}_{\theta \in \Theta})$ which consists of the following elements:
\begin{itemize}
    \item A finite dimensional Hilbert space $\algo H_{\reg{A^t}}$ consisting of registers $\reg{A^t} := \reg{A_1} \cdots \reg{A_t}$ given to the cloner;

    \item A finite set $\Theta$ corresponding to the set of possible questions;

    \item A finite set $\algo X$ corresponding to the set of all possible answers;

    \item A finite ensemble of unitary operators $\{U_\theta\}_{\theta \in \Theta}$ acting on the $\reg{A}$ systems.
\end{itemize}
\end{definition}

\begin{definition}[Quantum Strategy for Cloning Games]\label{def:quantumstrategycloninggame}

Let $t\in \N$ and let $\mathsf{G}_{t \mapsto t+1} = (t,\algo H_{\reg{A^t}},\Theta, \algo X, \{U_{\theta}\}_{\theta \in \Theta})$ be a cloning game.
A quantum strategy $\mathsf{S} = (\algo H_{\reg{B^{t+1}}}, \Phi_{\reg{A^t \rightarrow B^{t+1}}},\{\vec{P}_{1,x}^{\theta}\}_{\theta \in \Theta, x \in \algo X},\,\dots,\{\vec{P}_{t+1,x}^{\theta}\}_{\theta \in \Theta, x \in \algo X})$ for the game $\mathsf{G}_{t \mapsto t+1}$ is characterized by the following elements:
\begin{itemize}
    \item A finite dimensional Hilbert space $\algo H_{\reg{B^{t+1}}}$ consisting of registers $\reg{B^{t+1}} := \reg{B_1} \cdots \reg{B_{t+1}}$ which are held by the $k+1$ many players in the game;

    \item A completely positive and trace-preserving channel $\Phi_{\reg{A^t \rightarrow B^{t+1}}}$ performed by the cloner;

    \item A sequence of measurements $\{\vec{P}_{1,x}^{\theta}\}_{\theta \in \Theta, x \in \algo X},\,\dots,\{\vec{P}_{t+1,x}^{\theta}\}_{\theta \in \Theta, x \in \algo X}$ which are to be performed by the $t+1$ players on the registers $\reg{B_1}, \cdots, \reg{B_{t+1}}$, respectively.
\end{itemize}
\end{definition}

\begin{definition}[Quantum Strategy for \textbf{Oracular} Cloning Games]\label{def:quantum-strategy-OCG}

A quantum strategy for an \textbf{oracular} cloning game is the same as a quantum strategy for a cloning game, with the following crucial restriction: the measurements by the $t+1$ players will now be \emph{oracle-aided}. We denote these as $$\left\{\vec{P}_{1,x}^{U_\theta,U_{\theta}^\dag}\right\}_{\theta \in \Theta, x \in \algo X},\,\dots,\left\{\vec{P}_{t+1,x}^{U_\theta,U_{\theta}^\dag}\right\}_{\theta \in \Theta, x \in \algo X}.$$

Informally, an oracular cloning game is one where the players are only given oracle access to $U_\theta, U_\theta^\dag$, whereas in Definition~\ref{def:quantumstrategycloninggame} the players are given the question $\theta$ in the clear.
\end{definition}

\begin{definition}[\textbf{Restricted} Quantum Strategy for Oracular Cloning Games]\label{def:restricted-quantum-strategy-OCG}\label{def:veryrestrictedocg}
    Assume $\mathcal{X} = \bit^n$. Then a \textbf{restricted} quantum strategy for an oracular cloning game further restricts the players in the following way. For each $i \in [t+1]$, player $\algo{P}_i$ must output their guess $x \in \bit^n$ after applying some quantum algorithm that makes \textbf{at most one query} to either $U_\theta$ or $U_\theta^\dag$.

    We let $\algo{S}_{\mathrm{rest}}$ denote the collection of restricted quantum strategies $\mathsf{S}$ for the oracular cloning game $\mathsf{G}$.
\end{definition}
\noindent
We first observe that we can impose some structure on the $t+1$ players' strategies without loss of generality; this will make our analysis easier:

\begin{lemma}\label{lemma:ocgwlog}
    Without loss of generality, a restricted quantum strategy for $\mathsf{G}$ may be taken to the have the following much more restricted structure: Each player $\algo{P}_i$ will hold a register $\reg{B_i}$ that splits into the following registers:
    \begin{itemize}
        \item A query register $\reg{C_i}$ of $n$ qubits;
        \item An ancilla register $\reg{D_i}$ of $a$ qubits (we allow $a$ to be arbitrary, but assume WLOG that it is the same for all the players); and
        \item A classical control bit $b_i$ from the cloning channel $\Phi$, which we store in a single-qubit register $\reg{E_i}$ for formality's sake.
    \end{itemize}
    The player $\algo P_i$ will then proceed as follows:
    \begin{enumerate}
        \item They first make \textbf{exactly one query} to either $U_\theta$ or $U_\theta^\dag$, which will be applied to the $\reg{C_i}$ register. Which of these unitaries they query will be controlled by $b_i$.
        \item They can then apply a unitary $Q_i$ of their choice to their entire system $\reg{B_i}$. (We assume without loss of generality that the same unitary $Q_i$ is applied regardless of the value of the control bit $b_i$; the cloner could simply include a copy of the control bit in register $\reg D_i$ as well, which $Q_i$ acts on.)
        \item They now measure the $n$ qubits in the $\reg{C_i}$ register to obtain a string $x \in \bit^n$.
        \item They output $x$. 
    \end{enumerate}
    Formally: for every $i \in [t+1]$ and $x \in \bit^n$, player $\algo{P}_i$'s projector has the form:
    \begin{align*}
        \vec{P}_{i, x}^{U_\theta, U_\theta^\dag} = &\left[(U_\theta^\dag \otimes \id_{\reg D_i})Q_i^\dag(\proj{x} \otimes \id_{D_i})Q_i(U_\theta \otimes \id_{\reg{D_i}})\right] \otimes \proj{0}_{\reg{E_i}} \\
        + &\left[(U_\theta \otimes \id_{\reg D_i})Q_i^\dag(\proj{x} \otimes \id_{D_i})Q_i(U_\theta^\dag \otimes \id_{\reg{D_i}})\right] \otimes \proj{1}_{\reg{E_i}}.
    \end{align*}
\end{lemma}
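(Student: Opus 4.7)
The plan is to reduce an arbitrary restricted strategy to the stated canonical form via a sequence of value-preserving transformations. In full generality, player $\algo P_i$'s strategy is specified by (i) a $\theta$-independent pre-query unitary $V_i$ acting on their share of the state, (ii) at most one oracle call to $U_\theta$ or $U_\theta^\dag$, (iii) a $\theta$-independent post-query unitary $W_i$ (which may depend on which oracle was used), and (iv) a POVM producing the classical output $x \in \bit^n$. I would normalize each of these degrees of freedom in turn, noting that any such simplification either leaves the winning probability invariant or only restricts to a subset of strategies that still attains the supremum.

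The first two simplifications are straightforward. Since $V_i$ does not depend on $\theta$, I would absorb it into the cloner's channel by replacing $\Phi$ with $\Phi' := (V_1 \otimes \cdots \otimes V_{t+1}) \circ \Phi$, which is again a valid CPTP map; the winning probability, being a linear function of the joint state the players hold after the cloning channel, is unchanged. Next, I would upgrade ``at most one query'' to ``exactly one query'' by padding: if a player does not query at all, the cloner adjoins an extra $n$-qubit ancilla initialized to $|0^n\rangle$ on which the player performs a dummy query, and the post-query unitary $Q_i$ can act as the identity on the dummy register; since that register is discarded at measurement time, the output distribution is unaffected. Finally, by absorbing any unitary basis change and any relabeling of output registers into $Q_i$, the output POVM can be taken to be a standard-basis measurement on a designated $n$-qubit subregister which we call $\reg{C_i}$.

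The most delicate step is handling the control bit $b_i \in \bit$ that selects between $U_\theta$ and $U_\theta^\dag$. A priori this bit can come from a randomized or quantum choice made by the player, but since $b_i$ only controls which of two fixed, $\theta$-dependent gates is applied, the cloner can equivalently sample $b_i$ and write its value into a dedicated register $\reg{E_i}$; by linearity and convexity of the winning probability in the joint state after $\Phi$, the supremum over such strategies is attained when $b_i$ is indeed a classical bit set by $\Phi$. To further eliminate any $b_i$-dependence of the post-query unitary $W_i$, I would have $\Phi$ also place a copy of $b_i$ into the ancilla register $\reg{D_i}$; a single $b_i$-independent unitary $Q_i$ can then simulate any conditional $W_i$ by reading this copy. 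The only real bookkeeping obstacle is checking that, after these reductions, the projector $\vec{P}_{i,x}^{U_\theta, U_\theta^\dag}$ really takes the two-term form in the lemma, with the $\proj{0}_{\reg{E_i}}$ branch inserting a $U_\theta$ before $Q_i$ and a $U_\theta^\dag$ after $Q_i$ on $\reg{C_i}$, and analogously for the $\proj{1}_{\reg{E_i}}$ branch with the roles of $U_\theta$ and $U_\theta^\dag$ swapped. This follows by direct expansion once the reductions above are in place.
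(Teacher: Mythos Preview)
Your proposal is correct and follows essentially the same approach as the paper's proof. The paper's argument is much terser---it simply notes that any $\theta$-independent preprocessing (including the choice of which oracle to query) can be absorbed into the cloning channel $\Phi$, handles the zero-query case via dummy qubits, and then invokes the Stinespring and Neumark dilation theorems to obtain the unitary-plus-standard-basis-measurement form---but your more explicit treatment of each reduction step (absorbing $V_i$, padding to exactly one query, classicalizing $b_i$, and making $Q_i$ independent of $b_i$ by copying the bit into $\reg{D_i}$) unpacks exactly these points.
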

\begin{proof}
    Any preprocessing that player $\algo{P}_i$ might carry out before their query can be absorbed into the cloning channel $\Phi$, including the decision about which of $U_\theta, U_\theta^\dag$ to query, which we represent in the control bit $b_i$. (If the player does not want to query either, we can just treat $\reg{C_i}$ as dummy qubits and make a query there.)

    The conclusion now follows from the Stinespring and Neumark dilation theorems~\cite{NielsenChuang11}.
\end{proof}

\begin{remark}\label{remark:veryrestricted}
    Some comments are in order about Definition~\ref{def:veryrestrictedocg}:
    \begin{itemize}
        \item The cloner $\Phi$ remains entirely unrestricted; they can apply an arbitrary quantum channel to $(U_\theta\ket{x})^{\otimes t}$.
        \item While quite restrictive, this model is still sufficiently expressive and captures a standard approximate no-cloning bound as a special case: once  the cloning map $\Phi$ has been applied, each player $\algo{P}_i$ immediately makes a query to $U_\theta^\dag$ and measures in the computational basis. In this case, it is easy to check that the winning probability corresponds precisely to the average cloning fidelity for $t \mapsto t+1$.

    \end{itemize}
\end{remark}

\begin{definition}[Value of a (Oracular) Cloning Game]\label{def:value-OCG}
Let $t \in \N$. The winning probability of a quantum strategy $\mathsf{S} = (\algo H_{\reg{B^{t+1}}}, \Phi_{\reg{A^t \rightarrow B^{t+1}}},\{\vec{P}_{1,x}^{U_\theta,U_{\theta}^\dag}\}_{\theta \in \Theta, x \in \algo X},\,\dots,\{\vec{P}_{t+1,x}^{U_\theta,U_{\theta}^\dag}\}_{\theta \in \Theta, x \in \algo X})$ for a particular $t\mapsto t+1$ oracular cloning game $\mathsf{G}_{t \mapsto t+1} = (t,\algo H_{\reg{A^t}},\Theta, \algo X, \{U_{\theta}\}_{\theta \in \Theta})$ is defined by the quantity
$$
\omega_{\mathsf{S}}(\mathsf{G}_{t \mapsto t+1}) := 
\underset{\theta \sim \Theta}{\mathbb{E}}  \,\underset{x \sim \algo X}{\mathbb{E}}  \, \Tr{\left( \vec{P}_{1,x}^{U_\theta,U_{\theta}^\dag}\ot \dots \ot \vec{P}_{t+1,x}^{U_\theta,U_{\theta}^\dag}\right) \Phi_{\reg{A^t \rightarrow B^{t+1}}}\left( (U_\theta \proj{x}U_\theta^\dag)_{\reg{A^t}}^{\otimes t} \right)} .
$$
Moreover, we define the value of the oracular cloning game $\mathsf{G}$ as the optimal winning probability
$$
\omega(\mathsf{G}_{t \mapsto t+1}) := \underset{\mathsf{S} = (\algo H_{\reg{B^{t+1}}}, \Phi_{\reg{A^t \rightarrow B^{t+1}}},\{\vec{P}_{1,x}\},\,\dots,\{\vec{P}_{t+1,x}\})}{\sup} \,\,\omega_{\mathsf{S}}(\mathsf{G}_{t \mapsto t+1}).
$$
We analogously define the value of a cloning game $\mathsf{G}$, using the measurements $\left\{\vec{P}_{i, x}^\theta\right\}$ instead.
\end{definition}

\begin{remark}[Comparison with~\cite{10.1007/978-3-031-38554-4_3}]\label{remark:aklcomparison}
    Our definition of cloning games (in the single copy setting) is much more specific than that in~\cite[Definitions 2-6]{10.1007/978-3-031-38554-4_3}. Using their terminology, the secret key is $\theta$ and each of the steps has the following structure:
    \begin{itemize}
        \item The secret key $\mathsf{sk}$ will be the basis $\theta$.
        \item Token generation $\mathsf{GenT}(\mathsf{sk}, x)$ is restricted to deterministically output the pure state $U_\theta \ket{x}$.
        \item Challenge generation $\mathsf{GenC}(\mathsf{sk}, x)$ simply releases $\theta = \mathsf{sk}$, either in the clear or in the form of an oracle.
        \item For verification, we specialize to the \emph{cloning search} setting (see~\cite[Definition 4]{10.1007/978-3-031-38554-4_3}), where the players win if they all output the string $x$ as their answer.
    \end{itemize}
    Our reason for focusing on games of this form is because we are primarily concerned with the difficulty of cloning one very specific property of a quantum state, namely its measurement statistics with respect to the bases specified by $\Theta$ (or even more weakly, just the string $x$). 
\end{remark}

\noindent
We also make the straightforward observation that cloning games are closely related to unclonable encryption ($\sUE$) schemes (which we will formally define in~\Cref{sec:unclonableenc}):

\begin{lemma}\label{lemma:cloningtoue}
    Consider a cloning game with $t$ players, $\mathcal{X} = \bit^n$, and $\Theta = \bit^\secp$. Then all of the following hold:
    \begin{itemize}
        \item If the corresponding cloning game has value $\leq \epsilon$ with computationally unbounded (respectively, computationally bounded) $(\Phi, \algo P_1, \ldots, \algo P_{t+1})$, then there exists a $\sUE$ scheme satisfying statistical (respectively, computational) $t \mapsto t+1$ $\epsilon$-$\sUE$ security.
        \item If the corresponding \emph{oracular} cloning game has value $\leq \epsilon$ with computationally unbounded (respectively, computationally bounded) $(\Phi, \algo P_1, \ldots, \algo P_{t+1})$, then there exists a $\sUE$ scheme satisfying (statistical, respectively computational) $t \mapsto t+1$ $(\epsilon, \infty)$-$\sUE$ \emph{oracular} security.
        \item If in the corresponding oracular cloning game, any computationally unbounded (respectively, computationally bounded) \emph{restricted} strategy $(\Phi, \algo P_1, \ldots, \algo P_{t+1})$ has value $\leq \epsilon$, then there exists a $\sUE$ scheme satisfying (statistical, respectively computational) $t \mapsto t+1$ $(\epsilon, 1)$-$\sUE$ oracular security.
    \end{itemize}
\end{lemma}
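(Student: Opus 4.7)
The plan is to construct the $\sUE$ scheme directly from the cloning game $\mathsf{G}_{t \mapsto t+1} = (t, \algo H_{\reg{A^t}}, \Theta, \bit^n, \{U_\theta\}_{\theta \in \Theta})$ and then observe that the $t \mapsto t+1$ search-security game for this scheme is \emph{literally} the same interactive game as the cloning game, so any adversary breaking the one yields a strategy with the same winning probability for the other. Concretely, I define $\mathsf{KeyGen}$ to sample $\theta \sim \Theta$ (identifying $\Theta$ with $\bit^\secp$ via a natural bijection, using $\secp = \lceil \log |\Theta| \rceil$), $\mathsf{Enc}(\theta, x)$ to output $U_\theta \ket{x}$, and $\mathsf{Dec}(\theta, \cdot)$ to apply $U_\theta^\dag$ and measure in the computational basis. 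Correctness is immediate from $U_\theta^\dag U_\theta \ket{x} = \ket{x}$, and the scheme is succinct because the key length is $\secp$.

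For the security reduction, I will match the three bullets to the three natural ways of revealing $\theta$ to the receivers. A $t \mapsto t+1$ $\sUE$ adversary consists of a cloning channel $\Phi$ that receives $t$ ciphertexts $(U_\theta\ket{x})^{\otimes t}$ and forwards registers to $t+1$ non-communicating receivers, who must each output $x$; this is precisely the data of a quantum strategy for $\mathsf{G}_{t \mapsto t+1}$ as in Definition~\ref{def:quantumstrategycloninggame}, and the winning probabilities coincide, giving the first bullet by contrapositive. For the second bullet, encryption and decryption oracles for this scheme implement (up to an ancilla register that each receiver can absorb into its local workspace) the unitaries $U_\theta$ and $U_\theta^\dag$ respectively, so an oracle-$\sUE$ adversary with unboundedly many queries induces a quantum strategy in the sense of Definition~\ref{def:quantum-strategy-OCG} with the same value. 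For the third bullet, an adversary restricted to a single oracle query induces a restricted strategy in the sense of Definition~\ref{def:restricted-quantum-strategy-OCG}; the structural cleanup (moving preprocessing into $\Phi$, adding a classical control bit to select between the $\mathsf{Enc}$ and $\mathsf{Dec}$ oracle, and then measuring the query register) is exactly what Lemma~\ref{lemma:ocgwlog} already does. In all three cases, the reduction is query- and time-preserving, so both the statistical and computational variants follow at once.

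The proof is essentially bookkeeping and there is no real technical obstacle; the only point that requires care is the equivalence between an encryption/decryption oracle interface on one hand and the abstract $U_\theta, U_\theta^\dag$ oracle interface on the other, because a $\sUE$ scheme typically fixes a particular classical syntax for ciphertexts and an ancilla convention for $\mathsf{Enc}$ and $\mathsf{Dec}$. Once the scheme is defined so that $\mathsf{Enc}(\theta,\cdot)$ and $\mathsf{Dec}(\theta,\cdot)$ act as $U_\theta$ and $U_\theta^\dag$ tensored with identity on an ancilla, the three implications are immediate from Definitions~\ref{def:value-OCG}, \ref{def:quantum-strategy-OCG}, and \ref{def:restricted-quantum-strategy-OCG}.
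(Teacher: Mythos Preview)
Your proposal is correct and takes essentially the same approach as the paper: define $\Enc(\theta,x)=U_\theta\ket{x}$ and $\Dec$ to apply $U_\theta^\dag$ and measure, then observe that the $\sUE$ security experiment coincides with the cloning game. The paper's proof is in fact just this one-line construction followed by ``the conclusions are now straightforward to verify,'' so your write-up is if anything more detailed than the original.
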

\begin{proof}
    In all cases, the construction proceeds as follows: we will take $\Enc(\theta, x) = U_\theta\ket{x}$, and $\Dec$ will apply $U_\theta^\dag$ and measure in the standard basis. The conclusions are now straightforward to verify.
\end{proof}

\begin{figure}
    \centering
 \begin{protocol}[Oracular Cloning Game] \label{prot:OCG}\ \\
A $t \mapsto t+1$ oracular cloning game $\mathsf{G}_{t \mapsto t+1} = (k,\algo H_{\reg{A^t}},\Theta, \algo X, \{U_{\theta}\}_{\theta \in \Theta})$ for a quantum strategy of the form $\mathsf{S} = (\algo H_{\reg{B^{t+1}}}, \Phi_{\reg{A^t \rightarrow B^{t+1}}},\{\vec{P}_{1,x}^{U_\theta,U_{\theta}^\dag}\}_{\theta \in \Theta, x \in \algo X},\,\dots,\{\vec{P}_{t+1,x}^{U_\theta,U_{\theta}^\dag}\}_{\theta \in \Theta, x \in \algo X})$ is the following game between a trusted challenger, a cloner and $t+1$ many players:
\begin{enumerate}
  \item (\textbf{Setup phase}) The challenger samples a random $x \sim \algo X$ and a random $\theta \sim \Theta$, and sends the state $(U_\theta\ket{x})^{\otimes t}$ consisting of registers $\reg{A^t} := \reg{A_1} \cdots \reg{A_t}$ to the cloner.
  
  The cloner applies the channel $\Phi_{\reg{A^t \rightarrow B^{t+1}}}$ to $(U_\theta\ket{x})^{\otimes t}$ and sends the resulting registers $\reg{B^{t+1}}=\reg{B_1} \cdots \reg{B_{t+1}}$ to the $t+1$ many players, respectively. Afterwards, the players may no longer communicate with each other for the remainder of the game.

  \item (\textbf{Question phase}) Each of the players receives oracles for both $U_\theta$ and $U_\theta^\dag$.

\item (\textbf{Answer phase}) The players independently output a guess for the element $x$ by applying the measurements $\{\vec{P}_{1,x}^{U_\theta,U_{\theta}^\dag}\}_{x \in \algo X}, \, \dots,\{\vec{P}_{t+1,x}^{U_\theta,U_{\theta}^\dag}\}_{x \in \algo X}$ to their registers, respectively.

\item (\textbf{Outcome phase}) The players win if they all guess $x$ correctly.
\end{enumerate}
\end{protocol}   
    \caption{A $t\mapsto t+1$ oracular cloning game. A regular cloning game is defined analogously, except the measurements are now $\vec{P}_{i, x}^\theta$ and free to depend on $\theta$ in any way. Informally, in a standard cloning game $\theta$ is revealed to the $t+1$ players in the clear, while in the oracular cloning game the players are only given oracle access to $U_\theta$ and $U_\theta^\dag$.}
    \label{fig:OCG}
\end{figure}

\section{Types and Subtypes}\label{sec:typesandsubtypes}

In order to improve on the limitations of the~\cite{Tomamichel_2013} framework for bounding the value of monogamy games, we essentially restrict attention to oracular cloning games, and restrict each player to only make one query. This allows us to analyze this game using the language of \emph{binary phase twirls} (defined and analyzed in~\Cref{sec:phaseunitarydefinition}). To effectively capture the effect of binary phase twirls on an operator, we revisit the formalism of \emph{types} introduced by~\cite{DBLP:conf/tcc/AnanthGQY22} in~\Cref{sec:types} and extend this to \emph{subtypes} in~\Cref{sec:subtypes}. Later, in~\Cref{sec:binaryphaseconstruction}, we will leverage these tools to analyze our construction using binary phase states and prove cloning bounds of $O_t(2^{-n})$.

\subsection{Binary Types}\label{sec:types}

Let $N, M \in \N$ and $r \in \N$. For a vector $\vec x = (x_1,\dots,x_r) \in [N]^r$ and an ancilla input $y \in [M]$, we denote by $\Type(\vec x, y) \in [0:r]^N$ the so-called \emph{type vector} in which the $i$-th entry corresponds to the number of occurrences of $i \in [N]$ in $\vecx$. Note that the ancillary information $y$ is just representing some auxiliary input that we do not consider when evaluating $\Type$. We denote by $\BinType(\vec x, y) \in \bit^N$ the \emph{binary type vector} in which the $i$-th entry corresponds to the parity of the number of occurrences of $i \in [N]$ in $\vecx$. In other words, we let 
$$\BinType(\vecx, y) =  \Type(\vecx, y) \pmod{2}.$$

We note that our definition of $\Type$ and $\BinType$ is a natural extension of the standard definition in the literature (which does not consider auxiliary input); in particular, when $M = 0$ and $y$ is the empty string, our definitions and the standard definitions coincide.

\paragraph{$\BinType$ decomposition.}

When working with the vector space $\algo H = (\CC^N)^{\otimes r} \otimes \CC^M$, we use the following $\BinType$ decomposition into orthogonal subspaces $V_{\blambda}$ indexed by binary types $\blambda \in \bit^N$ such that
$$
(\CC^N)^{\otimes r} \otimes \CC^M \cong \bigoplus_{\blambda} V_{\blambda} \, ,
$$
where each subspace $V_{\blambda} \subseteq \algo H$ corresponds to vectors with a particular binary type $\blambda$, i.e.,
$$V_{\blambda} = \mathrm{span}_{\CC} \{\ket{v_1,\dots,v_r, w} \, : \,\BinType((v_1,\dots,v_r), w) = \blambda\}.$$

\subsection{Phase Twirling}\label{sec:phaseunitarydefinition} For a binary function $f: [N]\rightarrow \bit$, we let $\mathsf{U}_f$ define the phase unitary
    $$
    \mathsf{U}_f = \sum_{x \in [N]} (-1)^{f(x)} \proj{x}.
    $$
Using the $\BinType$ decomposition, we can show the following identity for the $r$-wise twirl with $\mathsf{U}_f$.
We note that the below lemma is also immediate from Zhandry's compressed oracle technique~\cite{DBLP:conf/crypto/Zhandry19}; although we do not present our results in these terms here, we outline this formulation in~\Cref{sec:techoverviewbinaryphase}.

\begin{lemma}\label{lem:phase-type-twirl} Let $O \in L(\algo H)$ be a linear operator acting on the vector space $\algo H = (\CC^N)^{\otimes r} \otimes \CC^M$. Then,
$$
\underset{f \sim \algo F_n}{\mathbb{E}}\left[\left(\sfU_f^{\otimes r} \otimes \id\right) O \left(\sfU_f^{\otimes r} \otimes \id\right)\right] = \sum_{\blambda \in \bit^N} \Pi_{\blambda} O \Pi_{\blambda} \, ,
$$
where $\Pi_{\blambda}$ projects onto $V_{\blambda} = \mathrm{span}_{\CC} \{\ket{x_1,\dots,x_r, v} \in (\CC^N)^{\otimes r}\otimes \CC^M\, : \,\BinType((x_1,\dots,x_r), v) = \blambda\}$. 
\end{lemma}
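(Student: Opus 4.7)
The plan is to expand the left-hand side in the computational basis and directly compute each matrix entry of the twirled operator. Since $\mathsf{U}_f$ is diagonal in this basis, $\sfU_f^{\otimes r} \otimes \id$ acts on the basis state $\ket{x_1, \ldots, x_r, w}$ simply by multiplication by the phase $(-1)^{f(x_1) + \cdots + f(x_r)}$. Therefore, for any two basis states $\ket{\vec{x}, w} = \ket{x_1, \ldots, x_r, w}$ and $\ket{\vec{y}, v} = \ket{y_1, \ldots, y_r, v}$, the $(\vec{x}w, \vec{y}v)$ entry of the conjugated operator is
\[
\bigl[(\sfU_f^{\otimes r} \otimes \id)\, O\, (\sfU_f^{\otimes r} \otimes \id)\bigr]_{(\vec{x},w),(\vec{y},v)} \;=\; (-1)^{\sum_i f(x_i) + \sum_i f(y_i)} \cdot O_{(\vec{x},w),(\vec{y},v)}.
\]

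Next, I would rewrite the phase by regrouping terms according to how often each element of $[N]$ appears, which precisely introduces the binary type. Concretely, reducing the exponent modulo $2$ gives
\[
\sum_{i=1}^r f(x_i) + \sum_{i=1}^r f(y_i) \;\equiv\; \sum_{z \in [N]} f(z)\cdot\bigl(\BinType(\vec{x},w)_z + \BinType(\vec{y},v)_z\bigr) \pmod{2},
\]
where I use that the ancilla coordinates $w$ and $v$ do not contribute to $\BinType$ by definition. Taking the expectation over a uniformly random $f \sim \algo F_n$, the character sum $\EE_f (-1)^{\sum_z f(z)(\lambda_z + \mu_z)}$ equals $1$ when $\blambda := \BinType(\vec{x}, w)$ and $\bmu := \BinType(\vec{y}, v)$ agree, and $0$ otherwise.

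Thus the averaged operator has matrix entries equal to $O_{(\vec x, w),(\vec y, v)}$ exactly when $\BinType(\vec{x},w) = \BinType(\vec{y},v)$, and zero otherwise. It remains to recognize this as the claimed sum of sandwich projectors: since the subspaces $\{V_{\blambda}\}$ form an orthogonal decomposition of $\algo H$ indexed by binary types, the operator $\sum_{\blambda} \Pi_{\blambda} O \Pi_{\blambda}$ is precisely the one that agrees with $O$ on matrix entries between basis vectors sharing a common binary type and is zero elsewhere. Matching the two expressions entry-by-entry completes the proof. I do not anticipate any real obstacle; the only minor care needed is to ensure the $\id$-factor on the ancilla $\CC^M$ register is tracked correctly, which is automatic since $\BinType$ ignores the ancilla coordinate and the projectors $\Pi_{\blambda}$ act as the identity there.
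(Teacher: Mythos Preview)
Your proposal is correct and follows essentially the same approach as the paper: expand $O$ in the computational basis, compute the diagonal phase action of $\sfU_f^{\otimes r}\otimes\id$, average over $f$ so that only entries with matching $\BinType$ survive, and identify the result with $\sum_{\blambda}\Pi_{\blambda}O\Pi_{\blambda}$. The only cosmetic difference is that you make the character-sum computation slightly more explicit than the paper does.
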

\begin{proof}
Expanding $O$ in the standard basis and using the linearity of expectation, we get
\begin{align*}
 &\underset{f \sim \algo F_n}{\mathbb{E}}\left[\left(\sfU_f^{\otimes r} \otimes \id\right) O \left(\sfU_f^{\otimes r} \otimes \id\right)\right]\\
 &= \sum_{\substack{\vec x,\vec y \in [N]^r\\v, w \in [M]}}  O_{(\vec x, v); (\vec y, w)} \underset{f \sim \algo F_n}{\mathbb{E}}\left[\sfU_f^{\otimes r} \ketbra{\vec x}{\vec y} \sfU_f^{\otimes r} \otimes \ketbra{v}{w}\right]\\
 &= \sum_{\substack{\vec x,\vec y \in [N]^r\\v, w \in [M]}}  O_{(\vec x, v); (\vec y, w)} \underset{f \sim \algo F_n}{\mathbb{E}}\left[(-1)^{f(x_1) + \hdots + f(x_r) + f(y_1) + \hdots +f(y_r)}\right] 
 \ketbra{\vec x}{\vec y} \otimes \ketbra{v}{w} \\
 &= \sum_{\substack{
 \vec x,\vec y \in [N]^r\\
 v, w \in [M]\\
 \BinType(\vec x, v) = \BinType(\vec y, w)
 }}  O_{(\vec x, v); (\vec y, w)}
 \ketbra{\vec x, v}{\vec y, w} \,\,= \sum_{\blambda \in \bit^N}  \Pi_{\blambda}O \Pi_{\blambda}.
\end{align*}
\end{proof}

\subsection{Binary Subtypes}\label{sec:subtypes}

\subsubsection{Definitions and Combinatorial Properties}\label{sec:subtypebasics}

While $\BinType$ is very simple to define, it comes with an ``entangled''\footnote{This comment is qualitative, and does not relate in any way to quantum entanglement.} combinatorial structure that is difficult to work with. As a simple example, consider the case where $r = 3, M = 0$, and the binary type $\blambda$ is $(1, 0, 0, \ldots, 0)$. There are a few different ways for a vector in $[N]^3$ to attain this $\BinType$: the vector could be of the form $(0, x, x)$ for any $x \in [N]$ or any permutation of this, and moreover these collections of vectors will overlap on $(0, 0, 0)$.

Instead of working with the $\BinType$ directly, it is more natural and convenient to address each of these different collections of vectors separately. Within each of these collections, there is now a very clean combinatorial structure that we will be able to exploit.

To formalize the above intuition, we will work with the notion of \emph{subtypes}. As in Section \ref{sec:types}, let $N, M, r$ be positive integer parameters:

\begin{definition}\label{defn:subtype}
    A \emph{subtype} of a given type $\blambda = (c_1, \ldots, c_N) \in \bit^{N}$ is a string $\bmu$ of length $r$. Each entry of $\bmu$ is either an integer $i \in [N]$ such that $\blambda_i = 1$, or a \emph{variable symbol} $x_i$ for some index $i$. We have the following constraints:
    \begin{itemize}
        \item For each $i \in [N]$ such that $\blambda_i = 1$, $i$ should appear an odd number of times in $\bmu$.
        \item For any $i$ such that $x_i$ appears at least once in $\bmu$, the first $i$ distinct variable symbols that appear in $\bmu$ are $x_1, x_2, \ldots, x_i$ in that order.
        \item Each variable symbol $x_i$ appears an even number of times in $\bmu$.
    \end{itemize}
\end{definition}

\begin{definition}\label{defn:queryrestriction}
    For a vector $(\vecx, y) \in [N]^{r} \times [M]$, define its \emph{query restriction} to be $\vecx \in [N]^r$. (Informally, the query restriction discards any auxiliary information.) 
\end{definition}

\begin{definition}\label{defn:subtypeinduced}
    We say a vector $(\vecx, y) \in [N]^{r} \times [M]$ matches a subtype $\bmu$ if there exist assignments of values in $[N]$ to the variable symbols in $\bmu$ to yield the query restriction $\vecx$ of $(\vecx, y)$.

    For a subtype $\bmu$, we define $S_{\bmu} \subseteq [N]^r \times [M]$ to be the set of vectors $(\vecx, y)$ that match $\bmu$, and let $\Pi_{\bmu}$ denote the projection onto standard basis vectors in $S_{\bmu}$.
\end{definition}

\begin{definition}\label{defn:subtypereconstructor}
    For any subtype $\bmu$ with variable symbols $x_1, \ldots, x_k$ and some specific values $y_1, \ldots, y_k \in [N]$ and $z \in [M]$, define $\reconstruct(\bmu, (y_1, \ldots, y_k), z)$ to be the vector in $[N]^r \times [M]$ obtained by taking $\bmu$ and replacing the variable symbol $x_i$ with $y_i$ for each $i$, then finally appending $z$.
\end{definition}

At this point, we make some straightforward observations. Firstly, membership of a vector $(\vecx, y)$ in a subtype $\bmu$ or a type $\blambda$ depends only on its query restriction. Also, any vector $(\vecx, y)$ that matches a subtype $\bmu$ of a type $\blambda$ must have type $\blambda$. This is due to the parity constraints in Definition \ref{defn:subtype}. Conversely, for any vector $(\vecx, y)$ of type $\blambda$, there is at least one subtype $\bmu$ of $\blambda$ that $(\vecx, y)$ matches: we can take $\vecx$, leave entries $i$ such that $\blambda_i = 1$ as they are, and replace all other distinct values by variable symbols $x_1, x_2, \ldots$. This suggests that we might be able to relate the collection of vectors in a given $\BinType$ to the collection of vectors in a given subtype.

\begin{lemma}\label{lemma:numberofsubtypes}
    Any type $\blambda$ has at most $(2r)^r$ subtypes. 
\end{lemma}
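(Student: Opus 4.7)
The plan is a direct counting argument: I would bound the total number of subtypes $\bmu$ of $\blambda$ by overcounting the number of ways to fill in the $r$ positions of $\bmu$, ignoring most of the constraints in Definition \ref{defn:subtype} and retaining only the trivial fact that each position must be filled with \emph{some} legal symbol drawn from a small pool.

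First, I would bound the size of this pool of candidate symbols. By Definition \ref{defn:subtype}, the symbol at each position is either an integer $i \in [N]$ with $\blambda_i = 1$ or a variable symbol $x_j$. The support $S(\blambda) := \{i \in [N] : \blambda_i = 1\}$ has size at most $r$: each $i \in S(\blambda)$ must appear an odd number of times (hence at least once) among only $r$ positions of $\bmu$. Similarly, any subtype uses at most $\lfloor r/2 \rfloor \leq r$ distinct variable symbols, since each variable symbol must appear an even (and hence $\geq 2$) number of times among the $r$ positions. Because the ordering convention on variable symbols (the first $k$ distinct ones must be $x_1, \ldots, x_k$ in that order) means that each variable symbol is uniquely identified by its index in $\{1, \ldots, \lfloor r/2 \rfloor\}$, the total pool has size at most $|S(\blambda)| + \lfloor r/2 \rfloor \leq r + r = 2r$.

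Multiplying over the $r$ positions of $\bmu$ then yields an upper bound of $(2r)^r$ on the number of subtypes, completing the proof. The remaining constraints (parities, ordering of variable symbols) only cut the count down further. There is no real obstacle in this argument; the only subtlety worth double-checking is that the ordering convention on variables is compatible with position-wise counting, which it is because the variable symbols are indexed from a fixed set $\{x_1, x_2, \ldots, x_{\lfloor r/2 \rfloor}\}$.
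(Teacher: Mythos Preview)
Your proof is correct and takes essentially the same approach as the paper: bound the number of possible symbols at each position (at most $r$ fixed integers plus at most $r$ variable symbols, giving $2r$), then multiply over the $r$ positions. You even get a marginally tighter intermediate count of $\lfloor r/2 \rfloor$ on the variable symbols, but this does not change the final bound.
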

\begin{proof}
    Consider a subtype $\bmu$ of $\blambda$. Any entry in the string defining $\bmu$ must be one of the following:
    \begin{itemize}
        \item A fixed integer $i \in [N]$ such that $\blambda_i = 1$. There are at most $r$ such integers.
        \item A variable symbol $x_i$, where $i \leq r$.
    \end{itemize}
    $\bmu$ has $r$ entries, so the conclusion follows.
\end{proof}

\subsubsection{Relating Subtype Projectors to Type Projectors}\label{sec:subtypetotype}

It turns out that our main technical task to prove bounds on monogamy games in Section \ref{sec:binaryphaseconstruction} is to bound expressions of the form $$\Tr{\Pi_{\blambda} \Xi \Pi_{\blambda} \rho},$$where $\rho$ is some quantum mixed state in $\state((\CC^{N})^{\otimes r} \otimes \CC^M)$, $\Xi$ is some PSD operator, and $\blambda$ is a $\BinType$. Here, we will use the combinatorial machinery we just introduced in Section \ref{sec:subtypebasics} to reduce this to bounding expressions of the form $$\norm{\Pi_{\bmu} \Xi \Pi_{\bmu}}_\infty \cdot \Tr{\Pi_{\bmu} \rho},$$where $\bmu$ is now a subtype. Our starting point is the following lemma:
\begin{lemma}\label{lemma:pie}
    For any type $\blambda$, there exist projectors $P_{\bmu}$ for each subtype $\bmu$ of $\blambda$ such that both of the following are true:
    \begin{enumerate}
        \item\label{item:subtypedecomposition} We have $$\Pi_{\blambda} = \sum_{\bmu} P_{\bmu}.$$
        \item\label{item:psdcontainment} For every $\bmu$, we have $P_{\bmu} \leq \Pi_{\bmu}$ (with respect to the PSD ordering).
    \end{enumerate}
\end{lemma}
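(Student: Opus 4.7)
The plan is to construct each $P_{\bmu}$ as a diagonal (in the standard basis) projector, obtained by partitioning the standard basis vectors of type $\blambda$ according to a canonically-assigned subtype. Concretely, for each standard basis vector $\ket{\vecx, y}$ with $\BinType(\vecx, y) = \blambda$, I define its \emph{canonical subtype} $\bmu(\vecx, y)$ by scanning $\vecx = (v_1, \ldots, v_r)$ from left to right: if $\blambda_{v_j} = 1$, keep $v_j$ as the fixed integer; if $\blambda_{v_j} = 0$, replace $v_j$ by a variable symbol, using the next unused symbol $x_k$ on the first occurrence of each such value and reusing the previously assigned symbol on subsequent occurrences.

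The next step is to check that $\bmu(\vecx, y)$ actually satisfies the three bullet points of Definition \ref{defn:subtype}. The fixed-integer condition holds because $\vecx$ has binary type $\blambda$, so every $i$ with $\blambda_i = 1$ appears an odd number of times in $\vecx$ and these positions are preserved. The ordering condition on variable symbols holds by construction, since we allocate $x_1, x_2, \ldots$ in the order of first appearance. The even-multiplicity condition on variable symbols holds because each variable symbol represents a single value $v$ with $\blambda_v = 0$, and such values occur an even number of times in $\vecx$.

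Now I would set $P_{\bmu}$ to be the projector onto $\mathrm{span}_{\CC}\{\ket{\vecx, y} : \bmu(\vecx, y) = \bmu\}$ (taking $P_{\bmu} = 0$ if no vector has canonical subtype $\bmu$). Item \ref{item:subtypedecomposition} then follows because every $\ket{\vecx, y}$ in the image of $\Pi_{\blambda}$ has exactly one canonical subtype, so the $P_{\bmu}$ are pairwise orthogonal projectors whose images partition the image of $\Pi_{\blambda}$. Item \ref{item:psdcontainment} follows because, by construction, assigning each $x_k$ in $\bmu(\vecx, y)$ the value it was introduced to represent yields exactly $\vecx$, so $(\vecx, y) \in S_{\bmu(\vecx, y)}$ in the sense of Definition \ref{defn:subtypeinduced}; thus the image of $P_{\bmu}$ is contained in the image of $\Pi_{\bmu}$, which gives $P_{\bmu} \leq \Pi_{\bmu}$ in the PSD order.

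I do not anticipate a real technical obstacle here; the lemma is essentially combinatorial bookkeeping. The only subtlety to watch for is that a given $\ket{\vecx, y}$ may match several subtypes (for example, if $\blambda_i = 1$ and $i$ appears three times, one could also rewrite two of those copies as a matched pair of variable symbols), but the canonical choice above breaks this non-uniqueness deterministically, which is all that is needed to make both conditions hold simultaneously.
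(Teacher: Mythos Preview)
Your proposal is correct and follows essentially the same approach as the paper: both construct the $P_{\bmu}$ as diagonal projectors by partitioning the standard basis vectors of type $\blambda$ among the subtypes, using the key fact that every such vector matches at least one subtype. The only cosmetic difference is that the paper fixes an arbitrary ordering $\bmu_1,\ldots,\bmu_K$ and sets $T_{\bmu_i} := S_{\bmu_i}\setminus\bigcup_{j<i}S_{\bmu_j}$, whereas you assign each vector to a canonical subtype via a left-to-right scan; both mechanisms yield the required disjoint decomposition with $P_{\bmu}\le\Pi_{\bmu}$.
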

\begin{proof}
    Let $\bmu_1, \bmu_2, \ldots, \bmu_K$ be all the subtypes of $\blambda$ in an arbitrary order. For each $i \in [K]$, let $P_{{\bmu}_i}$ be the projector onto the span of standard basis vectors in the set $$T_{{\bmu}_i} := S_{{\bmu}_i} \backslash \bigcup_{j < i} S_{{\bmu}_j}.$$
    First, we check condition~\ref{item:subtypedecomposition}: the sets $T_{{\bmu}_i}$ are disjoint by construction and clearly contained in the set of vectors of type $\blambda$. Conversely, any vector of type $\blambda$ is contained in at least one $S_{{\bmu}_i}$ (and hence some $T_{{\bmu}_i}$). Hence the disjoint union of $\left\{T_{{\bmu}_i}: i \in [K]\right\}$ is exactly the collection of standard basis vectors with type $\blambda$. The claim follows. Finally, condition~\ref{item:psdcontainment} is immediate from the fact that $T_{{\bmu}_i} \subseteq S_{{\bmu}_i}$.
\end{proof}
\noindent
Finally, we completely reduce our problem to working with subtypes instead of types via the following lemma:
\begin{lemma}\label{lemma:subtypereduction}
    For any PSD matrix $A$, type $\blambda$, and mixed state $\rho$, we have $$\Tr{\Pi_{\blambda} A \Pi_{\blambda} \rho} \leq (2r)^r \cdot \left(\sum_{\bmu\text{ subtype of }\blambda} \norm{\Pi_{\bmu} A \Pi_{\bmu}}_\infty \cdot \Tr{\Pi_{\bmu} \rho}\right).$$
\end{lemma}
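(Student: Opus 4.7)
The plan is to start from the decomposition $\Pi_{\blambda} = \sum_{\bmu} P_{\bmu}$ provided by Lemma~\ref{lemma:pie}, and then handle the cross terms $P_{\bmu} A P_{\bmu'}$ that arise when expanding $\Pi_{\blambda} A \Pi_{\blambda}$ using a square-root trick combined with Cauchy-Schwarz. The main obstacle is precisely those cross terms: bounding each diagonal term $\Tr{P_{\bmu} A P_{\bmu} \rho}$ in terms of $\Pi_{\bmu}$ is relatively direct, but naively writing $\Tr{\Pi_{\blambda} A \Pi_{\blambda} \rho} = \sum_{\bmu, \bmu'} \Tr{P_{\bmu} A P_{\bmu'} \rho}$ leaves us with trace expressions whose sign is unclear, so we cannot just appeal to the PSD ordering.

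The square-root trick resolves this. Since $A$ is PSD, write $A = \sqrt{A}\sqrt{A}$ and let $X := \sqrt{A}\,\Pi_{\blambda}\sqrt{\rho}$, so that
$$
\Tr{\Pi_{\blambda} A \Pi_{\blambda} \rho} \;=\; \Tr{X X^\dag} \;=\; \|X\|_F^2.
$$
Using Lemma~\ref{lemma:pie}, decompose $X = \sum_{\bmu} X_{\bmu}$ where $X_{\bmu} := \sqrt{A}\,P_{\bmu}\sqrt{\rho}$. Applying the triangle inequality for $\|\cdot\|_F$ and then Cauchy-Schwarz, together with Lemma~\ref{lemma:numberofsubtypes} which bounds the number $K$ of subtypes of $\blambda$ by $(2r)^r$, yields
$$
\|X\|_F^2 \;\leq\; \Bigl(\sum_{\bmu} \|X_{\bmu}\|_F\Bigr)^{\!2} \;\leq\; K \cdot \sum_{\bmu} \|X_{\bmu}\|_F^2 \;\leq\; (2r)^r \cdot \sum_{\bmu} \Tr{P_{\bmu} A P_{\bmu} \rho}.
$$

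Finally, I will bound each diagonal term. By condition~\ref{item:psdcontainment} of Lemma~\ref{lemma:pie}, $P_{\bmu}$ is a projector whose image is contained in that of $\Pi_{\bmu}$, so $P_{\bmu} \Pi_{\bmu} = \Pi_{\bmu} P_{\bmu} = P_{\bmu}$. Hence $P_{\bmu} A P_{\bmu} = P_{\bmu} (\Pi_{\bmu} A \Pi_{\bmu}) P_{\bmu}$. The standard PSD inequality $\Pi_{\bmu} A \Pi_{\bmu} \leq \|\Pi_{\bmu} A \Pi_{\bmu}\|_\infty \cdot \Pi_{\bmu}$ (easily verified by decomposing any test vector along the image of $\Pi_{\bmu}$) then gives
$$
P_{\bmu} A P_{\bmu} \;\leq\; \|\Pi_{\bmu} A \Pi_{\bmu}\|_\infty \cdot P_{\bmu} \;\leq\; \|\Pi_{\bmu} A \Pi_{\bmu}\|_\infty \cdot \Pi_{\bmu},
$$
in the PSD ordering. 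Taking trace against $\rho$ (which is PSD) preserves the inequality, yielding $\Tr{P_{\bmu} A P_{\bmu} \rho} \leq \|\Pi_{\bmu} A \Pi_{\bmu}\|_\infty \cdot \Tr{\Pi_{\bmu} \rho}$. Plugging this back into the previous display completes the proof.
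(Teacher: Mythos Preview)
Your proof is correct and follows essentially the same approach as the paper's: both decompose $\Pi_{\blambda} = \sum_{\bmu} P_{\bmu}$ via Lemma~\ref{lemma:pie}, apply Cauchy--Schwarz with the count from Lemma~\ref{lemma:numberofsubtypes} to control the cross terms, and then bound each diagonal term using $P_{\bmu} \leq \Pi_{\bmu}$. The only cosmetic difference is that the paper first reduces to pure $\rho = \proj{\phi}$ and works with the vectors $B P_{\bmu}\ket{\phi}$ (writing $A = B^\dag B$), whereas you work directly with $\sqrt{A}\,P_{\bmu}\,\sqrt{\rho}$ and the Frobenius norm; these are the same argument after vectorization.
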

\begin{proof}
    By linearity, it suffices to prove the result when $\rho$ is a pure state $\proj{\phi}$. Moreover, let us write $A = B^\dag B$ for some matrix $B$. Then we have:
    \begin{align*}
        \Tr{\Pi_{\blambda} A \Pi_{\blambda} \rho} &= \braket{\phi | \Pi_{\blambda} B^\dag B \Pi_{\blambda} |\phi} \\
        &= \norm{B \Pi_{\blambda} \ket{\phi}}^2 \\
        &= \norm{\sum_{\bmu\text{ subtype of }\blambda} B P_{\bmu} \ket{\phi}}^2 \text{ (Lemma~\ref{lemma:pie}, condition~\ref{item:subtypedecomposition})} \\
        &\leq (2r)^r \cdot \sum_{\bmu\text{ subtype of }\blambda} \norm{BP_{\bmu} \ket{\phi}}^2 \text{ (Cauchy-Schwarz; Lemma~\ref{lemma:numberofsubtypes})} \\
        &= (2r)^r \cdot \sum_{\bmu\text{ subtype of }\blambda} \braket{\phi | P_{\bmu} B^\dag B P_{\bmu}|\phi} \\
        &= (2r)^r \cdot \sum_{\bmu\text{ subtype of }\blambda}\Tr{P_{\bmu} A P_{\bmu} \rho} \\
        &\leq (2r)^r \cdot \sum_{\bmu\text{ subtype of }\blambda} \norm{P_{\bmu} A P_{\bmu}}_\infty \cdot \Tr{P_{\bmu} \rho} \\
        &\leq (2r)^r \cdot \sum_{\bmu\text{ subtype of }\blambda} \norm{\Pi_{\bmu} A \Pi_{\bmu}}_\infty \cdot \Tr{\Pi_{\bmu} \rho}. \text{ (Lemma~\ref{lemma:pie}, condition~\ref{item:psdcontainment})}
    \end{align*}
    In more detail, the final step follows from two straightforward observations. Firstly: $\Tr{P_{\bmu} \rho} \leq \Tr{\Pi_{\bmu} \rho}$ is clear since $\Pi_{\bmu} - P_{\bmu}$ is PSD. Secondly: since $P_{\bmu}$ is a projector, we have that:
    \begin{align*}
        \norm{P_{\bmu} A P_{\bmu}}_\infty &= \max_{\ket{\phi} \in \mathrm{im} P_{\bmu}} \braket{\phi | A | \phi}\text{; and similarly} \\
        \norm{\Pi_{\bmu} A \Pi_{\bmu}}_\infty &= \max_{\ket{\phi} \in \mathrm{im} \Pi_{\bmu}} \braket{\phi | A | \phi}.
    \end{align*}
    Now we have $P_{\bmu} \leq \Pi_{\bmu} \Rightarrow \mathrm{im} P_{\bmu} \subseteq \mathrm{im} \Pi_{\bmu} \Rightarrow \norm{P_{\bmu} A P_{\bmu}}_\infty \leq \norm{\Pi_{\bmu} A \Pi_{\bmu}}_\infty$. This completes our proof.
\end{proof}

\section{Cloning Game Construction from Binary Phase States}\label{sec:binaryphaseconstruction}

In this section, we prove upper bounds on the value of restricted oracular cloning games (defined in Definition~\ref{def:veryrestrictedocg}). This section is organized as follows:
\begin{itemize}
    \item In~\Cref{sec:binaryphasesetup}, we formally state our binary phase construction and prove some preliminary lemmas, in particular relating the cloning game to a monogamy-like game with some state $\rho$ shared between the challenger and the $t+1$ players $\algo P_1, \ldots, \algo P_{t+1}$.

    \item In~\Cref{sec:expandingwithsubtypes}, we expand out the relevant operators and states in terms of \emph{subtypes} (defined in~\Cref{sec:subtypes}).

    \item In~\Cref{sec:opnormbound}, we prove spectral bounds on the operator norms of the relevant operators.

    \item In Sections~\ref{sec:freevariablesymbolcombi} and~\ref{sec:noancillas}, we provide some additional tools---namely, an analysis of the structure of the shared state $\rho$ and how it splits across subtypes---that are necessary for handling $t \mapsto t+1$ cloning games when $t > 1$. We then put everything together to prove our desired bounds in the restricted oracular cloning setting for general $t \geq 1$.
\end{itemize}

\subsection{Setup and Notation}\label{sec:binaryphasesetup}

We begin by presenting our construction. Note the qualitative similarity of this construction with Construction~\ref{construction:PRS-monogamy}; both rely centrally on binary phase states.

\begin{construction}\label{const:PRS-enc}
Let $\mathfrak{F} = \{f_\theta : \bit^n \rightarrow  \bit\}_{\theta \in \Theta}$ be a family of functions parametrized by elements $\Theta = \bit^\secp$. Consider the following $t \mapsto t+1$ oracular cloning game (as defined in Definition~\ref{def:OCG}) $\mathsf{G}_{\mathfrak{F}, t}$ with question set $\Theta$ and answer set $\mathcal{X} := \bit^n$. For any $\theta$, we will take the unitary $U_\theta$ to be $\mathsf{U}_{f_\theta} \mathsf{H}^{\otimes n}$. (Here, $\mathsf{U}_{f_\theta}$ is the phase oracle for $f_\theta$ as defined in~\Cref{sec:quantumprelims}.) In other words, for any $x \in \bit^n$, we have $$U_\theta\ket{x} = 2^{-n/2} \sum_{u \in \bit^n} (-1)^{f_\theta(u) \oplus \langle x, u \rangle} \ket{u}.$$
\end{construction}
\begin{remark}\label{remark:functionfamilychoice}
    We are being intentionally vague about the choice of function family $\mathfrak{F}$. One could imagine instantiating it with a post-quantum PRF family, to obtain a construction that is plausibly secure against arbitrary polynomial-time adversaries in the oracular cloning game. (We believe this construction would be plausibly secure because $U_\theta \ket{x}$ is pseudorandom~\cite{cryptoeprint:2018/544, brakerski2019pseudo} for any fixed $x \in \bit^n$, and hence multi-copy unclonable~\cite{werneroptimalcloning}.)

    Since we only prove oracular security in the case where $t = O(1)$ and each player can make $q = 1$ query in total, we will instead instantiate $\mathfrak{F}$ as an $O(1)$-wise uniform function family, which is statistically indistinguishable from the family of all functions from $\bit^n \rightarrow \bit$ in this query bounded game. We will reiterate this formally when establishing our final theorem in~\Cref{sec:noancillas}.
\end{remark}
\noindent
We will consider restricted quantum strategies (defined in Definition~\ref{def:veryrestrictedocg}). Recall that we use $\mathcal{S}_{\mathrm{rest}}$ to denote the collection of all such strategies. We now make a crucial observation: 

\begin{remark}\label{remark:controlqubitregister}
    Since $U_\theta^\dag = \mathsf{H}^{\otimes n} \mathsf{U}_{f_\theta}$ and each player $\algo P_i$ is given a control bit in register $\reg{E_i}$ dictating whether they will query $U_\theta$ or $U_\theta^\dag$, we can assume without loss of generality that each player simply makes one non-adaptive query to $\mathsf{U}_{f_\theta}$ as their first step. (In the event that the player is querying $U_\theta = \mathsf{U}_{f_\theta} \mathsf{H}^{\otimes n}$, they would technically need to query $\mathsf{H}^{\otimes n}$ first. We can get around this by absorbing this query to $\mathsf{H}^{\otimes n}$ into the cloning channel $\Phi$.)
\end{remark}

Recall from Definition~\ref{def:veryrestrictedocg} that each player's register $\reg{B_i}$ splits into a query register $\reg{C_i}$, an ancilla register $\reg{D_i}$, and a control qubit register $\reg{E_i}$. 
Recalling the setup in Definition~\ref{def:veryrestrictedocg} together with Lemma~\ref{lemma:ocgwlog} and Remark~\ref{remark:controlqubitregister}, we can write
\begin{equation}\label{eq:playerprojectors}
    \vec{P}_{i, x}^{U_\theta, U_\theta^\dag} = (\mathsf{U}_{f_\theta} \otimes \id_{\reg{D_iE_i}})Q_i^\dag (\proj{x} \otimes \id_{\reg{D_iE_i}}) Q_i (\mathsf{U}_{f_\theta} \otimes \id_{\reg{D_iE_i}}),
\end{equation}
for some unitaries $Q_1, \ldots, Q_{t+1}$ such that $Q_i$ acts on all the three registers $\reg{C_iD_iE_i}$.

With this in mind, we now switch from a cloning-based formulation to an entanglement-based formulation. At a high level, this follows from the ricochet property of EPR pairs (formally, Lemma~\ref{fact:CI}). Substituting $\vec{P}_{i, x}$ as defined in Equation~\eqref{eq:playerprojectors} and $U_\theta = \mathsf{U}_{f_\theta} \mathsf{H}^{\otimes n}$ (note that all entries of this unitary are real) into Lemma~\ref{lemma:CImultiplayer} implies that:
\begin{align*}
    \omega_{\mathsf{S}}(\mathsf{G}) =& 2^{n(t-1)} \cdot \underset{\theta}{\mathbb{E}} \,\underset{x \in \bit^n}{\sum}  \mathrm{Tr}\Bigg[ \Bigg(\Big( \bigotimes_{i \in [t+1]} (\mathsf{U}_{f_\theta} \otimes \id_{\reg{D_iE_i}}) Q_i^\dag (\proj{x} \otimes \id_{\reg{D_iE_i}}) Q_i (\mathsf{U}_{f_\theta} \otimes \id_{\reg{D_iE_i}})\Big) \\
    &\otimes \left(\mathsf{U}_{f_\theta}\mathsf{H}^{\otimes n} \proj{x}\mathsf{H}^{\otimes n}\mathsf{U}_{f_\theta}\right)_{\reg{A'_{1:t}}}^{\otimes t}\Bigg)\rho \Bigg], 
\end{align*}
where $\rho$ is the Choi state \begin{equation}\label{eq:multicopychoistate}
    \rho_{\reg{B_{1:t+1}A'_{1:t}}} := (\Phi_{\reg{A_1\dots A_t \rightarrow B_1 \dots B_{t+1}}} \otimes \id_{\reg{A'_{1:t}}}) \left(\proj{\mathsf{EPR}^n}^{\otimes t}\right).
\end{equation}
Now define the projector
\begin{align}\label{eq:Xi-projector}
\Xi = \sum_{x \in \bit^n} \left(\left(\underset{i \in [t+1]}{\bigotimes} Q_i^\dag \left(\proj{x}_{\reg{C_i}} \otimes \id_{\reg{D_iE_i}}\right)Q_i\right)  \otimes \left(\mathsf{H}^{\otimes n} \proj{x}\mathsf{H}^{\otimes n}\right)_{\reg{A'_{1:t}}}^{\otimes t}\right).
\end{align}

Let $d=2^n$, $r = 2t+1$, $d' = 2^{a+1}$, and $r' = t+1$. Recall that $(\CC^d)^{\otimes r} \otimes (\CC^{d'})^{\otimes r'} \cong \bigoplus_{\blambda} V_{\blambda}$ decomposes into a collection of subspaces corresponding to binary type vectors $\blambda \in \bit^d$. Here, $(\CC^{d'})^{\otimes r'}$ serves as an auxiliary register; in terms of the notation in Section \ref{sec:types}, we are taking $N = d$ and $M = d'^{r'}$ (in other words, we are packing all the players' ancillary registers into one auxiliary input).

Moreover, we assume going forward that $\mathfrak{F}$ is a $(4t+2)$-wise uniform family of functions from $\bit^n \rightarrow \bit$. As noted in Remark~\ref{remark:functionfamilychoice}, this is statistically indistinguishable from instantiating $\mathfrak{F}$ as the family of all functions from $\bit^n \rightarrow \bit$, since the expression in Lemma~\ref{lemma:CImultiplayer} has degree $4t+2$ in $\mathsf{U}_{f_\theta}$.

Then using Lemma~\ref{lemma:CImultiplayer} and then Lemma~\ref{lem:phase-type-twirl}, we get:
\begin{align*}
    \omega_\mathsf{S}(\mathsf{G}) &= 2^{n (t-1)} \Tr{\underset{f}{\mathbb{E}} \left[\left(\Uf^{\otimes r} \otimes \id_{\reg{D_{1:t+1}}}\right) \, \Xi \, \left(\Uf^{\otimes r} \otimes \id_{\reg{D_{1:t+1}}}\right) \right] \rho} \\
    &= 2^{n (t-1)}\sum_{\blambda \in \bit^d} \Tr{\Pi_{\blambda} \Xi \Pi_{\blambda} \rho},\stepcounter{equation}\tag{\theequation}\label{eqn:monogamyboundwithtypes}
\end{align*}
\noindent
where $\Pi_{\blambda}$ is the projector onto the subspace of $(\CC^d)^{\otimes r} \otimes (\CC^{d'})^{\otimes (t+1)}$ given by $$V_{\blambda} = \mathrm{span}_{\CC} \{\ket{(v_1,\dots,v_r, a_1, \ldots, a_{t+1}) }\, : \,\BinType(v_1,\dots,v_r, a_1, \ldots, a_{t+1}) = \blambda\}.$$
We now state a simple high-level bound on $\omega(\mathsf{G})$ in terms of \emph{subtypes} $\bmu$. Unlike the bounds obtained using techniques in~\cite{Tomamichel_2013}, this bound will depend on the Choi state $\rho$. This is \emph{provably} necessary for any $t \geq 2$, as we will show in~\Cref{sec:opnormcounterexample}.
\begin{lemma}\label{lemma:monogamyboundwithstate}
    We have $$\omega(\mathsf{G}) \leq \exp(O(t \log t)) \cdot 2^{n(t-1)} \cdot \sum_{\bmu} \Tr{\Pi_{\bmu} \rho} \cdot \norm{\Pi_{\bmu} \Xi \Pi_{\bmu}}_\infty.$$
\end{lemma}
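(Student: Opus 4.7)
The plan is to start from Equation~\eqref{eqn:monogamyboundwithtypes}, which already expresses $\omega_{\mathsf{S}}(\mathsf{G})$ as a sum over binary types, and to convert each type term into a sum over the subtypes it contains by applying Lemma~\ref{lemma:subtypereduction}. Concretely, fix any restricted strategy $\mathsf{S} \in \mathcal{S}_{\mathrm{rest}}$ with associated projector $\Xi$ (as in~\eqref{eq:Xi-projector}) and Choi state $\rho$ (as in~\eqref{eq:multicopychoistate}). I would first observe that $\Xi$ is PSD: each summand in~\eqref{eq:Xi-projector} is a tensor product of projectors of the form $Q_i^\dagger(\proj{x}_{\reg{C_i}} \otimes \id_{\reg{D_iE_i}})Q_i$ and $\mathsf{H}^{\otimes n}\proj{x}\mathsf{H}^{\otimes n}$, and a sum of PSD operators is PSD. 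This checks the hypothesis needed for Lemma~\ref{lemma:subtypereduction}.

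Next I would apply Lemma~\ref{lemma:subtypereduction} term-by-term in Equation~\eqref{eqn:monogamyboundwithtypes} with $A = \Xi$ and $r = 2t+1$ (the number of copies involved in Alice's and the players' joint query register, matching the setup in~\Cref{sec:binaryphasesetup}), which gives for each type $\blambda$
\begin{equation*}
\Tr{\Pi_{\blambda} \Xi \Pi_{\blambda} \rho} \;\leq\; (2r)^r \cdot \sum_{\bmu\text{ subtype of }\blambda} \Tr{\Pi_{\bmu} \rho} \cdot \norm{\Pi_{\bmu} \Xi \Pi_{\bmu}}_{\infty}.
\end{equation*}
Summing over all $\blambda$, and noting that the two nested sums over $\blambda$ and over subtypes $\bmu$ of $\blambda$ collapse into a single sum over all subtypes $\bmu$ (since every subtype $\bmu$ is a subtype of exactly one type $\blambda$, namely the one determined by its odd-multiplicity positions, cf.\ Definition~\ref{defn:subtype}), this yields
\begin{equation*}
\sum_{\blambda \in \bit^d} \Tr{\Pi_{\blambda} \Xi \Pi_{\blambda} \rho} \;\leq\; (2r)^r \cdot \sum_{\bmu} \Tr{\Pi_{\bmu} \rho} \cdot \norm{\Pi_{\bmu} \Xi \Pi_{\bmu}}_{\infty}.
\end{equation*}

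To conclude, I would substitute this bound back into~\eqref{eqn:monogamyboundwithtypes} and simplify the combinatorial prefactor: $(2r)^r = (4t+2)^{2t+1} = \exp(O(t \log t))$. This gives
\begin{equation*}
\omega_{\mathsf{S}}(\mathsf{G}) \;\leq\; \exp(O(t \log t)) \cdot 2^{n(t-1)} \cdot \sum_{\bmu} \Tr{\Pi_{\bmu} \rho} \cdot \norm{\Pi_{\bmu} \Xi \Pi_{\bmu}}_{\infty}.
\end{equation*}
Since this bound holds for any restricted strategy $\mathsf{S}$, and since the strategy-dependent quantities $\rho = \rho_{\mathsf{S}}$ and $\Xi = \Xi_{\mathsf{S}}$ are both reflected in the RHS, taking the supremum over $\mathsf{S} \in \mathcal{S}_{\mathrm{rest}}$ gives the claimed bound on $\omega(\mathsf{G})$. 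The proof is essentially a direct plug-in of Lemma~\ref{lemma:subtypereduction}; I do not anticipate any real obstacle beyond verifying the PSD-ness of $\Xi$ and carefully organizing the double sum. The substantive work is deferred to the subsequent subsections, where one actually has to bound $\norm{\Pi_{\bmu} \Xi \Pi_{\bmu}}_{\infty}$ and $\Tr{\Pi_{\bmu} \rho}$ separately.
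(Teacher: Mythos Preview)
Your proposal is correct and follows exactly the same approach as the paper, which simply says the result follows by plugging Lemma~\ref{lemma:subtypereduction} into Equation~\eqref{eqn:monogamyboundwithtypes}. Your write-up is more detailed (explicitly verifying that $\Xi$ is PSD, collapsing the double sum, and computing $(2r)^r = (4t+2)^{2t+1} = \exp(O(t\log t))$), but the substance is identical.
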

\begin{proof}
This follows immediately by plugging Lemma~\ref{lemma:subtypereduction} into~\Cref{eqn:monogamyboundwithtypes}.
\end{proof}

\subsection{Expanding out $\Xi$ using Subtypes}\label{sec:expandingwithsubtypes}

We now set up some additional notation. For each $i \in [t+1]$ and $j_1,j_2 \in [d]$ and $l_1, l_2 \in [d']$, we let $Q_{i, (j_1, l_1); (j_2, l_2)}^\dag$ denote the entry in the $(j_1, l_1)$-th row and $(j_2, l_2)$-th column of the unitary $Q_i^\dag$. To keep track of the ancillary indices in registers $\reg{D_{1:t+1}}$, we will introduce the values $z_1, \ldots, z_{t+1} \in [d']$ and denote $\vecz = (z_1, \ldots, z_{t+1})$ for brevity. We can now write the projector $\Xi$ in \Cref{eq:Xi-projector} as:
\begin{align*}
    \Xi&=\sum_{\substack{x \in \bit^n\\z_1, \ldots, z_{t+1} \in \bit^{a}}} \proj{\Xi^{x, \vecz}},\text{ where}
\end{align*}
\begin{align*}
\ket{\Xi^{x, \vecz}} &= Q_1^\dag \left(\ket{x} \otimes \ket{z_1}\right) \otimes \dots \otimes Q_{t+1}^\dag \left(\ket{x} \otimes \ket{z_{t+1}}\right) \otimes \left(\mathsf{H}^{\otimes n} \ket{x}\right)^{\otimes t} \\
&= 2^{-nt/2} \sum_{\substack{v_1,\dots,v_r \in [d]\\w_1, \ldots, w_{t+1} \in [d']}} \left((-1)^{\langle v_{t+2} + \dots + v_r, x\rangle} \prod_{i=1}^{t+1} Q_{i,(v_{i},w_i); (x, z_i)}^\dag \right)\ket{v_1,\dots,v_r} \otimes \ket{w_1, \ldots, w_{t+1}}.
\end{align*}
\noindent
We now begin unpacking the operator $\Pi_{\bmu} \Xi \Pi_{\bmu}$, using the formalism of subtypes introduced in Section \ref{sec:subtypes}. Recall that we have:
\begin{align*}
    \Xi &= \sum_{\substack{x \in \bit^n\\z_1, \ldots, z_{t+1} \in \bit^{a}}} \proj{\Xi^{x, \vecz}} \\
    \Rightarrow \Pi_{\bmu} \Xi \Pi_{\bmu} &= \sum_{\substack{x \in \bit^n\\z_1, \ldots, z_{t+1} \in \bit^{a}}} \Pi_{\bmu}\proj{\Xi^{x, \vecz}}\Pi_{\bmu}.
\end{align*}
Therefore we can define a matrix $\vec A \in \CC^{d_1 \times d_2}$, where $d_1 = 2^{rn + (t+1)a}$ is the dimension of $\ket{\Xi^{x, \vecz}}$ and $d_2 = 2^{n+(t+1)a}$ is the number of possible values of $x, \vecz$. The columns of $\vec A$ are indexed by $x, \vecz$ and the corresponding column is exactly $\Pi_{\bmu} \ket{\Xi^{x, \vecz}}$. Then we have $\Pi_{\bmu} \Xi \Pi_{\bmu} = \vec A \vec A^\dag \Rightarrow \norm{\Pi_{\bmu} \Xi \Pi_{\bmu}}_\infty = \norm{\vec A}_\infty^2$.

Recall also that we have:
\begin{align*}
\ket{\Xi^{x, \vecz}} &= 2^{-nt/2} \sum_{\substack{v_1,\dots,v_r \in [d]\\w_1, \ldots, w_{t+1} \in [d']}} \left((-1)^{\langle v_{t+2} + \dots + v_r, x\rangle} \prod_{i=1}^{t+1} Q_{i,(v_{i},w_i); (x, z_i)}^\dag \right)\ket{v_1,\dots,v_r} \otimes \ket{w_1, \ldots, w_{t+1}}.
\end{align*}
Therefore, once we project onto the subspace corresponding to the subtype $\boldsymbol{\mu}$, we get the state
\begin{align*}
\Pi_{\bmu} \ket{\Xi^{x, \vecz}} &= 2^{-nt/2} \sum_{\substack{v_1,\dots,v_r \in [d]\\w_1, \ldots, w_{t+1} \in [d']\\(\vecv, \vecw) \in S_{\bmu}}} \left((-1)^{\langle v_{t+2} + \dots + v_r, x\rangle} \prod_{i=1}^{t+1} Q_{i,(v_{i},w_i); (x, z_i)}^\dag \right)\ket{v_1,\dots,v_r} \otimes \ket{w_1, \ldots, w_{t+1}}.\stepcounter{equation}\tag{\theequation}\label{eqn:expandedwithancillas}
\end{align*}
Now note that any row of $\vec A$ that does not correspond to a standard basis vector in $S_{\bmu}$ will be 0. We can discard all such rows without affecting the operator norm of $A$. We can therefore re-index the rows of $\vec A$ by the variable symbols $x_1, \ldots, x_l$ of $\bmu$ and the ancilla indices $w_1, \ldots, w_{t+1}$, so that $\vec A$ is effectively a $2^{nl + a(t+1)} \times 2^{n+a(t+1)}$ matrix.

With this setup in mind, we introduce a couple more definitions that will help us complete our analysis:

\begin{definition}\label{defn:Bmatrix}
    Let $\ell \in [0, t]$ be an integer parameter (typically we will work with $\ell = t$). Let $\bmu$ be a subtype of $(\CC^{d})^{\otimes (t+1+\ell)} \otimes (\CC^{d'})^{\otimes r'}$ with variable symbols $x_1, \ldots, x_l$. Then, define the matrix $$\vec B := \vec B_{\bmu}(\vec Q_1, \ldots, \vec Q_{t+1})$$ with dimensions $2^{nl + a(t+1)} \times 2^{n + a(t+1)}$ as follows:
    
    Its rows are indexed by $y_1, \ldots, y_l \in [d]$ and $w_1, \ldots, w_{t+1} \in [d']$. Its columns are indexed by $x \in [d]$ and $z_1, \ldots, z_{t+1} \in [d']$. For any such indices, take $$(v_1, \ldots, v_{t+1+\ell}, w_1, \ldots, w_{t+1}) = \reconstruct(\bmu, (y_1, \ldots, y_l), (w_1, \ldots, w_{t+1})) \in [d]^{t+1+\ell} \times [d']^{t+1}.$$(The function $\reconstruct$ is defined in Definition \ref{defn:subtypereconstructor}.) Then we define the entry
    \begin{equation}\label{eqn:Bdefinition}
        \vec B_{(y_1, \ldots, y_l, w_1, \ldots, w_{t+1}); (x, z_1, \ldots, z_{t+1})} = (-1)^{\langle v_{t+2} + \ldots + v_{t+1+\ell}, x \rangle} \prod_{i = 1}^{t+1} Q^\dag_{i, (v_i, w_i); (x, z_i)}.
    \end{equation}
\end{definition}

We remark that when $\ell = t$, this definition coincides with the matrix $2^{nt/2} A$. The reason we generalize to $\ell < t$ is for technical reasons; there could be variable symbols that only appear in the ``phase entries'' $v_{t+2}, \ldots, v_{t+1+\ell}$, in which case they appear an even number of times and do not have any effect on the value of that entry in the matrix $B$. These variable symbols artificially blow up the operator norm of $B$ and will need to be dealt with separately. To capture this, we have the following notion:

\begin{definition}\label{defn:subtypefreevariables}
    For a subtype $\bmu$ with respect to $(\CC^d)^{\otimes (t+1+\ell)} \otimes (\CC^{d'})^{\otimes r'}$ and variable symbol $x_i$, we say $x_i$ is a \emph{free variable symbol} of $\bmu$ if it only appears in entries $t+2, t+3, \ldots, t+1+\ell$ of $\bmu$. (Informally, a free variable symbol is one that only appears in the phase.)
\end{definition}

\subsection{Bounding \texorpdfstring{$\norm{\vec B_{\bmu}(\vec Q_1, \ldots, \vec Q_{t+1})}_\infty$}{$\norm{B_{\mu}(Q_1, \ldots, Q_{t+1})}_\infty$}}\label{sec:opnormbound}

In this section, we provide estimates on the operator norm of $\vec B_{\bmu}(\vec Q_1, \ldots, \vec Q_{t+1})$. We first begin with a lemma that allows us to dispose of free variable symbols:

\begin{lemma}\label{lemma:handlefreevariablesymbols}
    Suppose $\bmu$ is a subtype with respect to $(\CC^d)^{\otimes (2t+1)} \otimes (\CC^{d'})^{\otimes r'}$, and suppose it has $b$ free variable symbols that appear in a total of $p$ positions in indices $k+2, k+3, \ldots, 2k+1$.

    Then define $\ell := t-p$, and $\bmu'$ to be the subtype with respect to $(\CC^{d})^{\otimes (t+1+\ell)} \otimes (\CC^{d'})^{\otimes r'}$ obtained by taking $\bmu$ and removing all free variable symbols. Then we have $$\norm{\vec B_{\bmu}(\vec Q_1, \ldots, \vec Q_{t+1})}_\infty = 2^{nb/2} \norm{\vec B_{\bmu'}(\vec Q_1, \ldots, \vec Q_{t+1})}_\infty.$$
\end{lemma}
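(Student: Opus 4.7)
The plan is to establish a direct row-by-row correspondence between $\vec B_{\bmu}$ and $\vec B_{\bmu'}$ based on two elementary observations about free variable symbols, and then reduce the operator norm claim to a standard tensor product identity with an all-ones column vector.

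The first observation concerns how the entries of $\vec B_\bmu$ depend on free variables. By definition, any free variable symbol of $\bmu$ appears only in the phase positions $v_{t+2},\ldots,v_{2t+1}$, and by the general subtype parity constraint each free variable must appear an even number of times. Now the entry formula in \eqref{eqn:Bdefinition} splits into two parts:
\begin{itemize}
\item The product $\prod_{i=1}^{t+1} Q^\dag_{i,(v_i,w_i);(x,z_i)}$ only depends on answer-position entries $v_1,\ldots,v_{t+1}$, where no free variable appears, so this factor is unaffected by free-variable assignments.
\item In the phase exponent $\langle v_{t+2}+\cdots+v_{2t+1},x\rangle$, taken over $\mathbb{F}_2$, each free variable is added to itself an even number of times and so cancels modulo $2$.
\end{itemize}
Hence each matrix entry of $\vec B_\bmu$ is independent of the values assigned to the $b$ free variable symbols.

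The second observation is that under $\reconstruct$, deleting the $p$ free positions of $\bmu$ to form $\bmu'$ preserves exactly the $v_j$'s at the retained positions, and in particular preserves all $t+1$ answer positions. Thus if we enumerate the non-free variable symbols of $\bmu$ (in order of first appearance) as the variable symbols of $\bmu'$, then for any fixed non-free values $(y_1,\ldots,y_{l-b})$ and ancilla indices $(w_1,\ldots,w_{t+1})$, the entry of $\vec B_\bmu$ equals the corresponding entry of $\vec B_{\bmu'}$, regardless of the free assignments $(y_{l-b+1},\ldots,y_l)\in[d]^b$. After grouping together the rows of $\vec B_\bmu$ that share the same non-free indices (a permutation of rows, which preserves operator norm), we obtain the identification
\[
\vec B_\bmu \;\cong\; \mathbf{1}_{d^b} \otimes \vec B_{\bmu'},
\]
where $\mathbf{1}_{d^b}$ is the all-ones column vector of length $d^b$.

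The desired norm identity then follows immediately from Lemma~\ref{lemma:opnormtensor}: using that the operator norm of a column vector coincides with its Euclidean norm, $\|\mathbf{1}_{d^b}\|_\infty=\sqrt{d^b}=2^{nb/2}$, and so $\|\vec B_\bmu\|_\infty = 2^{nb/2}\,\|\vec B_{\bmu'}\|_\infty$. The lemma requires no new ideas beyond the two observations above; the only real care needed is in the bookkeeping around relabeling variable symbols and verifying that the correspondence under $\reconstruct$ behaves as expected on both the answer and phase blocks, which is the step I would write out most carefully.
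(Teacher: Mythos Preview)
Your proposal is correct and follows essentially the same argument as the paper: both observe that the entries of $\vec B_{\bmu}$ are independent of the values assigned to free variable symbols (since they appear only in the phase block and an even number of times), identify $\vec B_{\bmu}$ with $\mathbf{1}_{d^b}\otimes \vec B_{\bmu'}$ up to a row permutation, and conclude via Lemma~\ref{lemma:opnormtensor}. Your explicit splitting of the entry formula into the $Q^\dag$-product and phase factors is slightly more detailed than the paper's presentation, but the underlying idea is identical.
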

\begin{proof}
    For brevity, write $\vec B = \vec B_{\bmu}(\vec Q_1, \ldots, \vec Q_{t+1})$ and $\vec B' = \vec B_{\bmu'}(\vec Q_1, \ldots, \vec Q_{t+1})$. Let the variable symbols of $\bmu$ be $y_1, \ldots, y_l$, so that its free variable symbols are $y_{l-b+1}, \ldots, y_l$. Note that $\bmu'$ will have $l-b$ variable symbols, none of which are free variable symbols. Now since each free variable symbol appears an even number of times in the phase, we have for any indices that $$B_{(y_1, \ldots, y_l, w_1, \ldots, w_{t+1}); (x, z_1, \ldots, z_{t+1})} = B'_{(y_1, \ldots, y_{l-b}, w_1, \ldots, w_{t+1}); (x, z_1, \ldots, z_{t+1})}.$$

    In other words, the matrix $\vec B$ is obtained by vertically stacking $2^{nb}$ copies of $\vec B'$ (up to a permutation of rows). Put another way, $\vec B$ is equal to $B'$ tensored with a column vector consisting of $2^{nb}$ 1's. It follows by Lemma \ref{lemma:opnormtensor} that:
    \begin{align*}
        \norm{\vec B}_\infty &= \norm{\vec B'}_\infty \cdot \norm{\begin{bmatrix} 1 \\ 1 \\ \vdots \\ 1\end{bmatrix}}_\infty = 2^{nb/2} \norm{\vec B'}_\infty.
    \end{align*}
\end{proof}

For most of the remainder of this section, we focus on subtypes $\bmu$ that do not have any free variable symbols. We first set up some more notation. At a high level, the idea is to cluster the terms being multiplied in Equation~\eqref{eqn:Bdefinition} according to which of the variable symbols they depend on. This allows us to write $B$ as a block column-wise tensor product of several much simpler block matrices, and then we will appeal to Lemma \ref{lemma:colwisetensor}.

To this end, let $\bmu$ have variable symbols $x_1, \ldots, x_l$. For each for $i \in [l]$, let $I_i = \left\{j \in [t+1+\ell]: {\bmu}_j = x_i\right\}$ and $J = \left\{j \in [t+1+\ell]: {\bmu}_j\text{ is fixed}\right\}$. Note that $[t+1+\ell]$ is the disjoint union of $I_1, I_2, \ldots, I_l, J$. Also, for convenience we will make the following abuse of notation: for any integer $h$, subset $I \subseteq [h]$, and vector $\vecb$ with $h$ entries, we use $\vecb_I$ to denote the sub-vector of length $|I|$ obtained by taking only the indices in $I$ from $\vecb$. With this in mind, for each $i \in [l]$ define the following matrix $\vec M_i$ of dimensions $2^{n + a(|I_i| \cap [t+1])} \times 2^{n + a(|I_i| \cap [t+1])}$:

$$M_{i, (y_i, \vecw_{I_i \cap [t+1]}); (x, \vecz_{I_i \cap [t+1]})} = \prod_{j \in I_i \cap [t+2, t+1+\ell]} (-1)^{\langle y_i, x \rangle} \cdot \prod_{j \in I_i \cap [t+1]} Q^\dag_{j, (y_i, w_j); (x, z_j)}.$$
Additionally, define the following matrix $\vec T$ of dimensions $2^{a(|J| \cap [t+1])} \times 2^{n+a(|J| \cap [t+1])}$:

$$\vec T_{\vecw_{J \cap [t+1]}; (x, \vecz_{J \cap [t+1]})} = \prod_{j \in J \cap [t+2, t+1+\ell]} (-1)^{\langle{\bmu}_j, x\rangle} \cdot \prod_{j \in J \cap [t+1]} Q^\dag_{j, ({\bmu}_j, w_j); (x, z_j)}.$$

It is clear by inspection that $\vec B_{\bmu}$ is the result of applying the block column-wise tensoring operation described in Lemma \ref{lemma:colwisetensor} to $\vec M_1, \ldots,\vec  M_l, \vec T$. Here, the block columns are indexed by $x$. We will proceed by applying this lemma to these matrices. We thus need to check that the preconditions of the lemma apply, which we do in the next few lemmas:

\begin{lemma}\label{lemma:colwisetensorpreconditionT}
    For any $x^* \in [2^n]$, consider the matrix $\vec T_{x^*}$ obtained by restricting $\vec T$ to columns where $x = x^*$. Then $\norm{\vec T_{x^*}}_\infty \leq 1$. Moreover, if $J \cap [t+1]$ is nonempty, then we have $\norm{\vec T}_\infty \leq 1$.
\end{lemma}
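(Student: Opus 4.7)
The plan is to exhibit $\vec T$ (and its restriction $\vec T_{x^*}$) as a (blockwise) tensor product of submatrices of the unitaries $Q_j^\dag$, and then invoke the operator-norm machinery already in place in Section~\ref{sec:linalg}.

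For the restricted matrix $\vec T_{x^*}$, upon fixing $x = x^*$ every factor $(-1)^{\langle \bmu_j, x^*\rangle}$ collapses to a determined global sign $\sigma(x^*) \in \{+1,-1\}$. The remaining product $\prod_{j \in J \cap [t+1]} Q^\dag_{j,(\bmu_j,w_j);(x^*,z_j)}$ factorizes as an ordinary tensor product across $j$, since the index pairs $(w_j,z_j)$ are disjoint. Concretely, for each $j \in J \cap [t+1]$ I would define $\vec R_{j,x^*}$ by $(\vec R_{j,x^*})_{w_j;z_j} := Q^\dag_{j,(\bmu_j,w_j);(x^*,z_j)}$, so that $\vec R_{j,x^*}$ is literally the submatrix of the unitary $Q_j^\dag$ obtained by fixing the $\reg{C_j}$-part of the row index to $\bmu_j$ and the $\reg{C_j}$-part of the column index to $x^*$. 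By Lemma~\ref{lemma:opnormsubmatrix}, $\norm{\vec R_{j,x^*}}_\infty \leq 1$, and Lemma~\ref{lemma:opnormtensor} then yields $\norm{\vec T_{x^*}}_\infty = |\sigma(x^*)| \cdot \prod_{j \in J \cap [t+1]} \norm{\vec R_{j,x^*}}_\infty \leq 1$ (vacuously when $J \cap [t+1]$ is empty).

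For the full $\vec T$, I would view it as a block matrix consisting of a single block row and $2^n$ block columns indexed by $x$, whose $x$-th block column is $\sigma(x) \cdot \bigotimes_{j \in J \cap [t+1]} \vec R_{j,x}$. To bring this into the form required by \Cref{thm:opnormmain} (or, more conveniently, Lemma~\ref{lemma:manymatrixkhatrirao}), for each $j \in J \cap [t+1]$ I would define $\vec A_j$ to be the block row matrix with $2^n$ block columns whose $x$-th block column is $\vec R_{j,x}$. A direct inspection shows that $\vec A_j$ is exactly the submatrix of $Q_j^\dag$ obtained by restricting the rows to those whose $\reg{C_j}$-coordinate equals $\bmu_j$, so $\norm{\vec A_j}_\infty \leq 1$. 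Since $\vec T$ is then precisely the block column-wise tensor product of the $\vec A_j$'s with per-column scalars $\gamma_{1,x} := \sigma(x)$ of magnitude $1$, Lemma~\ref{lemma:manymatrixkhatrirao} (or \Cref{thm:opnormmain}, after absorbing the signs into any single factor) applied with $R = 1$, $C = 2^n$, $d = |J \cap [t+1]|$ gives $\norm{\vec T}_\infty \leq 1$ as soon as $J \cap [t+1]$ is nonempty.

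The main obstacle is purely notational: one must correctly identify which index (namely $x$) is shared across all tensor factors and which ones are local to a single factor (the ancillary indices $w_j$ and $z_j$), and then match this structure against the block column-wise tensor product framework of Section~\ref{sec:blockwisetensor}. Once the matrices $\vec A_j$ are recognized as row-restrictions of unitaries, the bound follows mechanically from the existing lemmas, and no further spectral analysis is needed.
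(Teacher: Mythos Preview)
Your proof is correct. For the restriction $\vec T_{x^*}$, your argument (factor out the global sign $\sigma(x^*)$ and express the remainder as a tensor product of submatrices $\vec R_{j,x^*}$ of the unitaries $Q_j^\dag$) is essentially the same as the paper's.

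For the full $\vec T$ when $J\cap[t+1]$ is nonempty, you take a different route than the paper. The paper observes more directly that the phase $\prod_{j\in J\cap[t+2,t+1+\ell]}(-1)^{\langle\bmu_j,x\rangle}$ depends only on the column index $x$, so stripping it yields a matrix $\vec T'$ with $\|\vec T'\|_\infty=\|\vec T\|_\infty$; then $\vec T'$ is \emph{literally} a submatrix of $\bigotimes_{j\in J\cap[t+1]}Q_j^\dag$ (restrict rows to those with $\reg{C_j}$-coordinate equal to $\bmu_j$, and columns to those whose $\reg{C_j}$-coordinates all agree), so Lemmas~\ref{lemma:opnormsubmatrix} and~\ref{lemma:opnormtensor} finish immediately. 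You instead keep the signs as per-column scalars $\gamma_{1,x}=\sigma(x)$ and invoke Lemma~\ref{lemma:manymatrixkhatrirao}. Both are valid; the paper's route is more elementary since it bypasses the blockwise-tensor machinery entirely, while yours has the aesthetic advantage of showing that the same engine you use for the harder cases (Lemma~\ref{lemma:opnormhardcase}) already subsumes this one. One small point to tidy: Lemma~\ref{lemma:manymatrixkhatrirao} is stated for $d\geq 2$, so when $|J\cap[t+1]|=1$ you should note separately that $\vec T$ is then just the single row-restriction $\vec A_{j_0}$ with some columns sign-flipped, hence has norm $\leq 1$ directly.
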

\begin{proof}
    We have:$$T_{\vecw_{J \cap [t+1]}; (x, \vecz_{J \cap [t+1]})} = \prod_{j \in J \cap [t+2, t+1+\ell]} (-1)^{\langle{\bmu}_j, x\rangle} \cdot \prod_{j \in J \cap [t+1]} Q^\dag_{j, ({\bmu}_j, w_j); (x, z_j)}.$$
    Now consider the matrix $T'$ with the same dimensions as $T$ defined by:$$T'_{\vecw_{J \cap [t+1]}; (x, \vecz_{J \cap [t+1]})} = \prod_{j \in J \cap [t+1]} Q^\dag_{j, ({\bmu}_j, w_j); (x, z_j)}.$$
    Since $\bmu_j$ is fixed for $j \in J$, $\vec T'$ can be obtained from $\vec T$ by just flipping the signs of some columns. This preserves the operator norm (this can be seen from Lemma \ref{lemma:opnormdef} for example), so we have $\norm{\vec T'}_\infty = \norm{\vec T}_\infty$. It also follows analogously that $\norm{\vec T'_{x^*}}_\infty = \norm{\vec T_{x^*}}_\infty$, where we analogously define $\vec T'_{x^*}$ as the result of restricting $\vec T'$ to columns where $x = x^*$. At this point, we split into two cases:
    \begin{itemize}
        \item If $J \cap [t+1]$ is nonempty, we claim that $\vec T'$ is a submatrix of $\vec Q := \bigotimes_{j \in J \cap [t+1]} Q^\dag_j$. Indeed, we can index the rows of $\vec Q$ by $(\veca, \vecb)$ and the columns by $(\vecc, \vecd)$, and write $$Q_{(\veca, \vecb); (\vecc, \vecd)} = \prod_{j \in J \cap [t+1]} Q^\dag_{j, (a_j, b_j); (c_j, d_j)}.$$
        Then $\vec T'$ is the submatrix of $\vec Q$ obtained by restricting to rows $(\veca, \vecb)$ such that $a_j = {\bmu}_j$ for all $j \in J \cap [t+1]$ and columns $(\vecc, \vecd)$ such that $c_{j_1} = c_{j_2}$ for any $j_1, j_2 \in J \cap [t+1]$.

        Since the operator norm of a unitary matrix is 1 and there is at least one unitary matrix in this tensor product, it follows from Lemmas \ref{lemma:opnormsubmatrix} and \ref{lemma:opnormtensor} that $\norm{\vec T'}_\infty \leq 1 \Rightarrow \norm{\vec T}_\infty \leq 1$. Then, it also follows that $\norm{\vec T_{x^*}}_\infty \leq 1$ by Lemma \ref{lemma:opnormsubmatrix}.

        \item If $J \cap [t+1]$ is empty then $T'$ is really just a vector of $2^n$ many 1's. Hence $T'_{x^*}$ is just the scalar 1, which trivially has operator norm $\leq 1$.
    \end{itemize}
\end{proof}

\begin{lemma}\label{lemma:colwisetensorpreconditionM}
    Assume $\bmu$ does not have free variable symbols. For any $x^* \in [2^n]$ and $i \in [l]$, consider the matrix $\vec M_{i, x^*}$ obtained by restricting $M_i$ to columns where $x = x^*$. Then $\norm{\vec M_{i, x^*}}_\infty \leq 1$.
\end{lemma}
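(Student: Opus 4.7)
My plan is to mimic the approach used in Lemma~\ref{lemma:colwisetensorpreconditionT}. The key observation I would start from is that since $\bmu$ has no free variable symbols, every variable symbol $x_i$ appears at least once in the first $t+1$ positions of $\bmu$ (by Definition~\ref{defn:subtypefreevariables}). In particular, $S := I_i \cap [t+1]$ is nonempty, which is exactly what we needed in Lemma~\ref{lemma:colwisetensorpreconditionT} to get a nontrivial unitary factor to work with.

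Next, I would simplify by peeling off the phase factors. Restricting to the column $x = x^*$ freezes the phase product $\prod_{j \in I_i \cap [t+2, t+1+\ell]} (-1)^{\langle y_i, x^* \rangle}$ into a sign that depends only on $y_i$ (and a parity count $|I_i \cap [t+2, t+1+\ell]|$). These signs act as row-wise multiplications by $\pm 1$, so stripping them off does not change the operator norm (cf. Lemma~\ref{lemma:opnormdef}). The residual matrix $\vec M'_{i, x^*}$ has entries $\prod_{j \in S} Q^\dag_{j, (y_i, w_j); (x^*, z_j)}$ indexed by rows $(y_i, \vecw_{S})$ and columns $\vecz_{S}$.

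The main step is then to exhibit $\vec M'_{i, x^*}$ as a submatrix of the unitary $\vec Q := \bigotimes_{j \in S} Q^\dag_j$. Indexing rows of $\vec Q$ by $(\veca, \vecb)$ and columns by $(\vecc, \vecd)$ with $\veca, \vecc \in [d]^{|S|}$ and $\vecb, \vecd \in [d']^{|S|}$, we have $Q_{(\veca, \vecb); (\vecc, \vecd)} = \prod_{j \in S} Q^\dag_{j, (a_j, b_j); (c_j, d_j)}$. Restricting to rows with $a_j = y_i$ for every $j \in S$ and columns with $c_j = x^*$ for every $j \in S$ recovers $\vec M'_{i, x^*}$ exactly, with $b_j = w_j$ and $d_j = z_j$ as the remaining free indices. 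Because each $Q_j$ is unitary, $\vec Q$ is unitary (Lemma~\ref{lemma:opnormtensor}), so by Lemma~\ref{lemma:opnormsubmatrix} we conclude $\norm{\vec M'_{i, x^*}}_\infty \leq \norm{\vec Q}_\infty = 1$, and hence $\norm{\vec M_{i, x^*}}_\infty \leq 1$.

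I do not anticipate any real obstacle here; the content is essentially identical to the case analysis of Lemma~\ref{lemma:colwisetensorpreconditionT}, the only new ingredient being the use of the ``no free variable symbols'' assumption to guarantee that $S = I_i \cap [t+1]$ is nonempty (so that we have at least one unitary factor to play with, rather than a trivial scalar). The subtlety to be careful with is making sure to fix $x = x^*$ before trying to absorb the phase factors; without this restriction, the signs would depend on both $y_i$ and $x$ and would not reduce to a simple row flip, which is why the statement is phrased for a single restricted column block rather than for $\vec M_i$ as a whole.
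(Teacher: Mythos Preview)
Your proposal is correct and follows essentially the same argument as the paper: strip off the row-dependent sign (using that $x=x^*$ is fixed), then exhibit the residual matrix as a submatrix of the unitary $\bigotimes_{j \in I_i \cap [t+1]} Q_j^\dag$, using the no-free-variable-symbols assumption to guarantee this tensor product is nonempty. The paper's proof is the same in content and structure.
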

\begin{proof}
    We have:
    $$\vec M_{i, (y_i, \vecw_{I_i \cap [t+1]}); (x^*, \vecz_{I_i \cap [t+1]})} = \prod_{j \in I_i \cap [t+2, t+1+\ell]} (-1)^{\langle y_i, x^* \rangle} \cdot \prod_{j \in I_i \cap [t+1]} Q^\dag_{j, (y_i, w_j); (x^*, z_j)}.$$
    We can now define another matrix $\vec M'_{i, x^*}$ with the same dimensions as $\vec M_{i, x^*}$ defined by:
    $$\vec M'_{(i, x^*), (y_i, \vecw_{I_i \cap [t+1]}); \vecz_{I_i \cap [t+1]}} = \prod_{j \in I_i \cap [t+1]} Q^\dag_{j, (y_i, w_j); (x^*, z_j)}.$$
    Since we are fixing $x^*$, $\vec M'_{i, x^*}$ can be obtained from $\vec M_{i, x^*}$ by just flipping the signs of some rows. It follows that $\norm{\vec M'_{i, x^*}}_\infty = \norm{\vec M_{i, x^*}}_\infty$. Now to finish, we argue that $\vec M'_{i, x^*}$ is a submatrix of $\vec Q := \bigotimes_{j \in I_i \cap [t+1]} \vec Q^\dag_j$. Note that $I_i \cap [t+1]$ must be non-empty as otherwise $x_i$ would be a free variable symbol. Given this, this claim would imply the conclusion by Lemmas \ref{lemma:opnormsubmatrix} and \ref{lemma:opnormtensor}.

    To see this claim, note that we can index the rows of $\vec Q$ by $(\veca, \vecb)$ and the columns by $(\vecc, \vecd)$, and write:
    $$Q_{(\veca, \vecb); (\vecc, \vecd)} = \prod_{j \in I_i \cap [t+1]} Q^\dag_{j, (a_j, b_j); (c_j, d_j)}.$$
    Then $\vec M'_{i, x^*}$ is the submatrix of $\vec Q$ obtained by restricting to rows $(\veca, \vecb)$ where $a_{j_1} = a_{j_2}$ for any $j_1, j_2 \in I_i \cap [t+1]$ and columns $(\vecc, \vecd)$ where $c_j = x^*$ for all $j \in I_i \cap [t+1]$. This completes the proof of the lemma.
\end{proof}

\begin{lemma}\label{lemma:opnormhardnophase}
    Assume $\bmu$ does not have free variable symbols. Consider some $i \in [l]$ such that the integer $|I_i \cap [t+2, t+1+\ell]|$ is even. Then, it holds that $\norm{\vec M_i}_\infty \leq 1$.
\end{lemma}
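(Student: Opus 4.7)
The plan is to exploit the hypothesis that $|I_i \cap [t+2,t+1+\ell]|$ is even in order to make the phase factor in the definition of $\vec M_i$ completely disappear, and then to identify the resulting matrix as a submatrix of a tensor product of unitaries, just as in Lemma~\ref{lemma:colwisetensorpreconditionM} but now without the need to fix a single column $x^*$.

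First, I would unpack the definition of $\vec M_i$: for any indexing tuple $(y_i,\vecw_{I_i \cap [t+1]}), (x,\vecz_{I_i \cap [t+1]})$, we have
\[
M_{i,(y_i,\vecw_{I_i\cap[t+1]});(x,\vecz_{I_i\cap[t+1]})} \;=\; \Bigl(\prod_{j \in I_i \cap [t+2,t+1+\ell]} (-1)^{\langle y_i,x\rangle}\Bigr) \cdot \prod_{j \in I_i \cap [t+1]} Q^\dagger_{j,(y_i,w_j);(x,z_j)}.
\]
Since $|I_i \cap [t+2,t+1+\ell]|$ is even, the phase factor equals $((-1)^{\langle y_i,x\rangle})^{|I_i\cap[t+2,t+1+\ell]|} = 1$ for every choice of $y_i$ and $x$. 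Hence $\vec M_i$ reduces to the sign-free expression
\[
M_{i,(y_i,\vecw_{I_i\cap[t+1]});(x,\vecz_{I_i\cap[t+1]})} \;=\; \prod_{j \in I_i \cap [t+1]} Q^\dagger_{j,(y_i,w_j);(x,z_j)}.
\]

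Next, using the assumption that $\bmu$ has no free variable symbols (Definition~\ref{defn:subtypefreevariables}), the variable symbol $x_i$ must appear in at least one position within indices $1,\ldots,t+1$; equivalently, $I_i \cap [t+1]$ is nonempty. I would then identify $\vec M_i$ as a submatrix of the tensor product $\vec Q := \bigotimes_{j \in I_i \cap [t+1]} \vec Q^\dagger_j$, analogously to the argument in Lemma~\ref{lemma:colwisetensorpreconditionM}: indexing the rows of $\vec Q$ by $(\veca,\vecb)$ and columns by $(\vecc,\vecd)$, the matrix $\vec M_i$ is obtained by restricting to rows with all $a_j$ equal (set to $y_i$) and columns with all $c_j$ equal (set to $x$).

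Finally, since $\vec Q$ is a tensor product of unitaries (and there is at least one factor), it is unitary with $\norm{\vec Q}_\infty = 1$, and by Lemma~\ref{lemma:opnormsubmatrix} we conclude $\norm{\vec M_i}_\infty \leq 1$. There is no real obstacle here; the only subtle point is verifying that the hypotheses on $\bmu$ guarantee both that the phase factor disappears and that $I_i \cap [t+1]$ is nonempty, so that the submatrix-of-a-unitary argument goes through uniformly in $x$ rather than column by column.
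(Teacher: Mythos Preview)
Your proposal is correct and follows essentially the same approach as the paper: eliminate the phase using the evenness hypothesis, use the no-free-variable assumption to ensure $I_i \cap [t+1]$ is nonempty, and then identify $\vec M_i$ as a submatrix of the unitary $\bigotimes_{j \in I_i \cap [t+1]} \vec Q_j^\dagger$ to conclude via Lemmas~\ref{lemma:opnormsubmatrix} and~\ref{lemma:opnormtensor}.
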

\begin{proof}
    \begin{align*}
        M_{i, (y_i, \vecw_{I_i \cap [t+1]}); (x, \vecz_{I_i \cap [t+1]})} &= \prod_{j \in I_i \cap [t+2, t+1+\ell]} (-1)^{\langle y_i, x \rangle} \cdot \prod_{j \in I_i \cap [t+1]} Q^\dag_{j, (y_i, w_j); (x, z_j)} \\
        &= \prod_{j \in I_i \cap [t+1]} Q^\dag_{j, (y_i, w_j); (x, z_j)},
    \end{align*}
    since there are an even number of identical terms being multiplied together in the first product. In this case, we just argue that $\vec M_i$ is a submatrix of $\vec Q := \bigotimes_{j \in I_i \cap [t+1]} \vec Q^\dag_j$. This is a non-empty tensor product since otherwise $x_i$ would be a free variable symbol. Given this, this claim would imply the conclusion by Lemmas \ref{lemma:opnormsubmatrix} and \ref{lemma:opnormtensor}.

    To see this claim, we once again index the rows of $\vec Q$ by $(\veca, \vecb)$ and the columns by $(\vecc, \vecd)$, and write:
    $$Q_{(\veca, \vecb); (\vecc, \vecd)} = \prod_{j \in I_i \cap [t+1]} Q^\dag_{j, (a_j, b_j); (c_j, d_j)}.$$

    Then, $\vec M_i$ is the submatrix of $\vec Q$ obtained by restricting to rows $(\veca, \vecb)$ such that $a_{j_1} = a_{j_2}$ for any $j_1, j_2 \in I_i \cap [t+1]$ and columns $(\vecc, \vecd)$ such that $c_{j_1} = c_{j_2}$ for any $j_1, j_2 \in I_i \cap [t+1]$. This completes the proof of the lemma.
\end{proof}

Our final technical lemma handles the case where a variable symbol appears multiple times among $v_1, \ldots, v_{t+1}$:
\begin{lemma}\label{lemma:opnormhardcase}
    Consider some $i \in [l]$ be such that $|I_i \cap [t+1]| \geq 2$ (we are assuming that such $i$ exists; this may not always be the case). Then $$\norm{\vec M_i}_\infty \leq 1.$$
\end{lemma}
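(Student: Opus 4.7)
The plan is to write $\vec M_i$ as a blockwise tensor product to which we can apply \Cref{thm:opnormmain}. Let $k = |I_i \cap [t+1]| \geq 2$ and enumerate $I_i \cap [t+1] = \{j_1, j_2, \ldots, j_k\}$. The entries of $\vec M_i$ factor as
\begin{equation*}
M_{i, (y_i, \vecw_{I_i \cap [t+1]}); (x, \vecz_{I_i \cap [t+1]})} \;=\; \gamma_{y_i, x}\cdot \prod_{s=1}^{k} Q^{\dag}_{j_s, (y_i, w_{j_s}); (x, z_{j_s})},
\end{equation*}
where $\gamma_{y_i,x}$ is either $1$ or $(-1)^{\langle y_i, x\rangle}$, depending on the parity of $|I_i \cap [t+2, t+1+\ell]|$. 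In either case $|\gamma_{y_i,x}| \leq 1$. Note that the case where $\gamma_{y_i,x}=1$ already follows from Lemma~\ref{lemma:opnormhardnophase}; the new content is the case with a nontrivial phase, which is exactly what makes a simple ``tensor factor and apply Lemma~\ref{lemma:opnormtensor}'' strategy fail.

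Now view $\vec M_i$ as a $2^n \times 2^n$ block matrix whose outer row index is $y_i$ and outer column index is $x$. Its $(y_i, x)$ block decomposes as a tensor product
\begin{equation*}
(\vec M_i)_{y_i, x} \;=\; \gamma_{y_i, x} \, \bigl(\vec Q^{\dag}_{j_1}\bigr)_{y_i, x} \, \otimes \, \bigl(\vec Q^{\dag}_{j_2} \otimes \cdots \otimes \vec Q^{\dag}_{j_k}\bigr)_{y_i, x},
\end{equation*}
where $(\cdot)_{y_i,x}$ denotes the appropriate $(y_i,x)$ block of a matrix acting on registers indexed by a pair (first $n$-qubit, then $a$-qubit). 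This places $\vec M_i$ exactly in the form required by \Cref{thm:opnormmain}, with
\begin{equation*}
\vec A \;=\; \vec Q^{\dag}_{j_1}, \qquad \vec B \;=\; \vec Q^{\dag}_{j_2} \otimes \cdots \otimes \vec Q^{\dag}_{j_k}, \qquad \text{scalars } \gamma_{y_i, x}.
\end{equation*}

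It remains to verify the two hypotheses of \Cref{thm:opnormmain}. First, $\vec A = \vec Q^{\dag}_{j_1}$ is unitary, so $\norm{\vec A}_\infty = 1$. Second, the $x$-th block column of $\vec B$ is precisely the $x$-th block column of the unitary $\vec Q^{\dag}_{j_2} \otimes \cdots \otimes \vec Q^{\dag}_{j_k}$ (since $k \geq 2$ this is a nonempty tensor product), and block columns of a unitary have operator norm at most $1$. Since $|\gamma_{y_i, x}| \leq 1$, \Cref{thm:opnormmain} gives $\norm{\vec M_i}_\infty \leq 1$, as desired. The main (and only) subtlety of the proof is exactly that the phase $\gamma_{y_i,x}$ couples $y_i$ and $x$, so bounding blocks individually would not suffice; this is precisely the scenario \Cref{thm:opnormmain} was designed for.
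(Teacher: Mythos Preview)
Your approach is the same as the paper's---both reduce to \Cref{thm:opnormmain}---but there is a dimensional mismatch in how you define $\vec B$. You set $\vec B = \vec Q^{\dag}_{j_2} \otimes \cdots \otimes \vec Q^{\dag}_{j_k}$, a $2^{(n+a)(k-1)} \times 2^{(n+a)(k-1)}$ matrix, and then speak of its ``$(y_i,x)$ block''. But \Cref{thm:opnormmain} requires $\vec A$ and $\vec B$ to share the same outer block indexing $R \times C = 2^n \times 2^n$, and since the $(y_i,x)$ block of $\vec M_i$ has size $2^{ak}\times 2^{ak}$ while $\vec A_{y_i,x}$ has size $2^a\times 2^a$, the block $\vec B_{y_i,x}$ must have size $2^{a(k-1)}\times 2^{a(k-1)}$, so $\vec B$ has $2^{n+a(k-1)}$ rows---strictly fewer than $2^{(n+a)(k-1)}$ once $k\geq 3$. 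The full tensor product's natural row blocks are indexed by tuples $(y^{(2)},\ldots,y^{(k)})\in[2^n]^{k-1}$, not by a single $y_i$, so the notation $(\vec Q^{\dag}_{j_2} \otimes \cdots \otimes \vec Q^{\dag}_{j_k})_{y_i, x}$ and the claim that ``the $x$-th block column of $\vec B$ is precisely the $x$-th block column of the unitary'' are not well-defined as written.

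The fix is that the correct $\vec B$ for \Cref{thm:opnormmain} is the block matrix whose $(y_i, x)$ block equals $\bigotimes_{s=2}^{k} (\vec Q^{\dag}_{j_s})_{y_i, x}$; this is a strict \emph{submatrix} of the full tensor product, obtained by restricting to rows and columns whose $k-1$ outer indices all coincide. One then gets $\|\vec B\|_\infty \leq 1$ via \Cref{lemma:opnormsubmatrix} and \Cref{lemma:opnormtensor}, which yields the block-column hypothesis of \Cref{thm:opnormmain}. This submatrix step is exactly the content of \Cref{lemma:manymatrixkhatrirao}, which the paper invokes directly (after first dispatching the even-parity case via \Cref{lemma:opnormhardnophase}). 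Your argument is literally correct when $k=2$, and correct in spirit for $k\geq 3$ once this submatrix identification is supplied.
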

\begin{proof}
    Firstly, if $|I_i \cap [t+1]|$ is even, then by the parity constraints (the variable symbol $x_i$ should appear an even number of times in $\bmu$), we must also have that $|I_i \cap [t+2, t+1+\ell]|$ is even. In this case, the conclusion would follow from Lemma~\ref{lemma:opnormhardnophase}. Hence from now on we assume that $|I_i \cap [t+1]|$ is odd. We hence have:
    \begin{align*}
        \vec M_{i, (y_i, \vecw_{I_i \cap [t+1]}); (x, \vecz_{I_i \cap [t+1]})} &= \prod_{j \in I_i \cap [t+2, t+1+\ell]} (-1)^{\langle y_i, x \rangle} \cdot \prod_{j \in I_i \cap [t+1]} Q^\dag_{j, (y_i, w_j); (x, z_j)} \\
        &= (-1)^{\langle y_i, x \rangle} \cdot \prod_{j \in I_i \cap [t+1]} Q^\dag_{j, (y_i, w_j); (x, z_j)}.
    \end{align*}
    The conclusion now follows by applying Lemma~\ref{lemma:manymatrixkhatrirao} with the following inputs:
    \begin{itemize}
        \item We will take $R = C = 2^n$, and $d = |I_i \cap [t+1]| \geq 2$. (In fact, $d \geq 3$ since $|I_i \cap [t+1]|$ is odd, but this will not matter for us.)
        \item The matrices will be $\left\{\vec Q_j^\dag \in \CC^{2^{n+a} \times 2^{n+a}}\right\}_{j \in I_i \cap [t+1]}$. Accordingly, we will have $$r_1 = \ldots = r_R = c_1 = \ldots = c_C = d'.$$

        These matrices are unitary so they have operator norm exactly 1.
        \item For each $y_i, x \in \bit^n$, the scalar $\gamma_{y_i, x}$ will be $(-1)^{\langle y_i, x \rangle}$, which clearly has magnitude 1.
    \end{itemize}
\end{proof}

Finally, we can put these lemmas together to prove the bounds that we want:

\begin{lemma}\label{lemma:finalopnormbound}
    Suppose $\bmu$ is a subtype with respect to $(\CC^d)^{\otimes 2t+1} \otimes (\CC^{d'})^{r'}$ with $b$ free variable symbols. Then we have $\norm{\Pi_{\bmu} \Xi \Pi_{\bmu}}_\infty \leq 2^{-nt+nb}$.

    Moreover, when $t = 1$, we must have $b = 0$ and hence $\norm{\Pi_{\bmu} \Xi \Pi_{\bmu}}_\infty \leq 2^{-n}$.
\end{lemma}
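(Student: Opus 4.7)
Recall from~\Cref{sec:expandingwithsubtypes} that $\norm{\Pi_{\bmu}\Xi\Pi_{\bmu}}_\infty = 2^{-nt}\,\norm{\vec B_{\bmu}(\vec Q_1, \ldots, \vec Q_{t+1})}_\infty^2$, so it suffices to show $\norm{\vec B_{\bmu}(\vec Q_1, \ldots, \vec Q_{t+1})}_\infty \leq 2^{nb/2}$. The first step is to apply Lemma~\ref{lemma:handlefreevariablesymbols} to strip off the $b$ free variable symbols, reducing the problem to proving $\norm{\vec B_{\bmu'}(\vec Q_1, \ldots, \vec Q_{t+1})}_\infty \leq 1$, where $\bmu'$ is the ``trimmed'' subtype obtained from $\bmu$ by deleting all free variable symbols. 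Note that $\bmu'$ lives over $(\CC^d)^{\otimes (t+1+\ell)} \otimes (\CC^{d'})^{\otimes r'}$ with $\ell = t - p \leq t$, and by construction $\bmu'$ has no free variable symbols at all, which is exactly the setting in which Lemmas~\ref{lemma:colwisetensorpreconditionM},~\ref{lemma:opnormhardnophase}, and~\ref{lemma:opnormhardcase} are stated.

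Next, I would apply the block column-wise tensor product bound (Lemma~\ref{lemma:colwisetensor}) to the decomposition of $\vec B_{\bmu'}$ into $\vec M_1, \ldots, \vec M_l, \vec T$ introduced just before Lemma~\ref{lemma:colwisetensorpreconditionT}, with block columns indexed by $x \in [d]$. The ``per block column'' precondition that every $\vec M_{i, x^*}$ and $\vec T_{x^*}$ has operator norm at most $1$ is furnished directly by Lemmas~\ref{lemma:colwisetensorpreconditionT} and~\ref{lemma:colwisetensorpreconditionM}. The second precondition of Lemma~\ref{lemma:colwisetensor} — that at least one of $\vec T, \vec M_1, \ldots, \vec M_l$ has full operator norm at most $1$ — is the crux of the argument, and I would handle it by a short case analysis. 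If $J \cap [t+1]$ is non-empty, Lemma~\ref{lemma:colwisetensorpreconditionT} already gives $\norm{\vec T}_\infty \leq 1$. Otherwise all $t+1$ positions of $\bmu'$ indexed by $[t+1]$ are variable symbols, and a pigeonhole argument then forces some $i$ with $|I_i \cap [t+1]| \geq 2$, after which Lemma~\ref{lemma:opnormhardcase} supplies $\norm{\vec M_i}_\infty \leq 1$. Explicitly, if every $|I_i \cap [t+1]| \leq 1$ then $l = t+1$ and each $x_i$ appears exactly once in $[t+1]$; because $\bmu'$ has no free variables, each $x_i$ must also appear an odd (hence positive) number of times in the $\ell \leq t$ phase positions $[t+2, t+1+\ell]$, contradicting $\sum_i |I_i \cap [t+2, t+1+\ell]| \leq \ell \leq t < t+1$.

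For the $t = 1$ claim, the phase region $[t+2, 2t+1]$ collapses to the single position $\{3\}$, so a free variable symbol — which by definition must appear there an even number of times — cannot exist, giving $b = 0$ automatically and specializing the bound to $2^{-n}$. The main delicate step I anticipate is the pigeonhole calculation in the second case above: it relies on first trimming the free variables so that $\ell \leq t$ is strictly smaller than the forced $l = t+1$; without this preprocessing one could imagine pathological subtypes where every variable appears once in $[t+1]$ and is ``balanced'' by free variables living in an artificially large phase region, which would otherwise foil the counting.
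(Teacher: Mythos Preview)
Your proposal is correct and follows essentially the same approach as the paper: reduce via Lemma~\ref{lemma:handlefreevariablesymbols} to the free-variable-free subtype $\bmu'$, then apply Lemma~\ref{lemma:colwisetensor} to the decomposition $\vec M_1,\ldots,\vec M_l,\vec T$, checking the per-column precondition with Lemmas~\ref{lemma:colwisetensorpreconditionT} and~\ref{lemma:colwisetensorpreconditionM} and the ``at least one full matrix'' precondition by the same case split on whether $J\cap[t+1]$ is empty. The only cosmetic difference is in the pigeonhole step of the second case: the paper counts globally (each variable symbol appears at least twice in the $2t+1$ positions of $\bmu$, so there are at most $t$ of them, forcing a repeat among the $t+1$ positions in $[t+1]$), whereas you count inside the trimmed phase region of $\bmu'$ (if all $t+1$ variables in $[t+1]$ were distinct, each would need an odd appearance in the $\ell\le t$ phase positions); both arguments are valid and yield the same conclusion.
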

\begin{proof}
    We first address the final claim about the $t = 1$ case. Indeed, a subtype with respect to $(\CC^d)^{\otimes 2t+1} \otimes (\CC^{d'})^{\otimes t+1} = (\CC^d)^{\otimes 3} \otimes (\CC^{d'})^{\otimes 2}$ cannot have free variable symbols. Definition~\ref{defn:subtypefreevariables} states that a free variable symbol of $\bmu$ could only appear in entry 3 of $\bmu$. But a free variable symbol must appear an even number of times (as specified by Definition~\ref{defn:subtype}), so in fact it cannot appear at all. Now let us turn to proving the desired bound.

    Now let $\bmu'$ be defined as in the statement of Lemma \ref{lemma:handlefreevariablesymbols} i.e. it is $\bmu$ but with free variable symbols removed. Then we would like to show:
    \begin{align*}
        \norm{\Pi_{\bmu} \Xi \Pi_{\bmu}}_\infty &\leq 2^{-nt+nb} \\
        \Leftrightarrow \norm{\vec A}_\infty^2 &\leq 2^{-nt+nb} \\
        \Leftrightarrow \norm{\vec B_{\bmu}(\vec Q_1, \ldots, \vec Q_{k+1})}_\infty^2 &\leq 2^{nb} \\
        \Leftrightarrow \norm{\vec B_{\bmu'}(\vec Q_1, \ldots, \vec Q_{k+1})}_\infty &\leq 1. 
        \text{ (Lemma \ref{lemma:handlefreevariablesymbols})}
    \end{align*}
    Let $\bmu'$ have $l$ variable symbols. As hinted at earlier, we will bound this by applying Lemma \ref{lemma:colwisetensor} to the matrices $\vec M_1, \ldots, \vec M_l$ and $\vec T$ defined with respect to $\bmu'$. The first precondition follows from Lemmas \ref{lemma:colwisetensorpreconditionT} and \ref{lemma:colwisetensorpreconditionM}. To check the second precondition, we need only show that $\min(\norm{\vec M_1}_\infty, \ldots, \norm{\vec M_l}_\infty, \norm{\vec T}_\infty) \leq 1$. For this, we have some light casework:

    \begin{enumerate}
        \item If at least one of $\bmu_1, \ldots, \bmu_{t+1}$ is fixed, then $J \cap [t+1]$ is nonempty, so it follows that $\norm{T}_\infty \leq 1$ by Lemma \ref{lemma:colwisetensorpreconditionT}.

        \item Otherwise, all of $\bmu_1, \ldots, \bmu_{t+1}$ must be variable symbols. However, every variable symbol must appear at least twice and we only have $2t+1$ entries in total, so the total number of variable symbols must be $\leq t$. Therefore by the pigeonhole principle, some two of $\bmu_1, \ldots, \bmu_{t+1}$ are the same variable symbol i.e. there exists $i \in [l]$ such that $|I_i \cap [t+1]| \geq 2$. In this case, Lemma~\ref{lemma:opnormhardcase} tells us that $\norm{\vec M_i}_\infty \leq 1$. 
    \end{enumerate}
\end{proof}

\subsection{Combinatorial Lemmas about Free Variable Symbols}\label{sec:freevariablesymbolcombi}

In the case where $t > 1$, free variable symbols could exist, and as indicated by Lemma \ref{lemma:finalopnormbound}, they can blow up the operator norms we care about. To mitigate this, we establish some simple lemmas about free variable symbols:

\begin{definition}
    For any $l \in [t]$, define the projector $\Gamma_l$ over $(\CC^d)^{\otimes r} \otimes (\CC^{d'})^{\otimes (t+1)}$ as the projector onto $$W_l := \mathrm{span}_{\CC} \{\ket{(v_1,\dots,v_r, a_1, \ldots, a_{t+1}) } \, : \,\text{exactly }l\text{ distinct values among }v_{t+2}, \ldots, v_r\}.$$
\end{definition}

\begin{lemma}\label{lemma:freevartodistinctval}
    We have $$\sum_{b \leq t/2} \sum_{\bmu\text{ with }b\text{ free variable symbols}} 2^{nb} \Pi_{\bmu} \leq \exp(O(t \log t)) \cdot \sum_{l \leq t} 2^{n(t-l)} \Gamma_l,$$with respect to the PSD ordering.
\end{lemma}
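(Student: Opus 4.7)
The plan is to reduce the stated PSD inequality to a pointwise (coordinate-wise) inequality in the standard basis, and then establish that inequality by a short combinatorial counting argument.

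First I would observe that both $\Pi_{\bmu}$ and $\Gamma_l$ are projectors onto spans of \emph{standard basis vectors} of $(\CC^d)^{\otimes r} \otimes (\CC^{d'})^{\otimes (t+1)}$. Hence both the LHS and the RHS are diagonal operators in the standard basis with nonnegative entries, so the desired PSD inequality reduces to the pointwise statement that, for every standard basis vector $\ket{\vecv,\vecw}$,
\[
\sum_{b \leq t/2}\ \sum_{\substack{\bmu:\, (\vecv,\vecw) \in S_{\bmu}\\ b_{\bmu} = b}} 2^{nb} \ \leq\ \exp(O(t\log t)) \cdot 2^{n(t - l(\vecv))},
\]
where $l(\vecv)$ denotes the number of distinct values among $v_{t+2},\dots,v_r$ and $b_{\bmu}$ the number of free variable symbols of $\bmu$.

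The key step is the following combinatorial claim: if $(\vecv,\vecw) \in S_{\bmu}$ and $\bmu$ has $b$ free variable symbols, then $b + l(\vecv) \leq t$. To prove this, recall (Definition~\ref{defn:subtypefreevariables}) that each free variable symbol appears only in the $t$ phase positions $t+2,\dots,r$, and (Definition~\ref{defn:subtype}) each variable symbol appears an even number of times, hence at least twice. Thus the $b$ free variable symbols occupy at least $2b$ of the $t$ phase positions; the remaining at most $t - 2b$ phase positions contain entries (fixed integers or non-free variable symbols) that contribute at most $t - 2b$ further distinct values, while the free variable symbols contribute at most $b$ distinct values to $v_{t+2},\dots,v_r$ under any matching assignment. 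Therefore $l(\vecv) \leq (t - 2b) + b = t - b$, as claimed. In particular this also shows that any subtype with $b$ free variable symbols must satisfy $2b \leq t$, consistent with the outer sum being restricted to $b \leq t/2$.

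Given the claim, each term $2^{nb_{\bmu}}$ appearing in the pointwise sum is bounded by $2^{n(t - l(\vecv))}$. It remains to bound the number of subtypes $\bmu$ matching $(\vecv,\vecw)$, which is at most the total number of subtypes of the underlying type, namely at most $(2r)^r = \exp(O(t\log t))$ by Lemma~\ref{lemma:numberofsubtypes}. Combining these two bounds yields the pointwise inequality, and therefore the PSD inequality. The only subtlety --- and the main thing to get right --- is the combinatorial claim $l(\vecv) \leq t - b$; once it is in hand, the rest of the argument is essentially mechanical.
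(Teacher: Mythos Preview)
Your proposal is correct and follows essentially the same argument as the paper: reduce to a pointwise inequality using that both sides are diagonal in the standard basis, establish the key combinatorial bound $l(\vecv)\le t-b$ via the observation that the $b$ free variable symbols occupy at least $2b$ of the $t$ phase positions while contributing at most $b$ distinct values, and then bound the number of matching subtypes by Lemma~\ref{lemma:numberofsubtypes}. Your justification of the inequality $l(\vecv)\le t-b$ is in fact spelled out in more detail than the paper's.
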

\begin{proof}
    Note that the LHS and RHS are both diagonal in the standard basis. Hence it suffices to show for any $\vecx = (v_1, \ldots, v_r, a_1, \ldots, a_{t+1})$ that: $$\sum_{b \leq t/2} \sum_{\bmu\text{ with }b\text{ free variable symbols}} 2^{nb} \braket{\vecx | \Pi_{\bmu} | \vecx} \leq \exp(O(t \log t)) \cdot \sum_{l \leq t} 2^{n(t-l)} \braket{\vecx | \Gamma_l | \vecx}.$$

    Let $l^*$ be the number of distinct values among $v_{t+2}, \ldots, v_r$, then the RHS is $\exp(O(t \log t)) \cdot 2^{n(t- l^*)}$. On the other hand, the LHS is equal to: $$\sum_{b \leq t/2} \sum_{\substack{{\bmu}\text{ with }b\text{ free variable symbols} \\ \vecx \in S_{\bmu}}} 2^{nb}.$$

    Now consider any subtype $\bmu$ with $b$ free variable symbols such that $\vecx \in S_{\bmu}$. Each of its $b$ free variable symbols must appear at least twice among $v_{t+2}, \ldots, v_r$ due to the parity constraint, which implies that we must have $l^* \leq t - b \Leftrightarrow b \leq t - l^*$. Hence every term in the above sum is at most $2^{n(t-l^*)}$. Moreover, by Lemma~\ref{lemma:numberofsubtypes}, there are at most $\exp(O(t \log t))$ subtypes $\bmu$ with $\vecx \in S_{\bmu}$, so there are at most $\exp(O(t \log t))$ terms in the above sum. It follows that the LHS is at most $\exp(O(t \log t)) \cdot 2^{n(t-l^*)}$, which is exactly the RHS, as desired.
\end{proof}

We make one more observation:
\begin{lemma}\label{lemma:distinctcount}
    The number of tuples $(x_1, \ldots, x_t) \in [2^n]^t$ with $l$ distinct values is at most $\exp(t \log t) \cdot 2^{nl}$.
\end{lemma}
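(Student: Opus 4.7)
The plan is to prove this by a direct counting argument that separates the choice of the value set from the choice of the assignment pattern. First, I would choose which $l$ distinct values in $[2^n]$ actually appear in the tuple. The number of such choices is at most $\binom{2^n}{l} \leq (2^n)^l / l! \leq 2^{nl}$.

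Next, given the chosen set $S$ of size $l$, I would count the number of tuples $(x_1, \ldots, x_t) \in S^t$ that use \emph{every} value in $S$ at least once. This is at most the total number of functions from $[t]$ to $S$, which is $l^t$. Since $l \leq t$, we have $l^t \leq t^t = \exp(t \log t)$. (One could use surjections for a tighter count, but we do not need to.)

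Multiplying these two bounds gives at most $2^{nl} \cdot \exp(t \log t)$ tuples with exactly $l$ distinct values, which is precisely the claimed bound.

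There is no real obstacle here; the lemma is a routine counting estimate. The only subtlety worth noting is that the bound $l^t$ on the number of surjections from a $t$-set onto an $l$-set is loose when $l \ll t$, but since the factor $\exp(t\log t)$ is already present in the target bound, there is no need to tighten it.
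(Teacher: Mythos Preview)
Your proposal is correct and essentially identical to the paper's proof: the paper also bounds the number of choices for the $l$ distinct values by $2^{nl}$ and the number of assignments by $l^t \leq t^t = \exp(t\log t)$, then multiplies.
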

\begin{proof}
    There are at most $2^{nl}$ ways to choose the $l$ distinct values. Then there are $l^t \leq t^t = \exp(t \log t)$ ways to assign a value to each individual $x_i$. The conclusion follows.
\end{proof}

Next, we present the only specific property of the shared state $\rho$ that we need. In the following, we partition the Hilbert space $(\CC^d)^{\otimes r} \otimes (\CC^{d'})^{\otimes t+1}$ as the tensor product of Hilbert spaces on the following systems:
\begin{itemize}
    \item $\reg{R_1}$: this consists of the values $(v_{t+2}, \ldots, v_r)$. Thus $\algo{H}_{\reg{R_1}} \cong (\CC^d)^{\otimes t}$.
    \item $\reg{R_2}$: all other values i.e. $(v_1, \ldots, v_{t+1}, a_1, \ldots, a_{t+1})$. Thus $\algo{H}_{\reg{R_2}} \cong (\CC^d)^{\otimes t+1} \otimes (\CC^{d'})^{\otimes t+1}$.
\end{itemize}

\begin{lemma}\label{lemma:choipartialtrace}
    We have $$\mathrm{Tr}_{\reg R_2} \left[\rho\right] = 2^{-nt} \cdot \id_{\reg R_1}.$$

    Informally, if we take $\rho$ and trace out the system $\reg{R_2}$, we are left with a maximally mixed state.
\end{lemma}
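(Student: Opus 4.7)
The plan is to unpack the definition of $\rho$ and use two standard facts: that the channel $\Phi$ is trace-preserving, and that tracing out one half of an EPR pair yields the maximally mixed state on the other half. With the convention established in the setup, register $\reg{R_1}$ corresponds to Alice's registers $\reg{A'_{1:t}}$ (which hold the indices $v_{t+2},\dots,v_r$), while $\reg{R_2}$ collects the player-side registers $\reg{B_{1:t+1}}$ (i.e.\ the query indices $v_1,\dots,v_{t+1}$ together with the ancilla registers indexed by $a_1,\dots,a_{t+1}$).

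First I would recall from Equation~\eqref{eq:multicopychoistate} that
\[
\rho_{\reg{B_{1:t+1}A'_{1:t}}} \;=\; \bigl(\Phi_{\reg{A_{1:t}\to B_{1:t+1}}} \otimes \id_{\reg{A'_{1:t}}}\bigr)\bigl(\proj{\mathsf{EPR}^n}^{\otimes t}_{\reg{A_{1:t}A'_{1:t}}}\bigr).
\]
Tracing out $\reg{R_2} = \reg{B_{1:t+1}}$ and using the fact that $\Phi$ is trace-preserving (so that $\mathrm{Tr}_{\reg{B_{1:t+1}}} \circ \Phi_{\reg{A_{1:t}\to B_{1:t+1}}} = \mathrm{Tr}_{\reg{A_{1:t}}}$ as maps on $\reg{A_{1:t}}$) gives
\[
\mathrm{Tr}_{\reg{R_2}}[\rho] \;=\; \mathrm{Tr}_{\reg{A_{1:t}}}\bigl[\proj{\mathsf{EPR}^n}^{\otimes t}_{\reg{A_{1:t}A'_{1:t}}}\bigr].
\]
Finally, for each $i \in [t]$ the reduced state $\mathrm{Tr}_{\reg{A_i}}[\proj{\mathsf{EPR}^n}_{\reg{A_iA'_i}}] = 2^{-n}\id_{\reg{A'_i}}$, so taking the tensor product across the $t$ EPR pairs yields $2^{-nt}\cdot\id_{\reg{A'_{1:t}}} = 2^{-nt}\cdot\id_{\reg{R_1}}$, as claimed.

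There is no real obstacle here; the content of the lemma is simply that the Choi-state construction guarantees Alice's marginal is maximally mixed, which is the reason (as discussed in the technical overview) that cloning games are not arbitrary monogamy games but rather correspond to those with a trace-$1$, maximally-mixed-on-Alice shared state. The only thing to be mildly careful about is bookkeeping of the register identifications, in particular matching the abstract $(\CC^d)^{\otimes r}\otimes (\CC^{d'})^{\otimes r'}$ decomposition used in the subtype formalism back to the concrete registers $\reg{B_{1:t+1}}$ and $\reg{A'_{1:t}}$ appearing in the Choi state.
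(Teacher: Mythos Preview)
Your proof is correct and takes essentially the same approach as the paper: both use that $\Phi$ is trace-preserving to reduce to tracing out one half of the EPR pairs, then use that this yields the maximally mixed state. The paper expands explicitly in the computational basis whereas you phrase it via the channel identity $\mathrm{Tr}_{\reg{B_{1:t+1}}}\circ\Phi = \mathrm{Tr}_{\reg{A_{1:t}}}$, but this is a purely cosmetic difference.
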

\begin{proof}
    We have by definition that:
    \begin{align*}
        \rho &= (\id_{\reg{R_1}} \otimes \Phi_{\reg{R_2}})\left(\proj{\mathsf{EPR}}^{\otimes nt}_{\reg{R_1, R_2}}\right) \\
        &= 2^{-nt} \cdot \sum_{\vecx, \vecy \in [2^n]^t} (\id_{\reg{R_1}} \otimes \Phi) \left(\ketbra{\vecx}{\vecy}_{\reg R_1} \otimes \ketbra{\vecx}{\vecy}_{\reg R_2}\right) \\
        &= 2^{-nt} \cdot \sum_{\vecx, \vecy \in [2^n]^t} \ketbra{\vecx}{\vecy}_{\reg R_1} \otimes \Phi(\ketbra{\vecx}{\vecy})_{\reg R_2} \\
        \Rightarrow \mathrm{Tr}_{\reg{R_2}}\left[\rho\right] &= 2^{-nt} \cdot \sum_{\vecx, \vecy \in [2^n]^t} \ketbra{\vecx}{\vecy}_{\reg R_1} \cdot \Tr{\Phi(\ketbra{\vecx}{\vecy})_{\reg R_2}} \\
        &= 2^{-nt} \cdot \sum_{\vecx \in [2^n]^t} \proj{\vecx}_{\reg{R_1}},
    \end{align*}
    which implies the conclusion. In the last step, we are using the fact that $\Phi$ is trace-preserving.
\end{proof}

Finally, we put these two together to show the following:

\begin{lemma}\label{lemma:gammatrace}
    For any integer $l \in [1, t]$, we have $$\Tr{\Gamma_l \rho} \leq \exp(t \log t) \cdot 2^{-nt+nl}.$$
\end{lemma}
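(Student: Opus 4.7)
The plan is to exploit the fact that $\Gamma_l$ is defined only in terms of the values $(v_{t+2}, \ldots, v_r)$, i.e., exactly the coordinates living on the subsystem $\reg{R_1}$, and then to apply the partial trace result in Lemma~\ref{lemma:choipartialtrace} to reduce the trace against $\rho$ to a computation on the maximally mixed state on $\reg{R_1}$.

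First, I would observe that $\Gamma_l$ factorizes as $\Gamma_l = \Gamma_l^{\reg{R_1}} \otimes \id_{\reg{R_2}}$, where $\Gamma_l^{\reg{R_1}}$ is the projector on $\algo{H}_{\reg{R_1}} \cong (\CC^d)^{\otimes t}$ onto the span of standard basis vectors $\ket{v_{t+2}, \ldots, v_r}$ in which exactly $l$ distinct values appear among $v_{t+2}, \ldots, v_r$. This is immediate from the definition of $\Gamma_l$, since membership in $W_l$ only depends on the coordinates $(v_{t+2}, \ldots, v_r)$.

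Second, using the tensor product structure together with Lemma~\ref{lemma:choipartialtrace}, I would write
\begin{align*}
\Tr{\Gamma_l \rho} \,=\, \Tr{\Gamma_l^{\reg{R_1}} \, \mathrm{Tr}_{\reg{R_2}}[\rho]} \,=\, 2^{-nt} \cdot \Tr{\Gamma_l^{\reg{R_1}}} \,=\, 2^{-nt} \cdot \mathrm{rank}(\Gamma_l^{\reg{R_1}}),
\end{align*}
where the last equality uses that $\Gamma_l^{\reg{R_1}}$ is a projector. Finally, $\mathrm{rank}(\Gamma_l^{\reg{R_1}})$ is just the number of tuples $(v_{t+2}, \ldots, v_r) \in [2^n]^t$ with exactly $l$ distinct values, which by Lemma~\ref{lemma:distinctcount} is at most $\exp(t \log t) \cdot 2^{nl}$. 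Combining yields the desired bound $\Tr{\Gamma_l \rho} \leq \exp(t \log t) \cdot 2^{-nt + nl}$.

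There is really no main obstacle here: the statement is a direct consequence of combining the two previous lemmas, and the only slightly non-obvious observation is that $\Gamma_l$ acts as the identity on the $\reg{R_2}$ register, which is visible straight from the definition of $W_l$. The proof should fit in a handful of lines.
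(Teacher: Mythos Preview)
Your proposal is correct and follows essentially the same approach as the paper: factor $\Gamma_l$ as a projector on $\reg{R_1}$ tensored with the identity on $\reg{R_2}$, apply Lemma~\ref{lemma:choipartialtrace} to reduce to the maximally mixed state on $\reg{R_1}$, and then bound the rank using Lemma~\ref{lemma:distinctcount}. The paper's proof is line-for-line the same argument.
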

\begin{proof}
    We can clearly write $$\Gamma_l = \Gamma'_{l, \reg{R_1}} \otimes \id_{\reg{R_2}},$$
    where $\Gamma'_l$ is the projector onto standard basis vectors $\ket{v_{t+2}, \ldots, v_r}$ with exactly $l$ distinct values. We hence have:
    \begin{align*}
        \Tr{\Gamma_l \rho} &= \Tr{\left(\Gamma'_{l, \reg{R_1}} \otimes \id_{\reg{R_2}}\right) \rho} \\
        &= \Tr{\Gamma'_{l, \reg{R_1}} \left(\mathrm{Tr}_{\reg{R_2}} \rho\right)} \\
        &= 2^{-nt} \cdot \Tr{\Gamma'_{l, \reg{R_1}}} \text{ (Lemma~\ref{lemma:choipartialtrace})} \\
        &\leq \exp(t \log t) \cdot 2^{-nt+nl},
    \end{align*}
    where in the last step we are using the fact that $\Gamma'_{l, \reg{R_1}}$ is a projector together with Lemma~\ref{lemma:distinctcount}.
\end{proof}

\subsection{Putting Everything Together}\label{sec:noancillas}

In this section, we complete our treatment of $t \mapsto t+1$ cloning games for general $t \geq 1$. Our bounds here are once again independent of the number of ancilla qubits $a$ used by each player.

\begin{theorem}\label{thm:tcopycloning}
    Let $\mathfrak{F}$ be a $(4t+2)$-wise uniform family of functions from $\bit^n \rightarrow \bit$. Then for all $n$, we have $$\underset{\mathsf{S} \in \algo{S}_{\mathrm{rest}}}{\sup} \omega_{\mathsf{S}}(\mathsf{G}_{\mathfrak{F}, t}) \leq \exp(O(t \log t)) \cdot 2^{-n}.$$
\end{theorem}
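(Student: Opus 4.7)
The plan is to combine the technical ingredients developed in Sections~\ref{sec:binaryphasesetup} through~\ref{sec:freevariablesymbolcombi} in a short chain of inequalities; the conceptual heavy lifting has already been done in Lemmas~\ref{lemma:monogamyboundwithstate}, \ref{lemma:finalopnormbound}, \ref{lemma:freevartodistinctval}, and \ref{lemma:gammatrace}, so this theorem is essentially a careful bookkeeping assembly. Fix an arbitrary restricted strategy $\mathsf{S} \in \mathcal{S}_{\mathrm{rest}}$; we will bound $\omega_{\mathsf{S}}(\mathsf{G}_{\mathfrak{F},t})$ uniformly and take the supremum at the end. First, we justify that the $(4t+2)$-wise uniform family $\mathfrak{F}$ is good enough to use Lemma~\ref{lemma:monogamyboundwithstate} as stated: the winning probability is a polynomial expression of total degree $4t+2$ in the entries of $\mathsf{U}_{f_\theta}$ (two copies of $\mathsf{U}_{f_\theta}$ arise from each of the $t+1$ player projectors and each of the $t$ Alice projectors on the EPR halves), so its expectation under $\mathfrak{F}$ coincides with its expectation under a uniformly random $f\colon\{0,1\}^n\to\{0,1\}$, which is where Lemma~\ref{lem:phase-type-twirl} is invoked.

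Starting from Lemma~\ref{lemma:monogamyboundwithstate}, I substitute the spectral bound $\norm{\Pi_{\bmu}\Xi\Pi_{\bmu}}_\infty \le 2^{-nt+nb(\bmu)}$ from Lemma~\ref{lemma:finalopnormbound}, where $b(\bmu)$ is the number of free variable symbols of $\bmu$. Merging the prefactor $2^{n(t-1)}$ with the $2^{-nt}$ leaves an overall $2^{-n}$, producing
\[
\omega_{\mathsf{S}}(\mathsf{G}_{\mathfrak{F},t}) \;\le\; \exp(O(t\log t))\cdot 2^{-n}\cdot \sum_{\bmu} 2^{n\,b(\bmu)}\,\Tr{\Pi_{\bmu}\rho}.
\]
Next I apply Lemma~\ref{lemma:freevartodistinctval}. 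One small sanity check: that lemma is stated only for $b\le t/2$, but by the parity constraint in Definition~\ref{defn:subtype} each free variable symbol must appear an even (hence at least twice) number of times among the $t$ positions $v_{t+2},\dots,v_r$, so $b(\bmu)\le t/2$ holds automatically for every subtype and no terms are lost. Since $\rho$ is PSD, the operator inequality in Lemma~\ref{lemma:freevartodistinctval} passes through $\Tr{\,\cdot\,\rho}$, yielding
\[
\sum_{\bmu} 2^{n\,b(\bmu)}\,\Tr{\Pi_{\bmu}\rho} \;\le\; \exp(O(t\log t))\cdot \sum_{l=1}^{t} 2^{n(t-l)}\,\Tr{\Gamma_l \rho}.
\]

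Finally, I apply Lemma~\ref{lemma:gammatrace}, which gives $\Tr{\Gamma_l \rho}\le \exp(t\log t)\cdot 2^{-nt+nl}$ for each $l\in[1,t]$. The crucial observation is that the exponents telescope exactly: the $2^{n(t-l)}$ from the previous step cancels the $2^{-nt+nl}$ from this step down to a constant in $n$, so each of the $t$ summands contributes only $\exp(O(t\log t))$, and summing over $l$ costs only an additional $\mathrm{poly}(t)$ factor absorbed into the same $\exp(O(t\log t))$. Chaining everything together gives $\omega_{\mathsf{S}}(\mathsf{G}_{\mathfrak{F},t}) \le \exp(O(t\log t))\cdot 2^{-n}$ uniformly in $\mathsf{S}$, and taking the supremum over $\mathcal{S}_{\mathrm{rest}}$ completes the proof.

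The main obstacle is really just verifying that the four exponents of $2^n$ --- namely $2^{n(t-1)}$ from Lemma~\ref{lemma:monogamyboundwithstate}, $2^{-nt+nb}$ from Lemma~\ref{lemma:finalopnormbound}, $2^{n(t-l)}$ from Lemma~\ref{lemma:freevartodistinctval}, and $2^{-nt+nl}$ from Lemma~\ref{lemma:gammatrace} --- collapse exactly to $2^{-n}$, with the subtype parameter $b$ and the distinct-value parameter $l$ disappearing cleanly from the final bound. No new idea is required beyond what is already in place: the spectral control of $\Xi$ on each subtype, the subtype-to-type reduction, and the structural constraint that the Choi state $\rho$ has maximally mixed marginal on the $\reg{R_1}$ subsystem are exactly what is needed, and Theorem~\ref{thm:tcopycloning} is the clean payoff.
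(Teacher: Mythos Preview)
Your proposal is correct and follows essentially the same chain of inequalities as the paper's proof: apply Lemma~\ref{lemma:monogamyboundwithstate}, then Lemma~\ref{lemma:finalopnormbound}, then Lemma~\ref{lemma:freevartodistinctval}, then Lemma~\ref{lemma:gammatrace}, and observe that the four powers of $2^n$ collapse to $2^{-n}$. Your additional justifications---the degree-$(4t+2)$ count for the $(4t+2)$-wise uniformity assumption and the sanity check that every subtype has $b(\bmu)\le t/2$---are correct and make explicit points the paper leaves implicit.
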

\begin{proof}
    For any strategy $\mathsf{S} \in \algo{S}_{\mathrm{rest}}$, we have:
    \begin{align*}
        \omega_{\mathsf{S}}(\mathsf{G}) &\leq \exp(O(t \log t)) \cdot 2^{n(t-1)} \cdot \sum_{\bmu} \Tr{\Pi_{\bmu}\rho} \cdot \|\Pi_{\bmu} \hspace{0.3mm} \Xi \, \Pi_{\bmu}\|_\infty \text{ (Lemma~\ref{lemma:monogamyboundwithstate})} \\
        &\leq \exp(O(t \log t)) \cdot 2^{n(t-1)} \cdot \sum_{b \leq t/2} \sum_{{\bmu}\text{ with }b\text{ free variable symbols}} 2^{-nt+nb} \cdot \Tr{\Pi_{\bmu}\rho} \text{ (Lemma \ref{lemma:finalopnormbound})} \\
        &\leq \exp(O(t \log t)) \cdot 2^{-n} \cdot \sum_{l \leq t} 2^{n(t-l)} \cdot  \Tr{\Gamma_l \rho} \text{ (Lemma~\ref{lemma:freevartodistinctval})} \\
        &\leq \exp(O(t \log t)) \cdot 2^{-n} \cdot \sum_{l \leq t} 2^{n(t-l)} \cdot 2^{-nt+nl} \text{ (Lemma \ref{lemma:gammatrace})} \\
        &\leq \exp(O(t \log t)) \cdot 2^{-n},
    \end{align*}
    as desired.
\end{proof}

\section{Limitations of Analyzing Monogamy Games Using Existing Techniques}\label{sec:monogamyexisting}

In this section, we revisit the existing techniques laid out by~\cite{Tomamichel_2013} for upper bounding the value of monogamy games (and hence cloning games in particular). We begin by summarizing their technique and main result, modifying the presentation to accommodate the multi-copy setting. For a particular strategy $\mathsf{S}$, define the operator
\begin{equation}\label{eq:pitheta}
    \Pi^\theta := |\mathcal{X}|^{t-1} \cdot   \sum_{x \in \algo X}  \, \left( \bigotimes_{i = 1}^{t+1} \vec{P}_{i, x}^{\theta} \otimes (\matA_x^\theta)^{\otimes t}\right),\text{ where we define}
\end{equation}
$$\matA_x^\theta := \bar{U}_\theta \proj{x}_{\reg{A}} \bar{U}_\theta^\dag.$$
Then, by Lemma~\ref{lemma:CImultiplayer}, we have:
\begin{align*}
    \omega_{\mathsf{S}}(\mathsf{G}) &= \underset{\theta \sim \Theta}{\mathbb{E}} \left[\mathrm{Tr}\left[\Pi^\theta \rho\right]\right] \\
    &\leq \norm{\underset{\theta \sim \Theta}{\mathbb{E}}\left[\Pi^\theta\right]}_\infty,\stepcounter{equation}\tag{\theequation}\label{eqn:cloningboundviaopnorm}
\end{align*}
since $\rho$ has trace 1 (by Lemma~\ref{lemma:matrixinnerproduct}). To bound this in the $t = 1$ case, they show the following result:%
\begin{theorem}[Essentially~\cite{Tomamichel_2013}, Theorem 4]\label{thm:tfkwmain}
    When $t = 1$, we have
    \begin{align*}
        \norm{\underset{\theta \sim \Theta}{\mathbb{E}}\left[\Pi^\theta\right]}_\infty &\leq \frac{1}{|\Theta|} + \frac{|\Theta|-1}{|\Theta|} \cdot \underset{\substack{\theta, \theta' \in \Theta\\\theta \neq \theta'}}{\max}\text{ } \underset{x, x' \in \mathcal{X}}{\max} \norm{\sqrt{\matA_x^{\theta}} \sqrt{\matA_{x'}^{\theta'}}}_\infty \\
        &= \frac{1}{|\Theta|} + \frac{|\Theta|-1}{|\Theta|} \cdot \underset{\substack{\theta, \theta' \in \Theta\\\theta \neq \theta'}}{\max}\text{ } \underset{x, x' \in \mathcal{X}}{\max} \norm{\matA_x^{\theta} \matA_{x'}^{\theta'}}_\infty.
    \end{align*}
    (The second line holds because $\matA_x^\theta$ is always a projector.)
\end{theorem}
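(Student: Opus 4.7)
The plan is to follow the classical TFKW three-step strategy: reduce to a projector, apply a block-operator averaging lemma, and then bound each off-diagonal block by the pairwise Alice overlap.

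First, I would observe that in the $t=1$ case, the operator $\Pi^\theta = \sum_x \vec{P}_{1,x}^\theta \otimes \vec{P}_{2,x}^\theta \otimes \matA_x^\theta$ is itself a projector. This uses two facts: (i) each summand is a tensor product of projectors (WLOG after Naimark dilation of Bob's and Charlie's POVMs), and hence a projector; and (ii) the summands have mutually orthogonal supports on Alice's register because $\matA_x^\theta = \bar U_\theta \ket{x}\bra{x} \bar U_\theta^\dag$ gives a family of pairwise orthogonal rank-$1$ projectors as $x$ ranges over $\mathcal X$. A sum of projectors with orthogonal supports on any tensor factor is again a projector.

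Next, I would establish the averaging lemma: for projectors $\{\Pi^\theta\}_{\theta \in \Theta}$,
\[
\Big\|\frac{1}{|\Theta|}\sum_\theta \Pi^\theta \Big\|_\infty \leq \frac{1}{|\Theta|} + \frac{|\Theta|-1}{|\Theta|} \max_{\theta \neq \theta'}\|\Pi^\theta \Pi^{\theta'}\|_\infty.
\]
The clean proof uses the $TT^*$ trick: define the block-row operator $T := (\tfrac{1}{\sqrt{|\Theta|}} \Pi^\theta)_{\theta \in \Theta}$. Since each $\Pi^\theta$ is a projector, $TT^* = \tfrac{1}{|\Theta|}\sum_\theta (\Pi^\theta)^2 = \tfrac{1}{|\Theta|}\sum_\theta \Pi^\theta$, which is exactly the operator whose norm we are bounding. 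On the other hand, $T^*T$ is a self-adjoint block matrix whose diagonal blocks are $\tfrac{1}{|\Theta|}\Pi^\theta$ and whose off-diagonal blocks are $\tfrac{1}{|\Theta|}\Pi^\theta \Pi^{\theta'}$. Applying the standard row-sum bound for self-adjoint block matrices and using $\|\Pi^\theta\|_\infty \leq 1$ gives the claimed inequality via $\|TT^*\|_\infty = \|T^*T\|_\infty$.

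The third (and technically most involved) step is to show
\[
\|\Pi^\theta \Pi^{\theta'}\|_\infty \leq \max_{x, x'} \|\sqrt{\matA_x^\theta}\sqrt{\matA_{x'}^{\theta'}}\|_\infty,
\]
from which the theorem follows by combining with the averaging lemma. The plan here is to exploit the rank-$1$ structure of the $\matA$'s: in the basis $\{\ket{\phi_x^\theta}\}_x$ on Alice's register, $\Pi^\theta$ is block-diagonal with block $R_x^\theta := \vec{P}_{1,x}^\theta \otimes \vec{P}_{2,x}^\theta$ at position $x$. Expanding $\Pi^\theta \Pi^{\theta'}$ against a vector $v = \sum_{x'} v_{x'} \otimes \ket{\phi_{x'}^{\theta'}}$ and carefully using (a) that the $\{R_{x'}^{\theta'}\}_{x'}$ have orthogonal ranges (Naimark) so their images remain orthogonal, and (b) that $\sum_x |\langle \phi_x^\theta | \phi_{x'}^{\theta'} \rangle|^2 = 1$, should yield the correct bound once one pulls out a single maximal overlap factor instead of using Cauchy–Schwarz too coarsely.

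The main obstacle will be Step~3: a naïve Cauchy–Schwarz application only gives the trivial bound $\|\Pi^\theta \Pi^{\theta'}\|_\infty \leq 1$ because the row and column sums $\sum_{x}|\langle\phi_x^\theta|\phi_{x'}^{\theta'}\rangle|^2$ are both equal to $1$ (the bases are orthonormal), not small. To actually extract the factor $\max_{x,x'}\|\sqrt{\matA_x^\theta}\sqrt{\matA_{x'}^{\theta'}}\|_\infty$, I would need to carefully peel off one overlap scalar rather than summing them all, using that the $R_x^\theta$ projectors contract vectors to their individual ranges. Getting this final inequality tight is the heart of the TFKW argument; the rest of the proof is essentially bookkeeping once the key lemma is in hand.
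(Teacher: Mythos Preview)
The paper does not prove this theorem; it is quoted from \cite{Tomamichel_2013} and used as a black box. So there is no ``paper's proof'' to compare against, and I will evaluate your reconstruction of the TFKW argument directly. Your Steps~1 and~2 are correct: after Naimark dilation each $\Pi^\theta$ is a genuine projector (the $\matA_x^\theta$ already force orthogonality across $x$), and the $TT^*$ trick combined with the block row-sum bound cleanly yields the averaging lemma.

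The gap is in Step~3, and it is a real one, not just a matter of care. By collapsing Bob's and Charlie's projectors into a single $R_x^\theta = \vec P_{1,x}^\theta \otimes \vec P_{2,x}^\theta$, you retain only \emph{one} orthogonality structure (the $R_{x'}^{\theta'}$ have orthogonal ranges in $x'$), and as you observed this only gives the trivial bound $\|\Pi^\theta\Pi^{\theta'}\|\le 1$. The TFKW insight is to keep the two players separate: write $\Pi^\theta = \bar P_1^\theta \bar P_2^\theta$ where $\bar P_i^\theta := \sum_x \matA_x^\theta \otimes \vec P_{i,x}^\theta$ (identity on the other player's register). These commute and are each projectors, so $\|\Pi^\theta\Pi^{\theta'}\| = \|\bar P_1^\theta \bar P_2^\theta \bar P_2^{\theta'}\bar P_1^{\theta'}\| \le \|\bar P_2^\theta \bar P_1^{\theta'}\|$. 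Now expand
\[
\bar P_2^\theta \bar P_1^{\theta'} = \sum_{x,x'} \matA_x^\theta \matA_{x'}^{\theta'} \otimes \vec P_{1,x'}^{\theta'} \otimes \vec P_{2,x}^\theta.
\]
The summands have pairwise orthogonal \emph{ranges} (via the $\vec P_{2,x}^\theta$ and $\matA_x^\theta$) and pairwise orthogonal \emph{co-ranges} (via the $\vec P_{1,x'}^{\theta'}$ and $\matA_{x'}^{\theta'}$), so the norm of the sum equals the maximum norm of a summand, namely $\max_{x,x'}\|\matA_x^\theta\matA_{x'}^{\theta'}\|$. Your proposed ``peel off one overlap scalar'' manoeuvre cannot work with $R_x^\theta$ alone because a single family of orthogonal projectors controls only one side of the block structure; you need Bob to supply column-orthogonality and Charlie to supply row-orthogonality simultaneously.
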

\noindent
Using this theorem, they are able to show that the BB84 monogamy game has value $\frac{1}{2} + \frac{1}{2\sqrt{2}}$. Moreover, they show that the $n$-fold parallel repetition of the BB84 monogamy game has value $\left(\frac{1}{2} + \frac{1}{2\sqrt{2}}\right)^n$. These techniques have since been adapted by~\cite{Culf_2022, schleppy2025winning} to cloning games based on coset states as well. In this section, we will demonstrate three things that we previously alluded to in~\Cref{sec:prevtechniques}:

\begin{itemize}
    \item Section~\ref{sec:opnormcounterexample}: In the case of the case of multi-copy cloning games (i.e. $t > 1$), we will show that ignoring the shared state $\rho$ can \emph{provably} lose too much. Specifically, we will show for any $\left\{U_\theta\right\}$ with real entries, there exist projectors $\vec{P}_{i, x}^\theta$ such that
    $$\norm{\underset{\theta \sim \Theta}{\mathbb{E}} \left[\Pi^\theta\right]}_\infty \geq 1.$$
    Moreover, we will show this even if $\vec{P}_{i, x}^\theta$ does not even depend on $\theta$. (This can informally be thought of as the case where the players $\algo P_1, \ldots, \algo P_{t+1}$ are never given access to the basis $\theta$ in any form.)

    Since previous constructions~\cite{Tomamichel_2013, broadbent_et_al:LIPIcs.TQC.2020.4, 10.1007/978-3-030-84242-0_20, Culf_2022, schleppy2025winning} as well as our binary phase state construction only make use of unitaries with real entries, this justifies the necessity of our finer analysis in~\Cref{sec:freevariablesymbolcombi} of the specific structure imposed on the shared state $\rho$ by the oracular cloning setting.

    \item Section~\ref{sec:tfkwendoftheline}: For the single-copy case, we will demonstrate another inherent limitation that arises from the~\cite{Tomamichel_2013} approach of bounding a spectral norm in terms of the maximal pairwise overlap $\max_{\theta \neq \theta'} \norm{\matA^\theta_x \matA^{\theta'}_{x'}}_\infty$. Namely, we will show that this pairwise overlap is provably at least $1/\sqrt{|\mathcal{X}|} = 2^{-n/2}$ (as opposed to the ideal $O(2^{-n})$ for any monogamy game). Recall from~\Cref{sec:applicationbh} that this is crucial for our application to black hole physics (we will elaborate on this more in~\Cref{sec:black-hole-games} and Remark~\ref{remark:bhblcomparison} in particular).

    Thus, in order to bypass this limitation of Theorem~\ref{thm:tfkwmain}, we need entirely new techniques. This is yet another reason why we restrict attention to \emph{oracular cloning games} (defined in~\Cref{sec:oracularcloning}) and analyze these with a completely different technique based on binary subtypes in Section \ref{sec:binaryphaseconstruction}.

    \item Finally, for completeness, we show in~\Cref{sec:salting} that this bound of $2^{-n/2}$ is essentially tight in the single-copy case; we will show that a slight tweak of our binary phase state construction --- introduced in~\Cref{sec:techoverviewbinaryphase} and fleshed out in~\Cref{sec:binaryphaseconstruction} --- yields a different monogamy game with $\mathcal{X} = \bit^n$ and maximal pairwise overlap $\leq 2^{-n/2 + o(n)}$. This only requires the existence of sub-exponentially \emph{classically} secure PRFs.
\end{itemize}

\subsection{Limitations of Bounding the Operator Norm Directly}\label{sec:opnormcounterexample}

Recall that, in~\Cref{sec:binaryphaseconstruction}, we proved a multi-copy cloning bound with the assistance of a careful analysis (in~\Cref{sec:freevariablesymbolcombi}) of the structure of the shared state $\rho$ defined in Equation~\eqref{eq:multicopychoistate}, namely the structure specified by Lemma~\ref{lemma:choipartialtrace}. In this section, we argue that this more careful approach is likely \emph{necessary}:

\begin{lemma}
    Consider any $t > 1$ and projectors $\left\{\matA_x^\theta = \bar{U}_\theta \proj{x} \bar{U}_\theta^\dag\right\}$ such that $U_\theta$ has real entries for all $\theta$ i.e. $U_\theta = \bar{U}_\theta$. Then there exist projectors $\vec{P}_{i, x}^\theta$ such that, if $\Pi^\theta$ is defined as in Equation~\eqref{eq:pitheta}, then we have:
    $$\norm{\underset{\theta \sim \Theta}{\mathbb{E}} \left[\Pi^\theta\right]}_\infty \geq |\mathcal{X}|^{\lfloor t/2 \rfloor - 1}.$$
    Moreover, this bound can be attained even if these projectors need not depend at all on $\theta$. Note that for any $t \geq 2$, the RHS is $\geq 1$ so this implies that Equation~\eqref{eqn:cloningboundviaopnorm} does not give any non-trivial bound on the value of the corresponding cloning game.
\end{lemma}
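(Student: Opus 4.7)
\medskip

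\noindent\textbf{Proof proposal.} The plan is to explicitly exhibit the $\theta$-independent choice of projectors described informally in Section~\ref{sec:prevtechniques}, and then extract an eigenvalue lower bound by plugging in a carefully chosen test state on Alice's $t$ copies. Concretely, let $d=|\mathcal{X}|$ and set $\vec{P}_{i,x}^{\theta}=\proj{x}$ for every $i\in[t+1]$, every $x\in\mathcal{X}$, and every $\theta$. These are valid (in fact projective) measurements that do not depend on $\theta$. Substituting into Equation~\eqref{eq:pitheta} gives
\[
\mathbb{E}_{\theta}\bigl[\Pi^{\theta}\bigr] \;=\; d^{\,t-1}\sum_{x\in\mathcal{X}} \proj{x}^{\otimes(t+1)}\otimes\, \mathbb{E}_{\theta}\bigl[(\matA_{x}^{\theta})^{\otimes t}\bigr].
\]
The first $t+1$ registers decompose this operator into orthogonal blocks indexed by $x\in\mathcal{X}$, so its operator norm is $d^{\,t-1}\cdot \max_{x}\norm{\mathbb{E}_{\theta}[(\matA_{x}^{\theta})^{\otimes t}]}_{\infty}$; in particular it is at least $d^{\,t-1}\cdot \norm{\mathbb{E}_{\theta}[(\matA_{0^{n}}^{\theta})^{\otimes t}]}_{\infty}$. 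It therefore suffices to lower-bound the single-$x$ operator norm $\norm{\mathbb{E}_{\theta}[(\matA_{0^{n}}^{\theta})^{\otimes t}]}_{\infty}$.

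The key step is to exploit the hypothesis that $U_{\theta}$ is real, which makes $\matA_{x}^{\theta}=U_{\theta}\proj{x}U_{\theta}^{\top}$ a real symmetric rank-one projector. Let $\ket{\Phi}=d^{-1/2}\sum_{i}\ket{i}\ket{i}$ denote the maximally entangled state on two copies of Alice's register. The ricochet identity together with reality gives the clean computation
\[
\bra{\Phi}\matA_{0^{n}}^{\theta}\otimes \matA_{0^{n}}^{\theta}\ket{\Phi}\;=\;\tfrac{1}{d}\Tr\bigl[\matA_{0^{n}}^{\theta}(\matA_{0^{n}}^{\theta})^{\top}\bigr]\;=\;\tfrac{1}{d}\Tr\bigl[(\matA_{0^{n}}^{\theta})^{2}\bigr]\;=\;\tfrac{1}{d},
\]
where the final equality uses that $\matA_{0^{n}}^{\theta}$ is a rank-one projector. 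For even $t$, I will use the pure test state $\ket{\Phi}^{\otimes t/2}$ consisting of internal EPR pairings of Alice's $t$ copies; the expectation factorizes across pairs and yields $\bra{\Phi^{\otimes t/2}}\mathbb{E}_{\theta}[(\matA_{0^{n}}^{\theta})^{\otimes t}]\ket{\Phi^{\otimes t/2}}=d^{-t/2}$. For odd $t$, I will instead use the mixed test state $\sigma=\proj{\Phi}^{\otimes(t-1)/2}\otimes \tfrac{1}{d}\id$ on the remaining copy; invoking the standard fact that $\Tr[M\sigma]\leq \norm{M}_{\infty}$ for any PSD $M$ and density matrix $\sigma$, together with $\mathbb{E}_{\theta}\Tr[\matA_{0^{n}}^{\theta}/d]=1/d$, yields the lower bound $d^{-(t+1)/2}$.

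Combining the two cases, $\norm{\mathbb{E}_{\theta}[(\matA_{0^{n}}^{\theta})^{\otimes t}]}_{\infty}\geq d^{-\lceil t/2\rceil}$, and therefore
\[
\norm{\mathbb{E}_{\theta}\bigl[\Pi^{\theta}\bigr]}_{\infty}\;\geq\;d^{\,t-1-\lceil t/2\rceil}\;=\;d^{\,\lfloor t/2\rfloor-1}\;=\;|\mathcal{X}|^{\lfloor t/2\rfloor-1},
\]
which is the claimed bound. There is no real obstacle here; the entire argument is short, and the only mild subtlety is handling odd $t$, where a pure internal EPR construction on Alice's registers no longer pairs up all $t$ copies and must be supplemented by tracing out the odd one out. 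The reality hypothesis on $\{U_{\theta}\}$ is used exactly once, in the ricochet computation above, and indeed this is where any attempt to generalize to complex $U_{\theta}$ would need to be reconsidered.
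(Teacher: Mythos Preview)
Your proof is correct and follows essentially the same approach as the paper: both reduce to lower-bounding $\|\mathbb{E}_\theta[(\matA_x^\theta)^{\otimes t}]\|_\infty$ via the EPR-pair test state on Alice's registers, using reality of $U_\theta$ to get $\langle\Phi|\matA_x^\theta\otimes\matA_x^\theta|\Phi\rangle=1/d$. The only cosmetic difference is your choice of player measurements $\vec{P}_{i,x}^\theta=\proj{x}$ (requiring a short block-diagonalization remark) versus the paper's even simpler choice $\vec{P}_{i,x}^\theta=\id\cdot[x=y]$ for a fixed $y$, which matches the ``players deterministically guess $y$'' description in Section~\ref{sec:prevtechniques} and makes the players' registers disappear immediately.
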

\begin{proof}
    Fix some element $y \in \mathcal{X}$. For all $i, \theta$, we will define $\vec{P}_{i, x}^\theta$ to be $\id$ if $x = y$ and 0 otherwise. In other words, each player is simply going to guess $x = y$ deterministically. Then Equation~\eqref{eq:pitheta} tells us that:
    \begin{align*}
        \Pi^\theta &= |\mathcal{X}|^{t-1} \cdot \left(\bigotimes_{i = 1}^{t+1} \id \otimes \left(\matA_y^\theta\right)^{\otimes t}\right) \\
        \Rightarrow \norm{\underset{\theta \sim \Theta}{\mathbb{E}} \left[\Pi^\theta\right]}_\infty &= |\mathcal{X}|^{t-1} \cdot \norm{\underset{\theta \sim \Theta}{\mathbb{E}} \left[\left(\matA_y^\theta\right)^{\otimes t}\right]}_\infty.
    \end{align*}
    To finish, it suffices to exhibit some mixed state $\rho$ such that:
    \begin{equation}\label{eq:opnormcounterexamplegoal}
        \Tr{\underset{\theta \sim \Theta}{\mathbb{E}} \left[\left(\matA_y^\theta\right)^{\otimes t}\right] \rho} \geq |\mathcal{X}|^{\lfloor t/2 \rfloor - t}.
    \end{equation}
    Define the bit $b = t \bmod{2}$, and let $$\rho := \left(\bigotimes_{i = 1}^{\lfloor t/2 \rfloor} \proj{\mathsf{EPR}_{\mathcal{X}}}\right) \otimes \left(\frac{1}{|\mathcal{X}|}\id_{\mathcal{X}}\right)^{\otimes b}.$$In other words, we let $\rho$ comprise $\lfloor t/2 \rfloor$ EPR qudits supported on $\mathcal{X}$, and in the event that $t$ is odd we also include one qudit that is maximally mixed on $\mathcal{X}$. Intuitively, we will use the entanglement between each EPR pair to increase the probability that the $t$ measurements specified by $\left\{\matA_x^\theta\right\}$ measurements all output $y$. Indeed, for any particular $\theta$ we have:
    \begin{align*}
        \Tr{\left(\matA_y^\theta\right)^{\otimes t} \rho} &= \left(\Tr{\left(\matA_y^\theta\right)^{\otimes 2} \proj{\mathsf{EPR}_{\mathcal{X}}}}\right)^{\lfloor t/2 \rfloor} \cdot \frac{1}{|\mathcal{X}|^b} \\
        &= \left(\frac{1}{|\mathcal{X}|} \Tr{\matA_y^\theta \overline{{\matA}_y^\theta}}\right)^{\lfloor t/2 \rfloor} \cdot \frac{1}{|\mathcal{X}|^b} \text{ (Corollary~\ref{cor:CIprojector})} \\
        &= \left(\frac{1}{|\mathcal{X}|} \Tr{\matA_y^\theta \matA_y^\theta}\right)^{\lfloor t/2 \rfloor} \cdot \frac{1}{|\mathcal{X}|^b} \text{ ($U_\theta$ has real entries)} \\
        &= |\mathcal{X}|^{-\lfloor t/2 \rfloor - b} \cdot \left(\Tr{U_\theta \proj{y} U_\theta^\dag U_\theta \proj{y} U_\theta^\dag}\right)^{\lfloor t/2 \rfloor} \\
        &= |\mathcal{X}|^{-\lfloor t/2 \rfloor - b} \\
        &= |\mathcal{X}|^{\lfloor t/2 \rfloor - t}.
    \end{align*}
    In the above application of Corollary~\ref{cor:CIprojector}, we are taking $\vec{P} = \matA_y^\theta, \vec{Q} = \overline{\matA_y^\theta}$, and $\Phi$ to be the identity map.
    Now taking expectation over $\theta$ immediately yields Equation~\eqref{eq:opnormcounterexamplegoal}, as desired.
\end{proof}

\subsection{Limitations of Bounding Pairwise Overlaps}\label{sec:tfkwendoftheline}

Here, we show that the framework laid out by~\cite{Tomamichel_2013} of using Theorem \ref{thm:tfkwmain} to bound monogamy games cannot prove a better bound than $1/\sqrt{|\mathcal{X}|}$ (see the statement of~\Cref{thm:tfkwmain}).

\begin{lemma}\label{lemma:tfkwendoftheline}
    If $|\Theta| \geq 2$, then we have $$\underset{\substack{\theta, \theta' \in \Theta\\\theta \neq \theta'}}{\max}\text{ } \underset{x, x' \in \mathcal{X}}{\max} \norm{\sqrt{\matA_x^{\theta}} \sqrt{\matA_{x'}^{\theta'}}}_\infty = \underset{\substack{\theta, \theta' \in \Theta\\\theta \neq \theta'}}{\max}\text{ } \underset{x, x' \in \mathcal{X}}{\max} \norm{\matA_x^{\theta} \matA_{x'}^{\theta'}}_\infty \geq \frac{1}{\sqrt{|\mathcal{X}|}}.$$
\end{lemma}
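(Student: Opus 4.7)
The plan is to exploit the rank-one structure of the measurement operators from cloning games together with a simple averaging argument over the $|\mathcal{X}|^2$ pairs $(x,x')$ arising from two orthonormal bases.

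First, I would reduce both sides of the claimed inequality to an inner product. Recall that in a cloning game the operators $\matA_x^{\theta} = \bar{U}_\theta \proj{x} \bar{U}_\theta^\dag$ are rank-one projectors, and for each fixed $\theta$ the vectors $\{\ket{\psi_x^\theta} := \bar{U}_\theta\ket{x}\}_{x \in \mathcal{X}}$ form an orthonormal basis of the $|\mathcal{X}|$-dimensional Hilbert space on which Alice acts. Since $\matA_x^\theta$ is a projector, we have $\sqrt{\matA_x^\theta} = \matA_x^\theta$, which settles the first equality in the lemma statement. Moreover, a direct expansion gives
\[
\matA_x^\theta \matA_{x'}^{\theta'} \;=\; \langle \psi_x^\theta \mid \psi_{x'}^{\theta'} \rangle \cdot \ket{\psi_x^\theta}\!\bra{\psi_{x'}^{\theta'}},
\]
which is a rank-one operator of operator norm exactly $|\langle \psi_x^\theta \mid \psi_{x'}^{\theta'} \rangle|$ (using Lemma~\ref{lemma:opnormdef}). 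Thus the quantity we need to lower bound is $\max_{\theta \ne \theta'} \max_{x,x'} |\langle \psi_x^\theta \mid \psi_{x'}^{\theta'} \rangle|$.

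The main step is then an averaging argument. I would fix any distinct $\theta, \theta' \in \Theta$ (which exist since $|\Theta| \geq 2$) and compute, using completeness of the basis $\{\ket{\psi_{x'}^{\theta'}}\}_{x'}$:
\[
\sum_{x, x' \in \mathcal{X}} \bigl|\langle \psi_x^\theta \mid \psi_{x'}^{\theta'} \rangle\bigr|^2 \;=\; \sum_{x \in \mathcal{X}} \bra{\psi_x^\theta}\!\Bigl(\sum_{x'} \ket{\psi_{x'}^{\theta'}}\!\bra{\psi_{x'}^{\theta'}}\Bigr)\!\ket{\psi_x^\theta} \;=\; \sum_{x \in \mathcal{X}} \braket{\psi_x^\theta \mid \psi_x^\theta} \;=\; |\mathcal{X}|.
\]
Since this sum has $|\mathcal{X}|^2$ non-negative terms averaging to $1/|\mathcal{X}|$, the maximum of $|\langle \psi_x^\theta \mid \psi_{x'}^{\theta'} \rangle|^2$ over $x,x'$ must be at least $1/|\mathcal{X}|$, and taking square roots yields the claimed bound of $1/\sqrt{|\mathcal{X}|}$.

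There is no real obstacle here: the argument is completely elementary once one observes that the cloning-game operators $\matA_x^\theta$ are rank-one and arise from an orthonormal basis indexed by $x$. The only mild subtlety is to keep in mind that the bound $\|\matA_x^\theta \matA_{x'}^{\theta'}\|_\infty = |\langle \psi_x^\theta | \psi_{x'}^{\theta'}\rangle|$ is an equality rather than just an inequality, which is what makes the averaging argument tight and produces the clean $1/\sqrt{|\mathcal{X}|}$ lower bound. Combined with~\Cref{thm:tfkwmain}, this shows that the~\cite{Tomamichel_2013} methodology can never yield a cloning bound better than roughly $1/|\Theta| + 1/\sqrt{|\mathcal{X}|}$, motivating the need for the new techniques developed in~\Cref{sec:binaryphaseconstruction}.
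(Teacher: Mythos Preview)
Your proof is correct. Both your argument and the paper's are elementary averaging/pigeonhole arguments, but the execution differs. You explicitly use the rank-one structure $\matA_x^\theta = \proj{\psi_x^\theta}$ to identify $\|\matA_x^\theta \matA_{x'}^{\theta'}\|_\infty = |\langle \psi_x^\theta \mid \psi_{x'}^{\theta'}\rangle|$, and then average over all $|\mathcal{X}|^2$ pairs $(x,x')$ using completeness of both bases. The paper instead fixes a single $x'$, picks a unit vector $\ket{\psi}$ in the image of $\matA_{x'}^{\theta'}$, decomposes $\ket{\psi} = \sum_x \matA_x^\theta \matA_{x'}^{\theta'}\ket{\psi}$ into orthogonal pieces, and pigeonholes over only the $|\mathcal{X}|$ values of $x$. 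The paper's argument never invokes rank one --- it only uses that $\{\matA_x^\theta\}_x$ is a PVM --- so it applies verbatim to arbitrary projective measurements, whereas your route is specific to the cloning-game setting where the measurements come from orthonormal bases. In that setting your argument is arguably the more direct one, and it has the minor aesthetic advantage of making the equality $\|\matA_x^\theta \matA_{x'}^{\theta'}\|_\infty = |\langle \psi_x^\theta \mid \psi_{x'}^{\theta'}\rangle|$ explicit (which the paper proves separately as Lemma~\ref{lemma:tfkwtoinnerproduct} for later use).
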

\begin{proof}
    The first equality follows since the measurements are projective, so $\sqrt{\matA_x^\theta} = \matA_x^\theta$. Now for any distinct $\theta, \theta' \in \Theta$, we will show that $$\underset{x, x' \in \mathcal{X}}{\max} \norm{\matA_x^{\theta} \matA_{x'}^{\theta'}}_\infty \geq \frac{1}{\sqrt{|\mathcal{X}|}}.$$
    Indeed, fix any $x' \in \mathcal{X}$ such that $\matA_{x'}^{\theta'}$ is nonzero (such $x'$ exists since $\sum_{x'} \matA_{x'}^{\theta'} = \id$) and consider an arbitrary state $\ket{\psi}$ in the image of $\matA_{x'}^{\theta'}$. Then we have:
    \begin{align*}
        \ket{\psi} &= \sum_{x \in \mathcal{X}} \matA_x^\theta \ket{\psi} \\
        &= \sum_{x \in \mathcal{X}} \matA_x^\theta \matA_{x'}^{\theta'} \ket{\psi} \\
        \Rightarrow 1 &= \norm{\sum_{x \in \mathcal{X}} \matA_x^\theta \matA_{x'}^{\theta'} \ket{\psi}}_2^2.
    \end{align*}
    For each $x \in \mathcal{X}$, let $\ket{\psi_x} = \matA_x^\theta \matA_{x'}^{\theta'} \ket{\psi}$, where $\ket{\psi_x}$ may not be normalized. Note for any $x \neq y$ that $\braket{\psi_x|\psi_y} = \braket{\psi|\matA_{x'}^{\theta'} \matA_x^\theta \matA_y^\theta \matA_{x'}^{\theta'}|\psi} = 0$, since $\matA_x^\theta \matA_y^\theta = 0$. Hence the $\ket{\psi_x}$'s are mutually orthogonal, implying that:
    \begin{align*}
        1 &= \norm{\sum_{x \in \mathcal{X}} \ket{\psi_x}}_2^2 \\
        &= \sum_{x \in \mathcal{X}} \norm{\ket{\psi_x}}_2^2.
    \end{align*}
    Hence there exists $x \in \mathcal{X}$ such that $\norm{\ket{\psi_x}}_2^2 \geq 1/|\mathcal{X}| \Rightarrow \norm{\ket{\psi_x}}_2 \geq 1/\sqrt{|\mathcal{X}|}$. For this $x$, we have:
    \begin{align*}
        \frac{1}{\sqrt{|\mathcal{X}|}} &\leq \norm{\ket{\psi_x}}_2 \\
        &= \norm{\matA_x^\theta \matA_{x'}^{\theta'} \ket{\psi}}_2 \\
        &\leq \norm{\matA_x^\theta \matA_{x'}^{\theta'}}_\infty,
    \end{align*}
    as desired.
\end{proof}

\subsection{Monogamy Bounds from Binary Phase States}\label{sec:salting}

We just showed in~\Cref{sec:tfkwendoftheline} that using~\Cref{thm:tfkwmain} to bound a monogamy game with $\mathcal{X} = \bit^n$, we can only hope to prove a bound of $2^{-n/2}$. On the other hand, no existing analyses saturate this bound: the best known upper bound for the coset monogamy game is $O(2^{-n/4})$, due to~\cite{schleppy2025winning}\footnote{Previous work~\cite{schleppy2024optimal} proved an upper bound of $O(2^{-n/2})$ in a setting where the cloner $\Phi$ is restricted to splitting the state as is into two equal-sized halves, sending one to Bob and the other to Charlie. We cite $O(2^{-n/4})$ as the state of the art, as we are interested in games where the cloner $\Phi$ is unrestricted.}, and the BB84 monogamy game attains value $\approx 2^{-0.228n}$.

In this section, we close this gap. Specifically, we will show that a variant of our binary phase construction used in~\Cref{sec:binaryphaseconstruction} satisfies $\omega(\mathsf{G}) \leq 2^{-n/2 + o(n)}$, and that this can be shown following the~\cite{Tomamichel_2013} methodology and using~\Cref{thm:tfkwmain}. For details on how exactly our results in this section differ from those in~\Cref{sec:binaryphaseconstruction}, we refer the reader to~\Cref{remark:saltcomparison}.

\begin{lemma}\label{lemma:tfkwtoinnerproduct}
    Suppose we have $\matA_x^\theta = \mathsf{V}_\theta \proj{x} \mathsf{V}_\theta^\dag$ for some unitaries $\mathsf{V}_\theta$.
    Then for any $x, x' \in \mathcal{X}$ and $\theta, \theta' \in \Theta$, we have $$\norm{\sqrt{\matA_x^\theta} \sqrt{\matA_{x'}^{\theta'}}}_\infty = \norm{\matA_x^\theta \matA_{x'}^{\theta'}}_\infty = \left|\braket{x|\mathsf{V}_\theta^\dag \mathsf{V}_{\theta'}|x'}\right|.$$
\end{lemma}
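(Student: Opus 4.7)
The plan is to compute both equalities directly by exploiting the rank-one structure of $\matA_x^\theta \matA_{x'}^{\theta'}$.

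First, observe that $\matA_x^\theta = \mathsf{V}_\theta \proj{x} \mathsf{V}_\theta^\dag$ is a projector, since $\mathsf{V}_\theta$ is unitary and $\proj{x}$ is a rank-one projector. Therefore $\sqrt{\matA_x^\theta} = \matA_x^\theta$, which immediately gives the first equality $\norm{\sqrt{\matA_x^\theta}\sqrt{\matA_{x'}^{\theta'}}}_\infty = \norm{\matA_x^\theta \matA_{x'}^{\theta'}}_\infty$.

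For the second equality, expand the product and collect the scalar factor:
\begin{align*}
\matA_x^\theta \matA_{x'}^{\theta'} &= \mathsf{V}_\theta \ket{x}\!\bra{x} \mathsf{V}_\theta^\dag \mathsf{V}_{\theta'} \ket{x'}\!\bra{x'} \mathsf{V}_{\theta'}^\dag \\
&= \braket{x | \mathsf{V}_\theta^\dag \mathsf{V}_{\theta'} | x'} \cdot \bigl( \mathsf{V}_\theta \ket{x}\bigr)\bigl(\bra{x'} \mathsf{V}_{\theta'}^\dag\bigr).
\end{align*}
This is a (scalar multiple of a) rank-one operator. By Lemma~\ref{lemma:opnormdef}, the operator norm of a rank-one operator $\alpha\, \ket{u}\!\bra{v}$ equals $|\alpha| \cdot \norm{\ket u}_2 \cdot \norm{\ket v}_2$. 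Since $\mathsf{V}_\theta, \mathsf{V}_{\theta'}$ are unitary, both $\mathsf{V}_\theta\ket{x}$ and $\mathsf{V}_{\theta'}\ket{x'}$ are unit vectors, yielding
\[
\norm{\matA_x^\theta \matA_{x'}^{\theta'}}_\infty = \bigl|\braket{x | \mathsf{V}_\theta^\dag \mathsf{V}_{\theta'} | x'}\bigr|,
\]
as claimed.

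There is no real obstacle here --- the proof is essentially a direct calculation once one recognizes that a product of two rank-one projectors collapses to a rank-one operator whose leading coefficient is exactly the overlap between the two pure states defining the projectors.
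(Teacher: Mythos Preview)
Your proof is correct and takes essentially the same approach as the paper: both use that $\matA_x^\theta$ is a projector for the first equality, and both exploit the rank-one structure together with Lemma~\ref{lemma:opnormdef} for the second. The only cosmetic difference is that the paper first squares the norm via $\norm{\matA_x^\theta \matA_{x'}^{\theta'}}_\infty^2 = \norm{\matA_x^\theta \matA_{x'}^{\theta'}(\matA_x^\theta \matA_{x'}^{\theta'})^\dag}_\infty$ and then reduces to a trace, whereas you identify the rank-one factorization of $\matA_x^\theta \matA_{x'}^{\theta'}$ directly---your route is slightly more streamlined.
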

\begin{proof}
    Note firstly that the $\ell_\infty$ norm of any rank 1 Hermitian PSD matrix is equal to its trace (see Lemma~\ref{lemma:opnormdef}). Bearing this in mind, we have:
    \begin{align*}
        \norm{\matA_x^\theta \matA_{x'}^{\theta'}}_\infty^2 &= \norm{\matA_x^\theta \matA_{x'}^{\theta'} \matA_{x'}^{\theta'\dag} \matA_x^{\theta\dag}}_\infty \\
        &= \norm{\mathsf{V}_\theta \proj{x} \mathsf{V}_{\theta}^\dag \cdot \mathsf{V}_{\theta'} \proj{x'} \mathsf{V}_{\theta'}^\dag \cdot \mathsf{V}_{\theta'} \proj{x'} \mathsf{V}_{\theta'}^\dag \cdot \mathsf{V}_\theta \proj{x} \mathsf{V}_{\theta}^\dag}_\infty \\
        &= \mathrm{Tr} \left[\mathsf{V}_\theta \proj{x} \mathsf{V}_{\theta}^\dag \cdot \mathsf{V}_{\theta'} \proj{x'} \mathsf{V}_{\theta'}^\dag \cdot \mathsf{V}_{\theta'} \proj{x'} \mathsf{V}_{\theta'}^\dag \cdot \mathsf{V}_\theta \proj{x} \mathsf{V}_{\theta}^\dag\right] \\
        &= \left|\braket{x|\mathsf{V}_\theta^\dag \mathsf{V}_{\theta'}|x'}\right|^2 \cdot \mathrm{Tr}\left[\mathsf{V}_\theta \ket{x} \bra{x'} \mathsf{V}_{\theta'}^\dag \cdot \mathsf{V}_{\theta'} \ket{x'}\bra{x} \mathsf{V}_{\theta}^\dag\right] \\
        &= \left|\braket{x|\mathsf{V}_\theta^\dag \mathsf{V}_{\theta'}|x'}\right|^2,
    \end{align*}
    which implies the conclusion.
\end{proof}

\noindent
We now describe our PRF-based construction. We stress that we only require the PRF to be secure against classical adversaries, and we do not assume that Bob and Charlie are computationally bounded. The reasons for this will become clear later, and are summarized in~\Cref{remark:classicalsecurityofprf}. Some other helpful remarks on this construction and its analysis can be found in~\Cref{remark:saltcomparison}.

\begin{construction}\label{construction:PRS-monogamy}
    Let $\mathfrak{F} = \left\{F_k: \left\{0, 1\right\}^{m+n} \rightarrow \bit\right\}_{k \in \bit^\secp}$ be a PRF family. (Here, $m := m(\secp)$ and $n := n(\secp)$ should be thought of as small polynomials in the security parameter $\secp$ e.g. $\secp^{0.1}$.) For any $k \in \bit^\secp$ and $\theta \in \bit^m$, define the ``salted'' function $f_{k, s}: \bit^n \rightarrow \bit$ by $f_{k, \theta}(x) = F_k(\theta || x)$.
    
    Then for any $k \in \bit^\secp$, we define the monogamy game $\mathsf{G}_{\mathfrak{F}, k}$ as follows:
    \begin{itemize}
        \item The Hilbert space $\mathcal{H}_{\reg A}$ is $\mathcal{C}^{2^n}$.
        \item The set of questions $\Theta$ is $\bit^m$.
        \item The set of answers $\mathcal{X}$ is $\bit^n$.
        \item For any $\theta \in \bit^m$ and $x \in \bit^n$, we define the following:
        \begin{align*}
            \mathsf{V}_\theta &= \mathsf{U}_{f_{k, \theta}} \mathsf{H}^{\otimes n},\text{ and} \\
            \vec{A}_x^\theta &= \mathsf{V}_\theta \proj{x} \mathsf{V}_\theta^\dag,
        \end{align*}
        where $\mathsf{U}_{f_{k, \theta}}$ is the phase unitary defined in Section \ref{sec:phaseunitarydefinition}.
    \end{itemize}
\end{construction}
\begin{remark}
    Our use of PRF security is already quite unconventional; the PRF key $k$ should be thought of here as a public parameter that is known to all parties (including Bob and Charlie) before the monogamy game commences.
\end{remark}

\begin{remark}
    At first glance, this ``salting'' construction appears unnatural; it would be much more natural to consider a PRF family $\left\{F_k: \bit^{n} \rightarrow \bit\right\}_{k \in \bit^\secp}$, set $\Theta = \bit^\lambda$, and set $\mathsf{V}_\theta = \mathsf{U}_{F_\theta}\mathsf{H}^{\otimes n}$ (where $\theta \in \bit^\lambda$ is the PRF key).

    The problem with this is that we will be analyzing this construction using~\Cref{thm:tfkwmain}, which considers the \emph{worst-case} overlap across different $\theta, \theta' \in \bit^\lambda$, while PRF security would only give us a ``with high probability'' guarantee with respect to $\theta$, which is insufficient for us.
    
    To remedy this, we salt the PRF so that the ``with high probability'' guarantee is absorbed into the setup phase of the construction. In other words, this can be thought of as a ``randomized monogamy game'' (where the new randomization occurs during the setup phase). We can now obtain the desired worst-case overlap bounds using simple concentration bounds, as we will see next. %
\end{remark}
\noindent
Our starting point to analyze Construction \ref{construction:PRS-monogamy} is the following lemma:

\begin{lemma}[Essentially \cite{booleanfunctionnotes}, Exercise 5.8]\label{lemma:randomfunctionoverlap}
    Let $F: \left\{0, 1\right\}^{m+n} \rightarrow \left\{-1, 1\right\}$ be a random function. For any $s \in \left\{0, 1\right\}^m$, define $f_s: \left\{0, 1\right\}^n \rightarrow \left\{-1, 1\right\}$ by $f_s(u) = F(s, u)$. Then with probability $1 - O(2^{-n})$ over the randomness of $F$, we have $$\max_{r \neq s} \max_{w \in \left\{0, 1\right\}^n} \left|\EE_u \left[(-1)^{\langle w, u \rangle} f_r(u) f_s(u) \right]\right| \leq 2 \cdot 2^{-n/2}\sqrt{m+n}.$$
\end{lemma}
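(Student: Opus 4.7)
The plan is to observe that the quantity inside the absolute value is an average of $2^n$ independent, mean-zero, bounded random variables, and then apply Hoeffding's inequality followed by a union bound over the relevant parameters.

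First I would fix some $r \neq s \in \{0,1\}^m$ and $w \in \{0,1\}^n$. For each $u \in \{0,1\}^n$, define the random variable $X_u = (-1)^{\langle w, u\rangle} f_r(u) f_s(u) \in \{-1, 1\}$. The key observation is that the values $F(r, \cdot)$ and $F(s, \cdot)$ are independent (because $r \neq s$ and $F$ is uniform), so for each fixed $u$ the product $f_r(u) f_s(u)$ is uniformly distributed in $\{-1,1\}$, whence $\EE[X_u] = 0$. Moreover, the random variables $\{X_u\}_{u \in \{0,1\}^n}$ are mutually independent, because the $2 \cdot 2^n$ values $\{F(r,u), F(s,u) : u \in \{0,1\}^n\}$ are jointly independent.

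Next I would apply Hoeffding's inequality to the average $\EE_u[X_u] = 2^{-n} \sum_u X_u$. Since each $X_u \in [-1,1]$, Hoeffding gives
\[
\Pr\!\left[ \left| \EE_u[X_u] \right| > t \right] \;\leq\; 2\exp\!\left(-2^{n-1}\, t^2\right).
\]
Plugging in $t = 2 \cdot 2^{-n/2}\sqrt{m+n}$ yields $2^{n-1} t^2 = 2(m+n)$, so the failure probability for this fixed triple $(r,s,w)$ is at most $2\exp(-2(m+n))$.

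Finally I would union bound. There are fewer than $2^{2m}$ ordered pairs $(r,s)$ with $r \neq s$, and $2^n$ choices of $w$, giving a total failure probability of at most
\[
2 \cdot 2^{2m+n} \exp(-2(m+n)) \;=\; 2 \cdot \exp\!\bigl((2m+n)\ln 2 - 2(m+n)\bigr).
\]
Since $2\ln 2 < 2$, the exponent above is at most $-n(2 - \ln 2) - m(2 - 2\ln 2) \leq -n \ln 2$ for all $n, m \geq 0$, so the total failure probability is $O(2^{-n})$, as claimed.

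There is no serious obstacle here; the only mildly delicate point is making sure the constant $2$ in the threshold is compatible with the $O(2^{-n})$ failure probability after absorbing the $2^{2m}\cdot 2^n$ factor from the union bound, which works out cleanly because $2 - \ln 2 > \ln 2$.
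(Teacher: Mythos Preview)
Your proof is correct and follows essentially the same approach as the paper: fix $r\neq s$ and $w$, observe that the summands are i.i.d.\ uniform $\{-1,1\}$ variables, apply Hoeffding to get a failure probability of $2\exp(-2(m+n))$, and then union bound over the $2^{2m+n}$ choices of $(r,s,w)$. The only difference is that you spell out the final inequality $2^{2m+n}\exp(-2(m+n)) = O(2^{-n})$ a bit more explicitly than the paper does.
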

\begin{proof}
    We will first argue for any fixed $r, s, w$ then take a union bound at the end. If we let $G(u) = f_r(u)f_s(u) = F(r, u)F(s, u)$, it is clear that $G$ is itself a random function from $\left\{0, 1\right\}^n \rightarrow \left\{-1, 1\right\}$ (noting that we get independence because $r \neq s$). Hence we just want to bound $$\left|\EE_u \left[(-1)^{\langle w, u \rangle} G(u)\right]\right|.$$
    For each $u$, $(-1)^{\langle w, u \rangle}G(u)$ is an independent and uniformly random sample from $\left\{-1, 1\right\}$, so this quantity can be bounded with a straightforward Chernoff bound. Indeed, Hoeffding's inequality tells us that:
    \begin{align*}
        \Pr\left[\left|\EE_u \left[(-1)^{\langle w, u \rangle} G(u)\right]\right| > 2 \cdot 2^{-n/2}\sqrt{m+n}\right] &\leq 2 \exp\left(\frac{-4 \cdot 2^n \cdot (m+n)}{2^{n+1}}\right) \\
        &= 2\exp(-2(m+n)).
    \end{align*}
    Taking a union bound over $2^m$ choices of $r$, $2^m$ choices of $s$, and $2^n$ choices of $w$ implies that:
    \begin{align*}
        \Pr\left[\max_{r \neq s} \max_{w \in \left\{0, 1\right\}^n} \left|\EE_u \left[(-1)^{\langle w, u \rangle} G(u)\right]\right| > 2 \cdot 2^{-n/2}\sqrt{m+n}\right] &\leq 2^{2m+n} \cdot 2\exp(-2(m+n)) \\ 
        &= O(2^{-n}).
    \end{align*}
\end{proof}

\begin{corollary}\label{cor:saltedphasestate}
    Assume that the PRF family $\mathfrak{F}$ is $(2^{m+n}, \epsilon(\secp))$-classically secure i.e. a classical distinguisher that runs in time $\poly(2^{m+n})$ can only distinguish a function sampled from $\mathfrak{F}$ from a truly random function with advantage $\leq \epsilon(\secp)$. Then with probability $1 - O(2^{-n}) - \epsilon(\secp)$ over the randomness of $k \gets \bit^\secp$, we have $$\max_{r \neq s} \max_{w \in \left\{0, 1\right\}^n} \left|\EE_u \left[(-1)^{\langle w, u \rangle + f_{k, r}(u) + f_{k, s}(u)} \right]\right| \leq 2 \cdot 2^{-n/2}\sqrt{m+n}.$$
\end{corollary}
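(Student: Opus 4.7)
The plan is to reduce this to Lemma~\ref{lemma:randomfunctionoverlap} via a standard PRF-to-random-function hybrid argument. Specifically, I want to argue that if the conclusion failed with probability substantially more than $O(2^{-n}) + \epsilon(\secp)$ over $k \gets \bit^\secp$, then we could build an efficient (in the sense of running in time $\poly(2^{m+n})$) classical distinguisher against the PRF, contradicting the assumed $(2^{m+n}, \epsilon(\secp))$-security.

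First, I would define the following classical distinguisher $\algo{D}$: given oracle access to a function $F: \bit^{m+n} \to \bit$, it queries $F$ on \emph{all} $2^{m+n}$ inputs (this is the reason we need security against distinguishers running in time $\poly(2^{m+n})$), computes for every triple $(r, s, w) \in \bit^m \times \bit^m \times \bit^n$ with $r \neq s$ the quantity $\left|\EE_u[(-1)^{\langle w, u \rangle + F(r, u) + F(s, u)}]\right|$ by brute-force averaging over the $2^n$ values of $u$, and outputs $1$ if and only if the maximum over these triples exceeds the threshold $2 \cdot 2^{-n/2}\sqrt{m+n}$. This is a classical polynomial-time (in $2^{m+n}$) computation.

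Next, I would analyze $\algo{D}$ under the two relevant distributions over $F$. When $F$ is a truly uniformly random function, Lemma~\ref{lemma:randomfunctionoverlap} (after the standard rewriting $(-1)^{f(u)} \leftrightarrow f(u) \in \{-1, 1\}$ and absorbing the $(-1)^{\langle w, u \rangle}$ factor) implies $\Pr[\algo{D}^F = 1] = O(2^{-n})$. When $F = F_k$ for $k \gets \bit^\secp$, the event $\algo{D}^{F_k} = 1$ is exactly the event that the corollary's conclusion fails for that choice of $k$. By the PRF security assumption,
\[
\Bigl|\Pr_{k}[\algo{D}^{F_k} = 1] - \Pr_F[\algo{D}^F = 1]\Bigr| \leq \epsilon(\secp),
\]
so $\Pr_k[\algo{D}^{F_k} = 1] \leq O(2^{-n}) + \epsilon(\secp)$, which is precisely the desired failure probability bound.

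I do not expect any real obstacle here; this is a textbook hybrid step. The only point that warrants a sentence of care is the efficiency accounting: the distinguisher must evaluate $F$ on all $2^{m+n}$ points and then compute $2^{2m+n}$ averages over $2^n$ terms each, which is $\poly(2^{m+n})$ arithmetic operations and therefore lies within the time bound granted by the $(2^{m+n}, \epsilon(\secp))$-classical security hypothesis. One should also note that classical (rather than quantum) security suffices because the reduction never makes superposition queries -- it only evaluates $F$ pointwise -- which is exactly the motivation hinted at in Remark~\ref{remark:classicalsecurityofprf}.
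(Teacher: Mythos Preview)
Your proposal is correct and matches the paper's proof essentially line for line: the paper defines the same brute-force distinguisher that iterates over all $r,s,w,u$, outputs $1$ iff the maximum exceeds the threshold, invokes Lemma~\ref{lemma:randomfunctionoverlap} to bound the truly-random case by $O(2^{-n})$, and then applies PRF security to conclude. Your added remarks on the running-time accounting and on why classical security suffices are accurate and align with the paper's own observations.
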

\begin{proof}
    Consider the following PRF distinguisher given oracle access to some function $F$: it simply iterates over all $r, s, w, u$ and computes $$\max_{r \neq s} \max_{w \in \left\{0, 1\right\}^n} \left|\EE_u \left[(-1)^{\langle w, u \rangle + F(r, u) + F(s, u)} \right]\right|,$$and outputs 1 if the result is $> 2 \cdot 2^{-n/2}\sqrt{m+n}$. This distinguisher runs in time $\poly(2^{m+n})$. By Lemma~\ref{lemma:randomfunctionoverlap}, it outputs 1 given a random function with probability at most $O(2^{-n})$ (noting that the outputs of $F$ are in $\bit$, so the outputs of $(-1)^{F(\cdot)}$ are in $\left\{-1, 1\right\}$ as in Lemma~\ref{lemma:randomfunctionoverlap}). Hence by PRF security, it outputs 1 given a function sampled from $\mathfrak{F}$ with probability at most $O(2^{-n}) + \epsilon(\secp)$. The conclusion follows.
\end{proof}

Note that a PRF with this security guarantee can be instantiated assuming sub-exponentially secure PRFs since $\secp$ is a large polynomial in $m+n$. With this corollary, we can prove an upper bound on the value of our monogamy game:

\begin{theorem}\label{thm:saltedphasestatemain}
    Assume (as in Corollary \ref{cor:saltedphasestate}) that the PRF family $\mathfrak{F}$ is $(2^{m+n}, \epsilon(\secp))$-classically secure i.e. a classical distinguisher that runs in time $\poly(2^{m+n})$ can only distinguish a function sampled from $\mathfrak{F}$ from a truly random function with advantage $\leq \epsilon(\secp)$. Then with probability $1 - O(2^{-n}) - \epsilon(\secp)$ over the randomness of $k \gets \bit^\secp$, we have:
    $$\omega(\mathsf{G}_{\mathfrak{F}, k}) \leq O(2^{-m} + 2^{-n/2} \sqrt{m+n}).$$
\end{theorem}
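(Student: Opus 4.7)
The plan is to apply the Tomamichel--Fehr--Kaniewski--Wehner bound (Theorem~\ref{thm:tfkwmain}) directly to $\mathsf{G}_{\mathfrak{F},k}$ and reduce everything to estimating the maximum pairwise overlap $\max_{\theta\neq\theta'}\max_{x,x'}\|\matA_x^{\theta}\matA_{x'}^{\theta'}\|_\infty$. Since $\matA_x^\theta = \mathsf{V}_\theta\proj{x}\mathsf{V}_\theta^\dag$ with a unitary $\mathsf{V}_\theta$, Lemma~\ref{lemma:tfkwtoinnerproduct} rewrites this overlap as $|\langle x|\mathsf{V}_\theta^\dag\mathsf{V}_{\theta'}|x'\rangle|$, so the whole task collapses to an elementary calculation of a single matrix entry of $\mathsf{V}_\theta^\dag\mathsf{V}_{\theta'}$.

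Unpacking $\mathsf{V}_\theta = \mathsf{U}_{f_{k,\theta}}\mathsf{H}^{\otimes n}$ (and noting that $\mathsf{U}_{f_{k,\theta}}$ is diagonal with $\pm 1$ entries, hence self-inverse), I would expand
\[
\langle x|\mathsf{V}_\theta^\dag\mathsf{V}_{\theta'}|x'\rangle \;=\; \langle x|\mathsf{H}^{\otimes n}\mathsf{U}_{f_{k,\theta}}\mathsf{U}_{f_{k,\theta'}}\mathsf{H}^{\otimes n}|x'\rangle \;=\; \mathbb{E}_{u\in\{0,1\}^n}\bigl[(-1)^{\langle x\oplus x',u\rangle + f_{k,\theta}(u)+f_{k,\theta'}(u)}\bigr].
\]
Setting $w := x\oplus x'$, this is precisely the quantity controlled by Corollary~\ref{cor:saltedphasestate}. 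Thus with probability at least $1 - O(2^{-n}) - \epsilon(\secp)$ over $k\sim\{0,1\}^\secp$, the maximum pairwise overlap is simultaneously bounded over \emph{all} pairs $\theta\neq\theta'$ and all $x,x'$ by $2\cdot 2^{-n/2}\sqrt{m+n}$.

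Finally I would plug this overlap bound into Theorem~\ref{thm:tfkwmain} with $|\Theta|=2^m$:
\[
\omega(\mathsf{G}_{\mathfrak{F},k}) \;\leq\; \frac{1}{2^m} + \frac{2^m-1}{2^m}\cdot 2\cdot 2^{-n/2}\sqrt{m+n} \;=\; O\!\left(2^{-m} + 2^{-n/2}\sqrt{m+n}\right),
\]
holding on the same good event for $k$. The $1/|\Theta|$ term in~\Cref{thm:tfkwmain} is precisely why we need the salting space $\Theta=\{0,1\}^m$ to be large (ideally $m\gg n/2$): without salting, the naive PRF-seed construction would only yield a high-probability overlap bound \emph{per pair} $\theta,\theta'$, whereas Theorem~\ref{thm:tfkwmain} requires a worst-case bound over all pairs. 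Salting absorbs the randomness into a single PRF key $k$ and lets a union bound over $2^{2m+n}$ choices of $(r,s,w)$ go through cleanly.

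There is no real obstacle here beyond bookkeeping; the only substantive steps are the matrix-element calculation (which is direct) and verifying that the PRF-replaced distinguisher in the proof of Corollary~\ref{cor:saltedphasestate} indeed runs in time $\mathsf{poly}(2^{m+n})$, which it does since it enumerates over $2^{2m+n}$ triples and $2^n$ values of $u$. Choosing a sub-exponentially (classically) secure PRF with security parameter $\secp$ polynomially larger than $m+n$ suffices to make $\epsilon(\secp)$ negligible, completing the argument.
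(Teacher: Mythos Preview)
Your proposal is correct and follows essentially the same route as the paper's own proof: apply Theorem~\ref{thm:tfkwmain} with $|\Theta|=2^m$, use Lemma~\ref{lemma:tfkwtoinnerproduct} to reduce the pairwise overlap to $|\langle x|\mathsf{V}_\theta^\dag\mathsf{V}_{\theta'}|x'\rangle|$, expand this as a $\pm 1$ average over $u\in\{0,1\}^n$, and invoke Corollary~\ref{cor:saltedphasestate}. Your additional remarks explaining why salting is necessary (to convert a per-pair high-probability bound into the worst-case bound that Theorem~\ref{thm:tfkwmain} demands) accurately capture the paper's motivation for the construction.
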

\begin{proof}
    By Theorem \ref{thm:tfkwmain} and the analysis preceding it, we have:
    \begin{align*}
        \omega(\mathsf{G}_{\mathfrak{F}, k}) &\leq 2^{-m} + (1 - 2^{-m}) \cdot \underset{\substack{\theta, \theta' \in \bit^m\\\theta \neq \theta'}}{\max}\text{ } \underset{x, x' \in \bit^n}{\max} \norm{\sqrt{\vec A_x^{\theta}} \sqrt{\vec A_{x'}^{\theta'}}}_\infty.
    \end{align*}
    Hence it suffices to show that $$\underset{\substack{\theta, \theta' \in \bit^m\\\theta \neq \theta'}}{\max}\text{ } \underset{x, x' \in \bit^n}{\max} \norm{\sqrt{\vec A_x^{\theta}} \sqrt{\vec A_{x'}^{\theta'}}}_\infty \leq O(2^{-n/2}\sqrt{m+n}).$$
    Indeed, we have:
    \begin{align*}
        \norm{\sqrt{\vec A_x^{\theta}} \sqrt{\vec A_{x'}^{\theta'}}}_\infty &= \left|\braket{x|\mathsf{V}_\theta^\dag \mathsf{V}_{\theta'}|x'}\right| \text{ (Lemma \ref{lemma:tfkwtoinnerproduct})} \\
        &= \left|\braket{x|\mathsf{H}^{\otimes n} \mathsf{U}_{f_{k, \theta}} \mathsf{U}_{f_{k, \theta'}} \mathsf{H}^{\otimes n}|x'}\right| \\
        &= \frac{1}{2^n} \left|\sum_{y, y' \in \bit^n} (-1)^{\langle x, y \rangle + \langle x', y' \rangle} \braket{y|\mathsf{U}_{f_{k, \theta}} \mathsf{U}_{f_{k, \theta'}}|y'} \right| \\
        &= \frac{1}{2^n} \left|\sum_{y, y' \in \bit^n} (-1)^{\langle x, y \rangle + \langle x', y' \rangle + f_{k, \theta}(y) + f_{k, \theta'}(y')} \braket{y|y'} \right| \\
        &= \left|\underset{y \gets \bit^n}{\EE} \left[(-1)^{\langle x+x', y \rangle + f_{k, \theta}(y) + f_{k, \theta'}(y)}\right]\right| \\
        \Rightarrow \underset{\substack{\theta, \theta' \in \bit^m\\\theta \neq \theta'}}{\max}\text{ } \underset{x, x' \in \bit^n}{\max} \norm{\sqrt{\vec A_x^{\theta}} \sqrt{\vec A_{x'}^{\theta'}}}_\infty &= \underset{\substack{\theta, \theta' \in \bit^m\\\theta \neq \theta'}}{\max}\text{ } \underset{x, x' \in \bit^n}{\max}\left|\underset{y \gets \bit^n}{\EE} \left[(-1)^{\langle x+x', y \rangle + f_{k, \theta}(y) + f_{k, \theta'}(y)}\right]\right| \\
        &\leq O(2^{-n/2} \sqrt{m+n}),
    \end{align*}
    with probability at least $1 - O(2^{-n}) - \epsilon(\secp)$ over the randomness of $k$ by Corollary \ref{cor:saltedphasestate} (noting that we can consolidate the $\max$ over $x$ and $x'$ into a single $\max$ over $w := x + x'$).
\end{proof}

\begin{remark}\label{remark:classicalsecurityofprf}
    Our use of PRF security is only to prove the concentration bound in Corollary \ref{cor:saltedphasestate}, and hence we only need security against classical adversaries. Once we have this bound, we are applying Theorem \ref{thm:saltedphasestatemain} which holds against computationally unbounded Bob and Charlie. Therefore, although our construction is based on a cryptographic assumption, it is secure even against computationally unbounded adversaries Bob and Charlie.
\end{remark}

\begin{remark}\label{remark:saltcomparison}
    We make some other comparisons between this construction and other constructions:
    \begin{itemize}
        \item Compared with the very similar construction (Construction~\ref{const:PRS-enc}) based on binary phase states in~\Cref{sec:binaryphaseconstruction}:
        \begin{itemize}
            \item Theorem~\ref{thm:saltedphasestatemain} only establishes a bound of $\tilde{O}(2^{-n/2})$ in the single-copy setting. In comparison, in~\Cref{sec:binaryphaseconstruction} we show a bound of $O_t(2^{-n})$ in the $t$-copy setting.

            \item On the other hand, Theorem~\ref{thm:saltedphasestatemain} holds in a much stronger attack model; the players Bob and Charlie are given the basis $\theta$ in the clear and are computationally unbounded. On the other hand, in~\Cref{sec:binaryphaseconstruction} we restrict the players to each make a single query to $\mathsf{U}_f$.
        \end{itemize}

        \item Compared with the BB84~\cite{Tomamichel_2013, broadbent_et_al:LIPIcs.TQC.2020.4} and coset state~\cite{10.1007/978-3-030-84242-0_20, Culf_2022, schleppy2025winning} constructions, we show a better bound in the same attack model. However, we need to use a cryptographic assumption (sub-exponentially classically secure PRFs).
    \end{itemize}
\end{remark}

\section{Worst-Case to Average-Case Reduction}\label{sec:worsttoav}

In this section, we show that $t \mapsto t+1$ oracular cloning games admit a worst-case to average-case reduction: even the hardest games which are specified by some worst-case unitary $U_w$
can be won by a strategy for the average-case version of the game that involves a Haar-like unitary $U_a$ from an appropriate unitary design, or alternatively from a pseudorandom unitary ensemble. 

We begin with a technical background section on mixed unitary designs; equipped with this machinery, we can then complete the proof of our worst-case to average-case reduction.

\subsection{Preliminary: Mixed Unitary Designs}\label{sec:mixedunitarydesigns}

We first review some relevant background on the vectorization technique.

\paragraph{Vectorization Formalism.} For a linear operator $\boldsymbol{\Lambda} \in \mathrm{L}(\mathbb{C}^d)$, we consider the corresponding vectorization map $\mathrm{vec}: \mathrm{L}(\mathbb{C}^d) \rightarrow (\mathbb{C}^d)^{\ot 2}$ which is defined as follows:

$$
\boldsymbol{\Lambda}  = \sum_{i,j \in [d]} \Lambda_{(i,j)} \ket{i}\hspace{-1mm}\bra{j} \quad \mapsto \quad \mathrm{vec}(\boldsymbol{\Lambda} ) :=\kett{\boldsymbol{\Lambda} } = \sum_{i,j \in [d]} \Lambda_{(i,j)} \ket{i} \otimes \ket{j}.
$$
We are also going to use the so-called ABC-rule~\cite{Mele2024introductiontohaar}: for any linear operators $\vec{A,B,C} \in \mathrm{L}(\mathbb{C}^d)$,
$$
\kett{\vec{ABC}} = (\vec A \otimes \vec C^\intercal) \kett{\vec B}.
$$

We now introduce a 
"mixed variant" of the regular vectorized moment operator~\cite{Mele2024introductiontohaar}.

\begin{definition}[Mixed-Adjoint Moment Operator]
Let $\nu$ be an ensemble of unitary operators over $\mathbb{C}^d$. Then, we define the mixed-adjoint $(p,q)$-moment operator $\algo M_{\nu, \mathsf{adj}}^{(p,q)} : \mathrm{L}(\mathbb{C}^d) \rightarrow \mathrm{L}(\mathbb{C}^d)$ by
$$
\algo M_{\nu, \mathsf{adj}}^{(p,q)}(\vec O) := \underset{U \sim \nu}{\mathbb{E}}\left[(U^{\otimes p} \otimes (U^\dag)^{\otimes q}) \vec O (U^{\otimes p} \otimes  (U^\dag)^{\otimes q})^\dag \right] 
$$
for a linear operator $\vec O \in \linear((\CC^d)^{\otimes (p+q)})$. Similarly, we let $\algo M_{\mathrm{U}(d), \mathsf{adj}}^{(p,q)}$ denote the mixed-adjoint $(p,q)$-moment operator with respect to the Haar measure over the unitary group $\algo{U}_d$.
\end{definition}

It turns out that the mixed-adjoint moment operator allows for a particularly neat characterization of mixed unitary designs introduced in~\Cref{sec:unitary-designs}.
In particular, we show the following equivalence.

\begin{lemma}\label{lem:equiv-mixed-design}
A unitary ensemble $\nu$ over $\mathbb{C}^d$ is a (non-adaptive) unitary $(p,q)$-design if and only if
$$
\underset{U \sim \nu}{\mathbb{E}}\left[U^{\otimes p} \otimes (U^\dag)^{\otimes q}  \otimes  \bar{U}^{\otimes p} \otimes  (U^\intercal)^{\otimes q} \right] = \underset{U \sim \mathrm{U}(d)}{\mathbb{E}}\left[U^{\otimes p} \otimes (U^\dag)^{\otimes q}  \otimes  \bar{U}^{\otimes p} \otimes  (U^\intercal)^{\otimes q}\right].
$$
\end{lemma}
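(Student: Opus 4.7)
The plan is to apply the vectorization map $\mathrm{vec}: \mathrm{L}((\mathbb{C}^d)^{\otimes (p+q)}) \to (\mathbb{C}^d)^{\otimes 2(p+q)}$, which is a linear bijection, to both sides of the $(p,q)$-design condition and invoke the ABC-rule to convert conjugation by $V_U := U^{\otimes p} \otimes (U^\dag)^{\otimes q}$ into multiplication by $V_U \otimes \overline{V_U}$.

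Concretely, I first observe that by the ABC-rule stated just before the lemma, for any linear operator $V \in \mathrm{L}((\mathbb{C}^d)^{\otimes (p+q)})$ and any $\vec{O}$,
$$\mathrm{vec}\!\left(V \vec{O} V^\dag\right) \,=\, \left(V \otimes (V^\dag)^{\intercal}\right) \mathrm{vec}(\vec{O}) \,=\, \left(V \otimes \overline{V}\right)\mathrm{vec}(\vec{O}),$$
using $(V^\dag)^\intercal = \overline{V}$. Setting $V = V_U$, we have $\overline{V_U} = \overline{U}^{\otimes p} \otimes (U^\intercal)^{\otimes q}$, so by linearity of expectation,
$$\mathrm{vec}\!\left(\algo M_{\nu,\mathsf{adj}}^{(p,q)}(\vec{O})\right) \,=\, \underset{U \sim \nu}{\mathbb{E}}\!\left[U^{\otimes p} \otimes (U^\dag)^{\otimes q} \otimes \overline{U}^{\otimes p} \otimes (U^\intercal)^{\otimes q}\right] \mathrm{vec}(\vec{O}),$$
and analogously with $\nu$ replaced by $\mathrm{U}(d)$.

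With this identity in hand, both directions of the equivalence are immediate. For ($\Rightarrow$), if $\nu$ is a unitary $(p,q)$-design then the two moment operators $\algo M_{\nu,\mathsf{adj}}^{(p,q)}$ and $\algo M_{\mathrm{U}(d),\mathsf{adj}}^{(p,q)}$ agree on every $\vec{O}$; vectorizing, the two averaged operators on the right of the displayed identity produce the same output on every vector in the image of $\mathrm{vec}$, and since $\mathrm{vec}$ is surjective, the two averaged operators are equal as matrices. For ($\Leftarrow$), if the averaged operators coincide, applying them to $\mathrm{vec}(\vec{O})$ and inverting the vectorization map gives $\algo M_{\nu,\mathsf{adj}}^{(p,q)}(\vec{O}) = \algo M_{\mathrm{U}(d),\mathsf{adj}}^{(p,q)}(\vec{O})$ for every $\vec{O}$.

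There is no real obstacle here; the only thing to be careful about is bookkeeping the complex conjugates and transposes so that the factor $(V^\dag)^\intercal$ really does simplify to $\overline{V}$ and line up with the $\overline{U}^{\otimes p} \otimes (U^\intercal)^{\otimes q}$ term appearing in the statement.
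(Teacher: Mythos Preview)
Your proposal is correct and follows essentially the same approach as the paper: apply the vectorization map, use the ABC-rule to convert conjugation by $V_U$ into left multiplication by $V_U\otimes\overline{V_U}$, and then invoke bijectivity/surjectivity of $\mathrm{vec}$ to pass between equality of moment operators and equality of the averaged tensor-product operators. Your write-up is slightly more explicit about the converse direction and the identity $(V^\dag)^\intercal=\overline V$, but the argument is the same.
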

\begin{proof} Suppose that $\nu$ is a unitary $(p,q)$-design. Then, for all $\vec O \in \linear((\CC^d)^{\otimes (p+q)})$, it holds that
$$
\algo M_{\nu,\mathsf{adj}}^{(p,q)}(\vec O) =\algo M_{\mathrm{U}(d),\mathsf{adj}}^{(p,q)}(\vec O).
$$
By applying the vectorization $\mathrm{vec}: \mathrm{L}((\CC^d)^{\otimes (p+q)}) \rightarrow ((\CC^d)^{\otimes (p+q)})^{\ot 2}$  on both sides, we get
$$
\kett{\algo M_{\nu,\mathsf{adj}}^{(p,q)}(\vec O)} = \kett{\algo M_{\mathrm{U}(d),\mathsf{adj}}^{(p,q)}(\vec O)}.
$$
By linearity and the $ABC$-rule for $\mathrm{vec}(\cdot)$, this is equivalent to
$$
\underset{U \sim \nu}{\mathbb{E}}\left[U^{\otimes p} \otimes (U^\dag)^{\otimes q}  \otimes  \bar{U}^{\otimes p} \otimes  (U^\intercal)^{\otimes q} \right]\kett{\vec O} = \underset{U \sim \mathrm{U}(d)}{\mathbb{E}}\left[U^{\otimes p} \otimes (U^\dag)^{\otimes q}  \otimes  \bar{U}^{\otimes p} \otimes  (U^\intercal)^{\otimes q} \right] \kett{\vec O}.
$$
Because $\mathrm{vec}(\cdot)$ is a bijection between $\mathrm{L}((\CC^d)^{\otimes (p+q)}) $ and $((\CC^d)^{\otimes (p+q)})^{\ot 2}$, the operators above must be identical on the entire vector space $((\CC^d)^{\otimes (p+q)})^{\ot 2}$. The converse statement can be shown analogously.
\end{proof}

\begin{lemma}\label{lem:t-design-to-mixed}
A unitary $t$-design $\nu$ is a mixed unitary $(p,q)$-design for any $p,q$ with $t=p+q$.
\end{lemma}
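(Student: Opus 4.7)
The plan is to combine the characterization of mixed designs given in Lemma \ref{lem:equiv-mixed-design} with the vectorized formulation of the standard $t$-design property, and to bridge the two via partial transposes on appropriately chosen tensor factors. First, I would invoke Lemma \ref{lem:equiv-mixed-design} to reduce the claim to showing the operator identity
$$
\underset{U \sim \nu}{\mathbb{E}}\left[U^{\otimes p} \otimes (U^\dag)^{\otimes q}  \otimes  \bar{U}^{\otimes p} \otimes  (U^\intercal)^{\otimes q} \right] = \underset{U \sim \mathrm{U}(d)}{\mathbb{E}}\left[U^{\otimes p} \otimes (U^\dag)^{\otimes q}  \otimes  \bar{U}^{\otimes p} \otimes  (U^\intercal)^{\otimes q}\right].
$$
Next I would observe that applying the same vectorization argument as in Lemma \ref{lem:equiv-mixed-design} to the $(t,0)$-design property gives the equivalent identity
$$
\underset{U \sim \nu}{\mathbb{E}}\left[U^{\otimes t} \otimes \bar{U}^{\otimes t}\right] = \underset{U \sim \mathrm{U}(d)}{\mathbb{E}}\left[U^{\otimes t} \otimes \bar{U}^{\otimes t}\right],
$$
which is exactly what the standard unitary $t$-design hypothesis provides, after writing $t=p+q$ and splitting into four tensor blocks $U^{\otimes p} \otimes U^{\otimes q} \otimes \bar{U}^{\otimes p} \otimes \bar{U}^{\otimes q}$.

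The main step is then to transform the standard moment into the mixed moment by applying a fixed (ensemble-independent) linear invertible map on $\mathrm{L}((\mathbb{C}^d)^{\otimes 2t})$ that commutes with the expectation over $U$. Concretely, I would use the elementary identities $(\bar U)^\intercal = U^\dag$ and $U^\intercal = \overline{U^\dag}$, and apply the partial transpose on the tensor legs in positions $p+1,\ldots,t$ (each $U$ becomes $U^\intercal$) and in positions $t+p+1,\ldots,2t$ (each $\bar{U}$ becomes $U^\dag$), followed by the permutation of tensor factors that swaps the second and fourth blocks to match the target order. Because both partial transposition and factor permutation are linear and invertible, they commute with $\mathbb{E}_U[\cdot]$, so applying them to both sides of the vectorized $t$-design identity yields exactly the mixed identity above.

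The step I expect to require the most care is purely bookkeeping: (i) verifying the partial-transpose action on a single leg of a tensor product via the standard identity $(A_1\otimes A_2^\intercal\otimes\cdots\otimes A_n)=\mathsf{PT}_2(A_1\otimes A_2\otimes\cdots\otimes A_n)$ extended bilinearly, and (ii) checking that the composition of partial transposes on the chosen subset of legs with the block permutation produces precisely the operator $U^{\otimes p} \otimes (U^\dag)^{\otimes q}  \otimes  \bar{U}^{\otimes p} \otimes  (U^\intercal)^{\otimes q}$. Once these routine identifications are in place, the conclusion drops out immediately from the commutation of the fixed linear transformation with the expectation, establishing the desired equality and hence (via Lemma \ref{lem:equiv-mixed-design}) that $\nu$ is a non-adaptive mixed $(p,q)$-design.
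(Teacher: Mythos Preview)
Your proposal is correct and follows essentially the same approach as the paper's own proof: reduce via Lemma~\ref{lem:equiv-mixed-design} to the operator identity, then pass between the mixed moment $U^{\otimes p}\otimes (U^\dag)^{\otimes q}\otimes \bar U^{\otimes p}\otimes (U^\intercal)^{\otimes q}$ and the standard vectorized $t$-design moment $U^{\otimes t}\otimes \bar U^{\otimes t}$ using partial transposes on the appropriate legs together with a swap of tensor blocks, all of which commute with the expectation. The paper applies these operations in the reverse direction (mixed $\to$ standard) and phrases the swap as conjugation by $\mathbb{F}_{2,4}$, but the argument is the same.
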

\begin{proof}
Let $t=p+q$. According to \Cref{lem:equiv-mixed-design}, it suffices to show that $\nu$ satisfies
$$
\underset{U \sim \nu}{\mathbb{E}}\left[U^{\otimes p} \otimes (U^\dag)^{\otimes q}  \otimes  \bar{U}^{\otimes p} \otimes  (U^\intercal)^{\otimes q}\right] = \underset{U \sim \mathrm{U}(d)}{\mathbb{E}}\left[U^{\otimes p} \otimes (U^\dag)^{\otimes q}  \otimes  \bar{U}^{\otimes p} \otimes  (U^\intercal)^{\otimes q} \right].
$$
By inserting the partial transpose with respect to the 2nd and 4th system, this is equivalent to
$$
\underset{U \sim \nu}{\mathbb{E}}\left[U^{\otimes p} \otimes \bar{U}^{\otimes q}  \otimes  \bar{U}^{\otimes q} 
\otimes U^{\otimes p} \right]^{\mathsf{T}_{2,4}} = \underset{U \sim \mathrm{U}(d)}{\mathbb{E}}\left[U^{\otimes p} \otimes \bar{U}^{\otimes q}  \otimes  \bar{U}^{\otimes q} 
\otimes U^{\otimes p}\right]^{\mathsf{T}_{2,4}}.
$$
After inserting a $\mathsf{SWAP}$ between the 2nd and 4th system via $\mathbb{F}_{2,4}$, it is also equivalent to showing that
$$
\left[\mathbb{F}_{2,4}^\dag\underset{U \sim \nu}{\mathbb{E}}\left[U^{\otimes p}  \otimes  U^{\otimes q} \otimes  \bar{U}^{\otimes p} \otimes \bar{U}^{\otimes q}  \right]\mathbb{F}_{2,4}\right]^{\mathsf{T}_{2,4}}= \left[\mathbb{F}_{2,4}^\dag\underset{U \sim \mathrm{U}(d)}{\mathbb{E}}\left[U^{\otimes p}  \otimes  U^{\otimes q} \otimes  \bar{U}^{\otimes p} \otimes \bar{U}^{\otimes q}  \right]\mathbb{F}_{2,4}\right]^{\mathsf{T}_{2,4}}.
$$
By assumption, $\nu$ is a unitary $t$-design for $t=p+q$, and hence it holds that 
$$
\underset{U \sim \nu}{\mathbb{E}}\left[U^{\otimes p}  \otimes  U^{\otimes q} \otimes  \bar{U}^{\otimes p} \otimes \bar{U}^{\otimes q}  \right] = \underset{U \sim \mathrm{U}(d)}{\mathbb{E}}\left[U^{\otimes p}  \otimes  U^{\otimes q} \otimes  \bar{U}^{\otimes p} \otimes \bar{U}^{\otimes q}  \right]
$$
which yields the desired equality from before.
\end{proof}

Finally, we show that an (exact) non-adaptive mixed unitary $t$-design is automatically also an (exact) adaptive mixed unitary $t$-design. In the approximate case, this conversion incurs an exponential blow-up.

\begin{theorem}\label{thm:exact-to-adative}
Any exact non-adaptive unitary $t$-design is also an exact adaptive mixed unitary $t$-design.    
\end{theorem}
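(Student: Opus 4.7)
The plan is to reduce the adaptive setting to the non-adaptive one and then invoke \Cref{lem:t-design-to-mixed}. First, by the principle of deferred measurement, any adaptive $t$-query quantum algorithm $\algo A$ can be put into the canonical form $A = V_t W_t V_{t-1} \cdots V_1 W_1 V_0$, where each $V_i$ is a fixed unitary on the workspace (including any control ancillas) and each $W_i$ is a single oracle call, namely $U$, $U^\dag$, or a coherently controlled combination $\Pi_{0,i} \otimes U + \Pi_{1,i} \otimes U^\dag$. Expanding the coherent controls, the acceptance probability $\Pr[1 \leftarrow \algo A^{U,U^\dag}]$ decomposes linearly over ``branches'' $\vec b \in \{0,1\}^t$, each corresponding to a fixed interleaving of $p$ queries to $U$ and $q$ queries to $U^\dag$ with $p+q \leq t$. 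By linearity of the trace and of expectation, it suffices to establish the matching between $\nu$ and the Haar measure branch-by-branch.

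Fix such a branch with parameters $(p,q)$. The next step is to rewrite the branch's contribution into the exact shape that \Cref{lem:equiv-mixed-design} consumes. Applying the $ABC$-rule $\mathrm{vec}(XYZ) = (X \otimes Z^\intercal)\kett{Y}$ iteratively to ``pull'' each of the oracle factors onto its own tensor slot, and absorbing the fixed unitaries $V_0,\dots,V_t$ together with the branch's control projectors into a single fixed operator, I would arrive at an identity of the form
\[
\Pr_\text{branch}\!\bigl[1 \leftarrow \algo A^{U,U^\dag}\bigr]
\;=\; \mathrm{Tr}\!\left[ \vec M \cdot \bigl(U^{\otimes p} \otimes (U^\dag)^{\otimes q}\bigr)\,\rho\,\bigl(U^{\otimes p} \otimes (U^\dag)^{\otimes q}\bigr)^\dag \right],
\]
for a fixed operator $\vec M$ and state $\rho$ depending only on $\algo A$ and on the branch, but not on $U$. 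Morally, this is the familiar ``query parallelization via gate teleportation'' trick: the branch is simulated by a non-adaptive circuit that applies $U^{\otimes p} \otimes (U^\dag)^{\otimes q}$ once, in parallel, on an enlarged Hilbert space.

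With the branch's acceptance probability in this form, the conclusion is essentially immediate. Since $\nu$ is an exact non-adaptive unitary $t$-design and $p+q \leq t$, \Cref{lem:t-design-to-mixed} implies that $\nu$ is a non-adaptive mixed unitary $(p,q)$-design (the boundary case $p+q = t$ is exactly what the lemma proves; the case $p+q < t$ is handled either by padding the branch with trivial $U U^\dag$ pairs to reach $t$ total queries, or by the obvious downward monotonicity of the $t$-design property). The defining property then yields
\[
\underset{U \sim \nu}{\mathbb{E}}\!\left[\bigl(U^{\otimes p} \otimes (U^\dag)^{\otimes q}\bigr)\,\rho\,\bigl(U^{\otimes p} \otimes (U^\dag)^{\otimes q}\bigr)^\dag\right]
\;=\;
\underset{U \sim \mathrm{U}(d)}{\mathbb{E}}\!\left[\bigl(U^{\otimes p} \otimes (U^\dag)^{\otimes q}\bigr)\,\rho\,\bigl(U^{\otimes p} \otimes (U^\dag)^{\otimes q}\bigr)^\dag\right],
\]
and tracing against $\vec M$ and resumming over branches delivers the matching of acceptance probabilities.

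The main technical obstacle I anticipate is the second step: cleanly carrying out the vectorization bookkeeping so as to express a branch with interleaved queries and arbitrary fixed unitaries $V_0,\dots,V_t$ in the ``mixed-moment'' parallel form above. The manipulation is mechanical but requires care in tracking the order of factors and the placement of control projectors; an attractive alternative would be to invoke query parallelization via gate teleportation as a black box from the quantum query complexity literature, which directly converts $p$ adaptive queries to $U$ and $q$ adaptive queries to $U^\dag$ into a non-adaptive circuit applying $U^{\otimes p} \otimes (U^\dag)^{\otimes q}$ once.
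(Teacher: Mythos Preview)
Your proposal is correct and follows essentially the same approach as the paper: reduce the adaptive algorithm to a non-adaptive one via query parallelization (gate teleportation), then invoke \Cref{lem:t-design-to-mixed}. The paper's proof is terser—it directly cites Kretschmer's gate-teleportation lemma to obtain a non-adaptive algorithm $\algo B$ with $\Pr[1 \leftarrow \algo A^{U,U^\dag}] = d^{2(p+q)}\Pr[1 \leftarrow \algo B^{U,U^\dag}]$ and then chains equalities—whereas you are more explicit about branching over coherent controls and handling $p+q<t$; your closing remark that gate teleportation is the clean way to carry out the parallelization is exactly what the paper does.
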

\begin{proof} Let $\nu$ be a non-adaptive unitary $t$-design over $\mathbb{C}^d$. Suppose that $\algo A^{U,U^\dag}(1^{\lceil \log d \rceil})$ is an adaptive $t$-query quantum algorithm that makes $p$ many queries to $U$ and $q$ queries to $U^\dag$, for $U \in \mathrm{U}(d)$ and $t=p+q$.

The idea is to use a standard gate teleportation approach, similar to~\cite{cryptoeprint:2019/1204,DBLP:conf/tqc/Kretschmer21}.
Concretely, we can argue that, for any $U \in \mathrm{U}(d)$, there exists a non-adaptive algorithm $\algo B^{U,U^\dag}(1^{\lceil \log d \rceil})$ that makes $p$ many parallel queries to $U$ and $q$ many parallel queries to $U^\dag$ such that
$$
\Pr\left[ 1 \leftarrow \algo A^{U,U^\dag }(1^{\lceil \log d \rceil})  \right] = d^{2(p+q)} \Pr\left[ 1 \leftarrow \algo B^{U,U^\dag }(1^{\lceil \log d \rceil}) \right].
$$
This essentially follows from~\cite[Lemma 23]{DBLP:conf/tqc/Kretschmer21}, since the non-adaptive query algorithm has access to both $U$ and $U^\dag$.
Because $\nu$ is a non-adaptive unitary $t$-design, we know from \Cref{lem:t-design-to-mixed} that $\nu$ is also mixed unitary $(p,q)$-design. Putting everything together, we get that
\begin{align*}
\Pr\left[ 1 \leftarrow \algo A^{U,U^\dag }(1^{\lceil \log d \rceil}) \, : \, U \sim \nu \right] &= d^{2(p+q)}\Pr\left[ 1 \leftarrow \algo B^{U,U^\dag }(1^{\lceil \log d \rceil}) \, : \, U \sim \nu \right]\\
&= d^{2(p+q)}\Pr\left[ 1 \leftarrow \algo B^{U,U^\dag }(1^{\lceil \log d \rceil}) \, : \, U \sim \mathrm{U}(d) \right]\\
&=\Pr\left[ 1 \leftarrow \algo A^{U,U^\dag }(1^{\lceil \log d \rceil}) \, : \, U \sim \mathrm{U}(d) \right].    
\end{align*}
This proves the claim.
\end{proof}


\subsection{Proof of the Reduction}

Using the technical machinery on mixed unitary designs from the previous section, we can finally prove our worst-case to average-case reduction for cloning games.

\begin{theorem}[Worst-Case to Average-Case Reduction]\label{thm:worst-case-to-average-case}
Let $n,t \in \N$ and let $\nu =\{U_a\}_{a \in \Theta}$ be an ensemble of $n$-qubit unitaries to be specified later. Suppose there exists a quantum strategy 
$$
\mathsf{S}^{\mathsf{avg}} = (\algo H_{\reg{B^{t+1}}}, \Phi_{\reg{A^t \rightarrow B^{t+1}}},\{\vec{P}_{1,x}^{U_a,U_{a}^\dag}\}_{a \in \Theta, \, x \in \bit^n}\,,\,\dots,\{\vec{P}_{t+1,x}^{U_a,U_{a}^\dag}\}_{a \in \Theta, \,x \in \bit^n})$$
for the average-case $t\mapsto t+1$ oracular cloning game $\mathsf{G}_{t \mapsto t+1}^{\mathsf{avg}} = (t,\algo H_{\reg{A^t}},\Theta,\bit^n, \{U_a\}_{a \in \Theta})$, where the $t+1$ players make no more than a total of $q$ many oracle queries to either $U_a$ or $U_a^\dag$ combined, and
$$
\omega_{\mathsf{S}^{\mathsf{avg}}}(\mathsf{G}_{t \mapsto t+1}^{\mathsf{avg}}) = \epsilon.
$$
Then, there exists a quantum strategy (in which the $t+1$ many players make the same number of queries) 
$$
\mathsf{S}^{\mathsf{wst}} = (\algo H_{\reg{\tilde{B}^{t+1}}}, \tilde{\Phi}_{\reg{A^t \rightarrow \tilde{B}^{t+1}}},\{\tilde{\vec{P}}_{1,x}^{V_w,V_{w}^{\dag}}\}_{x \in \bit^n}\,,\,\dots,\{\tilde{\vec{P}}_{t+1,x}^{V_w,V_{w}^{\dag}}\}_{x \in \bit^n})$$
for any $t\mapsto t+1$ oracular cloning game $\mathsf{G}_{t \mapsto t+1}^{\mathsf{wst}} = (t,\algo H_{\reg{A^t}},\Theta',\bit^n, \{V_w\}_{w \in \Theta'})$ in the worst-case (i.e., for any adversarially chosen ensemble of $n$-qubit unitaries $\left\{V_w\right\}_{w \in \Theta'}$), such that:%
\begin{itemize}
    \item If $\nu$ is an exact unitary $r$-design, for $r=t+q$ and $q\in \N$: we will have $$\omega_{\mathsf{S}^{\mathsf{wst}}}(\mathsf{G}_{t \mapsto t+1}^{\mathsf{wst}}) = \epsilon.$$
    \item If $\nu$ is a pseudorandom unitary family with security parameter $\secp$, and the adversaries $(\Phi, \algo P_1, \ldots, \algo P_{t+1})$ are computationally bounded: we will have
    $$\omega_{\mathsf{S}^{\mathsf{wst}}}(\mathsf{G}_{t \mapsto t+1}^{\mathsf{wst}}) \geq \epsilon - \negl(\secp).$$
\end{itemize}

\end{theorem}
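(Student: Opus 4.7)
The plan is to derive the worst-case strategy $\mathsf{S}^{\mathsf{wst}}$ from $\mathsf{S}^{\mathsf{avg}}$ by what one might call an \emph{internal twirl} on the cloner's side. Upon receiving the $t$ copies $(V_w\ket{x})^{\otimes t}$, the new cloner $\tilde{\Phi}$ first samples $a \sim \Theta$, applies $U_a^{\otimes t}$ to obtain $(U_a V_w\ket{x})^{\otimes t}$, then runs the original $\Phi$ on this state, and finally appends a classical copy of $a$ to each outgoing register. Each player's new measurement $\tilde{\vec P}_{i,x}^{V_w,V_w^\dag}$ reads $a$ from their register and emulates the original $\vec P_{i,x}^{\tilde U, \tilde U^\dag}$ for the effective unitary $\tilde U := U_a V_w$: it implements each query to $\tilde U$ as one $V_w$-query followed by an internal application of $U_a$, and each query to $\tilde U^\dag$ as an internal application of $U_a^\dag$ followed by one $V_w^\dag$-query. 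This preserves the total of $q$ oracle queries exactly (in the PRU case $U_a$ is efficiently implementable; in the design case the players are computationally unbounded so the extra internal computation is free).

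With this setup, both winning probabilities take the unified form
\begin{align*}
\omega_{\mathsf{S}^{\mathsf{avg}}}(\mathsf{G}^{\mathsf{avg}}_{t \mapsto t+1}) &= \underset{a \sim \Theta,\, x \sim \bit^n}{\EE}\, f(U_a), \\
\omega_{\mathsf{S}^{\mathsf{wst}}}(\mathsf{G}^{\mathsf{wst}}_{t \mapsto t+1}) &= \underset{a \sim \Theta,\, x \sim \bit^n}{\EE}\, f(U_a V_w),
\end{align*}
where $f(U) := \Tr{\left(\bigotimes_{i=1}^{t+1} \vec P_{i,x}^{U, U^\dag}\right)\, \Phi\!\left((U\ketbra{x}{x}U^\dag)^{\otimes t}\right)}$. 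A direct bookkeeping shows that $f$ is a polynomial in the entries of $U$ and $U^\dag$ of balanced total degree $(t+q, t+q)$: the factor $(U\ketbra{x}{x}U^\dag)^{\otimes t}$ contributes $t$ copies each of $U$ and $U^\dag$, while each projective POVM $\vec P_{i,x}^{U,U^\dag} = V^\dag \Pi V$ contributes equal numbers of $U$ and $U^\dag$ matching its query count, summing to $q$ across all players. Thus $f$ is captured by the mixed unitary $(p,q')$-design framework of \Cref{sec:mixedunitarydesigns} for any split $p + q' = t+q$.

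For the design case (first bullet), since $\nu$ is an exact unitary $r$-design with $r = t+q$, \Cref{lem:t-design-to-mixed} combined with \Cref{thm:exact-to-adative} (the latter upgrading to adaptive queries, which the players may use) allows us to replace $\EE_{a \sim \nu}$ by $\EE_{U \sim \mathrm{U}(2^n)}$ in both expressions. Right-invariance of the Haar measure then gives $\EE_{U \sim \mathrm{U}(2^n)}[f(UV_w)] = \EE_{U \sim \mathrm{U}(2^n)}[f(U)]$, and hence $\omega_{\mathsf{S}^{\mathsf{wst}}}(\mathsf{G}^{\mathsf{wst}}_{t \mapsto t+1}) = \omega_{\mathsf{S}^{\mathsf{avg}}}(\mathsf{G}^{\mathsf{avg}}_{t \mapsto t+1}) = \epsilon$. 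For the PRU case (second bullet), we build, for each fixed worst-case game, a QPT distinguisher with oracle access to a challenge unitary $U$ and to $V_w, V_w^\dag$, which internally executes the new worst-case strategy with $U$ in place of $U_a$ and outputs $1$ iff all players succeed; PRU security yields $|\omega_{\mathsf{S}^{\mathsf{wst}}}(\mathsf{G}^{\mathsf{wst}}; U_a \sim \nu) - \omega_{\mathsf{S}^{\mathsf{wst}}}(\mathsf{G}^{\mathsf{wst}}; U \sim \mathrm{U}(2^n))| \leq \negl(\secp)$, and a symmetric distinguisher gives the analogous bound for the average-case game. Composing these two bounds with the Haar-case equality from the first bullet yields $\omega_{\mathsf{S}^{\mathsf{wst}}} \geq \epsilon - \negl(\secp)$.

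The main obstacle is the bookkeeping in the second paragraph: verifying that $f(U)$ really has degree at most $t+q$ in each of $U$ and $U^\dag$ (to match the advertised design strength $r = t+q$), and in the adaptive-query case justifying via \Cref{thm:exact-to-adative} that an exact non-adaptive $r$-design suffices even though the definitions in \Cref{sec:mixedunitarydesigns} are initially stated non-adaptively. A minor technical point in the PRU reduction is that the distinguishers must also receive oracle access to $V_w, V_w^\dag$ in addition to the PRU challenge, which is harmless since $V_w$ is not part of the PRU challenge and the players' QPT hypothesis ensures the overall distinguisher runs in polynomial time.
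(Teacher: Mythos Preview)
Your proposal is correct and follows essentially the same approach as the paper: both construct $\tilde\Phi$ by having the cloner sample $a$, apply $U_a^{\otimes t}$, run $\Phi$, and distribute copies of $a$; both have the players simulate queries to $U_aV_w$ via one $V_w$-oracle call plus an internal $U_a$-application; and both conclude via the adaptive-design upgrade (\Cref{thm:exact-to-adative}) together with right-invariance of the Haar measure, with the PRU case handled by a straightforward hybrid incurring $\negl(\secp)$ loss. Your explicit degree-$(t+q,t+q)$ bookkeeping is slightly more detailed than the paper's presentation but captures exactly the same content.
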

\begin{proof}
Let $\mathsf{G}_{t \mapsto t+1}^{\mathsf{wst}} = (t,\algo H_{\reg{A^t}},\Theta',\bit^n, \{V_w\}_{w \in \Theta'})$ be a worst-case $t \mapsto t+1$ oracular cloning game for ensemble of unitaries $\left\{V_w\right\}$. Consider the quantum strategy $\mathsf{S}^{\mathsf{wst}}$ for the game $\mathsf{G}_{t \mapsto t+1}^{\mathsf{wst}}$ which internally uses $\mathsf{S}^{\mathsf{avg}}$ and proceeds as follows:
\begin{itemize}
    \item (Cloning Channel:) on input $(V_w\ket{x})^{\otimes t}$ in register $\reg{A}^t$, the channel $\tilde{\Phi}_{\reg{A^t \rightarrow \tilde{B}^{t+1}}}$ proceeds as follows:
    \begin{enumerate}
        \item Sample a uniformly random unitary $U_a \sim \nu$ from the unitary ensemble $\nu$. 

        \item Apply $U_a$ to each copy of $V_w\ket{x}$, resulting in a state $(U_aV_w\ket{x})^{\otimes t}$ in register $\reg{A}^t$.

        \item Run ${\Phi}_{\reg{A^t \rightarrow {B}^{t+1}}}$ on $(U_aV_w\ket{x})^{\otimes t}$, and let $\reg{B^{t+1}} = \reg{B_1} \cdots \reg{B_{t+1}}$ denote the resulting registers.

        \item Output $\reg{\tilde{B}^{t+1}} := \reg{\tilde{B}_1} \cdots \reg{\tilde{B}_{t+1}}$, where $\reg{\tilde{B}_i}$ consists of $\reg{B_i}$ together with $\reg{B_i'}=\proj{a}$, for $i \in [t+1]$. 
    \end{enumerate}

    \item ($i$-th Player:) On input $\reg{\tilde{B}_i}$, the measurement $\{\tilde{\vec{P}}_{i,x}^{V_w,V_{w}^\dag}\}_{x \in \bit^n}$ proceeds as follows:   
    \begin{enumerate}
        \item Parse $\reg{\tilde{B}_i}$ as $\reg{B_i B_i'}$. Measure $\reg{B_i'}$ to obtain the string $a$.

        \item Run the oracle-aided measurement $\{\vec{P}_{i,x}^{U,U^\dag}\}_{\,x \in \bit^n}$ with respect to $U:=U_a V_w$ in such a way that, whenever one of the $q$-many oracle queries to $U$ or $U^\dag$ is submitted:
        \begin{itemize}
            \item If the query is to $U$: the query is first submitted to the available oracle $V_w$, then the unitary $U_a$ is applied to the resulting outcome.

            \item If the query is to $U^\dag$: the unitary $U_a^\dag$ is applied, and the resulting outcome submitted to the available oracle $V_w^\dag$.
        \end{itemize}
    \end{enumerate}
\end{itemize}
Let us now analyze the success probability $\omega_{\mathsf{S}^{\mathsf{wst}}}(\mathsf{G}_{t \mapsto t+1}^{\mathsf{wst}})$ of the strategy $\mathsf{S}^{\mathsf{wst}}$. We first address the case where $\nu$ is an exact unitary $r$-design. Recall from \Cref{thm:exact-to-adative} that any exact (non-adaptive) unitary $r$-design is also an exact adaptive $r$-design. Therefore, using that $\nu$ is an adaptive unitary $r$-design, for $r=t+q$, as well as the right-invariance of the Haar measure over the unitary group $\mathrm{U}(2^n)$, we get:
\begin{align*}
&\underset{\substack{x \sim \bit^n \\ w \sim \Theta'}}{\mathbb{E}}  \, \Tr{\left( \tilde{\vec{P}}_{1,x}^{V_w,V_{w}^\dag}\ot \dots \ot \tilde{\vec{P}}_{t+1,x}^{V_w,V_{w}^\dag}\right) \tilde{\Phi}_{\reg{A^t \rightarrow \tilde{B}^{t+1}}}\left( (V_w \proj{x}V_w^\dag)_{\reg{A^t}}^{\otimes t} \right)}\\
&=\underset{U_a \sim \nu}{\mathbb{E}} \,\underset{\substack{x \sim \bit^n \\ w \sim \Theta'}}{\mathbb{E}}\Tr{\left( \vec{P}_{1,x}^{U_aV_w,(U_aV_{w})^\dag}\ot \dots \ot \vec{P}_{t+1,x}^{U_aV_w,(U_aV_{w})^\dag}\right) {\Phi}_{\reg{A^t \rightarrow {B}^{t+1}}}\left( (U_aV_w \proj{x}(U_aV_w)^\dag)_{\reg{A^t}}^{\otimes t} \right)}\\
&=\underset{W \sim \mathrm{U}(2^n)}{\mathbb{E}} \,\underset{\substack{x \sim \bit^n \\ w \sim \Theta'}}{\mathbb{E}}\Tr{\left( \vec{P}_{1,x}^{WU_w,(WU_{w})^\dag}\ot \dots \ot \vec{P}_{t+1,x}^{WU_w,(WU_{w})^\dag}\right) {\Phi}_{\reg{A^t \rightarrow {B}^{t+1}}}\left( (WV_w \proj{x}(WV_w)^\dag)_{\reg{A^t}}^{\otimes t} \right)}\\
&=\underset{U \sim \mathrm{U}(2^n)}{\mathbb{E}} \,\underset{x \sim \bit^n}{\mathbb{E}}\Tr{\left( \vec{P}_{1,x}^{U,U^\dag}\ot \dots \ot \vec{P}_{t+1,x}^{U,U^\dag}\right) {\Phi}_{\reg{A^t \rightarrow {B}^{t+1}}}\left( (U \proj{x}U^\dag)_{\reg{A^t}}^{\otimes t} \right)}\\
&=\underset{U_a \sim \nu}{\mathbb{E}} \,\underset{x \sim \bit^n}{\mathbb{E}}\Tr{\left( \vec{P}_{1,x}^{U_a,U_a^\dag}\ot \dots \ot \vec{P}_{t+1,x}^{U_a,U_a^\dag}\right) {\Phi}_{\reg{A^t \rightarrow {B}^{t+1}}}\left( (U_a \proj{x}U_a^\dag)_{\reg{A^t}}^{\otimes t} \right)}.
\end{align*}
Therefore, we get that
$\omega_{\mathsf{S}^{\mathsf{wst}}}(\mathsf{G}_{t \mapsto t+1}^{\mathsf{wst}}) = \omega_{\mathsf{S}^{\mathsf{avg}}}(\mathsf{G}_{t \mapsto t+1}^{\mathsf{avg}})=\epsilon$, which proves the claim. In the case that $\nu$ is a PRU family and the adversary is computationally bounded, we can apply essentially the same calculation, but we will incur a potential additive loss of $\negl(\secp)$ when passing from $U_a$ to the Haar measure. (Note that we can appeal to PRU security because the strategy $\mathsf{S}^{\mathsf{wst}}$ can be simulated using only oracle access to $U_a, U_a^\dag$.)
\end{proof}

\section{Black Hole Cloning Games}\label{sec:black-hole-games}


Hayden and Preskill~\cite{Hayden_Preskill_2007} put forward the idea that the dynamics of a black hole are well-described by a random unitary time-evolution operator, e.g., via a \emph{unitary design}. Does such a scrambling process limit the extent to which black holes clone information? In this section, we seek to give a quantitative answer to this question. We formally state and discuss our model for this problem in terms of \emph{black hole cloning games} in~\Cref{sec:blackholedefs}, and then turn to proving a bound on the value of black hole cloning games in~\Cref{sec:blackholebounds}.

\subsection{Definition}\label{sec:blackholedefs}

Inspired by the monogamy game of Tomamichel,
Fehr, Kaniewski and Wehner~\cite{Tomamichel_2013}, we formalize the notion of a \emph{black hole cloning game} as follows:

\begin{definition}[Black Hole Cloning Game]\label{def:black-hole-game} A black hole cloning game is specified by a tuple of the form $\mathsf{G}_{\text{\tiny{BH}}} = (\algo H_{\reg{I}},\algo H_{\reg{B}},\algo H_{\reg{B'}},\algo H_{\reg{H}},\algo H_{\reg{R}},\Theta, \{U_{\theta}\}_{\theta \in \Theta},\Phi_{\reg{IB' \rightarrow HR}})$ and consists of the following elements:
\begin{itemize}
    \item A finite dimensional Hilbert space $\algo H_{\reg{I}}$ associated with the internal degrees of the freedom of the black hole; in particular, where $\algo H_{\reg{I}}$ contains the $(n-k)$-qubit initial state of the black hole;

    \item A pair of isomorphic finite dimensional Hilbert spaces $\algo H_{\reg{B}}$ and $\algo H_{\reg{B}'}$ which are associated with $k$-qubit $\mathsf{EPR}$ pairs that emerge near the boundary of the black hole;

    \item A finite dimensional Hilbert space $\algo H_{\reg{H}}$ associated with the final state of the black hole that comprises all of the qubits at its event horizon;

    \item A finite dimensional Hilbert space $\algo H_{\reg{R}}$ associated with the emitted Hawking radiation;

  \item A finite set of indices $\Theta$ over the set of all possible scrambling unitaries;

\item A finite ensemble of scrambling unitaries
$\{U_{\theta}^\dag\}_{\theta \in \Theta}$ indexed by $\Theta$ which is associated with the internal time-evolution of the black hole within its event horizon;

    \item A completely positive and trace-preserving channel $\Phi_{\reg{IB' \rightarrow HR}}$ associated with the physical process that maps the internal registers $\reg{IB'}$ of the black hole into a final internal register $\reg{H}$ and a register $\reg{R}$ associated with the emitted Hawking radiation.
\end{itemize}
\end{definition}

We remark that the modeling assumptions behind our \emph{black hole cloning game} (see \Cref{fig:black-hole-game}) appear consistent with the postulates of black hole complementarity~\cite{PhysRevD.48.3743,THOOFT1985727}; all of the components of our game are modeled according to the existing understanding of physics and Hawking radiation:
\begin{itemize}
    \item we model the entire process of black hole evolution as a (possibly unitary) quantum channel which takes as input the set of qubits belonging to the interior---together with Alice's infalling qubits---and converts them into a global (possibly pure) state of which a subsystem constitutes outgoing radiation;

    \item we assume that Hawking radiation is a valid phenomenon---it enables Bob to intercept outgoing radiation in the form of qubits that he can process on his quantum computer; meanwhile, Charlie is simply another observer that is anchored at the event horizon and has access to the remaining qubits; and

    \item we assume that Bob and Charlie have knowledge of the internal dynamics of the black hole, say as the result of statistical mechanical and thermodynamic considerations---in analogy to how deciphering the contents of a burning book is possible, at least in principle, by observing its smoke and ashes.\footnote{This analogy was also used in the work of Hayden and Preskill~\cite{Hayden_Preskill_2007}.} 
\end{itemize}
Therefore, we believe that black hole cloning games offer a reasonable characterization of quantum cloning in the context of evaporating black holes. In \Cref{sec:openquestions}, we discuss further improvements to our modeling assumptions which could potentially make our game even more realistic from a physical perspective.

\begin{definition}[Quantum Strategy]
A quantum strategy $\mathsf{S} = (\{\vec{H}_{x}^{U_\theta,U_{\theta}^\dag}\}_{\theta \in \Theta, x \in \bit^k},\{\vec{R}_{x}^{U_\theta,U_{\theta}^\dag}\}_{\theta \in \Theta, x \in \bit^k})$ for a black hole cloning game $\mathsf{G}_{\text{\tiny{BH}}} = (\algo H_{\reg{I}},\algo H_{\reg{B}},\algo H_{\reg{B'}},\algo H_{\reg{H}},\algo H_{\reg{R}},\Theta, \{U_{\theta}\}_{\theta \in \Theta},\Phi_{\reg{IB' \rightarrow HR}})$ consists of
\begin{itemize}
 \item An ensemble of oracle-aided positive operator-valued measurements $$\left\{\vec{H}_{x}^{U_\theta,U_{\theta}^\dag}\right\}_{\theta \in \Theta, x \in \bit^k}$$ which are to be performed on Charlie's system $\algo H_{\reg{H}}$.

  \item An ensemble of oracle-aided positive operator-valued measurements $$\left\{\vec{R}_{x}^{U_\theta,U_{\theta}^\dag}\right\}_{\theta \in \Theta, x \in \bit^k}$$ which are to be performed on Bob's system $\algo H_{\reg{R}}$.
\end{itemize}
\end{definition}

Next, we define the value of a black hole cloning game, which can be thought of as the maximal winning probability over all admissible strategies.

\begin{definition}[Value of a Black Hole Cloning Game]\label{def:bh-value}
Consider a black hole cloning game of the form
$\mathsf{G}_{\text{\tiny{BH}}} = (\algo H_{\reg{I}},\algo H_{\reg{B}},\algo H_{\reg{B'}},\algo H_{\reg{H}},\algo H_{\reg{R}},\Theta, \{U_{\theta}\}_{\theta \in \Theta},\Phi_{\reg{IB' \rightarrow HR}})$. 
Then, the winning probability of a quantum strategy $\mathsf{S} = (\{\vec{H}_{x}^{U_\theta,U_{\theta}^\dag}\}_{\theta \in \Theta, x \in \bit^k},\{\vec{R}_{x}^{U_\theta,U_{\theta}^\dag}\}_{\theta \in \Theta, x \in \bit^k})$ for $\mathsf{G}_{\text{\tiny{BH}}}$ is defined by the quantity
\begin{align*}
\omega_{\mathsf{S}}(\mathsf{G}_{\text{\tiny{BH}}}) := 
&\underset{\theta\sim \Theta}{\mathbb{E}} \Bigg\{ \sum_{x \in \bit^k}\mathrm{Tr}\Bigg[ \left(\vec{H}_x^{U_{\theta},U_{\theta}^\dag} \otimes \vec{R}_x^{U_{\theta},U_{\theta}^\dag} \otimes\proj{x}_{\reg{B}} \right) \Big(\Phi_{\reg{IB' \rightarrow HR}} \otimes \id_{\reg{B}} \Big)  \\
& \quad\quad\,\,\left(\big(U_{\theta} \fullcirc U_{\theta}^\dag\big)_{\reg{IB' \rightarrow IB'}} \otimes \id_{\reg{B}} \right)\Big( \proj{0^{n-k}}_{\reg{I}} \otimes \proj{\mathsf{EPR}^k}_{\reg{B'B}} \Big) \Bigg] \Bigg\}.
\end{align*}
Moreover, we define the value of the monogamy game $\mathsf{G}_{\text{\tiny{BH}}}$ as the optimal winning probability
$$
\omega(\mathsf{G}_{\text{\tiny{BH}}}) := \underset{\mathsf{S} = \big(\big\{\vec{H}_{x}^{U_\theta,U_{\theta}^\dag}\},\{\vec{R}_{x}^{U_\theta,U_{\theta}^\dag}\big\}\big)}{\sup} \,\,\omega_{\mathsf{S}}(\mathsf{G}_{\text{\tiny{BH}}}).
$$
\end{definition}

\begin{figure}
    \centering
 \begin{protocol}[Black Hole Cloning Game] \label{prot:OCG}\ \\
A black hole cloning game $\mathsf{G}_{\text{\tiny{BH}}} = (\algo H_{\reg{I}},\algo H_{\reg{B}},\algo H_{\reg{B'}},\algo H_{\reg{H}},\algo H_{\reg{R}},\Theta, \{U_{\theta}\}_{\theta \in \Theta},\Phi_{\reg{IB' \rightarrow HR}})$ for a quantum strategy $\mathsf{S} = (\{\vec{H}_{x}^{U_\theta,U_{\theta}^\dag}\}_{\theta \in \Theta, x \in \bit^k},\{\vec{R}_{x}^{U_\theta,U_{\theta}^\dag}\}_{\theta \in \Theta, x \in \bit^k})$ is the following game between a trusted referee called Alice and two colluding and adversarial parties Bob and Charlie.
\begin{enumerate}
  \item (\textbf{Setup phase}) A tripartite quantum state $\rho \in \algo D(\algo H_{\reg{I}} \otimes \algo H_{\reg{B'}} \otimes \algo H_{\reg{B}})$ is prepared, where
  $$
  \rho = \Big( \proj{0^{n-k}}_{\reg{I}} \otimes \proj{\mathsf{EPR}^k}_{\reg{B'B}}  \Big).
  $$
  Here, $k$ denotes the number of qubits in the registers $\reg{B}$ and $\reg{B'}$. Next, Alice receives register $\reg{B}$.

  \item (\textbf{Time-evolution phase})
  A random scrambling unitary $U_\theta$ is selected, where $\theta \sim \Theta$ is chosen uniformly at random, and the internal registers of the black hole evolve according to the unitary channel $\big(U_\theta \fullcirc U_\theta^\dag\big)_{\reg{IB' \rightarrow IB'}}$ which is applied to registers $\reg{IB'}$ of the state $\rho$.
  
Afterwards, the channel $\Phi_{\reg{IB' \rightarrow HR}}$ is applied to registers $\reg{IB'}$ and produces registers $\reg{HR}$.

  \item (\textbf{Guessing phase})
  Charlie and Bob receive the registers $\reg{H}$ and $\reg{R}$, respectively. They also receive oracles for both $U_\theta$ and $U_\theta^\dag$, but may no longer communicate. They independently perform the measurements $\{\vec{H}_{x}^{U_\theta,U_{\theta}^\dag}\}_{x \in \algo X}$ and $\{\vec{R}_{x}^{U_\theta,U_{\theta}^\dag}\}_{x \in \algo X}$ and output a $k$-bit string.

\item (\textbf{Outcome phase}) 
Alice measures $\reg{B}$ is measured in the computational basis, resulting in an outcome $x \in \bit^k$. Charlie and Bob win if they both guessed $x$ correctly.
\end{enumerate}
\end{protocol}   
    \caption{A black hole cloning game.}
    \label{fig:black-hole-game}
\end{figure}

Let us
remark that the initial quantum state is not adversarially prepared by Bob and Charlie (unlike in typical monogamy games~\cite{Tomamichel_2013}); rather, it is generated by an external process (say, nature) over which the players have no control. While this is also true of the $\Phi_{\reg{IB' \rightarrow HR}}$ cloning channel in practice, our bounds will hold even if $\Phi$ is chosen \emph{adversarially} by Bob and Charlie.

\subsection{Bounds On the Value of a Black Hole Cloning Game}\label{sec:blackholebounds}

In this section, we bound the maximal value $\omega(\mathsf{G}_{\text{\tiny{BH}}}) = {\sup}_{\mathsf{S}} \,\omega_{\mathsf{S}}(\mathsf{G}_{\text{\tiny{BH}}})$ of a particular black hole cloning game $\mathsf{G}_{\text{\tiny{BH}}}$ for a unitary $3$-design $\{U_{\theta}\}_{\theta \in \Theta}$ and where we restrict the set of oracle-aided strategies $$\mathsf{S} = \big(\{\vec{H}_{x}^{U_\theta,U_{\theta}^\dag}\}_{\theta \in \Theta, x \in \bit^k},\{\vec{R}_{x}^{U_\theta,U_{\theta}^\dag}\}_{\theta \in \Theta, x \in \bit^k}\big)$$ such that Charlie and Bob only make a single oracle query (to either $U_\theta$ or $U_\theta^\dag$), for any given $\theta \in \Theta$.

\begin{figure}
\begin{center}
\begin{tikzpicture}
    \node (qc) {
        \begin{quantikz}[row sep=0.1cm, column sep=0.7cm]
            \qw & \gate[7]{V_1} & \gate[4]{U\text{ or }U^\dag} & \gate[7]{V_2} & \qw \\
            \qw & \ghost{V_1} & \ghost{U} & \ghost{V_2} & \qw \\
            \qw & \ghost{V_1} & \ghost{U} & \ghost{V_2} & \qw \\
            \lstick{} & \ghost{V_1} & \qw      & \ghost{V_2} & \qw \\
            \lstick{$\ket{0}$} & \ghost{V_1} & \qw      & \ghost{V_2} & \qw \\
            \lstick{$\ket{0}$} & \ghost{V_1} & \qw      & \ghost{V_2} & \qw \\
            \lstick{$\ket{0}$} & \ghost{V_1} & \qw      & \ghost{V_2} & \meter{}
        \end{quantikz}
    };

    \coordinate (left_center) at ($(qc.north west)!0.5!(qc.south west)$);

    \draw [decorate, decoration={brace, mirror, amplitude=10pt}]
    (qc.north west) -- 
    (left_center) 
    node [midway, left=0.5cm] {\parbox{2cm}{Hawking \\radiation $\reg{R}$}};

    \coordinate (right_bottom) at ($(qc.north east)!0.75!(qc.south east)$);

    \draw [decorate, decoration={brace, amplitude=10pt}]
    (right_bottom) -- 
    (qc.south east) 
    node [midway, right=0.5cm] {\parbox{2cm}{Output\\$x \in \bit^k$}};
\end{tikzpicture}
\end{center}
    \caption{Visualization of Bob's quantum computation in our black hole cloning game. He takes the intercepted Hawking radiation in register $\reg{R}$ as input, adds any number of ancilla qubits (in the $\ket{0}$ state) of his choosing, and applies an initial unitary $V_1$ to the entire system. He then makes one oracle query to either $U$ or $U^\dag$, where $U$ is the black hole's scrambling unitary. Finally, he applies an additional unitary $V_2$ then measures the last $k$ qubits to produce his guess $x \in \bit^k$. The diagram for Charlie's strategy would be similar, except the input would consist of the black hole's interior qubits in register $\reg{H}$.}\label{fig:bobcharlieBH}
\end{figure}

Let us first give a brief overview of the idea behind our proof. We refer the reader to our technical overview (\Cref{sec:techoverview}) and the associated technical sections for details on each of these steps.

\paragraph{Overview of the proof.} To obtain a bound, we consider a sequence of \emph{hybrid games}:
\begin{itemize}
\item $\mathsf{G}_{\text{\tiny{BH}}}$: This is a black hole cloning game of the form
$$
\mathsf{G}_{\text{\tiny{BH}}} = (\algo H_{\reg{I}},\algo H_{\reg{B}},\algo H_{\reg{B'}},\algo H_{\reg{H}},\algo H_{\reg{R}},\Theta, \{U_{\theta}\}_{\theta \in \Theta},\Phi_{\reg{IB' \rightarrow HR}})
$$
where $\nu=\{U_{\theta}\}_{\theta \in \Theta}$ is an $n$-qubit unitary $3$-design and $\Phi_{\reg{IB' \rightarrow HR}}$ is an arbitrary CPTP map.

\item $\mathsf{G}_{\text{\tiny{MOE}}}$: This is a (regular) monogamy of entanglement game (as in \Cref{sec:MOE}), where
$$\mathsf{G}_{\text{\tiny{MOE}}} =  (\algo H_{\reg{A}},\Theta,\bit^n, \{\vec{A}_y^{\theta}\}_{\theta \in \Theta, x \in \bit^n})
$$
where Alice performs a set of projective measurements $\{\vec A_{y}^\theta\}_{\theta \in \Theta,\,y \in \bit^n}$ acting on the Hilbert space $\algo H_{\reg{A}}= (\mathbb{C}^{2})^{\otimes n}$, for some rank-$1$ projectors $\vec A_y^\theta = \bar{U}_\theta \proj{y} \bar{U}_\theta^\dag$.

\item $\mathsf{G}_{1 \mapsto 2}$: This is a $1 \mapsto 2$ oracular cloning game (as in \Cref{sec:oracularcloning}), where
$$
\mathsf{G}_{1 \mapsto 2} = (1,\algo H_{\reg{A}},\Theta, \bit^n, \{U_\theta\}_{\theta \in \Theta}).
$$

\item $\mathsf{G}_{\mathfrak{F},1}$: This is a different $1 \mapsto 2$ oracular cloning game (as in \Cref{sec:oracularcloning}), where
$$
\mathsf{G}_{\mathfrak{F},1} = (1,\algo H_{\reg{A}},\bit^\lambda, \bit^n, \{\mathsf{U}_{f_\theta} \mathsf{H}^{\otimes n}\}_{\theta\in \bit^\lambda}).
$$
and $\mathfrak{F} = \{f_\theta : \bit^n \rightarrow  \bit\}_{\theta \in \Theta}$ is a family of $6$-wise uniform functions.
\end{itemize}

First, we show that the game $\mathsf{G}_{\text{\tiny{BH}}}$ emerges as a special case of $\mathsf{G}_{\text{\tiny{MOE}}}$ in which we post-select on the event that Alice measures $\{\vec A_{y}^\theta\}_{\theta \in \Theta,\,y \in \bit^n}$ and obtains the outcome $y =x||0^{n-k}$, for some $x \in \bit^k$. Informally, because this event occurs with probability $2^{-n+k}$, we can deduce that that: $$\mathrm{sup}_{\hat{\mathsf{S}}} \,\,\omega_{\hat{\mathsf{S}}}(\mathsf{G}_{\text{\tiny{MOE}}}) \, \geq \, 2^{-n+k} \, \cdot \, {\sup}_{\mathsf{S}} \,\omega_{\mathsf{S}}(\mathsf{G}_{\text{\tiny{BH}}}),$$
where we maximize over the choice of strategies $\hat{\mathsf{S}}$ selected by Bob and Charlie which consist of a tripartite state $\rho$, where $\rho$ is the normalized \emph{Choi state} of the quantum channel $\Phi$, and where Bob and Charlie perform oracle aided measurements with single-query access to $U_\theta$ and $U_\theta^\dag$) on an enlarged Hilbert space. Therefore, in order to obtain an asymptotically optimal bound of the form $\omega(\mathsf{G}_{\text{\tiny{BH}}}) = O(2^{-k})$, it suffices to show that the related monogamy game $\mathsf{G}_{\text{\tiny{MOE}}}$ has a maximal value of $\mathrm{sup}_{\hat{\mathsf{S}}}\,\omega_{\hat{\mathsf{S}}}(\mathsf{G}_{\text{\tiny{MOE}}}) = O(2^{-n})$. \textbf{Note that analyses of monogamy games preceding this work~\cite{Tomamichel_2013, broadbent_et_al:LIPIcs.TQC.2020.4, 10.1007/978-3-030-84242-0_20, Culf_2022, schleppy2025winning} would not suffice here, since they only prove bounds of the form $O(2^{-cn})$ for $c < 1$.} (We will discuss this in some more detail in~\Cref{remark:bhblcomparison}.)

Second, we relate the game $\mathsf{G}_{\text{\tiny{MOE}}}$ to the $1 \mapsto 2$ cloning game $\mathsf{G}_{1 \mapsto 2}$. Here, we use the general result in \Cref{lem:equiv} which allows us to relate this particular class of monogamy games to cloning games. As a result, we find that $\mathrm{sup}_{\hat{\mathsf{S}}} \,\,\omega_{\hat{\mathsf{S}}}(\mathsf{G}_{\text{\tiny{MOE}}}) = \mathrm{sup}_{{\mathsf{S}'}} \,\,\omega_{{\mathsf{S}'}}(\mathsf{G}_{1 \mapsto 2})$, where $\mathsf{S}'$ ranges over the set of analogous oracular cloning strategies, but which involve $\Phi$ as a cloning channel.

Third, we use the insight from our worst-case to average-case reduction in \Cref{thm:worst-case-to-average-case} in order to argue that the $\mathsf{G}_{1 \mapsto 2}$ is at least as hard as the cloning game $\mathsf{G}_{\mathfrak{F},1}$. In particular, we observe that the winning probabilities satisfy $\mathrm{sup}_{\hat{\mathsf{S}'}} \,\,\omega_{\hat{\mathsf{S}'}}(\mathsf{G}_{1 \mapsto 2}) \leq  \mathrm{sup}_{{\mathsf{S}'}} \,\,\omega_{{\mathsf{S}'}}(\mathsf{G}_{\mathfrak{F},1})$, where the set of strategies $\hat{\mathsf{S}'}$ remains the same.

Finally, we invoke Theorem~\ref{thm:tcopycloning} which gives an explicit bound on the game $\mathsf{G}_{\mathfrak{F},1}$. Specifically, we prove that $\mathrm{sup}_{{\mathsf{S}'}} \,\,\omega_{{\mathsf{S}'}}(\mathsf{G}_{\mathfrak{F},1}) \leq O(2^{-n})$, if $\mathfrak{F}$ is a family of $6$-wise uniform functions. \textbf{As remarked above, proving a bound this strong requires the new techniques that we introduced earlier in this work.}

Putting everything together, we then  obtain the aforementioned asymptotically optimal bound of the form $\omega(\mathsf{G}_{\text{\tiny{BH}}}) = O(2^{-k})$ on the black hole cloning game $\mathsf{G}_{\text{\tiny{BH}}}$.
Let us now state our main theorem.

\begin{theorem}\label{thm:black-hole-one-query} Let $n,k \in \N$ be integers such that $n \geq k$ and let $\nu=\{U_\theta\}_{\theta \in \Theta}$ be a unitary $3$-design on $n$-qubits. Then, for any quantum channel $\Phi_{\reg{IB' \rightarrow HR}}$, the maximal single-query value of the black hole cloning game $\mathsf{G}_{\text{\tiny{BH}}} = (\algo H_{\reg{I}},\algo H_{\reg{B}},\algo H_{\reg{B'}},\algo H_{\reg{H}},\algo H_{\reg{R}},\Theta,\{U_\theta\}_{\theta \in \Theta},\Phi_{\reg{IB' \rightarrow HR}})$ is at most 
\begin{align*}
\mathrm{sup}_{\mathsf{S}} \,\,\omega_{\mathsf{S}}(\mathsf{G}_{\text{\tiny{BH}}}) 
 \, = \, O(2^{-k}) \, ,
\end{align*}
where the supremum ranges over all oracle-aided strategies $$\mathsf{S} = \big(\{\vec{H}_{x}^{U_\theta,U_{\theta}^\dag}\}_{\theta \in \Theta, x \in \bit^k},\{\vec{R}_{x}^{U_\theta,U_{\theta}^\dag}\}_{\theta \in \Theta, x \in \bit^k}\big)$$ that only make a single oracle query (to either $U_\theta$ or $U_\theta^\dag$), for any given $\theta \in \Theta$.
\end{theorem}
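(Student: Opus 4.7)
The plan is to execute the three-step reduction outlined in the preamble, descending from the black hole cloning game down to the binary phase cloning game $\mathsf{G}_{\mathfrak{F}, 1}$ whose value was already bounded by Theorem~\ref{thm:tcopycloning}.

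\textbf{Step 1: reduction from $\mathsf{G}_{\text{\tiny{BH}}}$ to a standard $1 \mapsto 2$ oracular cloning game.} The EPR pair between $\reg{B}$ and $\reg{B'}$, together with Alice's computational-basis measurement of $\reg{B}$, is equivalent by the ricochet property (Lemma~\ref{fact:CI}) to Alice preparing a uniformly random $x \in \{0,1\}^k$ and loading $\ket{x}$ into register $\reg{B'}$ at the start. Consequently, the BH game coincides with a game in which Alice picks a uniform $y = 0^{n-k} \| x$, the state $U_\theta \ket{y}$ is fed into the fixed channel $\Phi_{\reg{IB' \rightarrow HR}}$, and Bob and Charlie (given oracle access to $U_\theta, U_\theta^\dag$) must guess $x$. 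Given any BH strategy achieving value $\epsilon$, I construct a strategy for the $1 \mapsto 2$ oracular cloning game $\mathsf{G}_{1 \mapsto 2} = (1, \algo H_{\reg A}, \Theta, \bit^n, \{U_\theta\})$: use $\Phi$ as the cloning channel and extend the BH POVMs by setting $\vec{P}_{1, 0^{n-k}\|x} := \vec{H}_x^{U_\theta, U_\theta^\dag}$, with $\vec{P}_{1, y} := 0$ for all other $y \in \{0,1\}^n$. This gives a valid POVM since $\sum_x \vec{H}_x = \id$, and analogously for Bob's POVM. The resulting cloning strategy succeeds with probability exactly $2^{-(n-k)} \epsilon$, since only $2^k$ out of $2^n$ values of $y$ contribute. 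Hence $\omega(\mathsf{G}_{\text{\tiny{BH}}}) \leq 2^{n-k} \cdot \omega(\mathsf{G}_{1 \mapsto 2})$.

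\textbf{Step 2: worst-case to average-case reduction.} The cloning game $\mathsf{G}_{1 \mapsto 2}$ uses unitaries from the 3-design $\nu$. Because we have $t = 1$ copy and each of the two players makes at most one oracle query (so the total query count is $q = 2$), Theorem~\ref{thm:worst-case-to-average-case} with $r = t + q = 3$ applies: any strategy achieving value $\epsilon$ in the average-case game can be converted into a strategy of the same value for any worst-case game. In particular, $\omega(\mathsf{G}_{1 \mapsto 2}) \leq \omega(\mathsf{G}_{\mathfrak{F}, 1})$, where $\mathsf{G}_{\mathfrak{F}, 1}$ is the binary phase cloning game from Construction~\ref{const:PRS-enc} for any choice of function family $\mathfrak{F}$. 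Crucially, the reduction preserves the restricted single-query structure, so the resulting worst-case strategy still lies in $\mathcal{S}_{\mathrm{rest}}$.

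\textbf{Step 3: bound $\omega(\mathsf{G}_{\mathfrak{F}, 1})$ via binary phase analysis.} Choosing $\mathfrak{F}$ to be a $(4t + 2) = 6$-wise uniform family of functions $\{0,1\}^n \to \{0,1\}$ and applying Theorem~\ref{thm:tcopycloning} with $t = 1$, we get $\sup_{\mathsf{S} \in \mathcal{S}_{\mathrm{rest}}} \omega_{\mathsf{S}}(\mathsf{G}_{\mathfrak{F}, 1}) \leq \exp(O(1)) \cdot 2^{-n} = O(2^{-n})$. Chaining this with Steps 1 and 2 yields $\omega(\mathsf{G}_{\text{\tiny{BH}}}) \leq 2^{n-k} \cdot O(2^{-n}) = O(2^{-k})$, as desired. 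The main conceptual subtlety lies in Step 1: the BH game and the cloning game have formally different structures (EPR pair with a fixed channel $\Phi$ versus uniform classical input with an adversarial cloner), but the embedding works because the cloning game is strictly more permissive---the adversary is free to set the cloner $\Phi' := \Phi$. The main technical reason the proof works is that it requires a cloning bound of exactly $O(2^{-n})$ (not $O(2^{-cn})$ for some $c < 1$): otherwise the factor $2^{n-k}$ from the post-selection would yield only a trivial bound when $n \gg k$. Prior analyses of BB84 or coset state monogamy games~\cite{Tomamichel_2013, broadbent_et_al:LIPIcs.TQC.2020.4, 10.1007/978-3-030-84242-0_20, Culf_2022, schleppy2025winning} do not suffice; the asymptotically optimal bound of Theorem~\ref{thm:tcopycloning}, based on our new subtype formalism and blockwise tensor-product spectral bounds, is essential.
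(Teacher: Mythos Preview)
Your proof is correct and follows essentially the same three-step reduction as the paper: (i) embed the black hole game into a standard $1 \mapsto 2$ oracular cloning game at the cost of a $2^{n-k}$ factor, (ii) invoke the worst-case to average-case reduction (Theorem~\ref{thm:worst-case-to-average-case}) with $r = t + q = 3$, and (iii) apply Theorem~\ref{thm:tcopycloning} for the binary phase game with a $6$-wise uniform family. The only difference is that in Step~1 you pass directly from $\mathsf{G}_{\text{\tiny{BH}}}$ to the cloning game via the ricochet property plus POVM zero-padding, whereas the paper first lands on an intermediate monogamy-of-entanglement formulation $\mathsf{G}_{\text{\tiny{MOE}}}$ (through explicit Choi-state manipulations with an auxiliary register $\reg{E}$) and then invokes Lemma~\ref{lem:equiv} to convert to $\mathsf{G}_{1\mapsto 2}$; composing those two steps yields exactly your direct route.
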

\begin{proof}
Let $\mathsf{S}$ be any single-query strategy. For convenience, we also assume there exists an auxiliary register $\reg{E}$ (say, the exterior of the black hole) which is initialized to $\ket{0^{n-k}}_{\reg{E}}$ and not touched by any of the processes in the black hole cloning game. This is without loss of generality, since it can always be absorbed into $\Phi$ by re-defining the quantum channel appropriately. Then, it follows that:
\begin{align*}
\omega_{\mathsf{S}}(\mathsf{G}_{\text{\tiny{BH}}})=&\underset{U \sim \nu}{\mathbb{E}} \Bigg\{ \sum_{x \in \bit^k}\mathrm{Tr}\Bigg[ \left(\vec{H}_x^{U,U^\dag} \otimes \vec{R}_x^{U,U^\dag} \otimes\proj{x0^{n-k}}_{\reg{BE}} \right) \Big(\Phi_{\reg{IB' \rightarrow HR}} \otimes \id_{\reg{BE}} \Big)  \\
& \quad\quad\,\,\left(\big(U \fullcirc U^\dag\big)_{\reg{IB' \rightarrow IB'}} \otimes \id_{\reg{BE}} \right)\Big( \proj{0^{n-k}}_{\reg{I}} \otimes \proj{\mathsf{EPR}^k}_{\reg{B'B}} \otimes \proj{0^{n-k}}_{\reg{E}} \Big) \Bigg] \Bigg\}\\
&=2^{n-k}\underset{U \sim \nu}{\mathbb{E}} \Bigg\{ \sum_{x \in \bit^k}\mathrm{Tr}\Bigg[ \left(\vec{H}_x^{U,U^\dag} \otimes \vec{R}_x^{U,U^\dag} \otimes\proj{x0^{n-k}}_{\reg{BE}} \right) \Big(\Phi_{\reg{IB' \rightarrow HR}} \otimes \id_{\reg{BE}} \Big) 
\\
&\quad\quad\,\,\,\quad\quad\quad\left(\big(U \fullcirc U^\dag\big)_{\reg{IB' \rightarrow IB'}} \otimes \id_{\reg{BE}} \right) \Big( \proj{\mathsf{EPR}^n}_{\reg{IB'BE}} \Big) \Bigg] \Bigg\}.
\end{align*}
The above step holds because in the second line the projector $\proj{0^{n-k}}_{\reg{E}}$ will act on one half of the EPR pair $\proj{\mathsf{EPR}^{n-k}}_{\reg{IE}}$, thus collapsing it to $\proj{0^{n-k}}_{\reg I} \otimes \proj{0^{n-k}}_{\reg E}$ and pulling out a factor of $2^{k-n}$.

We continue by using the ricochet property of EPR pairs (formally, Corollary~\ref{cor:CIprojector}) to pull $U, U^\dag$ ``out'' of the cloning channel to obtain something that will look more like a monogamy game. First, consider the channel $\Psi_{\reg{IB' \rightarrow HR}}$ defined as $\Phi \circ (U \fullcirc U^\dag)$ (here, $\circ$ denotes composition). Applying Corollary~\ref{cor:CIprojector} to the channel $\Psi$ yields the following:
\begin{align*}
&2^{n-k}\underset{U \sim \nu}{\mathbb{E}} \Bigg\{ \sum_{x \in \bit^k}\mathrm{Tr}\Bigg[ \left(\vec{H}_x^{U,U^\dag} \otimes \vec{R}_x^{U,U^\dag} \otimes\proj{x0^{n-k}}_{\reg{BE}} \right) \Big(\Phi_{\reg{IB' \rightarrow HR}} \otimes \id_{\reg{BE}} \Big) 
\\
&\quad\quad\,\,\,\quad\quad\quad\left(\big(U \fullcirc U^\dag\big)_{\reg{IB' \rightarrow IB'}} \otimes \id_{\reg{BE}} \right) \Big( \proj{\mathsf{EPR}^n}_{\reg{IB'BE}} \Big) \Bigg] \Bigg\}\\
= &2^{-k}\underset{U \sim \nu}{\mathbb{E}} \Bigg\{ \sum_{x \in \bit^k}\mathrm{Tr}\Bigg[ \left(\vec{H}_x^{U,U^\dag} \otimes \vec{R}_x^{U,U^\dag} \otimes\proj{x0^{n-k}}_{\reg{BE}} \right) J(\Psi)\Bigg]\Bigg\} \\
= &2^{-k}\underset{U \sim \nu}{\mathbb{E}} \Bigg\{ \sum_{x \in \bit^k}\mathrm{Tr}\Bigg[ \left(\vec{H}_x^{U,U^\dag} \otimes \vec{R}_x^{U,U^\dag}\right) \Psi(\proj{x0^{n-k}})\Bigg]\Bigg\} \\
= &2^{-k}\underset{U \sim \nu}{\mathbb{E}} \Bigg\{ \sum_{x \in \bit^k}\mathrm{Tr}\Bigg[ \left(\vec{H}_x^{U,U^\dag} \otimes \vec{R}_x^{U,U^\dag}\right) \Phi(U\proj{x0^{n-k}}U^\dag)\Bigg]\Bigg\}.
\end{align*}
Next, we apply Corollary~\ref{cor:CIprojector} once more, this time to the channel $\Phi$:
\begin{align*}
    &2^{-k}\underset{U \sim \nu}{\mathbb{E}} \Bigg\{ \sum_{x \in \bit^k}\mathrm{Tr}\Bigg[ \left(\vec{H}_x^{U,U^\dag} \otimes \vec{R}_x^{U,U^\dag}\right) \Phi(U\proj{x0^{n-k}}U^\dag)\Bigg]\Bigg\} \\
    = &2^{-k}\underset{U \sim \nu}{\mathbb{E}} \Bigg\{ \sum_{x \in \bit^k}\mathrm{Tr}\Bigg[ \left(\vec{H}_x^{U,U^\dag} \otimes \vec{R}_x^{U,U^\dag} \otimes \bar{U} \proj{x0^{n-k}}_{\reg{BE}} \bar{U}^\dag \right) J(\Phi)\Bigg]\Bigg\} \\
    = &2^{n-k}\underset{U \sim \nu}{\mathbb{E}} \Bigg\{ \sum_{x \in \bit^k}\mathrm{Tr}\Bigg[ \left(\vec{H}_x^{U,U^\dag} \otimes \vec{R}_x^{U,U^\dag} \otimes \bar{U}\proj{x0^{n-k}}_{\reg{BE}} \bar{U}^\dag \right) 
\\
&\quad\quad\,\,\,\,\,\,\quad\quad\quad\Big(\Phi_{\reg{IB' \rightarrow HR}} \otimes \id_{\reg{BE}} \Big)\Big( \proj{\mathsf{EPR}^n}_{\reg{IB'BE}} \Big) \Bigg] \Bigg\}.
\end{align*}
\noindent
For the remainder of the proof, we will bound the final quantity in the expression above; specifically, by relating it to the value of a related monogamy of entanglement game. To this end, we now observe that
\begin{align}
&2^{n-k} \underset{U \sim \nu}{\mathbb{E}}\Bigg\{ \sum_{x \in \bit^k}\mathrm{Tr}\Bigg[ \left(\vec{H}_x^{U,U^\dag} \otimes \vec{R}_x^{U,U^\dag} \otimes \bar{U}\proj{x0^{n-k}}_{\reg{BE}} \bar{U}^\dag \right) 
\nonumber\\
&\quad\quad\,\,\,\,\,\,\quad\quad\quad\Big(\Phi_{\reg{IB' \rightarrow HR}} \otimes \id_{\reg{BE}} \Big)\Big( \proj{\mathsf{EPR}^n}_{\reg{IB'BE}} \Big) \Bigg] \Bigg\} \nonumber\\
&= 2^{n-k}\underset{U \sim \nu}{\mathbb{E}}\Bigg\{ \sum_{x \in \bit^k}\mathrm{Tr}\Bigg[ \left(\tilde{\vec{H}}_{x||0^{n-k}}^{U,U^\dag} \otimes \tilde{\vec{R}}_{x||0^{n-k}}^{U,U^\dag} \otimes \bar{U}\proj{x0^{n-k}}_{\reg{BE}} \bar{U}^\dag \right) \nonumber
\\
&\quad\quad\,\,\,\,\,\,\quad\quad\quad\Big(\Phi_{\reg{IB' \rightarrow HR}} \otimes \id_{\reg{BE}} \Big)\Big( \proj{\mathsf{EPR}^n}_{\reg{IB'BE}} \Big) \Bigg] \Bigg\}\nonumber\\
&\leq 2^{n-k} \underset{\tilde{\mathsf{S}} = \big(\big\{\hat{\vec{H}}_{y}^{U,U^\dag}\},\{\hat{\vec{R}}_{y}^{U,U^\dag}\big\}\big)}{\sup}  \,\, \underset{U \sim \nu}{\mathbb{E}}\Bigg\{ \sum_{y \in \bit^n}\mathrm{Tr}\Bigg[ \left(\hat{\vec{H}}_{y}^{U,U^\dag} \otimes \hat{\vec{R}}_{y}^{U,U^\dag} \otimes \bar{U}\proj{y}_{\reg{BE}} \bar{U}^\dag \right) \nonumber
\\
&\quad\quad\,\,\,\,\,\,\quad\quad\quad\Big(\Phi_{\reg{IB' \rightarrow HR}} \otimes \id_{\reg{BE}} \Big)\Big( \proj{\mathsf{EPR}^n}_{\reg{IB'BE}} \Big) \Bigg] \Bigg\}. \label{eq:bound-bh-game}
\end{align}

We have now transitioned successfully to $\mathsf{G}_{\text{\tiny{MOE}}}$. Because the bound in \Cref{eq:bound-bh-game} applies to any single-query strategy $\mathsf{S}$, we can therefore complete the proof by bounding the black hole cloning game as follows:
\begin{align*}
\omega(\mathsf{G}_{\text{\tiny{BH}}}) &= \underset{\mathsf{S} = \big(\{\vec{H}_{x}^{U,U^\dag}\},\{\vec{R}_{x}^{U,U^\dag}\}\big)}{\mathrm{sup}} \,\,\omega_{{\mathsf{S}}}(\mathsf{G}_{\text{\tiny{BH}}}) \\
&\leq \,  2^{n-k} 
 \,\, \underset{\hat{\mathsf{S}} = \big(\algo H_{\reg{H}},\algo H_{\reg{R}}, \, \rho_{\reg{AHR}} \, ,\{\hat{\vec{H}}_{y}^{U,U^\dag}\},\{\hat{\vec{R}}_{y}^{U,U^\dag}\}\big)}{\mathrm{sup}} \,\,\omega_{\hat{\mathsf{S}}}(\mathsf{G}_{\text{\tiny{MOE}}}) && \text{(by Equation~\eqref{eq:bound-bh-game})}\\
 &=  \,  2^{n-k} 
 \,\, \underset{{\mathsf{S}'} = \big(\algo H_{\reg{H}} \otimes \algo H_{\reg{R}}, \Phi_{\reg{IB' \rightarrow HR}} ,\{\hat{\vec{H}}_{y}^{U,U^\dag}\},\{\hat{\vec{R}}_{y}^{U,U^\dag}\}\big)}{\mathrm{sup}} \,\,\omega_{{\mathsf{S}'}}(\mathsf{G}_{1 \mapsto 2}) && \text{(Lemma~\ref{lem:equiv})}\\
 &\leq \, 2^{n-k} \underset{{\mathsf{S}'} = \big(\algo H_{\reg{H}} \otimes \algo H_{\reg{R}}, \Phi_{\reg{IB' \rightarrow HR}} ,\{\hat{\vec{H}}_{y}^{U,U^\dag}\},\{\hat{\vec{R}}_{y}^{U,U^\dag}\}\big)}{\mathrm{sup}} \,\,\omega_{{\mathsf{S}'}}(\mathsf{G}_{\mathfrak{F},1}) && (\text{\Cref{thm:worst-case-to-average-case}})\\
 &\leq 2^{n-k} \cdot O(2^{-n}) \,\,=  \,\, O(2^{-k}). && (\text{\Cref{thm:tcopycloning}})
 \end{align*}
\end{proof}

We conclude this section with the following remark.

\begin{remark}[Comparison with the Cloning Game Bound by~\cite{broadbent_et_al:LIPIcs.TQC.2020.4}]\label{remark:bhblcomparison}
    Earlier, we made the claim that we require a construction of a cloning game with value $\leq O(2^{-n})$, and that our work is the first to achieve this. At the surface, it might appear that the construction of Broadbent and Lord~\cite{broadbent_et_al:LIPIcs.TQC.2020.4} also achieves this (even if their unclonable encryption construction does not strictly conform to the syntax in Definition~\ref{def:OCG}): they provide a construction with $\mathcal{X} = \bit^n$ and $\Theta = \bit^\secp$ with value $\leq \left(\cos^2(\pi/8)\right)^\secp + O(2^{-n})$. Thus, if $\secp \geq cn$ for some $c > 1$, this value will be $O(2^{-n})$.

    However, the result by~\cite{broadbent_et_al:LIPIcs.TQC.2020.4} still does not suffice for our black hole application. The reason is that their ciphertext states comprise $\secp > n$ qubits and are thus not succinct. Looking through our proof of~\Cref{thm:black-hole-one-query}, what we really need is a cloning game such that messages $x \in \bit^n$ are ``encrypted'' using $m$ qubits and where we can prove a bound of $O(2^{-m})$, which the~\cite{broadbent_et_al:LIPIcs.TQC.2020.4} construction does not satisfy as $m = \secp$. The reason for this is that an adaptation of our worst-case to average-case reduction in~\Cref{sec:worsttoav} would have proceed by applying a Haar random (or pseudorandom) unitary to the \emph{entire} ciphertext state. This forces the Haar random unitary to act on $m$ qubits, so our goal would be to prove a bound of $O(2^{-m})$.
\end{remark}

\section{Search-Secure Unclonable Encryption}\label{sec:unclonableenc}

In this section, we formally define (succinct, deterministic, search-secure) unclonable encryption, loosely following the terminology introduced by Broadbent and Lord~\cite{broadbent_et_al:LIPIcs.TQC.2020.4}. We will use $\theta \in \bit^\secp$ (rather than $k$, which we are already using in~\Cref{sec:black-hole-games} to denote the number of EPR pairs in a black hole cloning game) to denote the secret key.

\subsection{Definitions}

\begin{definition}[Unclonable Encryption]
Let $\secp \in \N$ be the security parameter and let $n := n(\secp)$ be some polynomial in $\secp$. A \textbf{succinct and deterministic} unclonable encryption scheme $(\sUE)$ is a tuple $(\KeyGen,\Enc, \Dec)$ consisting of the following QPT algorithms:
\begin{itemize}
    \item $\KeyGen(1^\secp, 1^n):$ takes as input $1^\secp, 1^n$ and outputs $\theta \in \bit^\secp$.

    \item $\Enc(\theta \in \bit^\secp, x \in \bit^n):$ on input $(\theta, x)$, it outputs a pure ciphertext state $\ket{\psi_x^\theta}$. \textbf{We require $\Enc(\theta, x)$ to deterministically output $U_\theta\ket{x}$, for some unitary $U_{\theta, n} \in \algo{U}(2^n)$.} Thus the ciphertext state must also comprise $n$ qubits.

    \item $\Dec(1^n, \theta \in \bit^\secp,\rho)$: on input $\theta$ and a quantum state $\rho$, it outputs $x' \in \bit^n$.
\end{itemize}
We require the following correctness property: for any $\secp, n$, it holds that
$$
\Pr\left[\Dec\big(1^n, \theta,\proj{\psi_x^\theta} \big) = x \, : \, \substack{
\theta \leftarrow \KeyGen(1^\secp, 1^n)\\
\ket{\psi_x^\theta} \leftarrow \Enc(\theta, x)\\
} \right] =1.
$$
\textbf{Succinctness} is implicit in our requirement that the key length only depends on $\secp$ rather than $n$.
\end{definition}

\begin{definition}[$t \mapsto t+1$ $\sUE$ security]\label{defn:uesecurity}

Let $(\KeyGen,\Enc, \Dec)$ be a $\sUE$ scheme, and $t \in \N$ a positive integer. Consider the following experiment between a challenger and an adversary $(\Phi,\algo P_1,\ldots,\algo P_{t+1})$ consisting of a cloner $\Phi$ and $t+1$ players $\algo P_1,\ldots \algo P_{t+1}$
who are not allowed to communicate:

\begin{enumerate}
    \item The challenger runs $\theta \leftarrow \KeyGen(1^\secp, 1^n)$. Next, the challenger samples $x \gets \bit^n$ and runs $\Enc(\theta, x)$ to obtain the ciphertext $\ket{\psi_x^\theta}$, and sends $t$ copies $\ket{\psi_x^\theta}^{\otimes t}$ of the state to the cloner $\Phi$.

    \item The cloner $\Phi$ applies any quantum channel to $\ket{\psi_x^\theta}^{\otimes t}$ in registers $\reg{A_1 \dots A_t}$ and then splits the resulting state into $t+1$ registers $\reg{B_1},\ldots, \reg{B_{t+1}}$. Finally, $\Phi$ sends $\reg{B_i}$ to player $\algo P_i$.

    \item The players $\algo P_1,\ldots, \algo P_{t+1}$ receive $\theta$ and output their guesses for $x$, and win if they all guess correctly.
\end{enumerate}
We say that $(\KeyGen,\Enc,\Dec)$ satisfies statistical (respectively, computational) $t \mapsto t+1$ and $\epsilon(t, \secp, n)$-$\sUE$ security if, for any computationally unbounded (respectively, computationally bounded) adversary $(\Phi,\algo P_1,\ldots, \algo P_{t+1})$, where each $\algo P_i$ is an ensemble of positive-operator valued measurements $\{\vec P_{i,x}^{\theta}\}_{x,\theta}$,
$$
\underset{(x,\theta) \sim \KeyGen(1^n)}{\mathbb{E}} \, \Tr{\left(\vec P_{1,x}^{\theta} \ot \ldots \ot \vec P_{t+1, x}^\theta \right) \Phi_{\reg{A_1 \dots A_t \rightarrow B_1 \ldots B_{t+1}}}\left( \proj{\psi_x^\theta}^{\ot t}_{\reg{A_1 \dots A_t}} \right)} \leq O\left(\epsilon(t, \secp, n)\right).
$$

\end{definition}

The $1 \mapsto 2$ $\sUE$ security experiment is visualized in Figure~\ref{fig:nocloning}. In the following definition, we also define an \emph{oracular} version of this security experiment.

\begin{figure}[!htb]
\begin{center}
{\small
\begin{tikzpicture}
\draw (2,-0.5) rectangle (3.3,0.5) node [pos=.5]{${\bf Ch}$}; 
  \draw (5,-0.5) rectangle (6.3,0.5) node [pos=.5]{${\Phi}$}; 
  \draw[->] (3.3,0) -- node[above]{{\small $\ket{\psi_x^\theta}$}} (5,0);
  \draw[->,dashed] (5.65,-0.5) -- node[left]{} (5,-1.5);
  \draw[->,dashed] (5.65,-0.5) -- node[right]{} (6.3,-1.5);
  \draw (4.5,-1.5) rectangle (5.65,-2.5) node [pos=.5]{$\algo P_1$};
  \draw[->] (4,-2) node[left]{$\theta$} -- (4.5,-2);
  \draw[->] (5.05,-2.5) -- (5.05,-3) node[below]{$x_{1}$};
  \draw[->] (7.65,-2) node[right]{$\theta$} -- (7.15,-2);
  \draw[->] (6.55,-2.5) -- (6.55,-3) node[below]{$x_{2}$};
  \draw[->,dashed] (5.65,-0.5) -- (6.3,-1.5);
  \draw (6,-1.5) rectangle (7.15,-2.5) node [pos=.5]{${\algo P_2}$};

 \end{tikzpicture}
}
\end{center}
\caption{The $1 \mapsto 2$ $\sUE$ experiment. A cloner $\Phi$ splits a state $\ket{\psi_x^\theta}$ prepared by the challenger {\bf Ch} into two parts, one is sent to $\mathcal{P}_1$ and one is sent to $\mathcal{P}_2$. Given $\theta$, $\mathcal{P}_1$ and $\mathcal{P}_2$ then output their guesses $x_{1}$ and $x_{2}$ for $x$.}\label{fig:nocloning}
\end{figure}

\begin{definition}[$t \mapsto t+1$ oracular $\sUE$ security]\label{def:ueoraclesecurity}
    We say that $(\KeyGen, \Enc, \Dec)$ satisfies statistical (respectively, computational) $t \mapsto t+1$ $\epsilon$-$\sUE$ oracular security under the same conditions as Definition~\ref{defn:uesecurity}, with the following modification: in the final phase, the players $\mathcal{P}_1, \ldots \mathcal{P}_{t+1}$ do not receive $\theta$ directly. Instead, they receive query access to the unitary $U_{\theta, n}$ computing $\Enc(\theta, \cdot)$ as well as its inverse $U_{\theta, n}^\dag$.

    We say that $(\KeyGen, \Enc, \Dec)$ satisfies the weaker notion of $(\epsilon, q)$-$\sUE$ oracular security if each of the players may only make a total of $\leq q$ queries to $U_{\theta, n}$ and $U_{\theta, n}^\dag$.

\end{definition}

\begin{remark}
    We emphasize that our work is the first to consider $t \mapsto t+1$ cloning games for $t > 1$: not only is prior work limited to $1 \mapsto 2$ cloning games, all existing unclonable cryptography (based on BB84 states or coset states) becomes completely insecure if $t$ is allowed to grow polynomially (see ~\Cref{sec:prevtechniques} and~\cite{ananth2024revocableencryptionprogramsmore} for more details).

    While we are only able to prove security for $t = o(n/\log n)$ (see Theorem~\ref{thm:uemulticopy}), we reiterate that our construction could very well be secure for $t$ that is an arbitrary polynomial in $n$ (unlike previous constructions based on BB84 states~\cite{broadbent_et_al:LIPIcs.TQC.2020.4} and coset states~\cite{10.1007/978-3-030-84242-0_20}).
\end{remark}

\begin{remark}[Comparison with~\cite{broadbent_et_al:LIPIcs.TQC.2020.4}]
    At first glance, in the $t = 1$ case this notion may already appear to have been achieved by the construction by Broadbent and Lord~\cite{broadbent_et_al:LIPIcs.TQC.2020.4}, which achieves a security bound of $\epsilon(\secp, n) = \frac{9}{2^n} + \left(\cos^2 \frac{\pi}{8}\right)^\secp$. At a high level, they compose a $\secp$-bit BB84 cloning game as in~\cite{Tomamichel_2013} with a PRF-based one-time pad (see Remark~\ref{remark:justifyingdet} below for details). However, their construction has two aspects which we would like to improve on:
    \begin{itemize}
        \item Their construction assumes the existence of post-quantum pseudorandom functions. We would like to instantiate a $\sUE$ scheme assuming the milder notions of pseudorandom quantum states or unitaries.

        \item Their encryption is randomized. The natural deterministic analogue of this would be the BB84-based encryption scheme without the PRF, which has security $\left(\cos^2 \frac{\pi}{8}\right)^n$ but is no longer succinct as this would use keys of length $n$.
    \end{itemize}
    \noindent
    We note that a common shortcoming of both the work by~\cite{broadbent_et_al:LIPIcs.TQC.2020.4} and our work is the reliance on oracles for proving security.
\end{remark}

\begin{remark}[Justifying Deterministic Encryption]\label{remark:justifyingdet}
    Our reason for focusing on $\mathsf{UE}$ schemes with \emph{deterministic unitary} encryption is in order to be able to naturally instantiate the oracular security setting. Regardless, any such scheme comes with the obvious shortcoming that it is deterministic, and hence does not satisfy the ideal notions of indistinguishable or unclonable-indistinguishable security (as defined in~\cite{broadbent_et_al:LIPIcs.TQC.2020.4}). Nevertheless, we argue that a deterministic scheme satisfying unclonable search security can plausibly be bootstrapped to a scheme satisfying unclonable-indistinguishable security.

    One such bootstrapping transformation was proposed by Broadbent and Lord~\cite{broadbent_et_al:LIPIcs.TQC.2020.4}: suppose we have an unclonable search-secure scheme $\mathsf{SearchEnc}$. Then to encrypt $x \in \bit^n$ under secret keys $(k, \theta) \in \bit^\secp \times \bit^\secp$, sample a random PRF seed $r \in \bit^\secp$ and output the classical string $x \oplus \mathsf{PRF}(k, r)$ together with the quantum state $\ket{\mathsf{SearchEnc}(\theta, r)}$. Broadbent and Lord~\cite{broadbent_et_al:LIPIcs.TQC.2020.4} also provided some mild evidence that this may be secure if the PRF is instantiated with a random oracle.
    
    We emphasize that \emph{proving} the unclonable-indistinguishable security of this transformation (or a similar one) is a notoriously difficult open problem~\cite{kundu2023deviceindependentuncloneableencryption, 10.1007/978-3-031-38554-4_3, aky24}. Our point is just that studying unclonable encryption in the weaker search-secure setting is still an interesting and relevant cryptographic problem, and hence that restricting attention to the case of deterministic unclonable encryption is also interesting.
\end{remark}

\noindent
Finally, we recall from Remark~\ref{remark:veryrestricted} that the oracular $\sUE$ setting is still quite expressive, even if we only allow each player \emph{one} query. In particular, it would be sufficient for recovering $x$ from $\ket{\psi_x^\theta}$, and thus there is still a trivial strategy that succeeds with probability $2^{-n}$: the cloner forwards their copies to the first $t$ players, and nothing to player $t+1$. The first $t$ players can decrypt and output $x$, and player $t+1$ will simply guess randomly.

\subsection{Constructions}

\begin{construction}\label{const:PRUUE}
    Let $\mathfrak{U} = \left\{\mathfrak{U}_n\right\}_{n \in \N}$ be some ensemble of unitaries (we will specify what $\mathfrak{U}$ should be later). Recall that $\mathfrak{U}_n = \left\{U_{\theta, n}\right\}_{\theta \in \bit^\secp}$. Our construction proceeds as follows:
    \begin{itemize}
        \item $\KeyGen(1^\secp, 1^n)$: sample and output uniformly random $\theta \in \bit^\secp$.
        \item $\Enc(\theta, x)$: output $U_{\theta, n} \ket{x}$.

        \item $\Dec(1^n, \theta, \rho)$. First apply the unitary channel $U_{\theta, n}^\dag \fullcirc U_{\theta, n}$ to obtain the state $U_{\theta, n}^\dag \rho U_{\theta, n}$. Now measure in the standard basis and output the result.
    \end{itemize}
\end{construction}
Correctness is clear, so we now prove security in two different settings. First, we show $1 \mapsto 2$ security assuming the existence of pseudorandom unitaries, thus placing unclonable encryption in MicroCrypt:
\begin{theorem}\label{thm:UEfromPRU}
    If $\mathfrak{U}$ is a pseudorandom unitary (as defined in Definition~\ref{def:PRU}), then the $\sUE$ scheme specified in Construction~\ref{const:PRUUE} satisfies computational $1 \mapsto 2$ $\epsilon$-$\sUE$ oracular security, where $\epsilon = \negl(\secp) + \left(\cos^2 \frac{\pi}{8}\right)^n$.
\end{theorem}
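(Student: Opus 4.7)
The plan is to bound the security of Construction~\ref{const:PRUUE} by reducing it to a well-understood cloning game (BB84) via the worst-case to average-case reduction (Theorem~\ref{thm:worst-case-to-average-case}). Concretely, by the deterministic unitary structure of $\Enc$ together with the fact that $\Dec$ consists of one query to $U_{\theta,n}^\dag$ followed by a computational basis measurement, the $1 \mapsto 2$ oracular $\sUE$ security experiment is, up to syntactic renaming, exactly the $1 \mapsto 2$ oracular cloning game $\mathsf{G}^{\mathsf{avg}}_{1 \mapsto 2}$ instantiated with the PRU ensemble $\mathfrak{U} = \{U_{\theta,n}\}_{\theta \in \bit^\secp}$. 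So it suffices to prove
$$
\omega(\mathsf{G}^{\mathsf{avg}}_{1 \mapsto 2}) \,\leq\, \negl(\secp) \,+\, \left(\cos^2 \tfrac{\pi}{8}\right)^n.
$$

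The main step is Theorem~\ref{thm:worst-case-to-average-case} applied in the PRU regime with $t = 1$ (and no restriction on the number of oracle queries). That theorem produces, from any strategy $\mathsf{S}^{\mathsf{avg}}$ for the PRU game of value $\epsilon$, a strategy $\mathsf{S}^{\mathsf{wst}}$ for any fixed worst-case oracular cloning game of value at least $\epsilon - \negl(\secp)$. Contrapositively, for every fixed choice of worst-case unitary ensemble $\{V_w\}_{w \in \Theta'}$,
$$
\omega(\mathsf{G}^{\mathsf{avg}}_{1 \mapsto 2}) \,\leq\, \omega(\mathsf{G}^{\mathsf{wst}}_{1 \mapsto 2}; \{V_w\}) \,+\, \negl(\secp).
$$
The plan is to instantiate $\{V_w\}$ with the $n$-qubit BB84 ensemble $V_w = \mathsf{H}^w$ for $w \in \bit^n$, so that the worst-case game on the right is precisely the $n$-fold parallel-repeated BB84 cloning game in the oracular model.

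It then remains to quote the known bound on the BB84 cloning game. In the non-oracular setting (where $\theta$ is revealed in the clear), the value is $\left(\cos^2 \tfrac{\pi}{8}\right)^n$ by Tomamichel--Fehr--Kaniewski--Wehner~\cite{Tomamichel_2013}, as reproved in the cloning-game formulation by Broadbent--Lord~\cite{broadbent_et_al:LIPIcs.TQC.2020.4} via the cloning-to-monogamy equivalence and Theorem~\ref{thm:tfkwmain}. The oracular variant can only be easier for the bounder since, by the structure of $\mathsf{H}^w$, any oracular adversary can simulate a non-oracular one (indeed a single query to $\mathsf{H}^w$ on $\ket{0^n}$ recovers $w$ in superposition and, more concretely, the projector $\mathsf{H}^w\proj{x}\mathsf{H}^w$ can be simulated by an oracle-aided measurement). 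Hence the same $\left(\cos^2 \tfrac{\pi}{8}\right)^n$ bound transfers. Chaining the two inequalities yields the claimed bound $\epsilon = \negl(\secp) + \left(\cos^2 \tfrac{\pi}{8}\right)^n$.

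The main obstacle, such as it is, lies in verifying the two ``translation'' steps cleanly: first, that the $\sUE$ oracular security game is literally a $1 \mapsto 2$ oracular cloning game with $U_\theta = U_{\theta,n}$ (a matter of unpacking Definition~\ref{def:ueoraclesecurity} and using Lemma~\ref{lemma:cloningtoue}), and second, that the BB84 cloning-game bound from prior work applies verbatim even in the oracular model and for arbitrarily many adaptive queries. The latter is where one must be careful, but it is essentially free because $\mathsf{H}^w$ has a succinct classical description that any oracular adversary can extract, so restricting to oracular adversaries does not change the value. All other steps are direct applications of Theorem~\ref{thm:worst-case-to-average-case} and the cited BB84 analysis.
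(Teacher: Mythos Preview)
Your proposal is correct and follows essentially the same route as the paper: identify the oracular $\sUE$ game with the PRU oracular cloning game, invoke the worst-case to average-case reduction (Theorem~\ref{thm:worst-case-to-average-case}) to pass to an arbitrary fixed ensemble, instantiate that ensemble with BB84, and quote the $\left(\cos^2\tfrac{\pi}{8}\right)^n$ bound. The paper phrases this as three hybrids but the content is identical.

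One small wording issue: your justification for transferring the BB84 bound to the oracular model has the simulation direction backwards. You write that ``any oracular adversary can simulate a non-oracular one,'' but what you need (and what is trivially true) is the reverse: any adversary given $w$ in the clear can simulate oracle access to $\mathsf{H}^w$, so the non-oracular value upper-bounds the oracular value, and the $\left(\cos^2\tfrac{\pi}{8}\right)^n$ bound carries over. Your parenthetical about recovering $w$ from a single query is unnecessary (and not quite right as stated); the clean argument is simply that revealing $\theta$ in the clear can only help the adversary.
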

\begin{proof}
    We consider a series of hybrid games:
    \begin{itemize}
        \item $\mathsf{Hyb}_0$: This is the $1 \mapsto 2$ oracular $\sUE$ security game, as defined in Definition~\ref{def:ueoraclesecurity}.

        \item $\mathsf{Hyb}_1$: In step 1 of the $\sUE$ security game, the challenger also only has oracle access to $U_{\theta, n}, U_{\theta, n}^\dag$. To generate the ciphertext state $\ket{\psi_x^\theta}$, they query the oracle for $U_{\theta, n}$ on input $\ket{x}$.


        \item $\mathsf{Hyb}_2$: Now, the unitary $U$ is sampled as follows: sample a string $b \gets \bit^n$ uniformly at random, and output $\mathsf{H}^b$ (which applies a Hadamard at every position where the corresponding entry of $b$ is 1).
    \end{itemize}
    For $i = 0, 1, 2$, let $\omega(\mathsf{Hyb}_i)$ denote the probability of the players winning the security game in $\mathsf{Hyb}_i$. Then we observe the following:
    \begin{itemize}
        \item $\omega(\mathsf{Hyb}_0) = \omega(\mathsf{Hyb}_1)$, as these two are functionally equivalent.

        \item $|\omega(\mathsf{Hyb}_1) - \omega(\mathsf{Hyb}_2)| \leq \negl(\secp)$ by the worst-case to average-case reduction in Section~\ref{sec:worsttoav}.

        \item $\omega(\mathsf{Hyb}_2) \leq \cos^2\left(\frac{\pi}{8}\right)^n$: this is exactly the BB84 security game. This security bound essentially follows from analysis by~\cite{Tomamichel_2013}, and was formally shown in~\cite[Corollary 2]{broadbent_et_al:LIPIcs.TQC.2020.4}.
    \end{itemize}
    The conclusion follows.
\end{proof}
\begin{remark}\label{remark:uefromcoset}
    We note that we could just as easily have used the subspace coset monogamy game~\cite{10.1007/978-3-030-84242-0_20, Culf_2022} and its analysis by~\cite{schleppy2025winning} in the place of the BB84 cloning game in $\mathsf{Hyb}_2$ to obtain a slightly stronger upper bound of $\negl(\secp) + O(2^{-n/4})$.

    In some more detail: the unitary will be indexed by a linear subspace $A$ of $\mathbb{F}_2^n$ with $\dim A = n/2$. The unitary will be that which takes an $n$-bit string $x$ as input, interprets this as a pair of cosets $s+A, s'+A^\perp$, and outputs $$\ket{A_{s, s'}} = \frac{1}{2^{n/4}}\sum_{a \in A} (-1)^{\langle a, s' \rangle} \ket{a+s}.$$
\end{remark}

\noindent
Secondly, we show assuming the existence of post-quantum one-way functions that Construction~\ref{const:PRUUE} can be instantiated to satisfy multi-copy security against query-bounded adversaries:
\begin{theorem}\label{thm:uemulticopy}
    For any $\secp, n$, let $n' = n - \omega(\log \secp)$ and consider $t$ such that \begin{align*}
        t &\leq O\left(\frac{n'}{\log n'}\right) \Leftrightarrow \exp(O(t \log t)) \leq 2^{n'} = 2^n \cdot \negl(\secp).
    \end{align*}
    Let $\left\{f_{\theta, n}: \theta \in \bit^\secp\right\}$ be a post-quantum pseudorandom function family from $\bit^n \rightarrow \bit$. Then we define $\mathfrak{U}$ by $$U_{\theta, n} = \mathsf{U}_{f_{\theta, n}} \mathsf{H}^{\otimes n}.$$
    Then the $\sUE$ scheme specified in Construction~\ref{const:PRUUE} satisfies computational $t \mapsto t+1$ $(\epsilon, 1)$-$\sUE$ oracular security, where $$\epsilon = \exp(O(t \log t)) \cdot 2^{-n} + \negl(\secp) = \negl(\secp).$$
\end{theorem}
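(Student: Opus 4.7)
}

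The plan is to reduce the multi-copy oracular $\sUE$ security of Construction~\ref{const:PRUUE} (with $U_{\theta,n} = \mathsf{U}_{f_{\theta,n}} \mathsf{H}^{\otimes n}$) to the bound on the binary phase cloning game we already proved in Theorem~\ref{thm:tcopycloning}. First, I would observe that the $t \mapsto t+1$ $(\epsilon, 1)$-$\sUE$ oracular security experiment for this specific construction is \emph{syntactically identical} to a restricted strategy (as in Definition~\ref{def:restricted-quantum-strategy-OCG}) for the oracular cloning game $\mathsf{G}_{\mathfrak{F}, t}$ of Construction~\ref{const:PRS-enc}, where the function family is $\mathfrak{F} = \{f_{\theta, n}\}_{\theta \in \bit^\secp}$. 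This is essentially the second bullet of Lemma~\ref{lemma:cloningtoue}, applied to our PRF-based $\mathfrak{F}$. Thus it suffices to upper bound $\sup_{\mathsf{S} \in \algo S_{\mathrm{rest}}} \omega_{\mathsf{S}}(\mathsf{G}_{\mathfrak{F}, t})$ for $\mathfrak{F}$ a post-quantum PRF.

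Next, I would run a hybrid argument to replace the PRF by a truly random (equivalently, $(4t+2)$-wise uniform) function family. Concretely: given any QPT adversary $(\Phi, \algo P_1, \ldots, \algo P_{t+1})$ violating $\sUE$ security of our scheme with PRF $f_{\theta, n}$, build a QPT distinguisher $\algo D^{\mathsf{U}_g}$ that, using its phase oracle for $g$, (i) samples $x \sim \bit^n$, (ii) prepares $(\mathsf{U}_g \mathsf{H}^{\otimes n}\ket{x})^{\otimes t}$ using $t$ parallel queries to $\mathsf{U}_g$, (iii) runs $\Phi$ and the players, answering each player's single query to $U_\theta$ or $U_\theta^\dag$ by using one query to $\mathsf{U}_g$ sandwiched with Hadamards, and (iv) outputs $1$ iff all players output $x$. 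This distinguisher makes only $O(t)$ queries and runs in polynomial time, so by post-quantum PRF security its acceptance probability changes by at most $\negl(\secp)$ when $g$ is swapped from $f_{\theta, n}$ for random $\theta$ to a uniformly random function. Then, since the winning probability is a polynomial of degree $4t+2$ in the values of $g$ (exactly the degree bound from Lemma~\ref{lemma:CImultiplayer} that is invoked in the setup of Section~\ref{sec:binaryphasesetup}), the value for a uniformly random function equals the value for a $(4t+2)$-wise uniform family.

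With the function family now $(4t+2)$-wise uniform, I can invoke Theorem~\ref{thm:tcopycloning} directly to obtain
\[
\sup_{\mathsf{S} \in \algo S_{\mathrm{rest}}} \omega_{\mathsf{S}}(\mathsf{G}_{\mathfrak{F}, t}) \,\leq\, \exp(O(t \log t)) \cdot 2^{-n}.
\]
Combining with the PRF-to-random hybrid, the overall bound on the $\sUE$ winning probability is $\exp(O(t \log t)) \cdot 2^{-n} + \negl(\secp)$. The final step is a parameter calculation: by the hypothesis $\exp(O(t \log t)) \leq 2^{n'} = 2^{n - \omega(\log \secp)}$, the first term is at most $2^{-\omega(\log \secp)} = \negl(\secp)$, yielding the claimed $\epsilon = \negl(\secp)$.

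The main obstacle is not really technical but rather conceptual bookkeeping: carefully verifying that the PRF-to-random hybrid reduction is valid in the oracular model, i.e., that the distinguisher $\algo D$ can faithfully simulate both the encryption process and the players' single queries using only its own oracle for $\mathsf{U}_g$ (or the standard classical oracle for $g$ via the phase kickback trick). Once this is set up correctly, the rest is an appeal to Theorem~\ref{thm:tcopycloning} plus an arithmetic check on the allowed range of $t$; essentially all the hard work was done in Sections~\ref{sec:typesandsubtypes} and~\ref{sec:binaryphaseconstruction}.
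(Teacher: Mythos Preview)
Your proposal is correct and follows essentially the same approach as the paper's proof: pass from the PRF to a truly random function via a hybrid (incurring a $\negl(\secp)$ loss), observe that a random function is $(4t+2)$-wise uniform, and then invoke Lemma~\ref{lemma:cloningtoue} together with Theorem~\ref{thm:tcopycloning}. Your write-up is in fact more detailed than the paper's two-line proof, spelling out the distinguisher construction and the final parameter check explicitly.
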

\begin{proof}
    We first pass from a pseudorandom function $f_{\theta, n}$ to a truly random function $f$ at the expense of an additive $\negl(\secp)$ security loss. A random function is $(4t+2)$-wise uniform, so we can then finish using Lemma~\ref{lemma:cloningtoue} and Theorem~\ref{thm:tcopycloning}.
\end{proof}

\bibliographystyle{alpha}
\bibliography{main}

\appendix

\section{Relating Cloning Games to Monogamy-Like Games}\label{sec:moetocloning}

In this section, we show that $t \mapsto t+1$ cloning games can be recast as a particular variant of a monogamy of entanglement game, as defined in~\Cref{sec:MOE}. We first begin by presenting this argument when $t = 1$, which essentially follows by an argument laid out by Broadbent and Lord~\cite{broadbent_et_al:LIPIcs.TQC.2020.4}. The additional structure we impose on the MOE game is as follows:
\begin{itemize}
        \item The tripartite state $\rho \in \algo D(\algo H_{\reg{A}} \otimes \algo H_{\reg{B}} \otimes \algo H_{\reg{C}})$ which is shared between Alice, Bob and Charlie is the result of applying a cloning channel $\Phi_{\reg{A' \rightarrow BC}}$ to one half of an EPR pair, i.e.,
    \begin{equation*}
    \rho_{\reg{ABC}} = (\id_{\reg A} \otimes \Phi_{\reg{A' \rightarrow BC}})(\proj{\mathsf{EPR}}_{\reg{AA'}}).
    \end{equation*}
    In other words, $\rho_{\reg{ABC}}$ is the normalized Choi state of some channel $\Phi_{\reg{A' \rightarrow BC}}$.
    
    \item Alice's measurement $\big\{\vec{A}_x^{\theta}\big\}_{\theta \in \Theta, x \in \algo X}$ on register $\reg A$ is a projective measurement of the form
    $$
    \vec{A}_x^{\theta} = \bar{U}_\theta \proj{x} 
    \bar{U}_\theta^\dag \,,
    $$
    for some family of unitary operators $\{U_\theta\}_{\theta \in\Theta}$ acting on $\algo H_{\reg A}$.

    \item (If we are in the oracular setting) Bob and Charlie's measurements can only depend on oracle queries to $U_\theta$ and $U_\theta^\dag$, rather than directly on $\theta$.
    \end{itemize}
\noindent
We now prove a formal equivalence between the two notions for $t = 1$.

\begin{lemma}\label{lem:equiv}
Let $\mathsf{G}_{1 \mapsto 2} = (1,\algo H_{\reg{A}},\Theta, \algo X, \{U_{\theta}\}_{\theta \in \Theta})$ be a $1 \mapsto 2$ cloning game, for some family of unitary operators $\{U_\theta\}_{\theta \in\Theta}$ acting on the Hilbert space $\algo H_{\reg A}$. Then,
the winning probability of a particular strategy $\mathsf{S} = (\algo H_{\reg{B}} \otimes \algo H_{\reg{C}}, \Phi_{\reg{A \rightarrow BC}},\{\vec{P}_{1,x}^{\theta}\}_{\theta \in \Theta, x \in \algo X},\{\vec{P}_{2,x}^{\theta}\}_{\theta \in \Theta, x \in \algo X})$ (possibly in the oracular setting) with
$$
\omega_{\mathsf{S}}(\mathsf{G}_{1 \mapsto 2}) = 
\underset{\theta \sim \Theta}{\mathbb{E}}  \,\underset{x \sim \algo X}{\mathbb{E}}  \, \Tr{\left( \vec{P}_{1,x}^{\theta}\ot \vec{P}_{2,x}^{\theta}\right) \Phi_{\reg{A \rightarrow BC}}(U_\theta \proj{x}_{\reg{A}}U_\theta^\dag)}.
$$
is exactly equal to the winning probability 
$$
\omega_{\tilde{\mathsf{S}}}(\mathsf{G}) = 
\underset{\theta \sim \Theta}{\mathbb{E}}  \sum_{x \in \algo X}  \, \Tr{\left( \vec{P}_{1,x}^{\theta}\ot \vec{P}_{2,x}^{\theta}\otimes (\bar{U}_\theta \proj{x}_{\reg{A}} \bar{U}_\theta^\dag)\right) \rho_{\reg{BCA}}}.
$$
of a quantum strategy $\tilde{\mathsf{S}} = (\algo H_{\reg{B}},\algo H_{\reg{C}}, \rho_{\reg{BCA}},\{\vec{P}_{1,x}^{\theta}\}_{\theta \in \Theta, x \in \algo X},\{\vec{P}_{2,x}^{\theta}\}_{\theta \in \Theta, x \in \algo X})$ (possibly in the oracular setting) of a monogamy entanglement game $\mathsf{G} = (\algo H_{\reg{A}},\Theta, \algo X, \{\bar{U}_\theta \proj{x} 
    \bar{U}_\theta^\dag\}_{\theta \in \Theta, x \in \algo X})$, where
$$
 \rho_{\reg{BCA}} = (\Phi_{\reg{A' \rightarrow BC}} \otimes \id_{\reg A})\proj{\mathsf{EPR}}_{\reg{A'A}}.
$$
Here, $\ket{\mathsf{EPR}}_{\reg{AA'}}$ denotes $\frac{1}{\sqrt{|\mathcal{X}|}} \sum_{x \in \mathcal{X}} \ket{x}_{\reg A} \otimes \ket{x}_{\reg A'}$. (We swap the $\reg{A}$ and $\reg{BC}$ registers in the formulation of the relevant monogamy game for the purposes of syntactic compliance with Corollary~\ref{cor:CIprojector} in our proof.)
\end{lemma}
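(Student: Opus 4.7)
The plan is to establish this equivalence via a straightforward application of the Choi--Jamiołkowski isomorphism (specifically, the ``ricochet'' property captured in Corollary~\ref{cor:CIprojector}). The key identification is that, up to normalization, the shared tripartite state $\rho_{\reg{BCA}}$ is precisely the Choi state of the cloning channel $\Phi$: since $\ket{\mathsf{EPR}}_{\reg{A'A}} = \frac{1}{\sqrt{|\algo X|}} \sum_{x \in \algo X} \ket{x}_{\reg{A'}} \otimes \ket{x}_{\reg A}$, one has $J(\Phi_{\reg{A' \rightarrow BC}}) = |\algo X| \cdot \rho_{\reg{BCA}}$, where $J(\Phi)$ is defined as in Section~\ref{sec:quantumprelims}. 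This factor of $|\algo X|$ will exactly cancel the $\frac{1}{|\algo X|}$ arising from the uniform expectation over $x \in \algo X$ on the cloning side.

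First, I would fix an arbitrary $\theta \in \Theta$ and prove the per-$\theta$ identity; the claim then follows by taking expectation over $\theta \sim \Theta$ on both sides. For a fixed $\theta$, write the cloning side as
\begin{align*}
\underset{x \sim \algo X}{\mathbb{E}} \, \Tr{\left(\vec{P}_{1,x}^\theta \otimes \vec{P}_{2,x}^\theta\right) \Phi_{\reg{A \rightarrow BC}}(U_\theta \proj{x}_{\reg A} U_\theta^\dag)} = \frac{1}{|\algo X|} \sum_{x \in \algo X} \Tr{\left(\vec{P}_{1,x}^\theta \otimes \vec{P}_{2,x}^\theta\right) \Phi(U_\theta \proj{x} U_\theta^\dag)}.
\end{align*}
Next, for each $x$, I would apply Corollary~\ref{cor:CIprojector} with $\vec{P} = \vec{P}_{1,x}^\theta \otimes \vec{P}_{2,x}^\theta$ and $\vec{Q} = U_\theta \proj{x} U_\theta^\dag$, which yields
\begin{align*}
\Tr{\left(\vec{P}_{1,x}^\theta \otimes \vec{P}_{2,x}^\theta\right) \Phi(U_\theta \proj{x} U_\theta^\dag)} = \Tr{\left(\vec{P}_{1,x}^\theta \otimes \vec{P}_{2,x}^\theta \otimes \overline{U_\theta \proj{x} U_\theta^\dag}\right) J(\Phi)}.
\end{align*}
Observe that $\overline{U_\theta \proj{x} U_\theta^\dag} = \bar{U}_\theta \proj{x} \bar{U}_\theta^\dag$, since complex conjugation is entrywise and $\proj{x}$ is real in the computational basis.

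Finally, I would substitute $J(\Phi) = |\algo X| \cdot \rho_{\reg{BCA}}$ into the above, which makes the overall factor of $|\algo X|^{-1} \cdot |\algo X| = 1$ disappear, yielding
\begin{align*}
\frac{1}{|\algo X|} \sum_{x \in \algo X} \Tr{\left(\vec{P}_{1,x}^\theta \otimes \vec{P}_{2,x}^\theta \otimes \bar{U}_\theta \proj{x} \bar{U}_\theta^\dag\right) J(\Phi)} = \sum_{x \in \algo X} \Tr{\left(\vec{P}_{1,x}^\theta \otimes \vec{P}_{2,x}^\theta \otimes \bar{U}_\theta \proj{x} \bar{U}_\theta^\dag\right) \rho_{\reg{BCA}}},
\end{align*}
which is precisely the per-$\theta$ contribution to $\omega_{\tilde{\mathsf{S}}}(\mathsf{G})$. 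Taking expectation over $\theta$ concludes the proof.

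There is no real technical obstacle here; the argument is essentially bookkeeping. The only point requiring mild care is tracking which register corresponds to which tensor factor under the ricochet identity (Alice's register $\reg A$ ends up on the ``conjugated'' side of the Choi state, matching the ordering $\reg{BCA}$ chosen in the lemma statement) and ensuring that the normalization constants from the EPR state and the uniform distribution over $\algo X$ cancel exactly. The same argument works verbatim in the oracular setting, since the measurements $\{\vec{P}_{i,x}^\theta\}$ enter the computation as fixed positive operators and their internal structure (whether they depend on $\theta$ directly or via oracle calls to $U_\theta, U_\theta^\dag$) is irrelevant to the algebraic manipulation.
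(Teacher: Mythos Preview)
Your proposal is correct and follows essentially the same approach as the paper: apply Corollary~\ref{cor:CIprojector} termwise to convert $\Phi(U_\theta\proj{x}U_\theta^\dag)$ into a trace against $J(\Phi)$, then use the normalization identity $J(\Phi) = |\algo X|\cdot\rho_{\reg{BCA}}$ to absorb the $1/|\algo X|$ from the uniform expectation over $x$. Your write-up is in fact slightly more explicit than the paper's (you spell out why $\overline{U_\theta\proj{x}U_\theta^\dag} = \bar U_\theta\proj{x}\bar U_\theta^\dag$ and comment on the oracular case), but the argument is the same.
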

\begin{proof}
Recall that Corollary \ref{cor:CIprojector} implies that, for any projector $\vec{P} \in L(\algo H_{\reg{BC}})$,
$$\Tr{\vec{P}_{\reg{BC}} \Phi_{\reg{A' \rightarrow BC}}(U_\theta \proj{x}_{\reg{A}}U_\theta^\dag)} = \Tr{\left(\vec{P} \otimes {\bar{U}_\theta\proj{x}\bar{U}_\theta^\dag}\right)J(\Phi)},$$
where $J(\Phi) \in \linear(\algo H_{\reg B} \otimes \algo H_{\reg C} \otimes \algo H_{\reg A})$ is the Choi-Jamiołkowski isomorphism of $\Phi$. Therefore:

\begin{align*}
\omega_{\mathsf{S}}(\mathsf{G}_{1 \mapsto 2}) &= 
\underset{\theta \sim \Theta}{\mathbb{E}}  \,\underset{x \sim \algo X}{\mathbb{E}}  \, \Tr{\left( \vec{P}_{1,x}^{\theta}\ot \vec{P}_{2,x}^{\theta}\right) \Phi_{\reg{A \rightarrow BC}}(U_\theta \proj{x}_{\reg{A}}U_\theta^\dag)}\\
 &= 
\underset{\theta \sim \Theta}{\mathbb{E}} \, \underset{x \sim \algo X}{\mathbb{E}}   \, \Tr{\left( \vec{P}_{1,x}^{\theta}\ot \vec{P}_{2,x}^{\theta} \ot (\bar{U}_\theta \proj{x}_{\reg{A}} \bar{U}_\theta^\dag)\right) J(\Phi)}   \\
&=\underset{\theta \sim \Theta}{\mathbb{E}} \, \sum_{x \sim \algo X} \Tr{\left( \vec{P}_{1,x}^{\theta}\ot \vec{P}_{2,x}^{\theta} \ot (\bar{U}_\theta \proj{x}_{\reg{A}} \bar{U}_\theta^\dag)\right) \rho_{\reg{BCA}}} \, = \, \omega_{\tilde{\mathsf{S}}}(\mathsf{G}) \, ,
\end{align*}
where we define $\rho_{\reg{BCA}} = (\Phi_{\reg{A' \rightarrow BC}} \ot \id_{\reg A})\proj{\mathsf{EPR}}_{\reg{A'A}}$. The final step holds because of the identity $J(\Phi) = |\mathcal{X}| \cdot \rho_{\reg{BCA}}$ (see the definitions preceding Lemma~\ref{fact:CI}). This proves the claim.
\end{proof}
\noindent
We can generalize this to arbitrary $t$ as follows:

\begin{lemma}\label{lemma:CImultiplayer}
    For any $t \geq 1$, let $\mathsf{G}_{t \mapsto t+1} = (t,\algo H_{\reg{A^t}},\Theta, \algo X, \{U_{\theta}\}_{\theta \in \Theta})$ be a $t \mapsto t+1$ cloning game, for some family of unitary operators $\{U_\theta\}_{\theta \in\Theta}$ acting on the Hilbert space $\algo H_{\reg A}$. Moreover, define the shared state $$\rho_{\reg{B_{1:t+1}A'_{1:t}}} := (\Phi_{\reg{A_1\dots A_t \rightarrow B_1 \dots B_{t+1}}} \otimes \id_{\reg{A'_{1:t}}}) \left(\proj{\mathsf{EPR}^n}^{\otimes t}\right).$$
    Then,
    the winning probability of a particular strategy $\mathsf{S} = (\algo H_{\reg{B^{t+1}}}, \Phi_{\reg{A^t \rightarrow B^{t+1}}},\{\vec{P}_{1,x}^{\theta}\}_{\theta \in \Theta, x \in \algo X},\ldots,\{\vec{P}_{t+1,x}^{\theta}\}_{\theta \in \Theta, x \in \algo X})$ (possibly in the oracular setting) with
    $$
    \omega_{\mathsf{S}}(\mathsf{G}_{t \mapsto t+1}) := 
    \underset{\theta \sim \Theta}{\mathbb{E}}  \,\underset{x \sim \algo X}{\mathbb{E}}  \, \Tr{\left( \vec{P}_{1,x}^{U_\theta,U_{\theta}^\dag}\ot \dots \ot \vec{P}_{t+1,x}^{U_\theta,U_{\theta}^\dag}\right) \Phi_{\reg{A^t \rightarrow B^{t+1}}}\left( (U_\theta \proj{x}U_\theta^\dag)_{\reg{A^t}}^{\otimes t} \right)} .
    $$
    is exactly equal to the quantity
    $$
    |\mathcal{X}|^{t-1} \cdot \underset{\theta \sim \Theta}{\mathbb{E}}  \sum_{x \in \algo X}  \, \Tr{\left( \bigotimes_{i = 1}^{t+1} \vec{P}_{i, x}^{\theta} \otimes (\bar{U}_\theta \proj{x}_{\reg{A}} \bar{U}_\theta^\dag)^{\otimes t}\right) \rho_{\reg{B_{1:t+1}A'_{1:t}}}}.
    $$
\end{lemma}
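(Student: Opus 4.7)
The plan is to mirror the proof of Lemma~\ref{lem:equiv} but account for the fact that the cloning channel $\Phi_{\reg{A^t \to B^{t+1}}}$ now takes $t$ identical copies as input. The central tool is Corollary~\ref{cor:CIprojector}, applied with the operator $\vec{Q} = (U_\theta \proj{x} U_\theta^\dag)^{\otimes t}$ in place of the single-copy projector. Since complex conjugation distributes over tensor products, we have $\bar{\vec{Q}} = (\bar{U}_\theta \proj{x} \bar{U}_\theta^\dag)^{\otimes t}$, which is precisely the operator appearing on Alice's $\reg{A'_{1:t}}$ registers in the target expression.

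The key bookkeeping point is the normalization of the Choi state. Writing $d = |\mathcal{X}|$ and $\ket{\Omega_t} = \sum_{\vec{i} \in \mathcal{X}^t} \ket{\vec{i}}_{\reg{A_{1:t}}} \otimes \ket{\vec{i}}_{\reg{A'_{1:t}}}$ for the unnormalized maximally entangled vector on $t$ copies, we have $\proj{\mathsf{EPR}^n}^{\otimes t} = d^{-t}\, \proj{\Omega_t}$, so the shared state satisfies
\[
\rho_{\reg{B_{1:t+1} A'_{1:t}}} \;=\; d^{-t} \cdot J(\Phi)_{\reg{B_{1:t+1} A'_{1:t}}},
\]
where $J(\Phi)$ is the Choi–Jamiołkowski operator of $\Phi_{\reg{A^t \to B^{t+1}}}$. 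Applying Corollary~\ref{cor:CIprojector} with $\vec{P} = \bigotimes_{i=1}^{t+1} \vec{P}_{i,x}^\theta$ then yields
\[
\Tr{\Big(\bigotimes_{i=1}^{t+1} \vec{P}_{i,x}^\theta\Big)\, \Phi\bigl((U_\theta \proj{x} U_\theta^\dag)^{\otimes t}\bigr)}
\;=\; d^{t} \cdot \Tr{\Big(\bigotimes_{i=1}^{t+1} \vec{P}_{i,x}^\theta \,\otimes\, (\bar{U}_\theta \proj{x} \bar{U}_\theta^\dag)^{\otimes t}\Big)\, \rho}.
\]

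To conclude, I would take expectation over $\theta$ and average over $x \sim \mathcal{X}$ on both sides. Converting $\mathbb{E}_{x \sim \mathcal{X}}$ to $\frac{1}{d}\sum_{x \in \mathcal{X}}$ on the right-hand side absorbs one factor of $d$, leaving the overall prefactor $d^{t-1} = |\mathcal{X}|^{t-1}$ claimed in the statement. The oracular-vs-non-oracular distinction is irrelevant here, since the identity is purely algebraic in the players' POVM elements $\vec{P}_{i,x}^{\theta}$ (or $\vec{P}_{i,x}^{U_\theta, U_\theta^\dag}$) and the Choi-state manipulation does not touch them.

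There is no genuine obstacle: the only subtlety compared to the $t=1$ case is keeping track of the normalization, where each of the $t$ EPR pairs contributes a factor $d^{-1}$, and recognizing that $\bar{\cdot}$ commutes with tensor powers. Everything else is a direct application of linearity and the ricochet identity already encapsulated in Corollary~\ref{cor:CIprojector}.
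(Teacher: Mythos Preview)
Your proposal is correct and follows essentially the same approach as the paper: both apply Corollary~\ref{cor:CIprojector} to the channel $\Phi_{\reg{A^t \to B^{t+1}}}$ with $\vec{Q} = (U_\theta \proj{x} U_\theta^\dag)^{\otimes t}$, use the normalization $J(\Phi) = |\mathcal{X}|^t \rho$, and then convert the average $\mathbb{E}_{x \sim \mathcal{X}}$ into $|\mathcal{X}|^{-1}\sum_x$ to arrive at the prefactor $|\mathcal{X}|^{t-1}$.
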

\begin{proof}
    Let $J(\Phi) \in \linear(\algo H_{\reg{B_{1:t+1}}} \otimes \algo H_{\reg{A_{1:t}'}})$ denote the Choi-Jamiołkowski isomorphism of the cloning map $\Phi_{\reg{A_{1:t}} \rightarrow \reg{B_{1:t+1}}}$. Recall that $$J(\Phi) = |\mathcal{X}|^t \cdot (\Phi_{\reg{A_1\dots A_t \rightarrow B_1 \dots B_{t+1}}} \otimes \id_{\reg{A'_{1:t}}}) \left(\proj{\mathsf{EPR}^n}^{\otimes t}\right) = |\mathcal{X}|^t \rho.$$
    Now using Lemma~\ref{fact:CI}, we have:
    \begin{align*}
        \omega_\mathsf{S}(\mathsf{G}) = &\underset{\theta}{\mathbb{E}} \,\underset{x \sim \mathcal{X}}{\mathbb{E}}  \, \Tr{\left( \bigotimes_{i = 1}^{t+1} \vec{P}_{i, x}^\theta \right) \Phi_{\reg{A_1\dots A_t \rightarrow B_1 \dots B_{t+1}}}\left( (U_\theta \proj{x} U_\theta^\dag)_{\reg{A_1}\dots \reg{A_t}}^{\otimes t} \right)} \\
        = &\underset{\theta}{\mathbb{E}} \,\underset{x \sim \mathcal{X}}{\mathbb{E}}  \, \Tr{ \left( \bigotimes_{i = 1}^{t+1} \vec{P}_{i, x}^\theta \otimes \left(\bar{U}_\theta \proj{x}\bar{U}_\theta^\dag\right)_{\reg{A'_{1:t}}}^{\otimes t}\right) J(\Phi)_{\reg{B_{1:t+1} A_{1:t}'}} } \\
        = & |\mathcal{X}|^t \cdot \underset{\theta}{\mathbb{E}} \,\underset{x \sim \mathcal{X}}{\mathbb{E}}  \, \Tr{ \left( \bigotimes_{i = 1}^{t+1} \vec{P}_{i, x}^\theta \otimes \left(\bar{U}_\theta \proj{x}\bar{U}_\theta^\dag\right)_{\reg{A'_{1:t}}}^{\otimes t}\right) \rho_{\reg{B_{1:t+1} A_{1:t}'}} } \\
        = & |\mathcal{X}|^{t-1} \cdot \underset{\theta}{\mathbb{E}} \,\underset{x \sim \mathcal{X}}{\sum}  \, \Tr{ \left( \bigotimes_{i = 1}^{t+1} \vec{P}_{i, x}^\theta \otimes \left(\bar{U}_\theta \proj{x}\bar{U}_\theta^\dag\right)_{\reg{A'_{1:t}}}^{\otimes t}\right) \rho_{\reg{B_{1:t+1} A_{1:t}'}} }.
    \end{align*}
\end{proof}

\end{document}